\documentclass[12pt]{article}
\usepackage{graphicx,amssymb}
\usepackage[bookmarks=true]{hyperref}
\title{A quasi-potential for conservation laws with boundary conditions}
\author{C. Bahadoran
\thanks{
Laboratoire de Math\'ematiques CNRS UMR 6620, Universit\'e
Clermont-Ferrand 2, F-63177 Aubi\`{e}re. e-mail:
Christophe.Bahadoran@math.univ-bpclermont.fr } }
\date{}
%
% DECLARATIONS
%
%
%
% ABBREVIATIONS
%
\newcommand{\dsp}{\displaystyle}
\newcommand{\bd}{\begin{displaymath}}
\newcommand{\be}{\begin{equation}}
\newcommand{\ba}{\begin{array}}
\newcommand{\ed}{\end{displaymath}}
\newcommand{\ee}{\end{equation}}
\newcommand{\ea}{\end{array}}
\newcommand{\espace}{\mbox{ }}

%
%
% USUAL FUNCTIONS AND NOTATIONS
%

%{\mbox{\large\bf$1$}}

\newcommand{\abs}[1]{\left|#1\right|}
%
%
% USUAL SETS
%
%
\def\N{\mathbb{N}}
\def\Z{\mathbb{Z}}
\def\R{\mathbb{R}}

%
%
% SETS OF FUNCTIONS
%

%
%
%
% DIFFERENTIAL OPERATORS
%

%
%
% SPECIFIC NOTATIONS

%\newcommand{\gen}{\mbox{I\hspace{-.15em}L}}

\newcommand{\bln}{{\mathcal E}}
\newcommand{\emsol}{\overline{\mathcal E}}
\newcommand{\qsol}{\overline{\mathcal Q}}
\newcommand{\gsol}{\mathcal G}
%
%
% ENVIRONMENTS
%
\newcommand{\eqref}[1]{(\ref{#1})}

\newtheorem{theorem}{Theorem}[section]

\newtheorem{proposition}{Proposition}[section]

\newtheorem{lemma}{Lemma}[section]
\newtheorem{corollary}{Corollary}[section]

\newenvironment{proof}[2]{\espace\\{\em Proof of #1 \ref{#2}.}}{\hfill\mbox{$\square$}}
\newenvironment{remark}{{\em Remark. }}{\espace \\ \\}

\begin{document}
\maketitle
\begin{abstract}
We compute the quasi-potential and determine minimizing paths for an action functional related to scalar conservation laws on an interval with boundary conditions 
in the sense of Bardos et al. (\cite{bln}). Taking as input an exclusion-like flux function $f(\rho)$, a strictly convex entropy $h(\rho)$, and boundary data $\rho_l,\rho_r$,
we obtain a generalization of the functional derived in \cite{dls} for the stationary large deviations of the asymmetric exclusion process.
\end{abstract}
\textbf{AMS 2010 subject classifications.} 35L04, 35L65, 35L67, 60K35, 82C22, 82C26.\\ \\
\textbf{Key words and phrases.} 
Quasi-potential; Scalar conservation law; BLN boundary conditions; Entropy solution;
Asymmetric exclusion process; Stationary large deviations; Nonequilibrium stationary state; 
Boundary-driven phase transition.
\section{Introduction}
Let be given $K>0$ and a flux function $f$ on $[0,K]$ satisfying
\be\label{properties_flux}
f\in C^2([0,K]),\quad f(0)=f(K)=0,\quad f''\leq c<0\ee
We consider the scalar conservation law
\be \label{conservation_law}
\partial_t\rho(t,x)+\partial_x f(\rho(t,x))=0
\ee
on $(0,1)$, with boundary data 
\be\label{boundary_data}
\rho(t,0^+)=\rho_l,\quad
\rho(t,1^-)=\rho_r\ee
where $\rho_l,\,\rho_r\in[0,K]$.
It is well-known (see e.g. \cite{ser}) that, for given Cauchy data, \eqref{conservation_law}
has in general no classical solution, as even for smooth data, shocks appear in finite time. On the other hand, uniqueness fails for weak solutions. 
For the problem on $\R$, the unique physical solution (called the entropy solution) is selected among weak solutions by the so-called entropy condition. For the initial-boundary problem \eqref{conservation_law}--\eqref{boundary_data},
there is in general no entropy solution satisfying \eqref{boundary_data} in usual sense, but a unique entropy solution that satisfies \eqref{boundary_data} in the weaker sense of Bardos et al. (\cite{bln}), hereafter called BLN boundary conditions. In short, $\rho_l$ and $\rho_r$ in \eqref{boundary_data} must be viewed as sets rather than single values, and the definition of these sets depends on the boundary location through the outer normal. This phenomenon arises from the formation of boundary layers in the inviscid limit for the viscous parabolic approximation of \eqref{conservation_law}.\\ \\
In statistical mechanics, the asymmetric exclusion process (ASEP) and related interacting particle systems in the sense of \cite{lig} are microscopic models with a single conservation law, whose large-scale behavior (the so-called hydrodynamic limit, see \cite{kl}) is governed by \eqref{conservation_law}. For many such models, see e.g. \cite{rez},
it is established that the empirical particle density converges in probability to the unique entropy solution of \eqref{conservation_law}. It was also established
(\cite{ba0,ba}) that boundary conditions of \cite{bln} are obtained if the particle system is coupled to reservoirs at the boundary of some open domain.
Let us consider \eqref{conservation_law}--\eqref{boundary_data} on $(-\infty,0)_t\times(0,1)_x$. We introduce an action functional $I[\rho(.,.)]$  that quantifies deviation from $\rho(.,.)$ being a BLN entropy solution. This functional is believed to be a dynamical large deviation (LD) rate function for the time evolution of the empirical particle measure in ASEP-like models. 
%
%Namely, while BLN entropy solutions are typical evolutions  (i.e. with asymptotic probability $1$) for the particle system, if $\rho(.,.)$ is not such a solution, 
%it has an asympotic probability $\sim \exp\{-\varepsilon^{-1} I[\rho(.,.)]\}$ of being observed, where $\varepsilon\sim N^{-1}\to 0$ is the noise parameter and $N$ the order of the %number of particles. Let $\mathcal S$ denote the set of stationary solutions of \eqref{conservation_law}--\eqref{boundary_data}.
% 
Our goal is to compute the {\em quasi-potential}
$V[\rho(.)]$ for a function $x\in(0,1)\mapsto\rho(x)\in[0,K]$, defined  as the minimum energy of a path leading from a stationary state to $\rho(.)$:
\be\label{def_quasi}
V[(\rho(.)]:=\inf\{I[\rho(.,.)]:\,\rho(-\infty,.)\in\mathcal S,\,\rho(0,.)=\rho(.)\}
\ee
and determine the minimizer(s) in \eqref{def_quasi}, i.e. paths followed by the system to create a fluctuation from expected stationary states.
In \eqref{def_quasi}, $\mathcal S$ denotes the set of stationary solutions of \eqref{conservation_law}--\eqref{boundary_data}. This set contains a unique constant function $\rho_s(.)$ outside the phase transition line $\rho_l<\rho_r,\,f(\rho_l)=f(\rho_r)$, where it contains arbitrarily located shocks connecting $\rho_l$ and $\rho_r$ (\cite{ba}).
An analogous picture was established in \cite{dehp} for the stationary state of ASEP. This picture was extended heuristically to more general models and fluces in \cite{ps}, and a rigorous proof of this extension given in \cite{ba} using BLN boundary conditions.
Under suitable assumptions (see e.g. \cite{bg} in the context of lattice gases), if one can prove the dynamical LD principle with rate function $I[\rho(.,.)]$ for the empirical measure under the stochastic dynamics, this implies the static LD principle $S[\rho(.)]$ for the empirical measure under the invariant measure of the stochastic process.\\ \\
The quasi-potential was introduced in \cite{fw} in the context of stochastic perturbations of dynamical systems, like for instance
\be\label{diffusion}
dx_t^\varepsilon=-\nabla U(x_t^\varepsilon)dt+\varepsilon^{1/2}dW_t
\ee
where $W_t$ is a standard Brownian motion and $U(x)$ a potential on $\R$ with a single minimum at (say) $x=0$. In this case, instead of \eqref{conservation_law}--\eqref{boundary_data}, the limiting evolution problem is the ODE
\be\label{ode}\dot{x}_t=-\nabla U(x_t)\,,\ee
the dynamical action functional for a path $x_.=(x_t,\,t\in(-\infty,0])$ given by
\be\label{action_fw}
I[(x_.)]=\frac{1}{2}\int_{-\infty}^0[\dot{x}_t+\nabla U(x_t)]^2 dt\,,
\ee
and the quasi-potential $V(x)$, where $x\in\R$, reads
\be\label{quasi_fw}
V(x)=\inf\{I(x_.):\,x_{-\infty}=0,\,x_0=x\}
\ee
A standard computation leads to $V(x)=2U(x)$, in agreement with the here explicit invariant (and reversible) measure $\mu^\varepsilon=Z(\varepsilon)^{-1}e^{-2\varepsilon^{-1}U(x)}dx$ for \eqref{diffusion}, which identifies $2U$ as the stationary LD rate function. The same computation 
shows that the unique minimizing path (or ``fluctuation path'') $x_.$ in \eqref{quasi_fw} is obtained as the time reversal of the solution of \eqref{ode} starting from $x$. This is a particular case of the Onsager-Machlup principle, which states that for reversible dynamics the fluctuation path is the time-reversed relaxation path.\\ \\
Our problem is motivated by the study of nonequilibirum stationary states (NESS) in exclusion-like models coupled to reservoirs. A general feature of NESS is the occurence of long-range correlations, one signature of which is the non-locality of the stationary LD rate function. The first example of such a rate function was obtained in \cite{dls0} for the symmetric exclusion process coupled to reservoirs. The same authors (\cite{dls}) later considered the asymmetric exclusion process, which shares the non-locality feature, but is somewhat more complex due to the presence of two distinct regimes $\rho_l<\rho_r$ and $\rho_l>\rho_r$. The weakly asymmetric exclusion process was treated in the same spirit in \cite{de} in the regime $\rho_l>\rho_r$. The result of \cite{dls} is our primary motivation. 
%
%Note that both results rely on explicit computations for the stationary state, an approach that is not easily extended to more general models. 
%
In \cite{bdgjl} was introduced the so-called ``macroscopic fluctuation theory''. One outcome of this theory was to recover the static functional of \cite{dls0} 
by computing a quasi-potential and show that the fluctuation paths obey a nonequilibrium  generalization of the Onsager-Machlup principle, namely they are time-reversals of  relaxation paths of some {\em adjoint} evolution problem. Only in the reversible case does the adjoint problem coincide with the original problem. This approach does not depend directly on microscopic details of the particle dynamics, but rather on the dynamic functional. Thus it is applicable to certain other models. It was recently (\cite{bdgjlw1,bdgjlw2}) extended to the weakly asymmetric exclusion process, whose fluctuations in the vanishing noise limit $\varepsilon\to 0$ are described formally by the infinite-dimensional diffusion
\begin{eqnarray}
\partial_t\rho+\partial_xf(\rho) & = & \nu\partial_{x}[D(\rho)\partial_x\rho]-\varepsilon^{1/2}\partial_x[a(\rho)^{1/2}B(t,x)]\label{spde}\\
\rho (t,0^+) & = & \rho_l\label{spde_left}\\
\rho(t,1^-) & = & \rho_r\label{spde_right}
\end{eqnarray}
where $\rho=\rho(t,x)$, $B(t,x)$ is a space-time white noise, and $\nu$ a viscosity coefficient,
with the particular choice 
of flux $f$, diffusion coefficient $D$ and mobility $a$ given by
\be\label{particular_choice}
f(\rho)=\rho(1-\rho),\,D(\rho)=1,\,a(\rho)=\rho(1-\rho),
\ee
The action functional $I_\nu$ for \eqref{spde} is an infinite-dimensional version of \eqref{action_fw} that forbids violation of \eqref{spde_left}--\eqref{spde_right}. Its quasi-potential has an explicit local expression on the torus (\cite{bdgjl2}) that is valid for the most general $f$, $D$, $\rho$:
\be\label{quasi_eq}S[\rho(.)]=\int_0^1 h[\rho(x),\rho_0]dx\ee
where $h(\rho,\rho_0)$ is the relative version (see \eqref{relative_entropy}) of the entropy $h$ given by Einstein's relation
\be\label{einstein}h''(\rho)=D(\rho)/a(\rho)\ee
in agreement with the uncorrelated stady state of WASEP on the torus.
In the nonequilibrium case  \eqref{spde_left}--\eqref{spde_right} with $\rho_l\neq\rho_r$, for \eqref{particular_choice},
the non-local functional of \cite{de} is recovered in the regime $\rho_l>\rho_r$ (\cite{bdgjlw1}) and another one (\cite{bdgjlw2}) obtained in the regime $\rho_l<\rho_r$.\\ \\
Our purpose is threefold: to  generalize the static LD rate function of \cite{dls}, identify it as the quasi-potential for the action functional associated with \eqref{conservation_law}--\eqref{boundary_data}, and determine the adjoint evolution problem that yields the fluctuation path. The variational problem induced here is very different from \eqref{spde},\eqref{spde_left}--\eqref{spde_right} because the dynamic action functional is no longer an infinite-dimensional version of \eqref{action_fw}. In particular, Hamilton-Jacobi techniques from \cite{bdgjl} are not effective here. We take as basic input for our problem the exclusion-like (that is, satisfying \eqref{properties_flux}) flux $f$ and equilibrium entropy function $h(\rho)$ defined for $[0,K]$.
For the ASEP case studied in \cite{dls}, 
\be\label{asep_case} K=1,\quad f(\rho)=\rho(1-\rho),\quad h(\rho)=\rho\log\rho+(1-\rho)\log(1-\rho)\ee
The action functional is decomposed into a bulk and a boundary term. The former is determined only by the pair $(h,f)$, the latter also by the boundary data $\rho_l,\rho_r$.\\ \\
The bulk term is based on  \cite{jen,var,bbmn,mar} where two  closely related functionals were introduced as LD rate functions on the torus for the ASEP and for \eqref{spde} in the limit $\nu=\varepsilon\to 0$. These functionals are supported on weak solutions of \eqref{conservation_law} and measure the positive entropy production, i.e. violation of entropy conditions for \eqref{conservation_law}. We are currently able to treat the functional of \cite{bbmn,mar} but not the weaker one of \cite{jen,var}. For the latter we would need 
a single-entropy version of the structure results of \cite{dow,pan2} for convex conservation laws.
Therefore our result in its present form is not sufficient to deal with ASEP large deviations, but presumably effective 
for \eqref{spde} in the limit of \cite{mar}. We hope it could be used also for vanishing viscosity WASEP (\cite{fri}), which is formally similar to \eqref{spde}.
The boundary terms are a generalization of those introduced in \cite{bd2} for the ASEP case \eqref{asep_case}, and measure violation of BLN boundary conditions \eqref{boundary_data}. Although we only study the quasi-potential for fluces satisfying \eqref{properties_flux}, the dynamic action potential that we define is relevant to general fluces. In particular, like the bulk functional of \cite{bbmn}, our definition of the boundary part does not assume any convexity for $f$. This general joint bulk/boundary action functional has a natural variational formula.
%
%Thus in contrast with \eqref{spde}, rigidity is not imposed on the boundaries but on the conservation law itself.
%
The quasi-potential on the torus for the inside part of the functional was derived in \cite{bcm} and yields the equilibrium entropy \eqref{quasi_eq}, again in agreement with the uncorrelated steady states of ASEP-like systems on the torus.\\ \\
In Theorem \ref{theorem_1}, we obtain  a closed variational expression for the nonequilibrium quasi-potential relative to \eqref{conservation_law}--\eqref{boundary_data}. To do so we need a joint symmetry assumption on $f$ and $h$, see \eqref{symmetry_f}--\eqref{symmetry}, satisfied in particular 
by \eqref{asep_case}. This assumption can be interpreted as the flux being an even function of the chemical potential $\theta=h'(\rho)$ (that is the variable dual to $\rho$ w.r.t. the entropy $h$), see \eqref{symmetry_2}.
In a similar spirit, it was observed in \cite{bgl} that some (more stringent) relation between diffusion and mobility was required to get a closed expression for the quasi-potential
in \eqref{spde}-\eqref{spde_left}--\eqref{spde_right}. 
Our expression is a natural generalization of \cite{dls}, which gives a candidate static LD functional for ASEP-like systems satisfying the symmetry assumption. Besides, by definition of the quasi-potential, for given flux $f$, it yields a whole family of Lyapunov functionals for the evolution problem \eqref{conservation_law}--\eqref{boundary_data}
(namely for all entropies $h$ compatible with \eqref{symmetry}). 
The adjoint evolution problem is then described in Theorems \ref{theorem_2}--\ref{theorem_3} for the space-time reversal of the fluctuation path $\rho(.,.)$. Space-time reversal is more natural here than time reversal alone, because it leaves \eqref{conservation_law} invariant, hence relaxation paths  can be described with reference to the original problem \eqref{conservation_law}. Our results also cover the
phase transition line  $\rho_l<\rho_r$, $f(\rho_l)=f(\rho_r)$, where there is a continuum of stationary solutions for \eqref{conservation_law}--\eqref{boundary_data} parametrized by $y\in[0,1]$.\\ \\
There is a  connection between our results and those of \cite{bdgjlw1}--\cite{bdgjlw2}. For given viscosity $\nu$ in \eqref{spde}-\eqref{spde_left}--\eqref{spde_right}, in the case \eqref{particular_choice}, a quasi-potential $V_\nu$ is obtained in \cite{bdgjlw1}--\cite{bdgjlw2} from the diffusion-like action functional $I_\nu$. For \eqref{conservation_law}--\eqref{boundary_data}, we obtain a quasi-potential $V$ from the singular action functional $I$. 
Despite the difference in nature between $I_\nu$ and $I$, there is strong evidence that $I_\nu\to I$ holds in the sense of $\Gamma$-convergence.
$\Gamma$-convergence of the inside part of the functional is established on a set of ``nice'' functions  (\cite{bbmn}). On the other hand,  $\Gamma$-convergence of the boundary part is suggested in \cite{bd2}. With  a result similar to \cite{bbmn} at the boundary, for \eqref{asep_case} and the action functional $I$ restricted to nice functions, one could deduce Theorem \ref{theorem_1} from \cite{bdgjlw1}--\cite{bdgjlw2} in the case \eqref{particular_choice}--\eqref{asep_case}.
Conversely, Theorems \ref{theorem_2}--\ref{theorem_3} would determine the limit $\nu\to 0$ of the fluctuation paths in \cite{bdgjlw1}--\cite{bdgjlw2}. 
Strictly speaking this is true outside the phase transition, which is not seen at positive viscosity (but can be recovered differently by sending boundary conditions to $\pm\infty$, see \cite{bp}). \\ \\
We end up with an informal description of the fluctuation paths described in Theorems \ref{theorem_2}--\ref{theorem_3}.
A significant difference with known results at positive viscosity, and with the equilibrium case of \cite{bcm} is that, outside the maximum current phase,
the fluctuation is reached in finite time.
In the regime $\rho_l>\rho_r$, the quasi-potential is expressed as a supremum over an auxiliary function that depends nonlocally on $\rho(.)$. First, the  reversed evolution of this auxiliary function is determined   
as the entropy solution of \eqref{conservation_law} with space-reversed BLN conditions \eqref{boundary_data}. Next,  the reversed (unique) fluctuation path is obtained as the entropy solution to \eqref{conservation_law} with space-reversed, time-dependent BLN boundary data depending on the boundary values of the auxiliary function. The first step is reminiscent of \cite{bdgjl}--\cite{bdgjlw1}, but the specific second step is necessary because (unlike at positive viscosity) one cannot invert the relation $\rho(.)\mapsto F(.)$.
In the regime $\rho_l<\rho_r$, 
the quasi-potential is expressed as an infimum over an auxiliary position $y\in[0,1]$. We show that to each minimizing $y$ one can associate a unique fluctuation path, and that there are no other fluctuations paths than these.
This non-uniqueness phenomenon can be understood as the $\nu\to 0$ limit of the one exhibited in \cite{bdgjlw2}. The adjoint evolution problem for the reversed path here is quite different from prior results and consists of two distinct phases. In a first
step an antishock $\varphi(\rho_l)>\varphi(\rho_r)$ is instantly created at $y$ and travels up to one of the boundaries at Rankine-Hugoniot speed. During this step
the solution is entropic outside this antishock and follows space-reversed BLN boundary conditions $\rho_l,\rho_r$ at each boundary. Once the antishock reaches a boundary
there is an abrupt change in the rule. Suppose e.g. $f(\rho_l)<f(\rho_r)$, so the stationary state is $\rho_l$. Then the solution becomes entropic inside and the BLN data $\rho_r$ at one boundary changes to $\rho_l$, so that $\rho_l$ is now imposed on both sides and the solution eventually relaxes to $\rho_l$. On the phase transition line $f(\rho_l)=f(\rho_r)$ the antishick does not move, and the stationary state eventually reached is the shock $(\rho_l,\rho_r)$ at initial minimizing position $y$. \\ \\
The paper is organized as follows. In Section \ref{sec_results}, we define the setting and state the main results. Proofs are carried out in Sections  \ref{sec_shock}
and \ref{sec_relax}, respectively for the shock regime $\rho_l<\rho_r$ and the rarefaction regime $\rho_l\geq\rho_r$.
These two sections are similarly structured into three subsections,
that respectively contain preliminary computations and each of the two inequalities between the quasi-potential and the static functional.
In appendix \ref{dynvar}, we establish a variational formula
for the dynamic functional, which implies it is a good functional in the terminology of LD theory. Eventually, in Appendices \ref{section_explicit_1} and \ref{section_explicit_2}, we explicitely compute fluctuation paths leading to arbitrary uniform profiles, which involves interaction of Riemann waves issued from the boundaries.
\section{The setting and results}\label{sec_results}
%
%$L^{\infty,K}(\Omega)$ denotes the set of $[0,K]$-valued Borel functions on an open subset $\Omega$ of $\R^2$.
%
%
%
% moved to appendix
%
%
Let $(a,b)$ be a time interval, $-\infty\leq a<b\leq +\infty$.
We define an action functional $I_{(a,b)}$ on
the set $L^{\infty,K}((a,b)\times(0,1))$ of $[0,K]$-valued Borel functions $\rho(.,.)$ on $(a,b)\times(0,1)$.
This functional consists of a bulk term $I^0$
and boundary terms $I^{l}$ and $I^{r}$:
\be
\label{decompose_functional}I_{(a,b)}=I^{l}_{(a,b)}+I^{0}_{(a,b)}+I^{r}_{(a,b)}\ee
For its definition and basic properties, we assume first that $f$ in \eqref{conservation_law} is a function of class $C^1$  that is not affine on any nonempty open subinterval of $[0,K]$.\\ \\
\textbf{Dynamical functional: bulk part.}
An entropy is a convex function 
$\eta:[0,K]\to\R$, and the associated entropy flux $q$ is defined up
to an additive constant by
\be \label{def_entropy_flux} q'(\rho)=\eta'(\rho)f'(\rho) \ee
$(\eta,q)$ is called an entropy-flux pair. 
The Kru\v{z}kov entropy-flux pair (\cite{kru}) is given by
\be\label{kruzkov_entropy}\eta_v(\rho):=(\rho-v)^+,\quad
q_v(\rho):=1_{\rho>v}[f(\rho)-f(v)]
\ee
Let $\Omega$ be an open subset of $\R^2$.
The $\eta$-entropy production of $\rho(.,.)\in L^{\infty,K}(\Omega)$ is the distribution $\mu_\eta[\rho(.,.)]$ on $\Omega$ defined by
\be \label{production} \mu_\eta[\rho(.,.)]:=\partial_t
\eta(\rho(t,x))+\partial_x q(\rho(t,x)) \ee
In particular, $(id,f)$ is an entropy-flux pair, and $\rho(.,.)$ is
a weak solution to \eqref{conservation_law} in $\Omega$ iff.
$\mu_{id}[\rho(.,.)]=0$.
Notice the behavior of $\mu_\eta$ under space-time reversal:
\be\label{reversal_prod}
\mu_\eta[\rho\circ\Phi]=-\mu_\eta[\rho]\circ\Phi,\quad
\Phi:(t,x)\mapsto(-t,1-x)
\ee
%
%Let be given
%boundary data in $[0,K$, $\rho_l$ at $x=0$ and $\rho_r$ at $x=1$.
%
%Suppose $\rho(.,.)\in L^\infty((0,+\infty)\times(0,1))$ is a
%"fluctuation" from the typical behavior of a stochastic system
%governed in the vanishing noise limit by \eqref{conservation_law}
%with boundary data $\rho_l, \rho_r$. We define a large deviation
%
%Let $\Omega$ be an open subset of
%$\R\times(0,1)$. We shall write $J^0_{(a,b)}$ and $K^0_{(a,b)}$ when
%$\Omega=(a,b)\times(0,1)$.
%
%
Let $\overline{M}(\Omega)$ denote the set of Radon measures $\mu$ on $\Omega$ that are locally bounded up to the boundary of $\Omega$, i.e. $|\mu|(K\cap\Omega)<+\infty$ for every compact subset $K$ of $\R^2$. We extend the definition of \cite{bbmn} to the boundary, by calling  $\rho(.,.)\in L^{\infty,K}(\Omega)$ an entropy-measure solution of \eqref{conservation_law},
iff. it is a weak solution of \eqref{conservation_law} in $\Omega$ such that $\mu_\eta[\rho(.,.)]\in\overline{M}(\Omega)$ for every $C^2$ entropy $\eta$. 
The set of entropy-measure solutions is denoted by $\emsol(\Omega)$.
By a straightforward extension of  (\cite[Proposition 2.3]{bbmn}), $\rho(.,.)\in\emsol(\Omega)$ iff. there exists a bounded measurable mapping
$v\mapsto m_{\rho(.,.)}(v;dt,dx)$ from $[0,K]$ to $\overline{M}(\Omega)$ such that, for every $C^2$ entropy $\eta$,
\be\label{kinetic}
\mu_\eta[\rho(.,.)](dt,dx)=\int_0^K\eta''(v) m_{\rho(.,.)}(v;dt,dx)dv
\ee
%
%where the topology on Radon measures is that of vague convergence.
%
%It is established in \cite{pan2} that quasi-solutions have
%well-defined limits $\rho(t,\gamma(t)^\pm)$ along Lipschitz curves
%$t\mapsto \gamma(t)\in [0,1]$.
%
%
$m_{\rho(.,.)}$ is the kinetic entropy defect measure of $\rho(.,.)$ in the sense of \cite{plt}. 
%
%It follows from \cite{pan} that an entropy measure
%solution $\rho(.,.)\in\emsol((a,b)\times(0,1))$ lies in
%$C^0((a,b),L^1(0,1))$.
%
%Note that neither of these notions is concerned with
%boundary conditions.
%
In the sequel, $(h,g)$ denotes a given entropy-flux pair with uniformly convex $h\in C^2([0,K])$, i.e. $h''\geq c>0$.
For notational simplicity, we often write $\mu_\eta$ for 
$\mu_\eta[\rho(.,.)]$, $\mu$ for $\mu_h$, and $m$ for $m_{\rho(.,.)}$. 
Now assume $\Omega=(a,b)\times(0,1)$. We set
\be\label{functional_bulk}
I^0_{(a,b)}(\rho(.,.))=\left\{
\ba{lll}
+\infty & \mbox{if} & \rho(.,.)\not\in\emsol((a,b)\times(0,1))\\
\dsp\int_{[0,K]}\int_{(a,b)\times(0,1)} h''(v)m^+(v;dt,dx)dv & \mbox{if} &
\rho(.,.)\in\emsol((a,b)\times(0,1))
\ea
\right.
\ee
where $m^+(v;dt,dx)$ denotes the positive part of $m(v;dt,dx)$. 
%Note that in \eqref{kinetic} we have in fact
%$m(v;dt,dx)=\mu_{\eta_v}(dt,dx)$, where, for $v\in[0,K]$,
%
Thus, zeroes of $I^0_{(a,b)}$ are exactly entropy
solution to \eqref{conservation_law} in $(a,b)\times(0,1)$. This functional was introduced in \cite{bbmn} and shown in
\cite{mar} to be the LD rate function for a  stochastic
perturbation of \eqref{conservation_law}. \\ \\ 
%
%
%
%
%
%For us a technical benefit of $K^0$ is  that $\{K^0_\Omega<+\infty\}\subset \emsol(\Omega)$ and the latter enjoys
%``BV-like'' properties (\cite{dow,pan,pan2}), which in particular give sense to the boundary terms introduced below, though all computations involve only $J^0$.
%However, while both $J^0$ and $K^0$ have variational expressions, with $I^0=K^0$ only, this expression can be extended naturally to \eqref{decompose_functional}. Thus $K^0$ appears 
%relevant to our setting for this reason too.
%
%
%
%
\textbf{Dynamical functional: boundary part.}
We set
\begin{eqnarray}
\label{functional_left} I^{l}_{(a,b)}[\rho(.,.)] & = &
\int_a^b i^l(\rho(t,0^+),\rho_l)dt\\ \nonumber \\
\label{functional_right} I^{r}_{(a,b)}[\rho(.,.)] & = & \int_a^b
i^r(\rho(t,1^-),\rho_r)dt
\end{eqnarray}
with functions $i^l$ and $i^r$ defined below. We point out that \eqref{functional_left}--\eqref{functional_right} makes sense for 
$\rho(.,.)\in\emsol((a,b)\times(0,1))$. Indeed, $\emsol(\Omega)$ is contained in the set $\qsol(\Omega)$
of quasi-solutions of \eqref{conservation_law} introduced in \cite{pan, pan2}, i.e. functions $\rho(.,.)\in L^{\infty,K}(\Omega)$
such that $\mu_{\eta_v}[\rho(.,.)]\in \overline{M}(\Omega)$
for a dense subset of $v\in[0,K]$. For such functions, existence of  boundary traces in $L^1$ sense (see \eqref{lonelims} below) is
established in \cite{pan2}.
Given an entropy-flux pair $(\eta,q)$ and a continuity point $\rho_0\in[0,K]$ of $\eta$,
we define the relative entropy-flux pair by
\begin{eqnarray}
\label{relative_entropy} \eta(\rho,\rho_0) & := &
\eta(\rho)-\eta(\rho_0)-\eta'(\rho_0)(\rho-\rho_0)\\
\label{relative_flux} g(\rho,\rho_0) & := &
q(\rho)-q(\rho_0)-\eta'(\rho_0)[f(\rho)-f(\rho_0)]
\end{eqnarray}
In particular, for $(\eta_v,q_v)$ in \eqref{kruzkov_entropy}, $\eta_v(\rho,\rho_0)$ and $q_v(\rho,\rho_0)$ are well defined for $v\neq\rho_0$, and, for every $C^2$ entropy-flux pair $(\eta,q)$,
\be\label{decomp_relative}
\eta(\rho,\rho_0)=\int_0^K \eta''(v)\eta_v(\rho,\rho_0)dv,\quad
q(\rho,\rho_0)=\int_0^K \eta''(v)q_v(\rho,\rho_0)dv
\ee
We then set in \eqref{functional_left}--\eqref{functional_right}
\begin{eqnarray}\label{boundarycost_1}
i^l(\rho,\rho_l) & := & \int_0^K h''(v)\max[q_v(\rho,\rho_l),0]dv\\ 
\label{boundarycost_2}
i^r(\rho,\rho_r) & := & \int_0^K h''(v)\max[-q_v(\rho,\rho_r),0]dv
\end{eqnarray}
Define the sets
\begin{eqnarray}\label{adset_left}\bln^+(\rho_l) & = &
\{\rho\in[0,K]:\,i^l(\rho,\rho_l)=0\}\\
%
%\{\rho_l\}\cup[\varphi(\rho_l),K]  \mbox{ if } \rho_l\leq\rho^*,
%{[}\rho^*,K]  \mbox{ if }  \rho_l\geq\rho^*\\
%
\label{adset_right}\bln^-(\rho_r) & = &
\{\rho\in[0,K]:\,i^r(\rho,\rho_r)=0\}
%
%[0,\rho^*]  \mbox{ if } \rho_r\leq\rho^*,
%\{\rho_r\}\cup[0,\varphi(\rho_r)] \mbox{ if } \rho_r\geq\rho^*
%
\end{eqnarray}
It follows from \eqref{decomp_relative} that $\rho\in\bln^+(\rho_l)$ (resp. $\rho\in\bln^-(\rho_r))$  iff. $q(\rho,\rho_l)\leq 0$ (resp. $-q(\rho,\rho_r)\leq 0)$ for every entropy-flux pair $(\eta,q)$. Thus (see e.g. \cite[Chapter 15]{ser}) $\rho(t,0^+)\in\bln^+(\rho_l)$, $\rho(t,1^-)\in\bln^-(\rho_r)$ are equivalent to BLN boundary conditions 
for \eqref{boundary_data}.\\ \\ 
In Appendix \ref{dynvar}, we establish a variational expression for $I_{(a,b)}$ and prove that $I_{(a,b)}$ is l.s.c. with compact level sets with respect to 
the local $L^1$-topology on $L^{\infty,K}((a,b)\times(0,1))$. One can show (see \cite{pan2} and beginning of Subsection \ref{proof_inequality}) that $\{I^0_{(a,b)}<+\infty\}$  
is contained  in $C^0([a,b]\cap\R, L^1(0,1))$ .\\ \\
%
%We note that
%these boundary conditions make sense in $\emsol((a,b)\times(0,1))$ thanks to trace results of \cite{vas2,pan2} (see beginning of Subection \ref{proof_inequality}), and the  existence/uniqueness result of \cite{bln} extends to this setting (\cite{vas2}). 
%
\textbf{Stationary set.} The set $\mathcal S$ of stationary solutions of \eqref{conservation_law}--\eqref{boundary_data} is studied in \cite{ba} and the following results established.
Let $\mathcal M$ denote the set of minimizers (resp. maximizers) of $f$ on $[\rho_l,\rho_r]$ (resp. $[\rho_r,\rho_l])$ if $\rho_l\leq\rho_r$ (resp. $\rho_l\geq\rho_r$). 
Then we have
\begin{theorem}\label{th_stat}(\cite{ba})
(i) If $\rho_l\leq\rho_r$ (resp. $\rho_l\geq\rho_r$), $\mathcal S$ is the set of piecewise constant, nondecreasing (resp. nonincreasing), $\mathcal M$-valued functions $\rho_s(.)$
on $[0,1]$. In particular, if $\mathcal M$ is a singleton, $\mathcal S$ is a singleton whose unique element is the function $\rho_s(.)$ with constant value
\be\label{optimizer}
R(\rho_l,\rho_r)  :=  \left\{
\ba{lll}
{\rm argmin}_{[\rho_l,\rho_r]}f & \mbox{if} & \rho_l\leq\rho_r\\
{\rm argmax}_{[\rho_l,\rho_r]}f & \mbox{if} & \rho_l\geq\rho_r
\ea
\right.
\ee
(ii) If $\mathcal M$ is a singleton, the entropy solution $\rho(t,.)$ to \eqref{conservation_law}--\eqref{boundary_data}
with Cauchy datum $\rho_0(.)\in L^{\infty,K}(0,1)$ converges to $\rho_s(.)$ in $L^1(0,1)$ as $t\to\infty$.  If in addition $R(\rho_l,\rho_r)$ is not a local extremum of $f$, there exists $T\in[0,+\infty)$ such that $\rho(t,.)=\rho_s(.)$ for all $t\geq T$. 
\end{theorem}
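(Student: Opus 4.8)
\textbf{Proof proposal for Theorem \ref{th_stat}.}

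\emph{Part (i).} Working with $f$ as in \eqref{properties_flux}, I would first reduce stationary solutions to an almost algebraic description. A stationary solution is a $t$-independent BLN entropy solution $\rho_s\in L^{\infty,K}(0,1)$ of \eqref{conservation_law}--\eqref{boundary_data}; since $\partial_t\rho_s=0$ the weak formulation forces $x\mapsto f(\rho_s(x))$ to be constant on $(0,1)$, say $f(\rho_s)\equiv\lambda$, and by strict concavity the level set $f^{-1}(\lambda)$ is a singleton or a pair $\{\rho^-(\lambda)\leq\rho^+(\lambda)\}$ straddling $r^\star:={\rm argmax}\,f$. Applying the Oleinik entropy condition to each standing discontinuity of $\rho_s$ shows that the only admissible standing jump connects $\rho^-(\lambda)$ on the left to $\rho^+(\lambda)$ on the right; hence $\rho_s$ is nondecreasing and, being valued in at most two points, equals $\rho^-(\lambda)\,\indicator{[0,y)}+\rho^+(\lambda)\,\indicator{(y,1]}$ for some $y\in[0,1]$. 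It remains to impose $\rho_s(0^+)\in\bln^+(\rho_l)$ and $\rho_s(1^-)\in\bln^-(\rho_r)$. For this I would read off $\bln^\pm$ from \eqref{boundarycost_1}--\eqref{adset_right} via the identity $q_v(\rho,\rho_0)=(\indicator{\rho>v}-\indicator{\rho_0>v})(f(\rho)-f(v))$, obtaining $\bln^+(\rho_l)=\{\rho_l\}\cup\{\rho\geq r^\star:\,f(\rho)\leq f(\rho_l)\}$ and the mirror formula for $\bln^-(\rho_r)$. Matching these two memberships against the nondecreasing two-level form forces $\lambda=\min_{[\rho_l,\rho_r]}f$ (resp. $\max_{[\rho_r,\rho_l]}f$ when $\rho_l\geq\rho_r$) and leaves exactly the monotone, piecewise constant, $\mathcal M$-valued functions; the converse inclusion is a direct check that any such function satisfies $f(\rho_s)$ constant, the Oleinik condition at its single jump, and the two endpoint conditions. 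When $\mathcal M$ is a singleton no jump is possible, so $\mathcal S=\{\rho_s\equiv R(\rho_l,\rho_r)\}$; this holds for all $\rho_l\geq\rho_r$, and for $\rho_l<\rho_r$ off the line $f(\rho_l)=f(\rho_r)$.

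\emph{Part (ii).} For the asymptotics I would use the comparison principle for BLN entropy solutions with common boundary data $(\rho_l,\rho_r)$: since $0\leq\rho_0\leq K$, the solution with datum $\rho_0$ is, at every time, squeezed between the solutions $\underline\rho$ and $\overline\rho$ issued from the constant data $0$ and $K$, whence $\abs{\rho(t,\cdot)-\rho_s}\leq\abs{\underline\rho(t,\cdot)-\rho_s}+\abs{\overline\rho(t,\cdot)-\rho_s}$ pointwise, and it suffices to prove (ii) for constant initial data. For such data the solution is constructed by solving the boundary Riemann problems at $x=0$ and $x=1$ --- where the traces are constrained to $\bln^+(\rho_l)$, $\bln^-(\rho_r)$ --- and following the interaction of the emitted Riemann waves; bookkeeping of these interactions gives $L^1(0,1)$ convergence to the constant $\rho_s\equiv R(\rho_l,\rho_r)$. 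If moreover $R:=R(\rho_l,\rho_r)$ is not a local extremum of $f$ (equivalently $0<R<K$ and $f'(R)\neq 0$), the state $R$ invades the interval at nonzero speed while the transient waves are carried out through the opposite boundary in finite time, so $\underline\rho$, $\overline\rho$ and hence $\rho$ are identically $R$ for $t$ large; this explicit computation is exactly what is done in Appendices \ref{section_explicit_1}--\ref{section_explicit_2}. Alternatively, the $L^1$ convergence alone follows from a LaSalle argument: $t\mapsto\norm{\rho(t,\cdot)-\rho_s}_{L^1(0,1)}$ is nonincreasing by the boundary $L^1$-contraction, the genuinely nonlinear flux gives a uniform $BV$ bound and hence $L^1$-precompactness of $\{\rho(t,\cdot):t\geq 1\}$, and every $\omega$-limit point is a stationary BLN solution, i.e.\ lies in $\mathcal S=\{\rho_s\}$.

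\emph{Main obstacle.} Part (i) is a careful but routine case analysis, the only delicate point being the exact form of $\bln^\pm$ and the endpoint matching for every position of $\rho_l,\rho_r$ relative to $r^\star$ and to $\{0,K\}$. The genuine difficulty is the finite-time assertion in (ii): contraction and comparison yield convergence but no rate, so one must actually resolve the Riemann wave structure issued from the two boundaries and control its interactions --- precisely the work carried out in the appendices --- and it is exactly on the excluded maximal-current line $R=r^\star$ that a rarefaction emitted from one boundary spreads over the whole interval and relaxation takes infinite time.
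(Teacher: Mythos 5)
First, note that the paper does not actually prove Theorem \ref{th_stat}: it is quoted from \cite{ba}, so your proposal can only be checked against the formulas the paper does record, namely \eqref{adset_left_2}--\eqref{adset_right_2} and \eqref{eq_stat}. Your architecture is the standard and correct one: stationarity forces $f(\rho_s)$ to be a.e.\ constant, the Oleinik condition at a standing discontinuity forces a single nondecreasing jump across $\rho^*$, and what remains is to intersect the resulting family with the trace conditions $\rho_s(0^+)\in\bln^+(\rho_l)$, $\rho_s(1^-)\in\bln^-(\rho_r)$; for part (ii), monotone comparison with constant data followed by explicit resolution of the boundary Riemann waves is also the natural route, and the squeezing device you describe is exactly what the paper itself uses in Subsection \ref{lower_relax} (item iii) to handle the MC phase.

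There is, however, a concrete error that would derail the endpoint matching. Your closed form $\bln^+(\rho_l)=\{\rho_l\}\cup\{\rho\geq\rho^*:\,f(\rho)\leq f(\rho_l)\}$ and its mirror disagree with the paper's \eqref{adset_left_2}--\eqref{adset_right_2} as soon as $\rho_l>\rho^*$ (resp.\ $\rho_r<\rho^*$). Indeed, from $q_v(\rho,\rho_l)=(\indicator{\rho>v}-\indicator{\rho_l>v})(f(\rho)-f(v))$ the condition $q_v(\rho,\rho_l)\le 0$ for all $v$ splits into two branches: for $\rho>\rho_l$ it reads $f(\rho)\le\min_{[\rho_l,\rho]}f$, but for $\rho<\rho_l$ it reads $f(\rho)\ge\max_{[\rho,\rho_l]}f$, which under concavity is satisfied by every $\rho\in[\rho^*,\rho_l)$ with no comparison of $f(\rho)$ and $f(\rho_l)$ at all (the characteristics there are outgoing). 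Your formula keeps only the first branch, hence omits $[\rho^*,\rho_l)$ and in particular excludes $\rho^*$ from $\bln^+(\rho_l)$ whenever $\rho_l>\rho^*$; symmetrically it excludes $\rho^*$ from $\bln^-(\rho_r)$ whenever $\rho_r<\rho^*$. In the maximal-current phase $\rho_r\le\rho^*\le\rho_l$ your matching step would then return $\mathcal S=\emptyset$ instead of $\{\rho_s\equiv\rho^*\}$, contradicting \eqref{eq_stat}. The fix is simply to restore the missing branch, after which part (i) goes through. Two further caveats: the theorem is stated before the concavity assumption \eqref{properties_flux} is imposed, so in full generality $f^{-1}(\lambda)$ need not be a two-point set and $\mathcal M$ can genuinely contain several values, whereas your reduction to a single jump uses strict concavity; and in part (ii) the finite-time assertion is the real content --- comparison only reduces it to constant data, after which the wave-interaction bookkeeping still has to be carried out, as you yourself acknowledge.
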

\textbf{Exclusion-like case.} From now on we assume that $f$ is an ``exclusion-like'' flux function, i.e. satisfies \eqref{properties_flux}.
Thus there exists a unique decreasing function $\varphi:[0,K]\to[0,K]$ such that
\be \label{symmetry_f} f\circ\varphi=f\ee
In this case, by elementary computations, one finds the following expressions for \eqref{boundarycost_1}--\eqref{boundarycost_2}, \eqref{adset_left}--\eqref{adset_right} and $\mathcal S$.\\ \\
If $0\leq\rho_l\leq\rho^*$, then
\be \label{left_1} i^l(\rho,\rho_l)=\left\{
\ba{lll}
0 & \mbox{if} & \rho=\rho_l\mbox{ or }\varphi(\rho_l)\leq\rho\leq
K\\
g(\rho,\rho_l) & \mbox{if} & 0\leq\rho\leq \rho^*\\
g(\varphi(\rho),\rho_l) & \mbox{if} &
\rho^*\leq\rho\leq\varphi(\rho_l)
\ea
\right.\ee
If $\rho^*\leq\rho_l\leq K$, then
\be\label{left_2} i^l(\rho,\rho_l)=\left\{
\ba{lll}
0 & \mbox{if} & \rho^*\leq\rho\leq K \\
g(\rho,\rho_l) & \mbox{if} & 0\leq\rho\leq\varphi(\rho_l)\\
g(\rho)-g(\varphi(\rho)) & \mbox{if} &
\varphi(\rho_l)\leq\rho\leq\rho^*
\ea
\right. \ee
If $0\leq\rho_r\leq\rho^*$, then
\be \label{right_1}
i^r(\rho,\rho_r)=\left\{
\ba{lll}
0 & \mbox{if} & 0\leq\rho\leq\rho^*\\
-g(\rho,\rho_r) & \mbox{if} & \varphi(\rho_r)\leq\rho\leq K\\
g(\varphi(\rho))-g(\rho) & \mbox{if} &
\rho^*\leq\rho\leq\varphi(\rho_r)
\ea
\right.
\ee
If $\rho^*\leq\rho_r\leq K$, then
\be \label{right_2} i^r(\rho,\rho_r)=\left\{
\ba{lll}
0 & \mbox{if} & \rho=\rho_r\mbox{ or
}0\leq\rho\leq\varphi(\rho_r)\\
-g(\rho,\rho_r) & \mbox{if} & \rho^*\leq\rho\leq 1\\
-g(\varphi(\rho),\rho_r) & \mbox{if} &
\varphi(\rho_r)\leq\rho\leq\rho^*
\ea
\right. \ee
\eqref{left_1}--\eqref{right_2} were introduced in \cite{bd2} in the special case \eqref{asep_case}. For sets of admissible boundary values, we obtain
\be\label{adset_left_2}\bln^+(\rho_l)=
\left\{
\ba{lll}
\{\rho_l\}\cup[\varphi(\rho_l),K] & \mbox{ if } & \rho_l\leq\rho^*\\
\left[
\rho^*,K \right] & \mbox{ if } & \rho_l\geq\rho^*
\ea
\right.
\ee
\be\label{adset_right_2}
\bln^-(\rho_r)=
\left\{
\ba{lll}
[0,\rho^*] & \mbox{ if } & \rho_r\leq\rho^*\\
\{\rho_r\}\cup[0,\rho^*] & \mbox{ if } & \rho_r\geq\rho^*
\ea
\right.
\ee
Finally, the stationary set $\mathcal S$ reduces to the four phases introduced in \cite{dehp}:
\be \label{eq_stat} {\mathcal S}=\left\{
\ba{lll}
\{\rho_s(.)\equiv \rho_l\} & \mbox{if} & \rho_l<\rho_r \mbox{ and }f(\rho_l)<f(\rho_r) \\
& \mbox{or} & \rho^*>\rho_l\geq \rho_r  \mbox{ and } f(\rho_l)\geq f(\rho_r)\\ \\
\{\rho_s(.)\equiv\rho_r\} & \mbox{if} & \rho_l<\rho_r \mbox{ and } f(\rho_l)>f(\rho_r) \\
& \mbox{or} & \rho_l\geq \rho_r>\rho^*  \mbox{ and } f(\rho_r)\geq f(\rho_l)\\ \\
\{\rho_s(.)\equiv\rho^*\} & \mbox{if} & \rho_l\geq\rho^*\geq\rho_r\\
%\,\rho_l\neq\rho_r\\ \\
%
\{\rho_s^y(.),\,y\in[0,1]\} & \mbox{if} & \rho_l\leq\rho_r\mbox{ and
} f(\rho_l)=f(\rho_r)
\ea
\right.
\ee
where
\be \label{arbitrary_shock}
\rho_s^y(.)=\rho_l{\bf 1}_{(0,y)}+\rho_r{\bf 1}_{(y,1)},\quad
y\in[0,1] \ee
The four cases in \eqref{eq_stat} are respectively (\cite{dehp}) the low-density
(LD), high-density (HD) and maximal current (MC) phase, and the
phase transition line between LD and HD.\\ \\
\textbf{Static functional.}
We now make the symmetry assumption (see \eqref{symmetry_f})
\be \label{symmetry} h''(\rho)=-h''(\varphi(\rho))\varphi'(\rho)
\ee
Since \eqref{symmetry} and the definition of $I_{(a,b)}$ is unchanged by adding a linear term to the entropy $h$, we may assume
without loss of generality that $h=h(\rho,\rho^*)$ (see \eqref{relative_entropy}) is the relative entropy with respect to $\rho^*$. With this choice
\eqref{symmetry} is equivalent to $h'(\varphi(\rho))=-h'(\rho)$. Let $\theta=h'(\rho)$ be the chemical potential. The Legendre-Fenchel dual $L(\theta)=h^*(\theta)$
can be interpreted microcsopically as the moment generating function under the equilibrium invariant measure of an underlying particle system with equilibrium entropy $h$.
The latter can be thought of as the equilibrium LD rate function for the empirical particle density.
Let $j(\theta)=f(h'^{-1}(\theta))$ be the flux as a function of the chemical potential. Then \eqref{symmetry} is equivalent to 
\be\label{symmetry_2}j(-\theta)=j(\theta)\ee
Let
\be \label{def_k} k(\rho):=\int_0^\rho\varphi(r)h''(r)dr,\quad
K(\rho):=h(\rho)-\rho h'(\rho)+k(\rho) \ee
%
%Since
%%
%\be \label{derivative_K} K'(\rho)=(\varphi(\rho)-\rho)h''(\rho)
%\ee
%%
%
It follows from
\eqref{symmetry} that
\be\label{kk}
k(\rho)=\varphi(\rho)h'(\rho)+h[\varphi(\rho)],\quad
K(\rho)=[\varphi(\rho)-\rho]h'(\rho)+h(\rho)+h[\varphi(\rho)]
\ee
%
%
%$$
%\frac{d}{d\rho}[K(\varphi(\rho))-K(\rho)]=0
%$$
%%
%But $K(\varphi(\rho))-K(\rho)$ vanishes for $\rho=\rho^*$, thus we
%have
%
and
\be \label{symmetry_K} K(\varphi(\rho))=K(\rho) \ee
$K$ is increasing
on $[0,\rho^*]$ and decreasing on $[\rho^*,K]$, and there exists a unique function
$L:[0,f(\rho^*)]\to\R$ such that
$ K(\rho)=L(f(\rho))$. $L$ is an increasing function.
We now define on $L^{\infty,K}((0,1))$ a generalization $S[.]$ of the stationary LD functional introduced in \cite{dls}.\\ \\
\emph{First case: $\rho_l<\rho_r$.} For $\rho(.)\in
L^{\infty,K}((0,1))$, and $y\in[0,1]$, set
\be\label{def_S_1}
S[\rho(.),y]:=\int_0^y[h(\rho(x))-\rho(x)h'(\rho_l)+k(\rho_l)]dx +
\int_y^1[h(\rho(x))-\rho(x)h'(\rho_r)+k(\rho_r)]dx \ee
%
%, and
%easily seen to be continuous w.r.t. $L^1$ norm:
%
\be \label{def_S_2}
S[\rho(.)]:=\inf_{y\in[0,1]}S[\rho(.),y]-\min(K(\rho_l),K(\rho_r))
\ee
We denote by $\mathcal Y[\rho(.)]$ the set of minimizers $y$ in
\eqref{def_S_2}, that is the set of
minimizers $y$ of $y\mapsto \int_0^y\rho(z)dz-\rho_c y$, where
$\rho_c$ is given by
\be\label{def_critical}
\rho_c:=\frac{k(\rho_r)-k(\rho_l)}{h'(\rho_r)-h'(\rho_l)} \ee
\emph{Second case: $\rho_l\geq\rho_r$.} For $F(.)\in
L^{\infty,K}((0,1))$, set
\be \label{def_S_3} S[\rho(.),F(.)]:=\int_0^1 [h(\rho(x))-\rho(x)
h'(F(x))+k(F(x))]dx \ee
\be \label{def_S_4} S[\rho]:=\sup_{F\in{\mathcal
F}(\rho_l,\rho_r)}S[\rho(.),F(.)]-\sup_{\rho\in[\rho_r,\rho_l]}K(\rho)
\ee
where ${\mathcal F}(\rho_l,\rho_r)$ is the set of nonincreasing
functions $F$ on $[0,1]$ such that $\rho_l\geq F(0+)\geq \rho_r\geq
F(1^-)$. As in \cite{dls}, the unique minimizer in \eqref{def_S_4}, which we will denote
by $F_{\rho(.)}$, can be described explicitely by an envelope construction, see beginning of Subsection \ref{lower_relax}.
One shows as in \cite{dls} that $S[\rho(.)]=0$ for
$\rho(.)\in{\mathcal S}$, and $S[\rho(.)]>0$ for
$\rho(.)\not\in{\mathcal S}$.
It is not difficult to see that the functionals \eqref{def_S_2}--\eqref{def_S_4} are continuous with respect to $L^1((0,1))$ norm
%One shows using strict convexity of $h$, Dini's theorem and integration by parts in \eqref{def_S_3}, 
and l.s.c. in weak* toplogy $\sigma(L^{\infty,K}(0,1),L^1((0,1))$. \\ \\
\textbf{Quasi-potential.} Let ${\mathcal R}[\rho(.)]$ denote the set of functions
$\rho(.,.)\in L^{\infty,K}((-\infty,0)\times(0,1))$ such that
$\lim_{t\to 0}\rho(t,.)=\rho(.)$, and 
$\lim_{t\to
-\infty}\rho(t,.)=\rho_s(.)\in{\mathcal S}$ in $L^1((0,1))$.
%
%$\lim_{t\downarrow-\infty}\inf_{r(.)\in\mathcal S}||\rho(t,.)-r(.)||=0$.
%
For $T\in(-\infty,0)$, let ${\mathcal R}_T[\rho(.)]$ denote the set
of functions $\rho(.,.)\in L^{\infty,K}((-\infty,0)\times(0,1))$
such that $\lim_{t\to 0}\rho(t,.)=\rho(.)$ in $L^1((0,1))$, and there exists $\rho_s(.)\in\mathcal S$ for which $\rho(T,.)=\rho_s(.)$ for all $t\leq T$. Define  
\begin{eqnarray}\label{quasi_potential}
V[\rho(.)] & := & \inf_{\rho(.,.)\in{\mathcal
R}[\rho(.)]}I_{(-\infty,0)}[\rho(.,.)]\\ 
V_T[\rho(.)] & := & \inf_{\rho(.,.)\in{\mathcal
R}_T[\rho(.)]}I_{(-\infty,0)}[\rho(.,.)]
\label{quasi_potential_finite}
\end{eqnarray}
We can now state our main results. The first theorem identifies the quasi-potential with the static functional, and states that it is achieved in finite time outside the MC phase. The next two theorems characterize optimal paths for the quasi-potential.
\begin{theorem}
\label{theorem_1}
For every $\rho(.)\in L^{\infty,K}((0,1))$,
\be \label{variational} 
%S[\rho(.)]=\inf_{\rho(.,.)\in{\mathcal
%R}[\rho(.)]}I_{(-\infty,0]}[\rho(.,.)]
%
S[\rho(.)]=V[\rho(.)]
%
%=\inf_{T<0}\min_{\rho(.,.)\in{\mathcal R}_T[\rho(.)]}I_{(T,0]}[\rho(.,.)]
%
\ee
Outside the MC phase, we also have
\be\label{finite_time} 
%S[\rho(.)]=\min_{T<0,{\rho}(.,.)\in{\mathcal
%R}_T[\rho(.)]}I_{(T,0)}[\rho(.,.)]
%
S[\rho(.)]=\min_{T<0}V_T[\rho(.)]
 \ee
\end{theorem}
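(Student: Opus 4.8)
The plan is to establish $S[\rho(.)] = V[\rho(.)]$ by proving the two inequalities separately, treating the shock regime $\rho_l < \rho_r$ and the rarefaction regime $\rho_l \geq \rho_r$ on their own (as announced for Sections \ref{sec_shock} and \ref{sec_relax}). For the inequality $V[\rho(.)] \leq S[\rho(.)]$, I would construct an explicit admissible path $\rho(.,.) \in \mathcal R[\rho(.)]$ (in fact in some $\mathcal R_T[\rho(.)]$ outside the MC phase) whose action $I_{(-\infty,0)}[\rho(.,.)]$ equals $S[\rho(.)]$, and show the cost of this path is exactly the static functional. The natural candidate, guided by the Onsager--Machlup heuristic and the informal description in the introduction, is the space-time reversal of a relaxation path: starting from $\rho(.)$ at time $0$, run the conservation law \eqref{conservation_law} backwards with suitably chosen time-dependent space-reversed BLN data, so that at a finite time $T$ the profile is a stationary element $\rho_s(.) \in \mathcal S$. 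The action of such a path splits via \eqref{decompose_functional} into a bulk contribution, coming from the positive part of the entropy defect measure $m^+$ accumulated along reversed shocks/antishocks, and boundary contributions $I^l, I^r$; one computes each using the explicit formulas \eqref{left_1}--\eqref{right_2} and the identities \eqref{kk}, \eqref{symmetry_K}, \eqref{def_k}, and checks the total telescopes to $S[\rho(.),y]$ (resp. $S[\rho(.),F(.)]$) for the appropriate optimal $y$ (resp. $F$), minus the constant $\min(K(\rho_l),K(\rho_r))$ (resp. $\sup_{[\rho_r,\rho_l]} K$).

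For the reverse inequality $V[\rho(.)] \geq S[\rho(.)]$, I would argue that along any admissible path, the map $t \mapsto S[\rho(t,.)]$ (using $S[\cdot,y]$ with a well-chosen $y(t)$ in the shock regime, or $S[\cdot,F(t,\cdot)]$ with $F$ the envelope of $\rho(t,.)$ in the rarefaction regime) is absolutely continuous in a suitable sense, tends to $0$ as $t \to -\infty$ since $\rho(-\infty,.) \in \mathcal S$ and $S$ vanishes there, equals $S[\rho(.)]$ at $t=0$, and satisfies a differential inequality
\be
\frac{d}{dt} S[\rho(t,.)] \leq \dot I_{(-\infty,t)}[\rho(.,.)] \quad\text{(density of the action)}.
\ee
Integrating from $-\infty$ to $0$ then yields $S[\rho(.)] \leq I_{(-\infty,0)}[\rho(.,.)]$, and taking the infimum over admissible paths gives the claim. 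The key computation is to differentiate $S[\rho(t,.)]$ using the weak formulation $\partial_t \rho + \partial_x f(\rho) = 0$ (which holds for entropy-measure solutions, the only case where the action is finite), integrate by parts to produce boundary traces, and bound the resulting boundary terms by $i^l(\rho(t,0^+),\rho_l) + i^r(\rho(t,1^-),\rho_r)$ and the bulk dissipation terms by the $m^+$ integral; here the symmetry assumption \eqref{symmetry}, equivalently \eqref{symmetry_2}, is what makes the boundary and bulk pieces combine cleanly into $S$. The infimum over $y(t)$ (resp. choice of $F(t,.)$) is where one must be careful, since the minimizing $y$ may jump in time; monotonicity of $t \mapsto \int_0^{y}\rho(t,z)\,dz - \rho_c y$ arguments and lower semicontinuity (Appendix \ref{dynvar}) handle this.

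For \eqref{finite_time}, once the optimal path above is seen to reach $\mathcal S$ at a finite time $T < 0$ outside the MC phase — this is exactly the finite-time relaxation phenomenon of Theorem \ref{th_stat}(ii), transported through space-time reversal \eqref{reversal_prod} — it lies in $\mathcal R_T[\rho(.)]$, so $V_T[\rho(.)] \leq S[\rho(.)]$; combined with the trivial bound $V_T \geq V$ and \eqref{variational} this forces equality, and the minimum over $T$ is attained. In the MC phase the stationary state is $\rho^*$, which is a local extremum of $f$, so the finite-time mechanism fails and one genuinely needs $T \to -\infty$.

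The main obstacle I anticipate is the reverse inequality in the rarefaction regime: the static functional there is defined through a supremum over the auxiliary nonincreasing function $F$, so one must show that along an arbitrary admissible path the correct choice is the envelope $F_{\rho(t,.)}$, track how this envelope evolves (it solves \eqref{conservation_law} with space-reversed BLN data, per the informal description), and verify that the boundary and interior contributions to $\frac{d}{dt} S$ have the right sign. Controlling the regularity of boundary traces $\rho(t,0^+), \rho(t,1^-)$ for general $\rho(.,.) \in \emsol$ — available only in the $L^1$ sense via \cite{pan2} — and justifying the integration by parts and the chain rule for the nonsmooth functional $S$ will require the quasi-solution trace theory and the variational representation of $I_{(a,b)}$ from Appendix \ref{dynvar}.
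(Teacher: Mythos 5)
Your plan coincides with the paper's own strategy: the lower bound $S\leq V$ is obtained by differentiating $S[\rho(t,.),y_t]$ (resp. $S[\rho(t,.),F_t]$) along an arbitrary finite-action path via the Green's formula for divergence-measure fields and the sign lemmas for the bulk and boundary dissipation terms, while the upper bound comes from explicitly constructing the space-time-reversed adjoint evolution, and the finite-time statement follows from Theorem \ref{th_stat}(ii) exactly as you say. The technical obstacles you flag (non-Lipschitz motion of the minimizing $y$, traces only in the $L^1$ sense, evolution of the envelope $F_{\rho(t,.)}$) are precisely the ones the paper resolves with the mollified functional $S_\psi$, the Filippov/generalized-characteristic arguments, and the Hopf--Lax comparison of Lemma \ref{lemma_existence}, so the proposal is essentially the paper's proof in outline.
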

%
%
%
%The next theorems describe minimizers in
%\eqref{quasi_potential}. 
We denote by $v(\rho^-,\rho^+)$
the Rankine-Hugoniot or velocity of a discontinuity
($\rho^-\neq\rho^+$)
%
%or characteristic speed ($\rho^-=\rho^+$), in
%
in \eqref{conservation_law}, given by the flux continuity relation
\be\label{speed_rankine}
f(\rho^+)-v\rho^+=f(\rho^-)-v\rho^-
\ee
\begin{theorem}
\label{theorem_2} Assume $\rho_l<\rho_r$. Then there exists a
mapping
$$(\tilde{\rho}(.)\in L^{\infty,K}((0,1)),\tilde{y}\in\tilde{\mathcal Y}[\tilde{\rho}(.)])\mapsto
\tilde{\rho}(.,.)=M^{\tilde{y}}[\tilde{\rho}(.)]\in
L^{\infty,K}((0,+\infty)\times(0,1))$$
with the following properties, where $\tilde{\mathcal
Y}[\tilde{\rho}(.)]$ denotes the set of maximizers of
$y\mapsto\int_0^y\tilde{\rho}(z)dz-\rho_c
y$.\\ \\
(0) $M^{\tilde{y}^1}[\tilde{\rho}(.)]\neq M^{\tilde{y}^2}[\tilde{\rho}(.)]$ for two distinct $\tilde{y}^1,\tilde{y}^2\in\tilde{\mathcal Y}[\tilde{\rho}(.)]$.\\ \\
(1) $\rho(.,.)\in L^{\infty,K}((-\infty,0)\times(0,1))$ achieves equality in
\eqref{quasi_potential} and \eqref{variational} iff. it is of the form
\be \label{reversal} \rho(t,x)=\tilde{\rho}(-t,1-x)\ee
where $\tilde{\rho}(.,.)=M^{1-y}[\tilde{\rho}(.)]$ for some $y\in\mathcal Y[\rho(.)]$, with $\tilde{\rho}(.)=\rho(1-.)$.
Then we have $\mathcal Y[\rho(t,.)=\{y_t\}$ for all $t<0$, where $y_t=1-\tilde{y}_{-t}$, $y_0=y$, with $\tilde{y}_.$ given by \eqref{def_tildey} below.\\ \\
(2) $M^{\tilde{y}}[\tilde{\rho}(.,.)]$ is the unique weak solution of
\eqref{conservation_law} in
$\emsol((0,+\infty)\times(0,1))$ that satisfies the following conditions (a)--(c). Besides, 
$\tilde{\mathcal Y}[\tilde{\rho}(t,.)]=\{\tilde{y}_t\}$ holds for all $t>0$, with $\tilde{y}_t$ given by \eqref{def_tildey}.\\ \\
(a) $\tilde{\rho}(t,.)\to\tilde{\rho}(.)$ in $L^1((0,1))$ as $t\to 0$.\\ \\
(b) Let $\Gamma=\{
(t,\tilde{y}_t):t\in(0,\tilde{\theta}^{\tilde{y}})\}$, where
\be\label{def_tildey}\tilde{y}_t:=\min(1,(\tilde{y}+vt)^+),\quad
t\in[0,+\infty)\ee
\be \label{def_vshock} v:=v(\varphi(\rho_l),\varphi(\rho_r)) \ee
\be \label{def_time} \tilde{\theta}^{\tilde{y}}:=
\max(-\tilde{y}/v,(1-\tilde{y})/v)\in[0,+\infty]\ee
with $0/0=0$, is the time for $\tilde{y}_.$ to reach a boundary.
Then $\tilde{\rho}(.,.)$ is an entropy solution to
\eqref{conservation_law} in
$[(0,\tilde{\theta}^{\tilde{y}})\times(0,1)]\backslash\Gamma$, and satisfies
\be\label{reverse_bln} \tilde{\rho}(t,0^+)\in{\mathcal
E}^+(\rho_r),\quad \tilde{\rho}(t,1^-)\in\bln^-(\rho_l)\ee
\be \label{antishock_0}
\tilde{\rho}(t,\tilde{y}_t-0)=\varphi(\rho_l)>\tilde{\rho}(t,\tilde{y}_t+0)=\varphi(\rho_r)
\ee
(c) If $\tilde{\theta}^{\tilde{y}}<+\infty$, i.e. $f(\rho_l)\neq f(\rho_r)$ or $\tilde{y}\in\{0,1\}$, $\tilde{\rho}(.,.)$ is an entropy solution to
\eqref{conservation_law} in
$(\tilde{\theta}^{\tilde{y}},+\infty)\times (0,1)$, and satisfies
boundary conditions
\be \label{reverse_bln_2}
\tilde{\rho}(t,\tilde{y}_\infty^\sigma)=\rho_b, \quad
\tilde{\rho}(t,(1-\tilde{y}_\infty)^{-\sigma})\in{\mathcal
E}^{-\sigma}(\rho_b)\ee
where
\be \label{final_boundary}
\tilde{y}_\infty=\tilde{y}_{\tilde{\theta}^{\tilde{y}}}=\left\{
\ba{lll}
0 & \mbox{if} & f(\rho_l)<f(\rho_r)\mbox{ or
}f(\rho_l)=f(\rho_r),\,\tilde{y}=0\\
1 & \mbox{if} & f(\rho_l)>f(\rho_r)\mbox{ or
}f(\rho_l)=f(\rho_r),\,\tilde{y}=1
\ea
\right. \ee
is the boundary reached by $\tilde{y}_.$ at time
$\tilde{\theta}^{\tilde{y}}$,
$\sigma=+$ and $-\sigma=-$ if $\tilde{y}_\infty=0$, $\sigma=-$
and $-\sigma=+$ if $\tilde{y}_\infty=1$, $\rho_b=\rho_l$ in the
first case of \eqref{final_boundary}, and $\rho_b=\rho_r$ in the
second case of \eqref{final_boundary}.
%
%
%
%Besides, $\tilde{\rho}(.,.)$ determined by (a), (b), (c) above has
%the property\\ \\
%
%
%(d) There exists $\tau^y<+\infty$ and $\rho_s(.)\in{\mathcal S}$
%such that $\tilde{\rho}(t,.)=\rho_s(.)$ for all $t\geq\tau^y$.
%$\rho_s(.)$ is given by $\rho_s\equiv\rho_l$ in  the case
%$f(\rho_l)<f(\rho_r)$, $\rho_s\equiv\rho_r$ in the case
%$f(\rho_l)>f(\rho_r)$, and $\rho_s=\rho_s^y$ (defined in
%\eqref{arbitrary_shock}) in the case $f(\rho_l)=f(\rho_r)$.\\ \\
%
%In particular, for $t\in[-\theta^y,0]$, $\rho(.,.)$ has  a shock
%at $1-y+vt$ with values
%
%\be \label{shock_0}
%\rho(t,y+vt-0)=\varphi(\rho_r)<\rho(t,y+vt+0)=\varphi(\rho_l) \ee
%
%and $\rho(.,.)\in{\mathcal R}_T[\rho(.)]$ for every $T\geq\tau^y$,
%and \eqref{variational} holds for any such $T$.
%
\end{theorem}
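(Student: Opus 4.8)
The plan is to build the map $M^{\tilde y}$ explicitly by solving the adjoint evolution problem forward in time, and then to verify both that its space-time reversal lies in $\mathcal R[\rho(.)]$ with action equal to $S[\rho(.)]$, and that nothing else does. Concretely: fix $\tilde\rho(.)\in L^{\infty,K}((0,1))$ and $\tilde y\in\tilde{\mathcal Y}[\tilde\rho(.)]$. First I would construct, on the slab $(0,\tilde\theta^{\tilde y})\times(0,1)$ minus the curve $\Gamma$, the unique entropy solution of \eqref{conservation_law} with the prescribed antishock \eqref{antishock_0} along $\Gamma$ (which moves at Rankine--Hugoniot speed $v=v(\varphi(\rho_l),\varphi(\rho_r))$ by \eqref{def_vshock}, so $\Gamma$ is an admissible discontinuity curve of the PDE) and with the reversed BLN conditions \eqref{reverse_bln} at the two boundaries, Cauchy datum $\tilde\rho(.)$. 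Existence and uniqueness here follow from the standard theory of \cite{bln, ser} applied on each of the two subdomains separated by $\Gamma$, glued along $\Gamma$ through the flux-continuity relation \eqref{speed_rankine}; the fact that $\tilde y\in\tilde{\mathcal Y}$ is what makes $\varphi(\rho_l)$, $\varphi(\rho_r)$ compatible traces. Then, at time $\tilde\theta^{\tilde y}$, $\Gamma$ hits the boundary $\tilde y_\infty\in\{0,1\}$ determined by the sign of $f(\rho_l)-f(\rho_r)$ as in \eqref{final_boundary}; past that time I continue by the entropy solution of \eqref{conservation_law} with the changed boundary data \eqref{reverse_bln_2} (value $\rho_b$ imposed on the side the antishock exited, reversed BLN of $\rho_b$ on the other side), which by Theorem~\ref{th_stat}(ii) relaxes to the constant $\rho_b$, reaching it in finite time when $\rho_b$ is not a local extremum of $f$. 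This defines $M^{\tilde y}[\tilde\rho(.)]$; property (b)--(c) hold by construction, and $\tilde{\mathcal Y}[\tilde\rho(t,.)]=\{\tilde y_t\}$ follows by tracking the position of the antishock, since for $t$ in the first phase the profile has a single jump from $\varphi(\rho_l)$ up to $\varphi(\rho_r)$ at $\tilde y_t$ and the minimization defining $\mathcal Y$ selects exactly that location (using $\rho_c$ from \eqref{def_critical} and the symmetry \eqref{symmetry}).

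Next I would prove the upper bound $V[\rho(.)]\le S[\rho(.),y]-\min(K(\rho_l),K(\rho_r))$ for each $y\in\mathcal Y[\rho(.)]$, hence $V\le S$. Set $\tilde\rho(.)=\rho(1-.)$, $\tilde y=1-y$, $\tilde\rho(.,.)=M^{\tilde y}[\tilde\rho(.)]$, and $\rho(t,x)=\tilde\rho(-t,1-x)$ as in \eqref{reversal}. One checks $\rho(.,.)\in\mathcal R_T[\rho(.)]$ for $T=-\tilde\theta^{\tilde y}-$(relaxation time), using Theorem~\ref{th_stat}. It remains to compute $I_{(-\infty,0)}[\rho(.,.)]$. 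By \eqref{reversal_prod} the bulk entropy production of $\rho$ is the reflected negative of that of $\tilde\rho$; since $\tilde\rho$ is entropic off $\Gamma$, the only bulk contribution to $I^0$ comes from the antishock $\Gamma$ (where, after reversal, $\rho^- = \varphi(\rho_r) < \rho^+ = \varphi(\rho_l)$ makes it an antishock, hence genuine positive entropy production), plus possibly the boundary-data jump at $\tilde\theta^{\tilde y}$; similarly $I^l,I^r$ pick up the reversed-BLN mismatch. The key computation is that the integral of $h''(v)m^+(v;\cdot)$ along $\Gamma$ together with the boundary terms telescopes, via the definitions \eqref{def_k}, \eqref{kk}, \eqref{symmetry_K}, into $S[\rho(.),y]-\min(K(\rho_l),K(\rho_r))$. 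This is where the symmetry assumption \eqref{symmetry} is essential: it is what makes the entropy production along an antishock $(\varphi(\rho_r),\varphi(\rho_l))$ equal to the ``$K$-difference'' bookkeeping quantity appearing in $S$, so that the dynamic cost of creating and transporting the antishock exactly matches the static functional. I would organize this as a Green's-formula / integration-by-parts identity on the slab, exploiting that on the entropic region the flux of the entropy current $g(\tilde\rho,\cdot)$ is divergence-controlled.

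The matching lower bound $V[\rho(.)]\ge S[\rho(.)]$, and the characterization of \emph{all} minimizers, is the part I expect to be hardest. Here I would take any $\rho(.,.)\in\mathcal R[\rho(.)]$ with $I_{(-\infty,0)}[\rho(.,.)]<\infty$, so $\rho(.,.)\in\emsol$ and has $L^1$ boundary traces (\cite{pan2}), hence $\tilde\rho(.,.)=\rho\circ\Phi\in\emsol((0,+\infty)\times(0,1))$. The strategy is a ``free-energy'' or Lyapunov-type estimate: differentiate $t\mapsto S[\tilde\rho(t,.),y(t)]$ along the flow for a well-chosen $y(t)\in\tilde{\mathcal Y}[\tilde\rho(t,.)]$ and show its decrease is controlled below by (minus) the local rate of $I$, using the entropy inequalities encoded in the kinetic measure $m$ and the boundary costs $i^l,i^r$. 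Integrating from $-\infty$ to $0$ and using $S\to 0$ at the stationary state gives $S[\rho(.)]\le I_{(-\infty,0)}[\rho(.,.)]$, with equality forcing (i) $m$ to be supported exactly on one antishock curve with the prescribed endpoints, (ii) the boundary traces to saturate the reversed-BLN inequalities, and (iii) $y(t)$ to evolve by \eqref{def_tildey}. Steps (i)--(iii) then pin down $\tilde\rho(.,.)=M^{\tilde y}[\tilde\rho(.)]$ by the uniqueness in (2), giving the ``iff'' in (1); the distinctness statement (0) follows since different $\tilde y\in\tilde{\mathcal Y}$ give antishocks at different initial positions, hence different solutions. The technical obstacle is making the Lyapunov differentiation rigorous for merely $\emsol$ (not $BV$) solutions: one must justify the chain rule for $h(\tilde\rho(t,.))$ and the boundary-trace manipulations using the kinetic formulation \eqref{kinetic} and the variational representation of $I_{(a,b)}$ from Appendix~\ref{dynvar}, and one must handle the non-uniqueness of the minimizer $y(t)$ (it may jump) — I would deal with this by showing $y\mapsto\int_0^y\tilde\rho(t,z)dz-\rho_c y$ has its argmin set evolving upper-semicontinuously and that along any minimizing path one can select $y(t)$ satisfying \eqref{def_tildey}.
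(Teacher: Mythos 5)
Your proposal follows essentially the same route as the paper: an explicit construction of $M^{\tilde y}[\tilde\rho(.)]$ via the antishock evolution (the paper verifies persistence of the traces $\varphi(\rho_l),\varphi(\rho_r)$ along $\Gamma$ by a Hopf--Lax lemma and an induction on the minimizer property $z_n\in\tilde{\mathcal Y}[\tilde\rho(t_n,.)]$, which is exactly the mechanism behind your ``compatible traces'' remark), a Green's-formula identity showing the action of its space-time reversal telescopes to $S$, and a Lyapunov-type decrease of $S[\rho(t,.),y_t]$ along generalized characteristics for the reverse inequality and the characterization of minimizers. The technical obstacles you flag (non-$BV$ regularity of entropy-measure solutions, possibly jumping minimizers $y_t$) are precisely the ones the paper handles with the mollified functional $S_\psi$, the jump/VMO structure theorem, and Filippov solutions of $\dot y_t=f'(\rho(t,y_t))$.
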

\begin{theorem}
\label{theorem_3}
Assume $\rho_l\geq\rho_r$. Then there is a unique 
${\rho}(.,.)\in L^{\infty,K}((-\infty,0)\times(0,1))$ that achieves equality in \eqref{quasi_potential} and \eqref{variational}, which is defined as follows. Let
\be \label{initial_tildeg}
\tilde{G}(0,x)=\varphi[F_{\rho(.)}(1-x)]
\ee
Define
$\tilde{G}(.,.)$ on $(0,+\infty)\times\R$ as the entropy solution to
\eqref{conservation_law} with initial datum
\be \label{barg} \bar{G}(0,.)=\varphi(\rho_r){\bf
1}_{(-\infty,0)}+\tilde{G}(0,.){\bf 1}_{(0,1)}+\varphi(\rho_l){\bf
1}_{(1,+\infty)} \ee
The restriction of $\tilde{G}(.,.)$ to $(0,+\infty)\times(0,1)$ is the entropy solution to \eqref{conservation_law} with initial datum $\tilde{G}(0,.)$ and BLN
boundary conditions
\be \label{bln_tildeg}
\tilde{G}(t,0^+)\in\bln^+(\varphi(\rho_r)),\quad
\tilde{G}(t,1^-)\in\bln^-(\varphi(\rho_l))
\ee
Then $\rho(.,.)$ is given by \eqref{reversal}, where
$\tilde{\rho}(t,x)$ is the unique entropy solution to
\eqref{conservation_law} with initial datum
\be\label{initial_tilderho}
\tilde{\rho}(0,.)=\rho(1-.)\ee
and BLN boundary data
\be \label{bln_f} \tilde{\rho}(t,0^+)\in{\mathcal
E}^+[\varphi(\tilde{G}(t,0^+))], \quad
\tilde{\rho}(t,1^-)\in\bln^-[\varphi(\tilde{G}(t,1^-))]
\ee
Besides, \eqref{initial_tildeg} remains true at later
times, i.e.:
\be \label{later_tildeg}
\tilde{G}(t,x)=\varphi[F_{\rho(-t,1-.)}(1-x)],\quad \forall
t>0,\,x\in(0,1)
\ee
\end{theorem}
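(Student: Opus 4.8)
By Theorem \ref{theorem_1} we already know $V[\rho(.)]=S[\rho(.)]$, so the real content is that the infimum in \eqref{quasi_potential} is attained and that its unique minimizer is the space-time reversal \eqref{reversal} of the solution built in \eqref{initial_tildeg}--\eqref{bln_f}. I would prove this in the usual three moves, all carried out on the space-time reversed picture --- legitimate because $\Phi:(t,x)\mapsto(-t,1-x)$ leaves \eqref{conservation_law} invariant while flipping the sign of every entropy production, cf. \eqref{reversal_prod}: (i) show the constructed $\tilde\rho(.,.)$ lies in the reversed analogue of $\mathcal R[\rho(.)]$ and that $I_{(0,+\infty)}[\tilde\rho(.,.)]=S[\rho(.)]$; (ii) show $I_{(0,+\infty)}[\sigma(.,.)]\geq S[\rho(.)]$ for every competitor $\sigma$; (iii) read off from the equality cases in (ii) that any minimizer must coincide with $\tilde\rho(.,.)$.

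\textbf{Consistency of the construction: proof of \eqref{later_tildeg}.} This is the backbone of move (i). First record that $\varphi$ is a decreasing involution with $f\circ\varphi=f$, so it conjugates \eqref{conservation_law} to itself, exchanging the roles of shocks and rarefactions and of the two boundaries, and exchanging $\bln^+$ with $\bln^-$ up to relabelling. Given the envelope description of $F_{\rho(.)}$ recalled at the start of Subsection \ref{lower_relax}, I would then show that $t\mapsto\varphi[F_{\rho(-t,1-\cdot)}(1-\cdot)]$ is itself an entropy solution of \eqref{conservation_law} on $(0,+\infty)\times(0,1)$ with initial datum \eqref{initial_tildeg} and space-reversed BLN conditions \eqref{bln_tildeg}. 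The mechanism is a comparison/monotonicity argument: the concave (resp. convex) envelope defining $F$ propagates under \eqref{conservation_law} exactly as a Lax--Oleinik solution, and this envelope operation is compatible with the boundary-layer selection encoded by BLN; the whole-line extension \eqref{barg} reduces the boundary problem to a Cauchy problem on $\R$ where the envelope propagation is classical. By uniqueness of entropy solutions of the initial-boundary problem, the field so obtained is $\tilde G(.,.)$, which is \eqref{later_tildeg}.

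\textbf{Computing $I$ and the lower bound.} With \eqref{later_tildeg} in hand, introduce the free energy $t\mapsto\mathcal F(t):=S[\tilde\rho(t,\cdot\,),\varphi(\tilde G(t,\cdot))]$ in the notation of \eqref{def_S_3}, note $\mathcal F(0)=S[\rho(.)]+\sup_{[\rho_r,\rho_l]}K$ by \eqref{initial_tildeg}--\eqref{def_S_4}, and compute $\mathcal F'(t)$. Here the symmetry assumption \eqref{symmetry} is essential: it is what makes $k,K$ obey \eqref{symmetry_K} with $K=L\circ f$, $L$ increasing, and what gives the boundary costs \eqref{left_1}--\eqref{right_2} the precise shape for which the slaved BLN data \eqref{bln_f} makes $-\mathcal F'(t)$ equal to the bulk entropy-production rate of $\tilde\rho$ plus the two boundary rates $i^l(\tilde\rho(t,0^+),\varphi(\tilde G(t,0^+)))$, $i^r(\tilde\rho(t,1^-),\varphi(\tilde G(t,1^-)))$. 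Integrating, and controlling $t\to+\infty$ by Theorem \ref{th_stat} (relaxation to a stationary profile, in finite time off the MC phase and asymptotically on it, with $\mathcal F(+\infty)=\sup_{[\rho_r,\rho_l]}K$), yields $I_{(0,+\infty)}[\tilde\rho(.,.)]=\mathcal F(0)-\mathcal F(+\infty)=S[\rho(.)]$, which is move (i). For move (ii) I would run the same computation with $\tilde G$ replaced by an arbitrary admissible auxiliary evolution $G$ --- the $\varphi$-image, read in reversed space, of an arbitrary $F\in\mathcal F(\rho_l,\rho_r)$ --- the identity for $-\mathcal F'(t)$ now becoming the inequality that $-\mathcal F'(t)$ is bounded above by the bulk dissipation rate plus the boundary rates, from a Kru\v{z}kov-type doubling estimate between $\sigma$ and $\varphi(G)$; integrating and optimizing over $G$ (equivalently over $F$, with \eqref{symmetry_K} and $K=L\circ f$ absorbing the constants) gives $I_{(0,+\infty)}[\sigma(.,.)]\geq\sup_F S[\rho(.),F]-\sup_{[\rho_r,\rho_l]}K=S[\rho(.)]$. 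Move (iii) is then bookkeeping: equality throughout forces the bulk entropy production of $\sigma$ to vanish except where the traces permit dissipation, forces the boundary traces of $\sigma$ into the sets and values prescribed by \eqref{bln_f}, and forces the optimal auxiliary field to be $\tilde G$; combined with uniqueness for the initial-boundary problem these pin $\sigma=\tilde\rho(.,.)$, hence via \eqref{reversal_prod} the minimizer $\rho(t,x)=\tilde\rho(-t,1-x)$ is unique.

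\textbf{Main obstacle.} I expect \eqref{later_tildeg} to be the crux: proving that the \emph{global, nonlocal} envelope operator $\rho(.)\mapsto F_{\rho(.)}$ intertwines the $\varphi$-conjugated conservation-law dynamics with the space-reversed BLN boundary selection, for merely bounded data and in the presence of boundary layers. This is precisely the step that, at zero viscosity, replaces the ``invert $\rho(.)\mapsto F(.)$'' argument available in \cite{bdgjlw1}--\cite{bdgjlw2}, and there is no Hamilton--Jacobi shortcut for it. The secondary difficulty is making the free-energy computation and its dissipation-inequality counterpart rigorous at the level of entropy-measure solutions --- justifying the boundary traces, the chain rule for $t\mapsto\mathcal F(t)$, and the exact matching of the boundary cost integrands \eqref{left_1}--\eqref{right_2} with $\mathcal F'$ --- and then squeezing the uniqueness statement out of the equality cases, including across the instant at which the boundary rule for $\tilde\rho$ switches.
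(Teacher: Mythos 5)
Your architecture coincides with the paper's: the identity for $\mathcal F(t)=S[\tilde\rho(t,\cdot),\varphi(\tilde G(t,\cdot))]$ is exactly Lemma \ref{lemma_2} (formula \eqref{computation_4}, proved there by the Gauss--Green and product rules for divergence-measure fields of \cite{cf} rather than by doubling of variables); the sign analysis of the bulk and boundary dissipation terms via Lemmas \ref{lemma_boundaries}, \ref{explicit_tildeg} and \ref{inequalities} gives $S\leq V$; the equality case gives uniqueness; and the verification of \eqref{later_tildeg} along the constructed path gives existence. Your variant of the lower bound (evolve an arbitrary $F\in\mathcal F(\rho_l,\rho_r)$ and optimize at the end) is equivalent to the paper's choice of evolving the optimal $F_{\rho(.)}$ directly, since the supremum in \eqref{def_S_4} is attained there. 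Two smaller points you gloss over: in the uniqueness step, vanishing of $\mu^-$ only gives $\mu_h[\tilde\rho]\leq 0$ for the single entropy $h$, and one needs the result of \cite{dow} to upgrade this to full entropicity before invoking uniqueness of the initial-boundary problem; and Theorem \ref{th_stat} does not directly give relaxation of $\tilde\rho$ because its boundary data \eqref{bln_f} are time-dependent -- the paper handles the MC phase by squeezing between solutions with data $\rho^*\pm\varepsilon$.

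The genuine gap is where you place it, but your proposed mechanism for \eqref{later_tildeg} does not close it. The statement is not that $\tilde G$ solves a nice Cauchy problem (that is true by construction, and \eqref{barg} does handle it), but that the truncated concave hull of the primitive of the \emph{evolving solution} $\tilde\rho(t,\cdot)$ has derivative $\tilde G(t,\cdot)$ for all $t>0$, i.e. \eqref{orequivalently}. Arguing that ``the envelope propagates as a Lax--Oleinik solution, compatibly with the BLN selection, hence by uniqueness it equals $\tilde G$'' assumes precisely the intertwining of the nonlocal envelope operator with the two coupled boundary problems that has to be proved; the whole-line reduction says nothing about $\tilde\rho$, whose boundary data are slaved to $\tilde G$ through \eqref{bln_f}. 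The paper's Lemma \ref{lemma_existence} does this by expressing \emph{both} primitives $\tilde S$ (of $\tilde G$) and $\tilde R$ (of $\tilde\rho$) through the Hopf--Lax formula for the initial-boundary problem of \cite{jvg} (Theorem \ref{th_jvg}), with carefully matched boundary costs, and then verifying the three properties characterizing the truncated concave hull: $\tilde R\leq\tilde S$ with the slope constraints, contact $\tilde R=\tilde S$ off the affine set of $\tilde S$ (via the optimal characteristic not touching the boundary), and the boundary contact conditions at the critical times $t_0,t_1$ (via the special path $\gamma\equiv 0$ and integration of the Hamilton--Jacobi equation). Without some version of this comparison your move (i), and hence the existence half of the theorem, is not established.
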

A functional closely related to \eqref{functional_bulk} was introduced in \cite{jen,var} as the LD rate function for TASEP on $\Z$.
Namely,
\be\label{jv}
J^0_{(a,b)}[\rho(.,.)]=\left\{
\ba{lll}
+\infty & \mbox{if} & \rho(.,.)\not\in\emsol_h((a,b)\times(0,1))\\
\dsp\mu^+_h((a,b)\times(0,1)) & \mbox{if} & \rho(.,.)\in \emsol_h((a,b)\times(0,1))\ea
\right.
\ee
where $\emsol_h(\Omega)$ is the set of
weak solutions of \eqref{conservation_law} such that $\mu_h$ lies in $\overline{M}(\Omega)$.
By convexity,
\be\label{compare_bulk}I^0_{(a,b)}\geq J^0_{(a,b)}\quad\mbox{on
}L^{\infty,K}((a,b)\times(0,1))\ee
If $f$ is strictly convex or concave, the result of \cite{dow}
implies that zeroes of $J^0_{(a,b)}$ are exactly
entropy solution to \eqref{conservation_law} in $(a,b)\times(0,1)$. For such $f$ one has
\be\label{compare_bulk_bv} J^0_{(a,b)}=I^0_{(a,b)}\quad\mbox{ on }{BV}_{loc}((a,b)\times(0,1))
\ee
%
%where $BV_{loc}(\Omega)$ denotes the set of functions $\rho(.,.)$
%whose gradient in the sense of distributions is a locally bounded
%vector-valued measure in $\Omega$.
%
where ${BV}_{loc}(\Omega)$ denotes the set of functions $\rho(.,.)\in L^{\infty}(\Omega)$ such that $\partial_t\rho$ and $\partial_x\rho$ lie in ${M}(\Omega)$, the set of Radon measures on $\Omega$.
If $f$ is neither strictly convex nor strictly concave, $I^0$ and $J^0$ are quite different, as a single entropy is not enough to characterize entropy solutions (\cite{lf}). 
%
%However one can construct (\cite{bbmn}) weak solutions $\rho(.,.)$
%for which $J^0_\Omega>K^0_\Omega$, and $K^0_\Omega=0$ does not imply being an entropy
%solution (\cite{lf}).
%%
%In the sequel, we assume that $f$ is a uniformly concave function. We
%denote by $\rho^*\in(0,K)$ the location of its unique maximum, and
%by $\varphi$ the unique decreasing function $[0,K]\to[0,K]$ such
%that
%%
%\be \label{symmetry_f} f\circ\varphi=f\ee
%%
In fact we will prove our results more generally for a bulk functional $I^0_{(a,b)}:L^{\infty,K}((a,b)\times(0,1))\to[0,+\infty]$ which enjoys the following
properties: (i)
$J^0_{(a,b)} \leq I^0_{(a,b)}$ on $L^{\infty,K}((a,b)\times(0,1))$;
(ii) $I^0_{(a,b)}=J^0_{(a,b)}$  on
${BV}_{loc}((a,b)\times(0,1))$;
(iii) $(I^0)^{-1}_{(a,b)}([0,+\infty))\subset\emsol((a,b)\times(0,1))$.\\ \\
%
%Assumption iii) implies (\cite{pan2}) existence of boundary traces in the sense \eqref{lonelims}  that allow us to define boundary terms  \eqref{functional_left}--\eqref{functional_right}.\\ \\
%
We will denote by
$J_{(a,b)}$ the functional obtained by replacing $I^0$ with $J^0$ in \eqref{decompose_functional}. 
Since we do not know if $J^0$ satisfies assumption iii), in view of \eqref{functional_left}--\eqref{functional_right}, $J_{(a,b)}$ is not defined a priori on $L^{\infty,K}((a,b)\times(0,1))$, but only on $\emsol((a,b)\times(0,1))$.
Our results would be valid for $J_{(a,b)}$ if
condition (iii) could be proved in this case. The result of \cite{dow} suggests the possible conjecture 
$\emsol_h((a,b)\times(0,1))=\emsol((a,b)\times(0,1))$ for any strictly convex entropy $h$, which would imply iii) for $J_{(a,b)}$. 
\section{The case $\rho_l<\rho_r$}\label{sec_shock}
\subsection{Analysis of entropy production}
\label{proof_inequality}
Let $\rho(.,.)\in\emsol((a,b)\times(0,1))$. It will be convenient to extend it by giving it constant uniform values $\rho_0,\rho_1$, respectively for $x<0$ and $x>1$. 
This extension is in general no longer a weak solution of \eqref{conservation_law}. However, by \cite[Proposition 2.1]{cf}, it still lies in the set $\gsol((a,b)\times\R)$
of generalized entropy solutions in the sense of \cite{dow2}, i.e. functions $\rho(.,.)\in L^{\infty,K}((a,b)\times\R)$ such that $\mu_\eta[\rho(.,.)]\in M((a,b)\times\R)$ for every $C^2$ entropy $\eta$.\\ \\
Recall that a bounded Borel function $F$ on $\R^d$ is said to have an approximate limit $l_n F(x)$ at $x\in\R^d$ along the normal $n\in\R^d$, $||n||=1$,  if
\be\label{def_approxlim}
\lim_{\varepsilon\to 0}\varepsilon^{-d}\int_{B(x,\varepsilon)\cap H_n(x)}|F(y)-l_n F(x)|dy=0
\ee
where $B(x,\varepsilon)$ denotes the open ball of radius $\varepsilon$ centered at $x$ and $H_n(x)=\{y\in\R^d:\,n.(y-x)>0\}$ is the normal hyperplane containing $x$ and $n$.
$x$ is called a jump point of $F$ if, for some $n$,  approximate limits $l_{\pm n} F(x)$ exist and are distinct. One can show that the pair $\{n,-n\}$ is uniquely determined. $x$ is called a point of vanishing mean oscillation (VMO) of $F$ if
\be\label{def_vmo}
\lim_{\varepsilon\to 0}\varepsilon^{-2}\int_{B(x,\varepsilon)}|F(y)-\bar{F}(x,\varepsilon)|dy=0
\ee
where $\bar{F}(x,\varepsilon)$ is the mean value of $F$ on $B(x,\varepsilon)$.\\ \\
Let $(\eta,q)$ be a $C^2$ entropy-flux pair. By \cite{dow2}, $\rho(.,.)\in\gsol((a,b)\times\R)$ has the following property: there are subsets $J$ and $V$ of $(a,b)\times\R$ such that the complement of $J\cup V$ has $\mathcal H^1$-measure $0$,
$J$ is rectifiable with a local normal $n=(n_t,n_x)$, $\rho(.,.)$ has approximate limits $\rho^+\neq\rho^-$ on $J$ ($\rho^+$ being the limit in the positive $n$-half-space), and $\rho(.,.)$ has VMO on $V$. The trace of entropy production on the jump set $J$ is 
\be\label{trace_prod}
\mu_\eta[\rho(.,.)]\llcorner J=\pi_\eta(\rho^-,\rho^+,n)\mathcal H^1\llcorner J
\ee
where 
$$
\pi_\eta(\rho^-,\rho^+,n) = n_x[q(\rho^+)-q(\rho^-)]-n_t[\eta(\rho^+)-\eta(\rho^-)]
$$
For $(\eta,q)=(id,f)$, \eqref{conservation_law} yields $n_x\neq 0$, $\mathcal H^1$-a.e. on $J\cap\{(-\infty,0)\times(0,1)\}$, and the Rankine-Hugoniot relation
\eqref{speed_rankine} for the local speed $v:=-n_t/n_x$. By convention we orient the normal so that $n_x>0$. Note that $n_x\neq 0$ implies no jump
along time lines. Together with \cite{pan} this implies $\emsol((a,b)\times(0,1))\subset C^0([a,b],L^1((0,1))$.
We now focus on $(\eta,q)=(h,g)$ and set
\begin{eqnarray}
\pi(\rho^-,\rho^+) & = & [g(\rho^+)-v(\rho^-,\rho^+)\rho^+]-[g(\rho^-)-v(\rho^-,\rho^+)\rho^-]\\
\label{localprod_11}
& = & -\int_{\rho^-}^{\rho^+}h''(r)[f(r)-f(\rho^\pm)-v(\rho^-,\rho^+)(r-\rho^\pm)]
\end{eqnarray}
Let
\be\label{localprod_2}
\sigma(\rho,\xi):=\int_{\rho_l}^{\rho_r}h''(r)[f(\varphi(r))-f(\rho)-\xi(\varphi(r)-\rho)]dr
\ee
Notice that $\sigma(\rho^-,\xi)=\sigma(\rho^+,\xi)$ if $\xi=v(\rho^-,\rho^+)$. 
\begin{lemma}
\label{lemma_1}
Let $\rho(.,.)\in\emsol((a,b)\times(0,1))$, and $\Gamma=\{(t,y_t):\,t\in(a,b)\}$, where
$y_.:[a,b]\to[0,1]$ is a Lipschitz trajectory. Set $S_t=S[\rho(t,.),y_t]$. Then
\begin{eqnarray}
S_b-S_a-J_{(a,b)}[\rho(.,.)] & = &
\int_a^{b}d(t)dt-\mu^-[(a,b)\times(0,1)\backslash\Gamma]\nonumber\\
& - & \int_a^b 1_{(0,1)}(y_t)\pi(\rho(t,y_t-),\rho(t,y_t+))^-dt\label{difference_gamma}
\end{eqnarray}
where 
\begin{eqnarray}\label{decomp_d}
d(t) &  :=  & d^\circ(t)+\partial d(t)\\
\label{dcirc}
d^\circ(t) & := & 1_{(0,1)}(y_t)\sigma(\rho(t,y_t\pm),\dot{y}_t)
\end{eqnarray}
and $\partial d(t)$ is given respectively when $y_t\in(0,1)$, $y_t=0$, $y_t=1$, by
\be\label{boundaryterm_1}
[g(\rho(t,0^+);\rho_l)-i^l(\rho(t,0^+);\rho_l)]+[-g(\rho(t,1^-);\rho_r)-i^r(\rho(t,1^-);\rho_r)]\label{boundaryterm_gamma_bulk}
\ee
\be\label{boundaryterm_2}
[g(\rho(t,0^+);\rho_r)-i^l(\rho(t,0^+);\rho_l)]+[-g(\rho(t,1^-);\rho_r)-i^r(\rho(t,1^-);\rho_r)]\label{boundaryterm_gamma_left}
\ee
\be\label{boundaryterm_3}
[g(\rho(t,0^+);\rho_l)-i^l(\rho(t,0^+);\rho_l)]+[-g(\rho(t,1^-);\rho_l)-i^r(\rho(t,1^-);\rho_r)]\label{boundaryterm_gamma_right}
\ee
\end{lemma}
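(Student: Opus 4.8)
The plan is to establish \eqref{difference_gamma} by differentiating $S_t = S[\rho(t,\cdot),y_t]$ in time and carefully accounting for three sources of contribution: the bulk entropy production away from $\Gamma$, the trace of entropy production along the moving curve $\Gamma$, and the boundary fluxes at $x=0^+$ and $x=1^-$. Recall from \eqref{def_S_1} that, for $\rho_l < \rho_r$, $S_t = \int_0^{y_t}[h(\rho(t,x)) - \rho(t,x)h'(\rho_l) + k(\rho_l)]\,dx + \int_{y_t}^1[h(\rho(t,x)) - \rho(t,x)h'(\rho_r) + k(\rho_r)]\,dx$. First I would justify that $t\mapsto S_t$ is absolutely continuous: this follows from $\rho(\cdot,\cdot)\in\emsol((a,b)\times(0,1))\subset C^0([a,b],L^1((0,1)))$ (noted just before the lemma), the Lipschitz regularity of $y_.$, and the boundedness of $h$, $h'$, $k$ on $[0,K]$; the jump in $h(\rho_l)$-normalization versus $h(\rho_r)$-normalization as $y_t$ crosses the partition point is exactly what produces the $\sigma$-term.

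Second, I would compute $\frac{d}{dt}S_t$ for a.e.\ $t$. The moving-endpoint terms give $\dot y_t\big[ (h(\rho(t,y_t-)) - \rho(t,y_t-)h'(\rho_l) + k(\rho_l)) - (h(\rho(t,y_t+)) - \rho(t,y_t+)h'(\rho_r) + k(\rho_r))\big]$, using the one-sided traces of $\rho$ at $y_t$ guaranteed by the quasi-solution trace theory of \cite{pan2}. The remaining terms come from $\partial_t h(\rho(t,x))$ on $(0,y_t)$ and $(y_t,1)$; since $\rho$ is a weak solution, $\partial_t h(\rho) = -\partial_x g(\rho) + \mu$ as distributions (where $\mu = \mu_h[\rho]$), and $\partial_t\rho = -\partial_x f(\rho)$ weakly, so $\partial_t[h(\rho) - \rho h'(\rho_{l/r})] = -\partial_x[g(\rho) - f(\rho)h'(\rho_{l/r})] + \mu$ on each subinterval. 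Integrating in $x$ over $(0,y_t)$ and $(y_t,1)$ turns the $\partial_x$-terms into boundary evaluations at $0^+$, at $y_t\mp$, and at $1^-$, and turns $\mu$ into $\mu$-mass of the open strips; combining with the moving-endpoint terms, regrouping the evaluations at $y_t$ using the definition \eqref{localprod_11} of $\pi$ and \eqref{localprod_2} of $\sigma$, and the evaluations at the spatial boundary using \eqref{relative_flux}, \eqref{boundarycost_1}--\eqref{boundarycost_2}, and the case split in \eqref{boundaryterm_1}--\eqref{boundaryterm_3} depending on whether $y_t\in(0,1)$ or $y_t\in\{0,1\}$, yields the integrand of the right side of \eqref{difference_gamma} before taking positive/negative parts. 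Here I must be careful with the identity $q_v(\rho,\rho_0) = q(\rho) - q(\rho_0) - \eta'(\rho_0)[f(\rho)-f(\rho_0)]$ and with the fact that $g(\rho,\rho_l)$ and $g(\rho,\rho_r)$ differ by a term linear in $f(\rho)-f(\text{const})$, which is precisely absorbed into the $\partial d$ bookkeeping.

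Third, to pass from the raw derivative identity to \eqref{difference_gamma}, I would split the measure $\mu$ into its restriction to $\Gamma$ and to the complement $(a,b)\times(0,1)\setminus\Gamma$, using \eqref{trace_prod}: on $\Gamma$, where the normal is $(n_t,n_x)\propto(-\dot y_t,1)$, the trace of $\mu$ is $\pi(\rho(t,y_t-),\rho(t,y_t+))\,\mathcal H^1\llcorner\Gamma$, which along the graph parametrization contributes $\int \pi(\rho(t,y_t-),\rho(t,y_t+))\,dt$ (the $\mathcal H^1$ arc-length and the $dt$ differ by $\sqrt{1+\dot y_t^2}$, but \eqref{trace_prod} is stated with $\mathcal H^1$, so one tracks the normalization consistently — note $\pi$ as defined already carries the right normalization so the graph integral is in $dt$). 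The definition of $J_{(a,b)}$ in \eqref{jv} is $\mu^+$ of the strip, so $S_b - S_a - J_{(a,b)}[\rho] = \int d(t)\,dt + \mu(\text{strip}\setminus\Gamma) + (\Gamma\text{-contribution}) - \mu^+(\text{strip})$; writing $\mu(\text{strip}\setminus\Gamma) - \mu^+(\text{strip}\setminus\Gamma) = -\mu^-(\text{strip}\setminus\Gamma)$ and similarly decomposing the $\Gamma$-part of $\mu^+$ as $\pi^+$ (so $\pi$ on $\Gamma$ minus $\pi^+$ equals $-\pi^-$), one arrives exactly at \eqref{difference_gamma}, with the indicator $1_{(0,1)}(y_t)$ in the $\pi$-term because no jump along a vertical line at $x=0$ or $x=1$ contributes (and $\Gamma\cap\partial$ has the jump absorbed into the boundary terms).

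The main obstacle I anticipate is the bookkeeping at the moving interface, specifically verifying that the evaluations at $y_t\mp$ coming from $-\partial_x[g(\rho) - f(\rho)h'(\rho_{l/r})]$ combine with the moving-endpoint terms $\dot y_t[\cdots]$ to give precisely $-\dot y_t$ times the $h$-normalized Rankine–Hugoniot defect that equals $\pi(\rho(t,y_t-),\rho(t,y_t+)) + \sigma$-correction, i.e.\ matching the two representations \eqref{localprod_11} of $\pi$ (one via $\rho^-$, one via $\rho^+$) and the identity $\sigma(\rho(t,y_t-),\dot y_t) = \sigma(\rho(t,y_t+),\dot y_t)$ from just after \eqref{localprod_2}. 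A secondary subtlety is ensuring that the distributional product $\partial_x[\cdots]$ has well-defined one-sided traces at $y_t$ for a.e.\ $t$ — this is where membership in $\qsol$ and the trace results of \cite{pan2} are essential, since $\rho$ need not be $BV$. The boundary-term case analysis \eqref{boundaryterm_1}--\eqref{boundaryterm_3} is then a routine but lengthy substitution using the explicit forms \eqref{left_1}--\eqref{right_2}, which I would relegate to direct computation.
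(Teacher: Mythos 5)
Your proposal is correct and follows essentially the same route as the paper: the paper carries out the identical computation in integrated form, applying the Chen--Frid generalized Green's formula to the divergence-measure fields $(h(\rho),g(\rho))$ on $(a,b)\times(0,1)$ and $(\rho,f(\rho))$ on either side of $\Gamma$, then using \eqref{trace_prod} to split $\mu^-$ into its parts on and off $\Gamma$ -- exactly your three contributions, with the space-time Green's formula replacing your pointwise differentiation of $S_t$ (and thereby sidestepping the absolute-continuity discussion). The only ingredient you leave implicit that the paper flags is that $\dot y_t=0$ a.e.\ on $y^{-1}(\{0,1\})$, which is what kills the $\sigma$-term and switches $\partial d(t)$ to the forms \eqref{boundaryterm_2}--\eqref{boundaryterm_3} when the curve sits on the boundary.
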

\begin{remark}\label{remark_limits}
The above limits can be understood as approximate limits provided the last integral in \eqref{difference_gamma} is restricted to $\Gamma\cap J$. Due to the extension of $\rho(.,.)$, VMO points at the boundaries have approximate limit 
inside, and jump points at the boundaries have a normal parallel to the space axis.
Hence approximate spatial limits are defined a.e. at the boundaries. 
Alternatively, since $\rho(.,.)\in\qsol((a,b)\times(0,1))$, 
the limits in Lemma \ref{lemma_1} exist in the $L^1$ sense (\cite{pan2}): 
\be\label{lonelims}
{\rm ess}\lim_{\varepsilon\to 0}\int_I|\rho(t,\varepsilon)-\rho(t,0^+)|dt={\rm ess}\lim_{\varepsilon\to 0}\int_I|\rho(t,y_t+\varepsilon)-\rho(t,y_t+0)|dt=0
\ee 
for every bounded interval $I\subset(a,b)$.
\end{remark}
%
%We note for the sequel that, with either notion of limit,
%the Rankine-Hugoniot relation 
%
%\be\label{speed_rankine}
%
%n_x[f(\rho^+)-f(\rho^-)]-n_t[\rho^+-
%
%\ee
%
%holds true $\mathcal H^1$ a.e. on $J$
%
We need a version of Lemma \ref{lemma_1} for a mollified version of $S[.]$. Let $\psi:[0,1]\to[0,1]$ be a nonincreasing
%piecewise smooth 
function such that $\psi(0)=1$ and $\psi(1)=0$. We
set
\begin{eqnarray}\label{mollified_S}
S_\psi[\rho(.)] & := & \int_0^1 \left\{h(\rho(x))+\psi(t,x)[-\rho(x)h'(\rho_l)+k(\rho_l)]\right\}dx\nonumber\\
& + & \int_0^1 \left\{h(\rho(x))+[1-\psi(t,x)][-\rho(x)h'(\rho_r)+k(\rho_r)]\right\}dx
\label{mollified_S}\\
& = & \int_0^1 -\psi'(y)S[\rho(.),y]dy
\end{eqnarray}
We denote by $\overline{BV}_{loc}(\Omega)$  the set of functions $\rho(.,.)\in L^{\infty}(\Omega)$ such that $\partial_t\rho$ and $\partial_x\rho$ lie in $\overline{M}(\Omega)$.
\begin{lemma}\label{lemma_11}
Let 
%$\psi(.,.):[a,b]\times[0,1]\to[0,1]$, 
%
$\psi\in\overline{BV}_{loc}((a,b)\times(0,1))$.
%be continuous on $(a,b)\times[0,1]$
%and piecewise smooth on $(a,b)\times[0,1]$, 
%
Assume
$\psi(t,0)=1$ and $\psi(t,1)=0$ for a.e. $t\in(a,b)$, $\partial_x\psi\leq 0$, $|\partial_t\psi|\ll-\partial_x\psi$, and $\partial_x\psi(dt,dx)=\partial_x\psi(t,dx)dt$.
%with $\partial_t\psi/\partial_x\psi$ bounded on $(a,b)\times[0,1]$.
Let 
$S_t=S_{\psi(t,.)}[\rho(t,.)]$. 
Then
\be\label{difference_psi}
S_b-S_a-J_{(a,b)}[\rho(.,.)] = \int_a^b d_\psi(t)dt
-\mu^-[(a,b)\times(0,1)]\ee
where (with $\partial d$ given by \eqref{boundaryterm_1})
\begin{eqnarray}
\label{decomp_dpsi}
d_\psi(t) &  :=  & d_\psi^\circ(t)+\partial d(t)\\
\label{dpsicirc}
d_\psi^\circ(t) & := & \int_0^1 \sigma\left(\rho(t,x),
\frac{-\partial_t\psi(t,x)}{\partial_x\psi(t,x)}\right)[-\partial_x\psi](t,dx)\end{eqnarray}
\end{lemma}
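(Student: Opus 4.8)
The plan is to reduce Lemma~\ref{lemma_11} to Lemma~\ref{lemma_1} by means of the level--set representation $S_{\psi(t,.)}[\rho(t,.)]=\int_0^1 S[\rho(t,.),y]\,[-\partial_x\psi](t,dx)$ recorded in \eqref{mollified_S}: the mollified functional is an average of the functionals $S[\,\cdot\,,y]$ over the level sets of $\psi$, so one should be able to average the identity \eqref{difference_gamma} over those level sets. The hypothesis $|\partial_t\psi|\ll-\partial_x\psi$ is precisely what makes the level--set velocity $-\partial_t\psi/\partial_x\psi$ appearing in \eqref{dpsicirc} well defined as a Radon--Nikodym density, and $\partial_x\psi(dt,dx)=\partial_x\psi(t,dx)\,dt$ is what allows the disintegration in time.

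First I would establish \eqref{difference_psi} when $\psi$ is smooth and strictly decreasing in $x$, with $\psi(t,0)=1$, $\psi(t,1)=0$ and bounded velocity $-\partial_t\psi/\partial_x\psi$. For each $c\in(0,1)$ the $c$--level set of $\psi$ is then a Lipschitz curve $\Gamma^c=\{(t,y_t^c):t\in(a,b)\}$ with $y_t^c=\psi(t,\cdot)^{-1}(c)\in(0,1)$ and $\dot y_t^c=(-\partial_t\psi/\partial_x\psi)(t,y_t^c)$, so Lemma~\ref{lemma_1} applies to each $\Gamma^c$. Integrating \eqref{difference_gamma} over $c\in(0,1)$ and using $S_{\psi(t,.)}[\rho(t,.)]=\int_0^1 S[\rho(t,.),y_t^c]\,dc$, I would identify the four contributions. (i) $\int_0^1 J_{(a,b)}[\rho(.,.)]\,dc=J_{(a,b)}[\rho(.,.)]$. (ii) By the change of variables $c=\psi(t,y)$, $dc=[-\partial_x\psi](t,dy)$, one gets $\int_0^1\!\int_a^b 1_{(0,1)}(y_t^c)\,\sigma(\rho(t,y_t^c\pm),\dot y_t^c)\,dt\,dc=\int_a^b d^\circ_\psi(t)\,dt$ (with $d^\circ_\psi$ as in \eqref{dpsicirc}; for the atomless measure $[-\partial_x\psi](t,dy)$, $\rho(t,\cdot)$ is approximately continuous a.e., so $\rho(t,y_t^c\pm)$ may be replaced by $\rho(t,y_t^c)$), while by strict monotonicity $y_t^c\in(0,1)$ for every $c\in(0,1)$, so $\partial d^c(t)$ is always the expression \eqref{boundaryterm_1}, which does not depend on $y_t^c$; hence $\int_0^1\partial d^c(t)\,dc=\partial d(t)$. (iii) The curves $\{\Gamma^c\}_{c\in(0,1)}$ are pairwise disjoint and $\mu^-$ is locally finite, so $\mu^-[\Gamma^c]=0$ for all but countably many $c$, whence $\int_0^1\mu^-[(a,b)\times(0,1)\setminus\Gamma^c]\,dc=\mu^-[(a,b)\times(0,1)]$. (iv) The last term of \eqref{difference_gamma} averages to $0$, since for each $t$ the jump set of $\rho(t,\cdot)$ lies in the slice $J\cap(\{t\}\times(0,1))$, which is at most countable because $n_x\neq 0$ on $J$, whereas $[-\partial_x\psi](t,\cdot)$ has no atoms for smooth $\psi$. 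Collecting (i)--(iv) gives \eqref{difference_psi} for such $\psi$.

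To pass to a general $\psi\in\overline{BV}_{loc}((a,b)\times(0,1))$ satisfying the hypotheses, I would approximate it by smooth strictly decreasing $\psi_n$ of the above type with $\psi_n\to\psi$ in $L^1_{loc}$, keeping the velocities $-\partial_t\psi_n/\partial_x\psi_n$ controlled via $|\partial_t\psi|\ll-\partial_x\psi$, and then pass to the limit in \eqref{difference_psi}: the left side converges by continuity of $S_\psi$ in $\psi$, and $d^\circ_{\psi_n}(t)\to d^\circ_\psi(t)$ because $d^\circ_\psi$ depends on $\psi$ only through the measure $[-\partial_x\psi](t,dx)$ and the $L^1([-\partial_x\psi](t,dx))$--density $-\partial_t\psi/\partial_x\psi$, $\sigma$ being continuous. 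Equivalently one may bypass Lemma~\ref{lemma_1} and repeat its proof directly for $S_t=S_{\psi(t,.)}[\rho(t,.)]$: differentiate in $t$ using $\partial_t h(\rho)+\partial_x g(\rho)=\mu_h$ and $\partial_t\rho+\partial_x f(\rho)=0$, integrate by parts in $x$ against $\psi(t,\cdot)$ (the boundary terms controlled by $\psi(t,0)=1$, $\psi(t,1)=0$), and feed in the $\partial_t\psi$ contributions through $\partial_t\psi(dt,dx)=\gamma(t,x)[-\partial_x\psi](t,dx)\,dt$; expanding $\sigma$ via \eqref{localprod_2} and using $f\circ\varphi=f$, $\int_{\rho_l}^{\rho_r}h''=h'(\rho_r)-h'(\rho_l)$, $\int_{\rho_l}^{\rho_r}\varphi h''=k(\rho_r)-k(\rho_l)$, the relative--flux identity \eqref{relative_flux} for $(h,g)$, and $\int_{\rho_l}^{\rho_r}h''(r)f(r)\,dr=g(\rho_l)-g(\rho_r)-h'(\rho_l)f(\rho_l)+h'(\rho_r)f(\rho_r)$, one reassembles exactly $d^\circ_\psi+\partial d$ and $-\mu^-$, the $\mu_h^+$ part being absorbed into $J^0_{(a,b)}$. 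The main obstacle is this limiting/regularity step: making the distributional time--differentiation of $t\mapsto S_{\psi(t,.)}[\rho(t,.)]$ rigorous (which needs $\rho\in C^0([a,b],L^1(0,1))$ and the absence of time--atoms for $\mu_h^\pm$, both available here since $n_x\neq 0$), and controlling the velocities along the approximation so that $d^\circ_{\psi_n}\to d^\circ_\psi$; the common--jump subtleties in the $BV$ product rule, when $\rho(t,\cdot)$ and $\psi(t,\cdot)$ jump at the same $x$, are handled exactly as in Lemma~\ref{lemma_1} using the $L^1$--traces of Remark~\ref{remark_limits}.
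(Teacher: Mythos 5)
Your second, ``bypass'' route is precisely the paper's proof. The author applies the generalized Green's formula of \cite{cf} to the divergence-measure field $(h(\rho),g(\rho))$ on $(a,b)\times(0,1)$, and --- in place of the Green's formula on either side of $\Gamma$ used for Lemma \ref{lemma_1} --- a second Green's formula to the product field $\psi(t,x)\,(\rho,f(\rho))$ on the whole of $(a,b)\times(0,1)$. The product rule for divergence-measure fields (\cite[Theorem 3.1]{cf}, invoked again in the proof of Lemma \ref{lemma_2}) turns the divergence into $\rho\,\partial_t\psi+f(\rho)\,\partial_x\psi$; the hypothesis $|\partial_t\psi|\ll-\partial_x\psi$ lets one write $\partial_t\psi=\xi\,(-\partial_x\psi)$ and reassemble $\sigma(\rho,\xi)$, which is affine in $\xi$, while $\psi(t,0)=1$, $\psi(t,1)=0$ and the disintegration $\partial_x\psi(dt,dx)=\partial_x\psi(t,dx)\,dt$ produce the boundary terms \eqref{boundaryterm_1} and the time integral. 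So that part of your proposal is a correct account of the intended argument.

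Your primary route --- averaging \eqref{difference_gamma} over the level curves of $\psi$ via $S_{\psi(t,.)}[\rho(t,.)]=\int_0^1 S[\rho(t,.),y_t^c]\,dc$ --- is a genuinely different and appealing reduction, and your bookkeeping of the four terms is correct for smooth, strictly decreasing $\psi$: in particular the jump term and the $\pm$ ambiguity in $\sigma$ do disappear, because $J$ is Lebesgue-null and a.e.\ point of $(a,b)\times(0,1)$ is a VMO point of $\rho$, while the averaged measure $[-\partial_x\psi](t,dx)\,dt$ is then absolutely continuous. The gap is the final approximation step. Writing $\sigma(\rho,\xi)=A-Bf(\rho)-\xi(C-B\rho)$ and using $\xi\,(-\partial_x\psi)=\partial_t\psi$, one has to pass to the limit in $\int[A-Bf(\rho(t,x))]\,[-\partial_x\psi_n]+\int[C-B\rho(t,x)]\,\partial_t\psi_n$, i.e.\ in integrals of bounded, merely measurable integrands against weak-* convergent sequences of measures. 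Weak-* convergence does not give this when the limit measures $\partial_x\psi$, $\partial_t\psi$ have singular parts charging non-Lebesgue points of $\rho(t,\cdot)$ --- which is exactly the generality ($\psi\in\overline{BV}_{loc}$) the lemma is stated in --- so continuity of $\sigma$ in its arguments is not enough, and you do not explain how the velocities and the singular parts are controlled along the approximation. I would keep the level-set computation as motivation and run the actual proof through the product Green's formula, as in your alternative and in the paper.
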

The next lemma will be combined with \eqref{difference_gamma}--\eqref{difference_psi} to show that $\partial d(t)$, $d^\circ(t)$ and $d_\psi^\circ(t)$ are nonpositive for suitable choices of $y_.$ and $\psi$, and derive conditions from these quantities being equal to $0$.
\begin{lemma}
\label{inequalities}
Let $\rho^-,\rho^+\in[0,K]$. Then:\\ \\
(o) $\rho^-<\rho^+$ (resp. $>$) iff.
$\pi(\rho^-,\rho^+)<0$ (resp. $>$).\\ \\
(i) (a) Assuming $\rho^-\leq\rho^+$,
\be \label{inequality_1}
\sigma(\rho^\pm,v(\rho^-,\rho^+))+\pi(\rho^-,\rho^+)\leq 0
\ee
with equality iff. $\rho^-=\varphi(\rho_r)$
and $\rho^+=\varphi(\rho_l)$. (b)
\be\label{uniformbound_sigma}
\sup_{\rho\in[0,K]}\sigma(\rho,f'(\rho))<0
\ee
(ii) $i^l(\rho,\rho_l)\geq g(\rho,\rho_l)$, with equality iff.
$\rho\in\bln^-(\rho_l)$; $i^r(\rho,\rho_r)\geq
-g(\rho,\rho_r)$, with equality iff. $\rho\in\bln^+(\rho_r)$.\\ \\
(iii) Assume $\rho\leq\rho^*$. Then $-g(\rho,\rho_l) \leq
i^r(\rho,\rho_r)$, and
\be \label{equivalence_1}
-g(\rho,\rho_l)=i^r(\rho,\rho_r)\Leftrightarrow f(\rho_l)\leq
f(\rho_r)\mbox{ and }\rho=\rho_l \ee
Under \eqref{equivalence_1}, we have $-g(\rho,\rho_l)
=i^r(\rho,\rho_r)=0$
.\\ \\
(iv) Assume $\rho\geq\rho^*$. Then $g(\rho,\rho_r)\leq
i^l(\rho,\rho_l)$, and
\be \label{equivalence_2} g(\rho,\rho_r)=
i^l(\rho,\rho_l)\Leftrightarrow f(\rho_l)\geq f(\rho_r)\mbox{ and
}\rho=\rho_r \ee
Under \eqref{equivalence_2}, we have
$g(\rho,\rho_r)=i^l(\rho,\rho_l)=0$.
\end{lemma}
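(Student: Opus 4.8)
The plan is to treat the five items essentially as self-contained one-variable calculus exercises, using the explicit formulas \eqref{localprod_11}, \eqref{localprod_2}, \eqref{left_1}--\eqref{right_2} together with the structural facts about $\varphi$, $k$, $g$, $K$ and the sign of $h''$ and $f''$. For (o), I would start from the representation \eqref{localprod_11} of $\pi(\rho^-,\rho^+)$ as $-\int_{\rho^-}^{\rho^+}h''(r)[f(r)-f(\rho^\pm)-v(r-\rho^\pm)]dr$; since $f$ is strictly concave (by \eqref{properties_flux}, $f''\le c<0$) the chord between $\rho^-$ and $\rho^+$ lies below the graph of $f$, so the bracketed quantity has a fixed sign on the open interval, and $h''>0$ then forces $\pi$ to have the sign opposite to that of $\rho^+-\rho^-$. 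For (i)(a) I would use the remark just before the lemma that $\sigma(\rho^-,v)=\sigma(\rho^+,v)$ when $v=v(\rho^-,\rho^+)$, combine $\sigma$ and $\pi$ into a single integral $\int h''(r)[\cdots]dr$, and identify the integrand as a convexity defect that vanishes precisely at the endpoints dictated by $\varphi$; the equality case $\rho^-=\varphi(\rho_r),\,\rho^+=\varphi(\rho_l)$ should drop out of the strict concavity of $f$ and the definition \eqref{localprod_2} of $\sigma$ with the flux-reversal symmetry \eqref{symmetry_f}. Part (i)(b), the uniform negativity of $\sup_\rho\sigma(\rho,f'(\rho))$, is a compactness statement: $\rho\mapsto\sigma(\rho,f'(\rho))$ is continuous on the compact interval $[0,K]$ and, by the strict-concavity argument again, is pointwise $<0$, hence bounded away from $0$.

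For (ii), I would simply read off \eqref{left_1}--\eqref{left_2}: in every branch $i^l(\rho,\rho_l)$ is either $0$ or $g(\rho,\rho_l)$ or $g(\varphi(\rho),\rho_l)$ (resp.\ $g(\rho)-g(\varphi(\rho))$), so the inequality $i^l(\rho,\rho_l)\ge g(\rho,\rho_l)$ reduces to checking, branch by branch, that $g(\varphi(\rho),\rho_l)\ge g(\rho,\rho_l)$ on the relevant sub-interval and that $g(\rho,\rho_l)\le 0$ there; these follow from monotonicity of $g(\cdot,\rho_l)$ on either side of $\rho^*$ (since $g'(\rho)=h'(\rho)f'(\rho)$ and $f'$ changes sign at $\rho^*$), and the equality locus is exactly $\{i^l=0\}=\bln^+(\rho_l)$ by definition \eqref{adset_left}, which was characterised in \eqref{adset_left_2}. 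Wait — the statement says $\bln^-(\rho_l)$, so I should be careful: the relevant set here is the one where $g(\rho,\rho_l)$ vanishes \emph{and} is not exceeded, i.e.\ the "other side" set; I would verify directly from \eqref{adset_left_2}--\eqref{adset_right_2} that this coincides with $\bln^-(\rho_l)$. The second half ($i^r$ versus $-g(\rho,\rho_r)$, equality on $\bln^+(\rho_r)$) is handled symmetrically from \eqref{right_1}--\eqref{right_2}.

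Parts (iii) and (iv) are the ones I expect to take real care, because they mix a left boundary datum $\rho_l$ with the right-boundary cost $i^r$ (resp.\ vice versa) and the answer depends on the sign of $f(\rho_l)-f(\rho_r)$. For (iii), with $\rho\le\rho^*$, I would again split according to the cases \eqref{right_1}--\eqref{right_2} defining $i^r(\rho,\rho_r)$ and, in each, reduce $-g(\rho,\rho_l)\le i^r(\rho,\rho_r)$ to an inequality between values of $g$ at $\rho$, $\rho_l$, $\rho_r$, $\varphi(\rho_r)$; the key identities are $g(\rho,\rho_l)=g(\rho)-g(\rho_l)-h'(\rho_l)[f(\rho)-f(\rho_l)]$ from \eqref{relative_flux} and, on $[0,\rho^*]$, the fact that $g$ is monotone so that $g(\rho,\rho_l)$ and $f(\rho)-f(\rho_l)$ have related signs. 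The equality analysis should show: equality can only occur in the branch where $i^r=-g(\rho,\rho_r)$ or $i^r=0$, it forces $\rho=\rho_l$, and then consistency with $\rho\le\rho^*$ and with the vanishing of both sides forces $f(\rho_l)\le f(\rho_r)$ (otherwise $\rho_l$ would be on the wrong side); conversely that condition makes both sides $0$. Part (iv) is the mirror image under the symmetry $\rho\leftrightarrow\varphi(\rho)$, $\rho^*$ fixed. The main obstacle throughout is purely bookkeeping: keeping the numerous branches of \eqref{left_1}--\eqref{right_2}, the sign conventions in \eqref{relative_flux}, and the cases $f(\rho_l)\lessgtr f(\rho_r)$ straight, and isolating in each branch the precise equality locus. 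No single step is deep; the risk is a sign error, so I would anchor every branch computation to the geometric picture of the concave graph of $f$ with the involution $\varphi$ and the marked point $\rho^*$.
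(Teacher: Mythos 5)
Your route is the same as the paper's: (o) and (i)(b) from the integral representations plus strict concavity of $f$ and uniform convexity of $h$ (for (i)(b), pointwise strict negativity of $\rho\mapsto\sigma(\rho,f'(\rho))$ plus continuity on the compact $[0,K]$ is exactly what is needed), (i)(a) by merging $\sigma$ and $\pi$ into one integral, and (ii)--(iv) by branch-by-branch inspection of \eqref{left_1}--\eqref{right_2} against the sign properties of $g(\cdot,\rho_0)$ (Lemma \ref{lemma_g}). Two points in your sketch are, however, more than bookkeeping. First, in (i)(a) the merging step does not follow from $f\circ\varphi=f$ alone, which is the hypothesis you cite: you must change variables $r\mapsto\varphi(r)$ in \eqref{localprod_2}, and it is the joint symmetry \eqref{symmetry}, i.e. $h''(\varphi(r))\varphi'(r)=-h''(r)$, that transforms the weight correctly and gives $\sigma(\rho,\xi)=\int_{\varphi(\rho_r)}^{\varphi(\rho_l)}h''(s)[f(s)-f(\rho)-\xi(s-\rho)]\,ds$. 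Only then is $\sigma(\rho^{\pm},\xi)+\pi(\rho^-,\rho^+)$ an integral of $h''(s)[f(s)-f(\rho^{\pm})-\xi(s-\rho^{\pm})]$ over the two pieces of the symmetric difference of $[\varphi(\rho_r),\varphi(\rho_l)]$ and $[\rho^-,\rho^+]$ with opposite signs, and concavity of $s\mapsto f(s)-\xi s$ (which vanishes at $\rho^{\pm}$ since $\xi=v(\rho^-,\rho^+)$) gives \eqref{inequality_1} with equality iff the two intervals coincide. Without \eqref{symmetry} the statement is simply false, so this identity must appear explicitly.

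Second, in (iii) there is one sub-case where the trivial chain $-g(\rho,\rho_l)\le 0\le i^r(\rho,\rho_r)$ fails, namely $\rho_l>\rho^*$ and $\varphi(\rho_l)<\rho\le\rho^*$, where $-g(\rho,\rho_l)>0$. There you need the identity $g(\rho,\rho_r)-g(\rho,\rho_l)=\int_{\rho_l}^{\rho_r}h''(r)[f(r)-f(\rho)]\,dr$, whose strict negativity uses the standing assumption $\rho_l<\rho_r$ of this section (so that $[\rho_l,\rho_r]\subset[\rho^*,K]$ and $f(r)\le f(\rho_l)=f(\varphi(\rho_l))\le f(\rho)$ on the domain of integration); then part (ii) closes the argument via $-g(\rho,\rho_l)<-g(\rho,\rho_r)\le i^r(\rho,\rho_r)$, and no equality case arises in this branch. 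Two smaller cautions: for $\rho\le\rho^*$ the branch $i^r=-g(\rho,\rho_r)$ of \eqref{right_1}--\eqref{right_2} never occurs (only $i^r=0$ or $i^r=-g(\varphi(\rho),\rho_r)$ are relevant), and you should recompute the zero sets of $i^l,i^r$ directly from \eqref{left_1}--\eqref{right_2} rather than quoting \eqref{adset_right_2}, whose second case should read $\{\rho_r\}\cup[0,\varphi(\rho_r)]$.
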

In the proof of Lemma \ref{inequalities} and various other places we use the following lemma, which is immediate from assumptions \eqref{properties_flux} on $f$,
\eqref{def_entropy_flux} and \eqref{symmetry_f}.
\begin{lemma}
\label{lemma_g}
(i) Let $\rho_0\leq\rho^*$. Then
$g(\rho,\rho_0)>0$ for $\rho\leq\rho^*,\,\rho\neq\rho_0$; and
$g(\rho,\rho_0)< 0$ for $\rho\geq\varphi(\rho_0),\,\rho\neq\rho_0$.\\ \\
(ii) Let $\rho_0\geq\rho^*$. Then
$g(\rho,\rho_0)<0$ for $\rho\geq\rho^*,\,\rho\neq\rho_0$; and
$g(\rho,\rho_0)> 0$ for $\rho\leq\varphi(\rho_0),\,\rho\neq\rho_0$.
\end{lemma}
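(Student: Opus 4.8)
The plan is to exploit directly the three structural facts at our disposal: the flux $f$ satisfies \eqref{properties_flux} (so $f(0)=f(K)=0$, $f''<0$, hence $f$ is strictly concave with a unique maximum at $\rho^*$), the entropy flux $g$ is tied to $h$ via $g'(\rho)=h'(\rho)f'(\rho)$ from \eqref{def_entropy_flux}, and $\varphi$ is the decreasing involution with $f\circ\varphi=f$ from \eqref{symmetry_f}. The key elementary observation is the sign of $g'$: since $h''\geq c>0$ we may (after subtracting a linear term, as allowed) normalise so that $h'(\rho^*)=0$, whence $h'(\rho)<0$ for $\rho<\rho^*$ and $h'(\rho)>0$ for $\rho>\rho^*$; meanwhile $f'(\rho)>0$ for $\rho<\rho^*$ and $f'(\rho)<0$ for $\rho>\rho^*$. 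Therefore $g'(\rho)=h'(\rho)f'(\rho)<0$ on $(0,\rho^*)\cup(\rho^*,K)\setminus\{\rho^*\}$ — that is, $g$ is strictly decreasing on each of $[0,\rho^*]$ and $[\rho^*,K]$ (with $g'(\rho^*)=0$).

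For part (i), fix $\rho_0\leq\rho^*$. Recall $g(\rho,\rho_0)=g(\rho)-g(\rho_0)-h'(\rho_0)[f(\rho)-f(\rho_0)]$. First take $\rho\leq\rho^*$, $\rho\neq\rho_0$: I would study the function $\rho\mapsto g(\rho,\rho_0)$ on $[0,\rho^*]$, noting it vanishes at $\rho_0$ together with its derivative there (by construction of the relative flux, $\partial_\rho g(\rho,\rho_0)|_{\rho=\rho_0}=g'(\rho_0)-h'(\rho_0)f'(\rho_0)=0$), and that its second derivative is $g''(\rho)-h'(\rho_0)f''(\rho)=h''(\rho)f'(\rho)+h'(\rho)f''(\rho)-h'(\rho_0)f''(\rho)$. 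On $[0,\rho^*]$ one has $f''<0$ and $h'(\rho)\geq h'(\rho_0)$ for $\rho\geq\rho_0$ while $f'(\rho)\geq 0$; a short sign analysis shows $g(\cdot,\rho_0)$ is strictly convex-like enough near its double zero that $g(\rho,\rho_0)>0$ for all $\rho\in[0,\rho^*]\setminus\{\rho_0\}$ — alternatively, and more cleanly, I would write $g(\rho,\rho_0)=-\int_{\rho_0}^{\rho}h''(r)[f(r)-f(\rho_0)-h'(\rho_0)/h''\cdots]$, but in fact the slick route is the integral identity $g(\rho,\rho_0)=\int_{\rho_0}^\rho [h'(r)-h'(\rho_0)]f'(r)\,dr$, which follows by differentiating; since $h'(r)-h'(\rho_0)$ and $f'(r)$ have the same sign as $r-\rho_0$ on the whole interval $[0,\rho^*]$, the integrand is $\geq 0$ throughout and strictly positive a.e., giving $g(\rho,\rho_0)>0$. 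For $\rho\geq\varphi(\rho_0)$ (note $\varphi(\rho_0)\geq\rho^*$), I would use $f(\varphi(\rho_0))=f(\rho_0)$ together with the same identity extended past $\rho^*$: write $g(\rho,\rho_0)=\int_{\rho_0}^\rho[h'(r)-h'(\rho_0)]f'(r)\,dr$ and split at $\rho^*$; on $[\rho_0,\rho^*]$ the integrand is $\geq 0$, on $[\rho^*,\varphi(\rho_0)]$ one checks using the symmetry $h'(\varphi(r))=-h'(r)$ (equivalent to \eqref{symmetry}) and $f(\varphi(r))=f(r)$ that the contribution of $[\rho^*,\varphi(\rho_0)]$ exactly cancels that of $[\rho_0,\rho^*]$, leaving $g(\varphi(\rho_0),\rho_0)=0$; then for $\rho>\varphi(\rho_0)$ the integrand $[h'(r)-h'(\rho_0)]f'(r)$ has $h'(r)-h'(\rho_0)>0$ and $f'(r)<0$, so the integral strictly decreases, giving $g(\rho,\rho_0)<0$. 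Part (ii) is proved verbatim with all inequalities reversed, using instead that for $\rho_0\geq\rho^*$ one has $h'(\rho_0)\geq 0$ and applying the same integral identity and the cancellation at $\varphi(\rho_0)\leq\rho^*$.

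I do not expect a genuine obstacle here — the lemma is flagged as ``immediate'' — but the one point requiring care is the reduction to the normalised entropy and the verification that the symmetry \eqref{symmetry} is exactly what produces the cancellation of the two halves of the integral at the point $\varphi(\rho_0)$, which is what pins the zero of $g(\cdot,\rho_0)$ at $\varphi(\rho_0)$ rather than somewhere else. The rest is monotonicity bookkeeping on the two intervals separated by $\rho^*$.
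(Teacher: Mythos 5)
The paper gives no written proof here --- it declares the lemma ``immediate'' from \eqref{properties_flux}, \eqref{def_entropy_flux} and \eqref{symmetry_f} --- so the only issue is whether your argument stands on its own. The first half of each part (the sign of $g(\rho,\rho_0)$ when $\rho$ lies on the same side of $\rho^*$ as $\rho_0$) is fine. The second half is not: the step you yourself single out as the crux, namely that the two pieces of $\int_{\rho_0}^{\varphi(\rho_0)}[h'(r)-h'(\rho_0)]f'(r)\,dr$ cancel exactly so that $g(\varphi(\rho_0),\rho_0)=0$, is false. Substituting $r=\varphi(s)$ in the piece over $[\rho^*,\varphi(\rho_0)]$ and using $f'(\varphi(s))\varphi'(s)=f'(s)$ gives $g(\varphi(\rho_0),\rho_0)=\int_{\rho_0}^{\rho^*}[h'(s)-h'(\varphi(s))]f'(s)\,ds$, which under \eqref{symmetry} (normalised so that $h'(\rho^*)=0$) equals $2\int_{\rho_0}^{\rho^*}h'(s)f'(s)\,ds<0$ whenever $\rho_0<\rho^*$: the negative contribution of $[\rho^*,\varphi(\rho_0)]$ strictly outweighs the positive one from $[\rho_0,\rho^*]$. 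Exact cancellation would force $h'\circ\varphi=h'$, impossible for strictly convex $h$ and decreasing $\varphi$. Concretely, for the ASEP case \eqref{asep_case} with $\rho_0=0.3$ one has $f(0.7)=f(0.3)$, hence $g(0.7,0.3)=g(0.7)-g(0.3)=\int_{0.3}^{0.7}(1-2\rho)\log\frac{\rho}{1-\rho}\,d\rho<0$. The true zero of $g(\cdot,\rho_0)$ on $(\rho^*,K]$ lies strictly to the left of $\varphi(\rho_0)$, so your picture of where the sign change happens is wrong even though the stated conclusion ($g<0$ on $[\varphi(\rho_0),K]$) is correct.

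The clean repair drops \eqref{symmetry} altogether, consistent with the paper's attribution of the lemma to \eqref{properties_flux}, \eqref{def_entropy_flux}, \eqref{symmetry_f} only. From \eqref{decomp_relative} with the Kru\v{z}kov pairs \eqref{kruzkov_entropy} (or by differentiating \eqref{relative_flux} in $\rho$) one gets, for all $\rho,\rho_0$,
$g(\rho,\rho_0)=\int_{\rho_0}^{\rho}h''(v)\,[f(\rho)-f(v)]\,dv$.
If $\rho_0\le\rho^*$ and $\rho\ge\varphi(\rho_0)$, the set $\{v:\,f(v)>f(\rho)\}$ is the interval $(\varphi(\rho),\rho)$, which contains $(\rho_0,\rho)$ because $\varphi(\rho)\le\varphi(\varphi(\rho_0))=\rho_0$; hence the integrand is strictly negative on the interior of the range of integration and $g(\rho,\rho_0)<0$. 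The same formula gives the positive half immediately ($f$ is increasing on $[0,\rho^*]$, so $f(\rho)-f(v)$ has the sign of $\rho-v$ there), and part (ii) follows by the mirror argument. Your identity $g(\rho,\rho_0)=\int_{\rho_0}^{\rho}[h'(r)-h'(\rho_0)]f'(r)\,dr$ can also be pushed through, but only by showing that the $[\rho^*,\rho]$ piece dominates the $[\rho_0,\rho^*]$ piece, not by asserting a vanishing at $\varphi(\rho_0)$.
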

\begin{proof}{lemma}{lemma_1}
Apply the generalized Green's formula (\cite{cf}) to the divergence-measure fields $(h(\rho),g(\rho))$ on $(a,b)\times(0,1)$, and $(\rho,f(\rho))$ on either side of $\Gamma$.
For the latter, use \eqref{conservation_law} and note that $\dot{y}_t=0$ a.e. on $y^{-1}(0)\cup y^{-1}(1)$. The result then follows 
from simple computations using  \eqref{def_entropy_flux}, \eqref{symmetry_f},
\eqref{def_k} and \eqref{speed_rankine}. We eventually use \eqref{trace_prod} to decompose $\mu^-((a,b)\times(0,1))$ arising from the first Green's formula into the second and third term on the r.h.s. of \eqref{difference_gamma}.
\end{proof}
\mbox{}\\
\begin{proof}{lemma}{lemma_11}
The proof is similar to that of Lemma \ref{lemma_1}, but here the second Green's formula must be applied to $\psi(t,x)[\rho(t,x),f(\rho(t,x))]$ on $(a,b)\times(0,1)$.
\end{proof}
\mbox{}\\
\begin{proof}{lemma}{inequalities}
\mbox{}\\ \\
\emph{Proof of (o).}
The claim follows from \eqref{localprod_11} and strict convexity (resp. concavity) of $h$
(resp. $f$).\\ \\
\emph{Proof of (i).}
(b) Follows from strict concavity of $f$ and strict convexity of $h$. For (a),
setting $\rho=\varphi(r)$ in \eqref{localprod_2} and
and using \eqref{symmetry}, we obtain, with $\xi=v(\rho^-,\rho^+)$,
\begin{eqnarray*}
\sigma(\rho^\pm,\xi)+\pi(\rho^-,\rho^+) & = &
\int_{[\varphi(\rho_r),\varphi(\rho_l)]\backslash[\rho^-,\rho^+]}h''(r)[f(r)-f(\rho^\pm))-\xi(r-\rho^\pm))]dr\\
&-&\dsp\int_{[\rho^-,\rho^+]\backslash[\varphi(\rho_r),\varphi(\rho_l)]}h''(r)[f(r)-f(\rho^\pm))-\xi(r-\rho^\pm))]dr
\end{eqnarray*}
$\rho\mapsto f(\rho)-\xi\rho$ is a strictly concave function that
vanishes for $\rho=\rho^\pm$, because $\xi=v(\rho^-,\rho^+)$. Hence
the integrand is negative for
$\rho\not\in[\rho^-,\rho^+]$, and positive for
$\rho\in(\rho^-,\rho^+)$. This yields \eqref{inequality_1}, and
shows that equality occurs if and only if $\varphi(\rho_r)=\rho^-$
and
$\varphi(\rho_l)=\rho^+$.\\ \\
\emph{Proof of (ii).} We establish the result for $i^l$, the proof for $i^r$ being similar.  We distinguish two cases:\\ \\
First case: $\rho_l\leq\rho^*$. Then by \eqref{left_1} and
\eqref{adset_left}, we have $g(\rho,\rho_l)=i^l(\rho,\rho_l)$ if
$\rho\leq\rho^*$, that is $\rho\in\bln^-(\rho_l)$. When
$\rho\geq\varphi(\rho_l)$ and $\rho\neq\rho_l$, \eqref{left_1} and
Lemma \ref{lemma_g} imply $g(\rho,\rho_l)<0=i^l(\rho,\rho_l)$. For
$\rho^*<\rho<\varphi(\rho_l)$ we have, by \eqref{relative_flux} and
\eqref{symmetry_f},
$i^l(\rho,\rho_l)-g(\rho,\rho_l)=g(\varphi(\rho))-g(\rho)> 0$, where the inequality follows from (o) with $(\eta,q)=(h,g)$. \\ \\
Second case: $\rho_l>\rho^*$. By \eqref{left_2} and
\eqref{adset_left}, we have $g(\rho,\rho_l)=i^l(\rho,\rho_l)$ if
$\rho=\rho_l$  or $\rho\leq\varphi(\rho_l)$, that is
$\rho\in\bln^-(\rho_l)$. By \eqref{left_2} and Lemma
\ref{lemma_g}, $g(\rho,\rho_l)<0=i^l(\rho,\rho_l)$ if
$\rho\geq\rho^*$ and $\rho\neq\rho_l$. For
$\varphi(\rho_l)<\rho<\rho^*$, \eqref{left_1}, \eqref{relative_flux}
\eqref{symmetry_f} and Lemma \ref{lemma_g} imply
$i^l(\rho,\rho_l)-g(\rho,\rho_l)=g(\varphi(\rho),\rho_l)>0$.\\ \\
\emph{Proof of (iii).} We distinguish two cases:\\ \\
First case: $\rho_l\leq\rho^*$. For $\rho\leq\rho^*$, by Lemma
\ref{lemma_g},
\be \label{sandwich}-g(\rho,\rho_l)\leq 0\leq i^r(\rho,\rho_r) \ee
This is an equality for $\rho=\rho_l$ and  $f(\rho_l)\leq
f(\rho_r)$. Indeed, the latter means either
$\rho_l\leq\rho_r\leq\rho^*$, or $\rho_r\geq\rho^*$ and
$\rho_l\leq\varphi(\rho_r)$. Then, by
\eqref{right_1}--\eqref{right_2} and Lemma \ref{lemma_g},
$$
-g(\rho_l,\rho_l)=0=i^r(\rho_l,\rho_r)
$$
Conversely, assume $-g(\rho,\rho_l)=i^r(\rho,\rho_r)$. By
\eqref{sandwich}, we must have
\be \label{wemusthave} g(\rho,\rho_l)=0=i^r(\rho,\rho_r) \ee
Since $\rho\leq\rho^*$, by Lemma \ref{lemma_g}, the first equality
in \eqref{wemusthave} implies $\rho=\rho_l$. Then, again by Lemma
\ref{lemma_g}, the second equality $i^r(\rho_l,\rho_r)=0$ in
\eqref{wemusthave} holds iff $f(\rho_l)\leq f(\rho_r)$. \\ \\
Second case: $\rho_l>\rho^*$. If $\rho\leq\varphi(\rho_l)$, then
Lemma \ref{lemma_g} implies $-g(\rho,\rho_l)<0\leq
i^r(\rho,\rho_r)$. Assume now $\varphi(\rho_l)\leq\rho\leq\rho^*$.
Then
\begin{eqnarray*}
g(\rho,\rho_r)-g(\rho,\rho_l) & = &
h'(\rho_r)f(\rho_r)-g(\rho_r)-[h'(\rho_l)f(\rho_l)-f(\rho_l)]\\
& - & (h'(\rho_r)-h'(\rho_l))f(\rho) \\
& = & \int_{\rho_l}^{\rho_r}h''(r)(f(r)-f(\rho))dr
\end{eqnarray*}
Since $\varphi(\rho_l)\leq\rho\leq\rho^*$ and
$r\in[\rho_l,\rho_r]$, the integrand is negative. Hence
$-g(\rho,\rho_l)< -g(\rho,\rho_r)\leq i^r(\rho,\rho_r)$.\\ \\
The proof of (iv) is symmetric to that of (iii) and thus omitted.
\end{proof}
\subsection{$S\leq V$ and characterization of minimizers}
\label{proof_uniqueness}
In this subsection, we establish $S\leq V$ in \eqref{variational}, and prove that every $\rho(.,.)$ that achieves equality in \eqref{quasi_potential} and \eqref{variational} necessarily satisfies conditions of Theorem \ref{theorem_2}. In the next subsection we shall prove actual existence of such $\rho(.,.)$. 
Before going into details, we explain the main idea. Let $\rho(.,.)\in\emsol((-\infty,0),\times(0,1))$ with $\lim_{t\to-\infty}\rho(t,.)=\rho_s(.)\in\mathcal S$,
$\rho(0,.)=\rho(.)$, and 
$\{y^*\}=\mathcal Y[\rho_s(.)]$, that is, $y^*=1$ if $\rho_s(.)=\rho_l$, $y^*=0$ if $\rho_s(.)=\rho_r$, $y^*=y$ in the case \eqref{arbitrary_shock}.
Let $(y_t,\,t\leq 0)$ be a $[0,1]$-valued a Lipschitz path such that $\lim_{t\to-\infty}y_t=y^*$. Then
$$ S[\rho(.)]= S[\rho(.)]-S[\rho_s(.)]\leq S[\rho(0,.),y_0]-S[\rho_s(.),y^*]$$
If we find a path $y_.$ such that $d^\circ(t)\leq 0$ and $\partial d(t)\leq 0$ in \eqref{difference_gamma} (with $a\to-\infty$ and $b=0$), we obtain the upper bound in \eqref{variational}.
It follows from \eqref{difference_gamma} and Lemma \ref{inequalities} that $d^\circ(t)\leq 0$ and $\partial d(t)\leq 0$ if the path $y_.$ enjoys the following properties:\\ \\
(i) Whenever $\rho(t,y_t^\pm)$ are different, $\dot{y}_t$ is the Rankine-Hugoniot speed \eqref{speed_rankine}.\\ \\
(ii) Whenever $\rho(t,y_t^\pm)$ are equal, $\dot{y}_t$ is the characteristic speed $f'(\rho(t,y_t\pm))$.\\ \\
(iii) $\rho(t,0^+)\geq\rho^*$ a.e. on $\{y_t=0\}$, $\rho(t,0^+)\leq\rho^*$ a.e. on $\{y_t=1\}$.\\ \\
(iv) $y_.$ sees only shocks, i.e. $\rho(t,y_t-)\leq\rho(t,y_t+)$ a.e.\\ \\
Note that (i) is automatically satisfied by any path $y_.$, and
(iii) means that $y_.$ stays at a boundary iff. it is driven against it by the characteristic speed.
Equality in \eqref{difference_gamma} implies that the measure term vanishes and $d^\circ(t)=\partial d(t)=0$. The former gives entropicity of $\tilde{\rho}(.,.)$ outside $\Gamma$. The latter
yields conditions on $\rho(t,y_t^\pm)$, $\rho(t,0^+)$, $\rho(t,1^-)$. Using Lemma \ref{inequalities} one sees that these conditions are exactly those of Theorem \ref{theorem_2}.\\ \\ 
%In particular, whenever $y_t\in(0,1)$, the Rankine-Hugoniot condition yields constant value \eqref{def_vshock}  for $\dot{y}_t$.\\ \\
%
A path satisfying (i)--(iv) is a (maximal or minimal) generalized forward characteristic (g.f.c.) of $\rho(.,.)$ in a sense slightly wider than \cite{daf} (where no boundary and only entropy solutions are considered). Its existence can be established with the arguments of \cite[Theorem 3.2]{daf} if we know that limits $\rho(t,x\pm)$ exist in usual sense for a.e. $t$.
Unfortunately, in $\emsol((-\infty,0]\times(0,1))$, we only have these limits in the weaker sense \eqref{lonelims}, and we are not able
to obtain existence of the g.f.c. in this case. Therefore the actual proof is a local, somewhat more technical version of the above idea, that uses the mollified version
\eqref{difference_psi} of \eqref{difference_gamma}.
However, for paths that achieve equality in \eqref{quasi_potential} and \eqref{variational}, we can a priori establish existence of genuine limits in certain domains, and \eqref{difference_gamma}
will be used to reach some of the conclusions.
\\ \\ 
The first lemma is a simple consequence of the fact that $\emsol((-\infty,0)\times(0,1))$ is contained in $C^0((-\infty,0],L^1((0,1))$. Then, all but the existence part of Theorem \ref{theorem_2}
is contained in Propositions \ref{variational_1}--\ref{lemma_uniqueness_2} below. 
\begin{lemma}\label{closed_minima}
Let $\rho(.,.)\in\emsol((-\infty,0)\times(0,1))$. Then the following set is closed:
\be\label{minimizer_set}\mathcal M[\rho(.,.)]:=\{(t,y)\in(-\infty,0]\times[0,1]:\,y\in\mathcal Y[\rho(t,.)]\}\ee
\end{lemma}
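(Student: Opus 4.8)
The plan is to reformulate $\mathcal M[\rho(.,.)]$ in terms of the auxiliary function $\rho_c$ and the map $(t,y)\mapsto \Psi(t,y) := \int_0^y \rho(t,z)\,dz - \rho_c y$, using the characterization of $\mathcal Y[\rho(t,.)]$ given just after \eqref{def_S_2}: $y\in\mathcal Y[\rho(t,.)]$ iff $y$ minimizes $y\mapsto\Psi(t,y)$ over $[0,1]$. Thus
\be
\mathcal M[\rho(.,.)] = \left\{(t,y)\in(-\infty,0]\times[0,1]:\ \Psi(t,y)=\min_{z\in[0,1]}\Psi(t,z)\right\}.
\ee
So it suffices to show that $(t,y)\mapsto\Psi(t,y)$ is jointly continuous on $(-\infty,0]\times[0,1]$ and that $t\mapsto m(t):=\min_{z\in[0,1]}\Psi(t,z)$ is continuous; then $\mathcal M[\rho(.,.)]$ is the zero set of the continuous function $(t,y)\mapsto\Psi(t,y)-m(t)$, hence closed.

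First I would establish joint continuity of $\Psi$. In the $y$ variable, $y\mapsto\Psi(t,y)$ is Lipschitz uniformly in $t$ with constant $K+|\rho_c|$, since $\rho(t,.)$ is $[0,K]$-valued; so $|\Psi(t,y)-\Psi(t,y')|\le (K+|\rho_c|)|y-y'|$. In the $t$ variable, the key input is the fact recalled in the excerpt (via \cite{pan} and the paragraph before the lemma) that $\emsol((-\infty,0)\times(0,1))\subset C^0((-\infty,0],L^1((0,1)))$: if $t_n\to t$ in $(-\infty,0]$ then $\rho(t_n,.)\to\rho(t,.)$ in $L^1(0,1)$, hence $\int_0^y\rho(t_n,z)\,dz\to\int_0^y\rho(t,z)\,dz$ uniformly in $y\in[0,1]$ (the difference is bounded by $\|\rho(t_n,.)-\rho(t,.)\|_{L^1(0,1)}$). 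Combining the uniform Lipschitz bound in $y$ with this uniform-in-$y$ convergence in $t$ gives joint continuity of $\Psi$ on the (locally compact) domain.

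Next, continuity of $m(t)=\min_{z\in[0,1]}\Psi(t,z)$: since $[0,1]$ is compact and $\Psi(\cdot,\cdot)$ is jointly continuous, $m$ is continuous by a standard argument — for $t_n\to t$, $m(t_n)=\Psi(t_n,z_n)$ for some $z_n\in[0,1]$; passing to a subsequence $z_n\to z$, joint continuity gives $\liminf m(t_n)\ge\Psi(t,z)\ge m(t)$, while $m(t_n)\le\Psi(t_n,z^*)\to\Psi(t,z^*)=m(t)$ for any minimizer $z^*$ at time $t$, whence $\limsup m(t_n)\le m(t)$. Finally, $\mathcal M[\rho(.,.)]=\{(t,y):\Psi(t,y)-m(t)\le 0\}=\{\Psi(t,y)-m(t)=0\}$ is closed as the zero set of a continuous function.

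The only genuine subtlety — and the one point to cite carefully rather than prove — is the continuity-in-time of $t\mapsto\rho(t,.)\in L^1(0,1)$ for entropy-measure solutions; everything else is elementary. This is precisely the inclusion $\emsol\subset C^0(\,\cdot\,,L^1)$ noted in the excerpt right before Lemma \ref{closed_minima}, coming from \cite{pan} together with the absence of jumps along time lines (established via $n_x\neq 0$); so no new work is needed there.
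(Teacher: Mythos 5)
Your proposal is correct and follows exactly the route the paper intends: the paper dismisses this lemma as "a simple consequence of the fact that $\emsol((-\infty,0)\times(0,1))\subset C^0((-\infty,0],L^1((0,1)))$", and your argument (joint continuity of $(t,y)\mapsto\int_0^y\rho(t,z)\,dz-\rho_c y$ via the uniform Lipschitz bound in $y$ and $L^1$-continuity in $t$, then closedness of the argmin set) is precisely the standard fleshing-out of that remark. No issues.
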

\begin{proposition}
\label{variational_1}
Let $-\infty< s<t<0$, $\rho(.,.)\in\emsol((-\infty,0)\times(0,1))$.
Then 
\be \label{inequality_production_2} S[\rho(t,.)]-S[\rho(s,.)]\leq
J_{(s,t)}[\rho(.,.)] \ee
In particular, if $\rho(.)\in L^{\infty,K}((0,1))$ and
${\rho}(.,.)\in{\mathcal R}[\rho(.)]$,
\be\label{upper_bound_variational} S[\rho(.)]\leq
J_{(-\infty,0)}[{\rho}(.,.)] \ee
Assume \eqref{inequality_production_2} is an equality. Then 
for a.e. $\theta\in (s,t)$, 
$\mathcal Y[\rho(\theta,.)]$ is reduced to a single point $y$ such that $(\theta,y)\in J$ if $y\in(0,1)$, and the following hold, respectively
if $y\in(0,1)$, $y\in(0,1]$, $y\in[0,1)$, $y=1$, $y=0$:
\begin{eqnarray}
\label{inside_shock} {\rho}(\theta,y-) & = &
\varphi(\rho_r)<{\rho}(\theta,y+)=\varphi(\rho_l)\\
\label{boundary_1} \rho(\theta,0^+) & \in & {\mathcal
E}^-(\rho_l)\\
\label{boundary_2}\rho(\theta,1^-)& \in & \bln^+(\rho_r)\\
\label{boundary_3}
\rho(\theta,1^-) & =& \rho_l,\quad f(\rho_l)\leq f(\rho_r)\\
\label{boundary_4}
\rho(\theta,0^+) & = & \rho_r,\quad f(\rho_l)\geq f(\rho_r)
\end{eqnarray}
Besides, 
\be\label{vanish_outside}\mu^-(\Omega)=0\ee
where $\Omega$ is the open set defined by
\be\label{def_Omega}
\Omega:=  \{(\theta,y)\in(-\infty,0)\times(0,1):\,y\not\in\mathcal Y[\rho(\theta,.)]\}\ee
%
%$\Omega_+:=  \{(\theta,y)\in(0,+\infty)\times(0,1):\,y>\bar{y}_\theta\}$ and
%
%$\Omega_- :=  \{(\theta,y)\in(0,+\infty)\times(0,1):\,y<\underline{y}_\theta\}$, 
%
%
%with
%
%$\bar{y}_\theta  :=  \sup{\mathcal Y}[{\rho}(\theta,.)]$ and
%
%$\underline{y}_\theta  := \inf{\mathcal Y}[{\rho}(\theta,.)]$.
%
%
%
\end{proposition}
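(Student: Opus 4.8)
The plan is to derive both the inequality and the equality analysis from the exact balance identities of Lemma~\ref{lemma_1} and its mollified companion Lemma~\ref{lemma_11}, fed with the sign information of Lemma~\ref{inequalities}. Throughout I would first extend $\rho(.,.)$ by constant values for $x<0$ and $x>1$ as in Subsection~\ref{proof_inequality}, so that the $\gsol$ structure (jump set $J$, VMO set) and the trace relation \eqref{trace_prod} hold up to the boundary, and $L^1$ spatial traces exist there in the sense \eqref{lonelims}.

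For the inequality, write $c:=\min(K(\rho_l),K(\rho_r))$. Since $S[\rho(\theta,.)]=\inf_y S[\rho(\theta,.),y]-c$, choosing $y_s\in\mathcal Y[\rho(s,.)]$ and any Lipschitz path $y_.:[s,t]\to[0,1]$ starting there gives $S[\rho(t,.)]-S[\rho(s,.)]\le S_t-S_s$ with $S_\theta=S[\rho(\theta,.),y_\theta]$; so by \eqref{difference_gamma} it is enough to make $\int_s^t d(\theta)\,d\theta-\int_s^t 1_{(0,1)}(y_\theta)\pi(\rho(\theta,y_\theta-),\rho(\theta,y_\theta+))^-\,d\theta\le 0$. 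I would take $y_.$ to be a generalized forward characteristic of $\rho(.,.)$ through $(s,y_s)$, i.e. a path satisfying (i)--(iv) listed before Lemma~\ref{closed_minima} (Rankine--Hugoniot speed across jumps, characteristic speed $f'(\rho)$ on continuity points, attachment to a boundary only when driven there by the characteristic speed, traversal of shocks only). For such a path, Lemma~\ref{inequalities}(ii)--(iv) applied to the three forms \eqref{boundaryterm_1}--\eqref{boundaryterm_3} of $\partial d(\theta)$ — according to whether $y_\theta$ is interior, $0$ or $1$, using the sign constraint (iii) at a boundary — gives $\partial d(\theta)\le 0$; while on $\{y_\theta\in(0,1)\}$ the combination $d^\circ(\theta)-1_{(0,1)}(y_\theta)\pi^-$ equals $\sigma(\rho(\theta,y_\theta),f'(\rho(\theta,y_\theta)))<0$ at a continuity point by Lemma~\ref{inequalities}(i)(b), and equals $\sigma(\rho^\pm,v(\rho^-,\rho^+))+\pi(\rho^-,\rho^+)\le 0$ at a jump by Lemma~\ref{inequalities}(i)(a) (here $-\pi^-=\pi$ since $\rho^-<\rho^+$ by (iv)). With $-\mu^-\le 0$ this yields \eqref{inequality_production_2}; then \eqref{upper_bound_variational} would follow by letting $s\to-\infty$ (using $L^1$-continuity of $S$ and $S\equiv 0$ on $\mathcal S$, so $S[\rho(s,.)]\to 0$, and monotonicity of $J_{(s,t)}$ in the interval), and finally $t\to 0$ with $\rho(t,.)\to\rho(.)$ in $L^1$.

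The technical obstacle is that in $\emsol((s,t)\times(0,1))$ the one-sided limits above exist only in the weak $L^1$ sense \eqref{lonelims}, so Dafermos' construction of a sharp generalized characteristic is not directly available. To get around this I would use Lemma~\ref{lemma_11} instead: replace the sharp path by a density $\psi\in\overline{BV}_{loc}$ satisfying its hypotheses, built locally in time as a mollification of $1_{\{x<y_\theta\}}$ along an approximate characteristic bundle and arranged so that $\partial_x\psi\le 0$, $|\partial_t\psi|\ll-\partial_x\psi$, $\partial_x\psi(dt,dx)=\partial_x\psi(t,dx)\,dt$. One then has to verify $d_\psi^\circ(\theta)=\int_0^1\sigma\big(\rho(\theta,x),-\partial_t\psi/\partial_x\psi\big)\,[-\partial_x\psi](\theta,dx)\le 0$, which reduces $(-\partial_x\psi)(\theta,dx)$-a.e. to the strict negativity in Lemma~\ref{inequalities}(i)(b) once the averaged slope $-\partial_t\psi/\partial_x\psi$ is matched to the characteristic field. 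Carrying out this construction and the verification $d_\psi^\circ\le 0$ — the $L^1$-trace surrogate for Dafermos' argument — is what I expect to be the hard part; with \eqref{difference_psi} in hand the inequality follows exactly as above without any pointwise trace assumption.

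For the equality statement, assume \eqref{inequality_production_2} is an equality; then every inequality above is saturated, for every admissible starting point $y_s\in\mathcal Y[\rho(s,.)]$. Saturation of the measure term means $\mu^-$ charges nothing off the tracked curve $\Gamma$, so $\rho(.,.)$ is entropic there; this \emph{a posteriori} yields genuine one-sided limits along $\Gamma$ and at the boundaries (via the $\gsol$ decomposition, cf. Remark~\ref{remark_limits}), which legitimizes invoking the sharp identity \eqref{difference_gamma} to extract the pointwise conditions. Saturation of $\partial d(\theta)=0$ then forces, via the equality cases of Lemma~\ref{inequalities}(ii)--(iv): for interior $y_\theta$, $\rho(\theta,0^+)\in{\mathcal E}^-(\rho_l)$ and $\rho(\theta,1^-)\in\bln^+(\rho_r)$, i.e. \eqref{boundary_1}--\eqref{boundary_2}; for $y_\theta=1$, \eqref{boundary_3} (legitimate since $\rho(\theta,1^-)\le\rho^*$ by (iii)); for $y_\theta=0$, \eqref{boundary_4}. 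On $\{y_\theta\in(0,1)\}$, the strict inequality $\sigma(\rho,f'(\rho))<0$ rules out a continuity point at $y_\theta$, so $(\theta,y_\theta)\in J$ and the equality case of Lemma~\ref{inequalities}(i)(a) gives \eqref{inside_shock}. The same strict inequality forces $\mathcal Y[\rho(\theta,.)]$ to be a singleton (an interval of minimizers would mean $\rho(\theta,.)\equiv\rho_c$ on it, and tracking a characteristic through such a point would force $\sigma(\rho_c,f'(\rho_c))=0$); combined with closedness of $\mathcal M[\rho(.,.)]$ (Lemma~\ref{closed_minima}), the tracked curve then coincides a.e. with $\{(\theta,y):y\in\mathcal Y[\rho(\theta,.)]\}$, and the entropicity off $\Gamma$ upgrades to $\mu^-(\Omega)=0$, which is \eqref{vanish_outside}.
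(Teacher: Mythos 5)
Your overall strategy --- combine the balance identities of Lemmas \ref{lemma_1} and \ref{lemma_11} with the sign information of Lemma \ref{inequalities} along a path satisfying (i)--(iv) --- is exactly the heuristic the paper itself announces at the start of Subsection \ref{proof_uniqueness}, and your sign bookkeeping ($\partial d\le 0$ from (ii)--(iv), $d^\circ-\pi^-\le 0$ from (i)) is correct. But there is a genuine gap precisely where you flag ``the hard part'': constructing a single $\psi$ over all of $[s,t]$ as a mollified indicator ``along an approximate characteristic bundle'' presupposes that you can propagate the bundle consistently in time, which is essentially equivalent to constructing the generalized forward characteristic itself --- the very object that is unavailable because traces exist only in the $L^1$ sense \eqref{lonelims}. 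The paper avoids this entirely by a local-in-time differential argument: it sets $D(\theta)=S[\rho(t,.)]-S[\rho(\theta,.)]-J_{(\theta,t)}[\rho(.,.)]$, bounds the upper derivative $\Delta(\theta)$, and at each $\theta$ \emph{re-selects} a minimizer $y\in\mathcal Y[\rho(\theta,.)]$ and uses a test function $\psi_{\pm\delta}$ supported on a thin fan over $[\theta,\theta+\varepsilon]$ whose slope $\xi_0$ is read off from the $\mathcal H^1$-a.e.\ pointwise structure (VMO, shock, or antishock) of $\rho$ at $(\theta,y)$ supplied by the structure theorem for $\gsol$. Only infinitesimal time-propagation is needed, so no bundle ever has to be built; your plan never reduces to this, and the key step remains unproved.

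A second, smaller gap is in the equality analysis: to conclude $\mathcal Y[\rho(\theta,.)]=\{y\}$ you argue that an interval of minimizers would force $\rho(\theta,.)\equiv\rho_c$ on it, but minimizers of $y\mapsto\int_0^y\rho(z)dz-\rho_c y$ need not form an interval; two isolated minimizers only force the integral of $\rho-\rho_c$ between them to vanish. The paper instead inserts an arbitrary finite subset $\mathcal Z\subset\mathcal Y_\theta$ into the measure-term bound \eqref{bound_measure_term}, reduces to the case where every minimizer lies in the shock set, and observes that each extra minimizer contributes a strictly negative term $\pi[\rho(\theta,z-),\rho(\theta,z+)]$ to \eqref{green_limit}, so equality kills them all. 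Your derivation of \eqref{boundary_1}--\eqref{boundary_4} and of \eqref{vanish_outside} is otherwise in the right spirit, but the passage from ``equality forces $\mu^-$ off the tracked curve to vanish'' to the full statement $\mu^-(\Omega)=0$ also requires the quantitative estimates \eqref{zerolimits} and \eqref{abs_cont} together with the semicontinuity of $\bar y_.$ and $\underline y_.$ from Lemma \ref{closed_minima}, none of which you supply.
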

\begin{proposition}
\label{lemma_uniqueness_1}
Assume $\rho(.,.)\in{\mathcal R}[\rho(.)]$ achieves
equality in \eqref{quasi_potential} and \eqref{variational}.  Then 
$\tilde{\rho}(.,.)\in
L^{\infty,K}((0,+\infty)\times(0,1))$ given by \eqref{reversal}
satisfies conditions (a)--(c) of
Theorem \ref{theorem_2} for some $\tilde{y}\in\tilde{\mathcal Y}[\tilde{\rho}(.)]$, where  $\tilde{\rho}(.)=\rho(1-.)$.
\end{proposition}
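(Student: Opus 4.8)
The plan is to derive Proposition~\ref{lemma_uniqueness_1} as a direct consequence of the equality case in Proposition~\ref{variational_1}, read through the space-time reversal \eqref{reversal}. First I would note that if $\rho(.,.)\in\mathcal R[\rho(.)]$ achieves equality in \eqref{quasi_potential} and \eqref{variational}, then in particular $\rho(.,.)\in\emsol((-\infty,0)\times(0,1))$ (since $I_{(-\infty,0)}[\rho(.,.)]<+\infty$ forces, via property (iii) of the bulk functional, membership in $\emsol$, and the boundary terms $I^l,I^r$ are finite as well). Then $S[\rho(.)]=V[\rho(.)]=I_{(-\infty,0)}[\rho(.,.)]\geq J_{(-\infty,0)}[\rho(.,.)]\geq S[\rho(.)]$ by \eqref{compare_bulk}, \eqref{upper_bound_variational} and the fact that $S[\rho_s(.)]=0$; hence equality holds throughout, and in particular $S[\rho(t,.)]-S[\rho(s,.)]=J_{(s,t)}[\rho(.,.)]$ for all $-\infty<s<t<0$ by additivity of $J$ together with the monotone telescoping of \eqref{inequality_production_2}. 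This puts us exactly in the equality case of Proposition~\ref{variational_1} on every subinterval $(s,t)$.

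Next I would translate the conclusions of Proposition~\ref{variational_1} into statements about $\tilde\rho(t,x)=\rho(-t,1-x)$. By \eqref{reversal_prod} the bulk entropy production transforms as $\mu_\eta[\rho\circ\Phi]=-\mu_\eta[\rho]\circ\Phi$, so entropicity of $\rho(.,.)$ outside a curve $\Gamma$ becomes entropicity of $\tilde\rho(.,.)$ outside the reflected curve; the condition $\mu^-(\Omega)=0$ of \eqref{vanish_outside}, which says $\rho(.,.)$ is an entropy solution away from the graph of the (a.e.\ unique) minimizer $y_\theta\in\mathcal Y[\rho(\theta,.)]$, becomes entropicity of $\tilde\rho(.,.)$ away from the graph $t\mapsto \tilde y_t:=1-y_{-t}$. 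The a.e.\ uniqueness of the minimizer together with Lemma~\ref{closed_minima} (closedness of $\mathcal M[\rho(.,.)]$) upgrades to $\mathcal Y[\rho(\theta,.)]=\{y_\theta\}$ for \emph{every} $\theta$, and the path $\theta\mapsto y_\theta$ is Lipschitz with the Rankine--Hugoniot speed on its graph when $\rho(\theta,y_\theta\pm)$ differ (this is where I would need to check that the antishock identity \eqref{inside_shock} forces $\dot y_\theta=v(\varphi(\rho_r),\varphi(\rho_l))$ as long as $y_\theta\in(0,1)$, which after reflection gives \eqref{def_tildey}--\eqref{def_vshock} with $v=v(\varphi(\rho_l),\varphi(\rho_r))$, since $v(\varphi(\rho_r),\varphi(\rho_l))=-v(\varphi(\rho_l),\varphi(\rho_r))$ under space reflection), and the characteristic speed when $y_\theta$ sits at a boundary, yielding the time $\tilde\theta^{\tilde y}$ of \eqref{def_time} and the escape value $\tilde y_\infty$ of \eqref{final_boundary}. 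Finally, the boundary identities \eqref{boundary_1}--\eqref{boundary_4} and the antishock identity \eqref{inside_shock}, transported through $\Phi$ and through the fact that space reflection swaps left and right boundaries and swaps $\bln^+\leftrightarrow\bln^-$, become precisely \eqref{reverse_bln}, \eqref{antishock_0}, \eqref{reverse_bln_2}; the limit condition $\rho(0,.)=\rho(.)$ becomes (a), i.e.\ $\tilde\rho(t,.)\to\rho(1-.)$ as $t\to 0$.

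I would then isolate the choice of $\tilde y$: since we have $S[\rho(.)]=S[\rho(0,.),y_0]-\min(K(\rho_l),K(\rho_r))$ with $y_0\in\mathcal Y[\rho(0,.)]=\mathcal Y[\rho(.)]$, the reflected position $\tilde y:=1-y_0$ lies in $\tilde{\mathcal Y}[\tilde\rho(.)]$, the set of maximizers of $y\mapsto\int_0^y\tilde\rho(z)dz-\rho_c y$ (because $y\mapsto\int_0^y\rho(z)dz-\rho_c y$ minimized at $y_0$ translates, under $z\mapsto 1-z$, to $y\mapsto\int_0^y\tilde\rho(z)dz-\rho_c y$ maximized at $1-y_0$ up to an additive constant). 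This is exactly the $\tilde y$ for which conditions (a)--(c) are claimed to hold. The remaining work is to verify that (a)--(c) as literally stated — for instance that on $(\tilde\theta^{\tilde y},+\infty)$ the curve $\Gamma$ has disappeared into the boundary so that $\tilde\rho$ is entropic in the full slab, with the single surviving boundary datum switching to $\rho_b$ — follows from the equality-case conclusions applied on the appropriate time subintervals and from the monotonicity of $S[\rho(t,.),y]$ in $t$. I expect the main obstacle to be the rigorous bookkeeping around the boundary traces: Proposition~\ref{variational_1} delivers the conditions only for a.e.\ $\theta$ and only via the $L^1$-trace sense \eqref{lonelims}, so promoting "a.e.\ $\theta$'' statements about $\rho(\theta,0^+),\rho(\theta,1^-),\rho(\theta,y_\theta\pm)$ to the clean pointwise/entropic statements (a)--(c), and in particular pinning down the exact crossover time $\tilde\theta^{\tilde y}$ and the behavior right at it, will require careful use of $C^0((-\infty,0],L^1)$ regularity, Lemma~\ref{closed_minima}, and the structure of $\emsol$ — this is the technical heart, while the algebraic transport under $\Phi$ and the Legendre/envelope identifications are routine.
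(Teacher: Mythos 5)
Your overall picture — equality in \eqref{variational} forces \eqref{subequality} on every subinterval, hence the equality case of Proposition \ref{variational_1}, whose conclusions are then transported through $\Phi$ — is the right starting point and matches the paper. But the proposal has a genuine gap at its center: you assert that the a.e.\ conclusions of Proposition \ref{variational_1} together with Lemma \ref{closed_minima} ``upgrade'' to the statement that $\mathcal Y[\rho(\theta,.)]=\{y_\theta\}$ for \emph{every} $\theta$ with $\theta\mapsto y_\theta$ a Lipschitz path obeying the Rankine--Hugoniot/characteristic speeds. This is precisely what does not follow, and the paper says so explicitly at the start of Subsection \ref{proof_uniqueness}: since elements of $\emsol$ only admit one-sided limits in the weak $L^1$ sense \eqref{lonelims}, one cannot construct the generalized forward characteristic (the candidate curve $y_.$) by the arguments of \cite{daf}. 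Closedness of $\mathcal M[\rho(.,.)]$ plus a.e.\ uniqueness only gives that $\bar y_.$ and $\underline y_.$ are semicontinuous and agree a.e.; it gives neither everywhere-uniqueness nor Lipschitz regularity, and without a Lipschitz path you cannot even invoke \eqref{difference_gamma} along the minimizer graph to read off $\dot y_\theta$.

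The paper's proof inverts the logic to get around this: it takes an arbitrary $(s,y)$ with $y\in\mathcal Y[\rho(s,.)]\cap(0,1)$, runs the \emph{upper forward Filippov solution} of $\dot y_t=f'(\rho(t,y_t))$ from $(s,y)$, and proves (\eqref{neverleaves}, by contradiction via Lemma \ref{lemma_daf}, \eqref{vanish_outside} and the strict inequality $d(t)<0$ from Lemma \ref{inequalities}) that this constructed path stays inside the minimizer set; only then does \eqref{inside_shock} force the constant speed $v$. The second, and hardest, missing piece is ruling out that the antishock, once absorbed at a boundary, ``restarts'' from the interior at a later time (the step $t_+=0$). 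Your proposal silently assumes the single-antishock structure of conditions (b)--(c), but a priori the a.e.\ statements allow the minimizer to sit at a boundary on one interval and reappear in $(0,1)$ on a later one. The paper excludes this with the $\varepsilon$-boundary construction \eqref{eps_ext}--\eqref{boundary_filippov}: one cannot run the Filippov argument at $x=0$ directly because $\rho(t,0^+)$ is only an $L^1$-trace, so the boundary is shifted to $x=\varepsilon$ where classical one-sided limits exist (thanks to local entropicity and BV regularity), and a compactness/continuity contradiction with Lemma \ref{closed_minima} finishes the argument. You correctly flag the trace issue as ``the technical heart,'' but the proof of the proposition essentially \emph{is} that heart; without the Filippov-characteristic argument and the $\varepsilon$-boundary device, the claimed structure (a)--(c) is not established.
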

\begin{proposition}
\label{lemma_uniqueness_2}
Let $\tilde{\rho}\in L^{\infty,K}((0,1))$. Then for every
$\tilde{y}\in\tilde{\mathcal Y}[\tilde{\rho}(.)]$, there exists at
most one $\tilde{\rho}(.,.)\in L^{\infty,K}((0,+\infty)\times(0,1))$
satisfying conditions (a)--(c) of Theorem \ref{theorem_2}. If
$\tilde{\rho}(.,.)$ exists, we have $\tilde{\mathcal
Y}[\tilde{\rho}(t,.)]=\{\tilde{y}_{t}\}$ for every $t>0$. Besides,
${\rho}(.,.)\in L^{\infty,K}((-\infty,0)\times(0,1))$ given by
\eqref{reversal} lies in $\mathcal R_T[\rho(.)]$ for some
$T<+\infty$, for which it achieves equality in \eqref{quasi_potential_finite} and \eqref{finite_time}. 
If $\tilde{\rho}(.,.)$ exists for two distinct values of $\tilde{y}$, statement (0) of Theorem \ref{theorem_2} holds for these values.  
\end{proposition}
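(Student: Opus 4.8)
The plan is to establish the four assertions in turn, writing, for $\tilde\rho(.,.)$ satisfying (a)--(c) with a given $\tilde y\in\tilde{\mathcal Y}[\tilde\rho(.)]$, $\rho(t,x):=\tilde\rho(-t,1-x)$, $y_t:=1-\tilde y_{-t}$ and $\Gamma:=\{(t,y_t):t<0\}$.

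\emph{Uniqueness.} Conditions (a)--(c) present $\tilde\rho(.,.)$ as a concatenation of initial-boundary value problems. For $t<\tilde\theta^{\tilde y}$ the explicit Lipschitz curve splits the strip into $D^-=\{0<x<\tilde y_t\}$ and $D^+=\{\tilde y_t<x<1\}$, and on each piece $\tilde\rho$ is an entropy solution of \eqref{conservation_law} with initial datum $\tilde\rho(.)$, a BLN datum (from \eqref{reverse_bln}) on the fixed boundary, and the forced trace $\varphi(\rho_l)$, resp. $\varphi(\rho_r)$, on the moving boundary (from \eqref{antishock_0}). Passing to the Galilean frame $\xi=x-\tilde y_t$, which fixes the moving interface and turns $f$ into the strictly concave flux $\tilde f=f-v\,\mathrm{id}$ with $v=v(\varphi(\rho_l),\varphi(\rho_r))$, the choice \eqref{def_vshock} gives $f'(\varphi(\rho_l))<v<f'(\varphi(\rho_r))$, hence $\tilde f'(\varphi(\rho_l))<0<\tilde f'(\varphi(\rho_r))$: on both $D^\pm$ the value forced on the interface is an admissible \emph{inflow} boundary value, so each such IBVP has at most one entropy solution with $L^1$ boundary traces. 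Since $\emsol\subset C^0(L^1)$, two solutions agreeing on $D^\pm$ also agree at time $\tilde\theta^{\tilde y}$; repeating the argument on $(\tilde\theta^{\tilde y},+\infty)\times(0,1)$ with the data \eqref{reverse_bln_2} (again one forced inflow value $\rho_b$ and one BLN condition) gives uniqueness there. When $\tilde\theta^{\tilde y}=+\infty$ — the transition line with $\tilde y\in(0,1)$, where $v=0$ — the first step already covers all of $(0,+\infty)\times(0,1)$.

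\emph{Optimality and finite time.} I first note that, by \eqref{reversal_prod}, $\rho(.,.)$ is a weak solution, is \emph{anti}-entropic off $\Gamma$ (the reversal of something entropic off its antishock curve) and carries on $\Gamma$ the \emph{entropy} shock $\rho(t,y_t-)=\varphi(\rho_r)<\varphi(\rho_l)=\rho(t,y_t+)$ (Lax-admissible since $f$ is concave); hence $\rho(.,.)\in\emsol((-\infty,0)\times(0,1))$, the defect $m_{\rho(.,.)}$ keeps a constant sign off $\Gamma$ and vanishes against $h''$ on $\Gamma$, so $\mu^-[(a,0)\times(0,1)\setminus\Gamma]=0$ and $I^0_{(a,0)}[\rho(.,.)]=J^0_{(a,0)}[\rho(.,.)]$ for every finite $a$. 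Next, $\rho(.,.)\in\mathcal R_T[\rho(.)]$ for some finite $T$: by \eqref{final_boundary}--\eqref{reverse_bln_2} (or, when $\tilde\theta^{\tilde y}=+\infty$, by the stationary antishock at $\tilde y$) and Theorem \ref{th_stat}(ii) applied on each side of $\Gamma$ to the IBVP with constant data $\rho_b\in\{\rho_l,\rho_r\}$ — noting $\rho_b\neq\rho^*$, since $f$ is strictly concave while $f(\rho_l)\neq f(\rho_r)$, or $\rho_b=\rho_r=\varphi(\rho_l)$ with $\rho_l<\rho_r$ — the solution reaches a constant, resp. the stationary shock, in finite time, so $\rho(t,.)\equiv\rho_s(.)\in\mathcal S$ for all $t$ below some $-T$. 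Now apply Lemma \ref{lemma_1} on $(a,0)$ with the path $y_.$: along $\Gamma$ one has $\dot y_t=v$ and, by the equality case of Lemma \ref{inequalities}(i)(a), $\sigma(\rho(t,y_t\pm),v)=-\pi(\varphi(\rho_r),\varphi(\rho_l))=\pi(\rho(t,y_t-),\rho(t,y_t+))^-$, so $d^\circ(t)$ and $\mathbf 1_{(0,1)}(y_t)\pi(\cdot,\cdot)^-$ cancel identically; Lemma \ref{inequalities}(ii)--(iv) together with \eqref{reverse_bln}, \eqref{reverse_bln_2} read through \eqref{reversal} force $\partial d(t)=0$ a.e.; and the measure term vanishes and $J^0=I^0$ by the opening remark. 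Thus Lemma \ref{lemma_1} collapses to $I_{(a,0)}[\rho(.,.)]=J_{(a,0)}[\rho(.,.)]=S_0-S_a$, and letting $a\to-\infty$ — $S_a\to\min(K(\rho_l),K(\rho_r))$ by $L^1$-continuity of $S[\cdot,y]$ and explicit evaluation on $\mathcal S$, while $S_0=S[\rho(.)]+\min(K(\rho_l),K(\rho_r))$ because $y_0=1-\tilde y\in\mathcal Y[\rho(.)]$ with $\rho(.)=\tilde\rho(1-.)$ — gives $I_{(-\infty,0)}[\rho(.,.)]=S[\rho(.)]$. With Proposition \ref{variational_1}, $S[\rho(.)]\le V[\rho(.)]\le V_T[\rho(.)]\le I_{(-\infty,0)}[\rho(.,.)]$, so all inequalities are equalities and $\rho(.,.)$ achieves \eqref{quasi_potential_finite} and \eqref{finite_time}.

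\emph{The set $\tilde{\mathcal Y}$ and statement (0).} From $I_{(-\infty,0)}[\rho(.,.)]=S[\rho(.)]$, additivity of $J$ together with \eqref{inequality_production_2} and \eqref{upper_bound_variational} forces \eqref{inequality_production_2} to be an equality on every subinterval; by Proposition \ref{variational_1}, $\mathcal Y[\rho(\theta,.)]$ is then a singleton for a.e. $\theta<0$, and by \eqref{inside_shock} (and \eqref{boundary_1}--\eqref{boundary_4} at the boundaries) that point is the unique place where $\rho(\theta,.)$ carries the entropy shock $\varphi(\rho_r)\,|\,\varphi(\rho_l)$ — which, $\rho(.,.)$ being anti-entropic off $\Gamma$, can only lie on $\Gamma$ — hence it is $y_\theta$. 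Lemma \ref{closed_minima} and continuity of $\theta\mapsto y_\theta$ extend this to every $\theta$, and translating through \eqref{reversal} and the change of variable $y\leftrightarrow1-y$ relating $\mathcal Y$ to $\tilde{\mathcal Y}$ yields $\tilde{\mathcal Y}[\tilde\rho(t,.)]=\{\tilde y_t\}$ for all $t>0$. Finally, if $\tilde\rho^i(.,.)=M^{\tilde y^i}[\tilde\rho(.)]$ exist for distinct $\tilde y^1,\tilde y^2\in\tilde{\mathcal Y}[\tilde\rho(.)]$, then $\tilde{\mathcal Y}[\tilde\rho^i(t,.)]=\{\tilde y^i_t\}$ with $\tilde y^i_.$ given by \eqref{def_tildey} and $\tilde y^1_0\neq\tilde y^2_0$, so the two singletons differ for all small $t>0$ and $M^{\tilde y^1}[\tilde\rho(.)]\neq M^{\tilde y^2}[\tilde\rho(.)]$, which is statement (0). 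The main obstacle is the uniqueness step — making rigorous that the datum carried on the moving interface is an admissible inflow value (the precise reason \eqref{def_vshock} is imposed) and that the associated moving-boundary IBVPs are well posed in $\emsol$, where traces only exist in the $L^1$ sense \eqref{lonelims} — and, secondarily, the upgrade from "$\mathcal Y[\rho(\theta,.)]$ a singleton a.e." to "for every $\theta$".
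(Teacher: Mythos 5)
Your optimality argument is essentially the paper's: apply Lemma \ref{lemma_1} to the path $y_t=1-\tilde y_{-t}$, cancel $d^\circ(t)$ against the $\pi^-$ term through the equality case of Lemma \ref{inequalities}(i)(a), kill $\partial d(t)$ with (ii)--(iv) and the reversed BLN conditions, relax in finite time via Theorem \ref{th_stat}, and close the sandwich $S\le V\le V_T\le I_{(-\infty,0)}[\rho(.,.)]=S$. Deducing statement (0) from the singleton property of $\tilde{\mathcal Y}[\tilde\rho^i(t,.)]$ for almost every small $t$ is a legitimate alternative to the paper's direct case analysis (for that deduction the a.e. version suffices). Two other steps, however, have genuine gaps.

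First, uniqueness. You reduce to well-posedness of initial-boundary problems on the moving domains $\{0<x<\tilde y_t\}$ and $\{\tilde y_t<x<1\}$ with an exact inflow value carried on the moving interface, and you yourself flag that you cannot make this rigorous in $\emsol$, where boundary traces exist only in the $L^1$ sense \eqref{lonelims}. That is a real gap: no moving-boundary uniqueness theorem of this kind is available here. The paper sidesteps the issue entirely by extending $\tilde\rho$ across the line $x=\tilde y+vt$ by the constant $\varphi(\rho_r)$ on $(-\infty,\tilde y+vt)$ (resp. $\varphi(\rho_l)$ on $(\tilde y+vt,+\infty)$): by \eqref{antishock_0} the extension has no jump across that line, hence is an entropy solution on the \emph{fixed} half-line domain $(-\infty,1)$ (resp. $(0,+\infty)$) with a single fixed BLN boundary, a standard problem with a unique solution. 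Your Galilean-frame observation about \eqref{def_vshock} is correct but becomes unnecessary once one extends by constants.

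Second, the upgrade from ``$\tilde{\mathcal Y}[\tilde\rho(t,.)]=\{\tilde y_t\}$ for a.e. $t$'' to ``for every $t$''. Closedness of the set \eqref{minimizer_set} together with continuity of $t\mapsto\tilde y_t$ only yields $\tilde y_t\in\tilde{\mathcal Y}[\tilde\rho(t,.)]$ for every $t$ (limits of minimizers are minimizers); it does not exclude an \emph{extra} minimizer appearing at an exceptional time $t_0$, since a set of the form $\{(t,\tilde y_t):t>0\}\cup\{(t_0,z)\}$ is closed. The paper's argument for this point is genuinely different: assuming $\tilde y^1\neq\tilde y^2\in\tilde{\mathcal Y}[\tilde\rho(s,.)]$, it restarts the construction at time $s$ from each of them (invoking by anticipation the existence result of Subsection \ref{proof_construction}), obtains two distinct continuations by statement (0), and contradicts the already-established uniqueness. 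Some argument of that type is needed; your appeal to Lemma \ref{closed_minima} does not close it.
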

For the proof of Proposition \ref{lemma_uniqueness_1}, we need a local version of \cite[Theorem 3.2]{daf} to deal with generalized characteristics. We do not repeat the proof, which is a straightforward variant of the original one. Let $\rho(.,.)\in L^{\infty,K}((-\infty,0)\times\R)$. 
Recall that $y_.$ is called a Filippov solution (\cite{fil}) of
\be\label{diff_eq}\dot{y}_t=f'(\rho(t,y_t))\ee
if it satisfies the differential inclusion
\be\label{differential_inclusion}
\dot{y}_t\in[
{\rm ess}\liminf_{x\to y_t}f'(\rho(t,x)),{\rm ess}\limsup_{x\to y_t}f'(\rho(t,x))
]
\ee
A forward (resp. backward) solution issued from $(t_0,y_0)\in(-\infty,0)\times\R$ is a solution $y_.$ defined on $[t_0,t_1]$ (resp. $[t_1,t_0]$), where $t_0\leq t_1\leq 0$ (resp. $t_1\leq t_0$), such that $y_{t_0}=y_0$. Any  forward (resp. backward) solution issued from $(t_0,y_0)$ lies in a fan between a unique lower forward (resp. backward) solution and a unique upper forward (resp. backward) solution. Because $f'$ is bounded there is no blowup in \eqref{diff_eq}, hence the upper and lower forward (resp. backward) solution are defined on $[t_0,0]$ (resp. $(-\infty,t_0]$).
\begin{lemma}
\label{lemma_daf}
Let $(t_0,y_0)\in(-\infty,0)\times\R$, and $y_.$ be the upper or lower backward (resp. forward) Filippov solution of \eqref{diff_eq} issued from $(t_0,y_0)$.
Suppose $\mathcal T$ is a subset of $(-\infty,0)$ such that $\rho(.,.)$ is an entropic (resp. anti-entropic) solution of \eqref{conservation_law}, i.e. $\mu^+[\rho(.,.)]=0$
(resp. $\mu^-[\rho(.,.)]=0$)
in a neighborhood of $\{(t,y_t):\,t\in\mathcal T\}$.
Then $\rho(t,x^+)=\rho(t,x^-)$ for a.e. $t\in\mathcal T$.
\end{lemma}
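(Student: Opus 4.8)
The plan is to derive the statement from the theory of generalized characteristics of \cite{daf}, the only genuinely new ingredients being the use of the space-time reversal \eqref{reversal_prod} to pass from one of the two cases to the other, and a localization to a neighborhood of the graph of $y_.$. Under $\Phi:(t,x)\mapsto(-t,1-x)$, formula \eqref{reversal_prod} interchanges $\mu^+[\rho(.,.)]$ and $\mu^-[\rho(.,.)]$, reverses time, and leaves the differential inclusion \eqref{differential_inclusion} form-invariant while swapping backward with forward solutions and the upper with the lower one; so the ``forward, anti-entropic'' case maps to the ``backward, entropic'' case and it suffices to treat the latter. Thus let $y_.$ be, say, the upper backward Filippov solution of \eqref{diff_eq} issued from $(t_0,y_0)$, defined on $(-\infty,t_0]$, with $\mu^+[\rho(.,.)]=0$ on an open neighborhood $N$ of $\Lambda:=\{(t,y_t):\,t\in\mathcal T\}$.

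It is enough to show that the set $\mathcal T_0$ of $t\in\mathcal T$ for which $\rho(t,y_t-)\neq\rho(t,y_t+)$ has measure zero. On $N$, together with the weak equation $\mu_{id}[\rho(.,.)]=0$ that holds near $\Lambda$ since $y_.$ is a characteristic of \eqref{conservation_law}, $\rho(.,.)$ is a bona fide entropy solution of the strictly concave law \eqref{conservation_law}; hence it is locally $BV$, its one-sided limits $\rho(t,x\pm)$ exist, its shock set is a countable union of Lipschitz arcs, and each entropy shock satisfies $\rho(t,x-)<\rho(t,x+)$ with speed $v$ lying strictly between $f'(\rho(t,x+))$ and $f'(\rho(t,x-))$ by the Oleinik condition. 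By the local version of \cite[Theorem 3.2]{daf}, for a.e.\ $t\in\mathcal T$ either $\rho(t,y_t-)=\rho(t,y_t+)$ and $\dot y_t=f'(\rho(t,y_t))$, or $\rho(t,y_t-)<\rho(t,y_t+)$ and $\dot y_t=v(\rho(t,y_t-),\rho(t,y_t+))$; so, up to a set of measure zero, $\mathcal T_0$ is exactly the set where the second alternative holds.

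Suppose $|\mathcal T_0|>0$ and fix a Lebesgue density point $\bar t$ of $\mathcal T_0$ with $\bar t\in\mathcal T_0$, writing $a:=\rho(\bar t,y_{\bar t}-)<b:=\rho(\bar t,y_{\bar t}+)$. For $t$ in a left-neighborhood of $\bar t$ the segment $L(t):=y_{\bar t}+f'(b)(t-\bar t)$ is the maximal backward classical characteristic of the entropy solution through $(\bar t,y_{\bar t})$; being shock-free, it is also a backward Filippov solution. Maximal backward Filippov solutions enjoy a semigroup property (any backward solution issued from $(\bar t,y_{\bar t})$ can be grafted onto $y_.$ at time $\bar t$ without violating maximality of $y_.$), hence the restriction of $y_.$ to $(-\infty,\bar t]$ coincides with the maximal backward solution from $(\bar t,y_{\bar t})$, so $y_t\ge L(t)$ for $t<\bar t$ near $\bar t$. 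On the other hand $v(a,b)>f'(b)$ pushes the shock curve through $(\bar t,y_{\bar t})$ strictly below $L$ for $t<\bar t$, whereas the density of $\mathcal T_0$ at $\bar t$ combined with the $BV$ shock structure forces $y_.$ to coincide with that shock curve on a left-neighborhood of $\bar t$ --- contradicting $y_t\ge L(t)$. Hence $|\mathcal T_0|=0$, which, the ``resp.'' case having been reduced to this one in the first step, is the assertion.

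The main obstacle is this last step: turning the a.e.\ identity $\dot y_t=v$ on $\mathcal T_0$ and the Lebesgue density of $\bar t$ into the statement that $y_.$ genuinely follows a shock arc near $\bar t$ --- which rests on the decomposition of the shock set into Lipschitz arcs, on the one-sided Dini bound $(y_{\bar t}-y_t)/(\bar t-t)\le f'(b)$ extracted from $y_.\ge L$, and on the semigroup property of extremal Filippov solutions. This is precisely the portion of \cite{daf} that must be re-run in the present generality, where in particular the weak ($L^1$) meaning of traces in $\emsol$, cf.\ \eqref{lonelims}, has to be replaced by the pointwise meaning supplied by the local entropic structure on $N$.
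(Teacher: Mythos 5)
Your argument is correct and matches the paper's own treatment: the paper does not write out a proof of this lemma, merely citing the local variant of Dafermos's theorem and giving the same heuristic (an extremal Filippov solution cannot ride a shock, because the Rankine--Hugoniot speed lies strictly on the wrong side of the adjacent characteristic speed), which your comparison of the upper backward solution with the maximal genuine backward characteristic $L$ makes precise after the space-time reversal reduction. One cosmetic overstatement: you cannot conclude that $y_.$ \emph{coincides} with a shock arc on a full left-neighborhood of $\bar t$ (nothing is assumed off $\mathcal T$), but this is not needed --- choosing $\bar t\in\mathcal T_0$ to be a differentiability point of $y_.$ at which the Dafermos dichotomy gives $\dot y_{\bar t}=v(a,b)>f'(b)$ already contradicts the one-sided Dini bound $\leq f'(b)$ that you extract from $y_.\geq L$.
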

The underlying heuristics goes as follows. Suppose e.g. $\rho(.,.)$ is anti-entropic and $y_.$ is the upper forward solution. The only possible discontinuities of $\rho(.,.)$ along $y_.$ are antishocks, i.e. (since $f$ is strictly concave) $\rho^-=\rho(t,y_t^-)>\rho^+=\rho(t,y_t^+)$. Then the Rankine-Hugoniot local speed  $\dot{y}_t=v(\rho^-,\rho^+)<f'(\rho^+)$. Thus by perturbing $y_.$ to the right one could obtain a new bigger solution of \eqref{diff_eq} that contradicts maximality of $y_.$.\\ \\
\textbf{Remark.} Limits in Lemma \ref{lemma_daf} exist in usual sense for all $t$ because either ${\rho}(t,x)$ or its space-time reversal is locally an entropy solution and $f$ is uniformly concave, thus has locally bounded space variation. \\
%
%
%We are going to show that, for every
%
%$\theta\in\mathcal W\cap\mathcal T_l\cap \mathcal T_r$,
%
%$\theta\in\mathcal W$, $d(\theta)\leq 0$.
%
%
%
%
\begin{proof}{proposition}{variational_1}
\eqref{upper_bound_variational} follows from
\eqref{inequality_production_2} because $S[.]$ vanishes on ${\mathcal S}$ and  is
continuous w.r.t. $L^1$ norm.
For $\theta\in(s,t)$, let
\be\label{difference_prod}D(\theta)=S[\rho(t,.)]-S[\rho(\theta,.)]-J_{(\theta,t)}[\rho(.,.)]\ee
and
\be\label{derivative_prod}
\Delta(\theta):=
\liminf_{\varepsilon\to 0}
\frac{D(\theta)-D(\theta+\varepsilon)}{\varepsilon}\ee
Let $y\in\mathcal Y[\rho(\theta,.)]$. Then
$$
D(\theta)-D(\theta+\varepsilon)\leq S[\rho(\theta+\varepsilon,.),y_{\theta+\varepsilon}]-S[\rho(\theta,.),y_\theta]-J_{(\theta,\theta+\varepsilon)}[\rho(.,.)]
$$
for any Lipschitz path $y_.$ on $[\theta,\theta+\varepsilon]$ such that $y_\theta=y$. In particular, taking the constant path $y_.\equiv y$ in \eqref{difference_gamma}, we see that $\varepsilon^{-1}[D(\theta)-D(\theta+\varepsilon)]$ is uniformly bounded above.
Hence, Proposition \ref{variational_1} follows if we show
that $\Delta(\theta)\leq 0$ a.e. on $(s,t)$, and that $\Delta(\theta)=0$ implies (outside a null subset)  $\mathcal Y[\rho(\theta,.)]=\{y\}$, where
$\rho(\theta,.)$ and $y$ satisfy the conditions of Proposition \ref{variational_1}.\\ \\
In the sequel we choose the values $\rho_0,\rho_1$ of $\rho(.,.)$ outside $x\in(0,1)$ such that $\rho_0<\rho^*<\rho_1$.
Let $J_s$, resp. $J_a$,  denote the shock and antishock set of $\rho(.,.)$ in $(-\infty,0)\times(0,1)$, defined as the set of $(t,x)\in J\cap((-\infty,0)\times(0,1))$ 
such that $\rho(t,x-)<\rho(t,x+)$ (resp. $>$).
Let $\mathcal T$ denote the total subset of points $\theta\in(s,t)$ such that
% (a)
$\theta$ is a Lebesgue point of $\rho(\theta,0^+)$ and $\rho(\theta,1^-)$, and
%
%(b) if $(\theta,0)\in J$, then $(0,1)$ is normal to $J$ at $(\theta,0)$ and $\rho(\theta,0^+)=\rho(\theta,0^+)$;
%
%(c) if $(\theta,1)\in J$, then $(0,1)$ is normal to $J$ at $(\theta,1)$ and $\rho(\theta,1^-)=\rho(t,1^-)$;
% (d)
$\{\theta\}\times(0,1)\subset J_s\cup J_a\cup V$.
%
%$\mathcal T$ is a total subset of $(s,t)$. 
In the following we consider $(\theta,y)$ with $\theta\in\mathcal T$ and $y\in \mathcal Y[\rho(\theta,.)]$. We set $\rho^\pm=\rho(\theta,y\pm)$ whenever $(\theta,y)\in J_s\cup J_a$.
Let $\xi_0\in\R$, whose value will be chosen below.
For $\delta> 0$, we define  functions $\psi_{\pm\delta}:[\theta,\theta+\varepsilon]\times[0,1]\to [0,1]$ by
$\psi_{\pm\delta}(\theta,.)=1_{[0,y]}$ and, for $u>\theta$,
\begin{eqnarray}\label{def_moll_1}
\psi_{\pm\delta}(u,x) & := & \Psi_{\pm\delta}((x-y)/(u-\theta))\\
\label{def_moll_2}
\Psi_{\delta}(\xi) & := & 1_{(-\infty,\xi_0)}(\xi)+\frac{\xi_0-\xi}{\delta}1_{(\xi_0,\xi_0+\delta)}(\xi)\\
\label{def_moll_21} \Psi_{-\delta}(\xi) & := &
1_{(-\infty,\xi_0-\delta)}(\xi)+\frac{\xi_0-\delta-\xi}{\delta}1_{(\xi_0-\delta,\xi_0)}(\xi)
\end{eqnarray}
so that
\begin{eqnarray}\label{derivative_psi}
-\partial_x\psi_{\pm\delta}(u,x) & = & \frac{1}{u-\theta}\frac{1}{\delta}1_{I(\xi_0,\xi_0+\delta)}\left(
\frac{x-y}{u-\theta}
\right)\\
\label{speed_psi}
-\frac{\partial_u\psi_{\pm\delta}(u,x)}{\partial_x\psi_{\pm\delta}(u,x)} & = & \frac{x-y}{u-\theta}1_{I(\xi_0,\xi_0+\delta)}\left(
\frac{x-y}{u-\theta}
\right)
\end{eqnarray}
with $I(a,b):=(\min(a,b),\max(a,b))$.
Since $y\in\mathcal Y[\rho(\theta,.)]$ we have, with $\psi=\psi_{\pm\delta}$,
\be\label{use_minimizer}
D(\theta)-D(\theta+\varepsilon)\leq S_{\psi(\theta+\varepsilon,.)}[\rho(\theta+\varepsilon,.)]-S_{\psi(\theta,.)}[\rho(\theta,.)]-J_{(\theta,\theta+\varepsilon)}[\rho(.,.)]
\ee
Define
\begin{eqnarray}
d_\psi^1(u) &: = & \int_0^1\frac{1}{(u-\theta)\delta}1_{(\xi_0,\xi_0+\delta)}\left(
\frac{x-y}{u-\theta}
\right)\sigma[\rho(t,x),\xi_0]dx\label{def_dpsi1}\\
d_\psi^2(u) & := & \int_0^1\frac{1}{(u-\theta)\delta}1_{(\xi_0,\xi_0+\delta)}\left(
\frac{x-y}{u-\theta}
\right)\sigma[\rho(t,x),f'(\rho(t,x))]dx\label{def_dpsi2}
%
%d_\psi^3(u) & := & \int_0^1\frac{1}{(u-\theta)\delta}1_{(\xi_0,\xi_0+\delta)}\left(
%
%\frac{x-y}{u-\theta}
%
%\right)\sigma(\rho^+,\xi_0)dx=\sigma(\rho^+,\xi_0)\label{def_dpsi3}
%
\end{eqnarray}
%
%where $d_\psi^3$ is defined whenever $(\theta,y)\in J_s\cup J_a$.
%
%Note that $d_\psi^2(u)$ is bounded by the l.h.s. of \eqref{uniformbound_sigma}.
%
Since $\sigma(\rho,\xi)$ is uniformly Lipschitz we have, for some
constant $C=C(f,h)>0$,
\begin{eqnarray}
\abs{
\varepsilon^{-1}\int_\theta^{\theta+\varepsilon}d_\psi^\circ(u)du-\varepsilon^{-1}\int_\theta^{\theta+\varepsilon} d_\psi^1(u)du
}& \leq  & C\delta\label{replacement_1}\\
\varepsilon^{-1}\int_\theta^{\theta+\alpha\varepsilon}d_\psi^1(u)du & \leq & C\alpha\label{replacement_2}
\end{eqnarray}
with $d_\psi^\circ(.)$ defined in \eqref{dpsicirc}. In the following  we apply either \eqref{difference_psi} with $\psi=\psi_{\pm\delta}$ and $\xi_0$ as specified, or \eqref{difference_gamma} with $y_.$ as specified. The constant $W:=1+||f'||_\infty$ will appear in several places.\\ \\
{\em First case.} Assume $(\theta,y)\in V$. Let
$\xi_0=f'(\bar{\rho})$, where $\bar{\rho}$ is the mean value of $\rho(.,.)$ in the ball $B_{\theta,y,\varepsilon}$ of radius $W\varepsilon$ centered at $(\theta,y)$. Then
$$
\ba{ll}
& \dsp\abs{
\varepsilon^{-1}\int_{\theta+\alpha\varepsilon}^{\theta+\varepsilon}
d_\psi^1(u)du-\varepsilon^{-1}\int_{\theta+\alpha\varepsilon}^{\theta+\varepsilon}d_\psi^2(u)du
}\\
\leq  & \dsp C\alpha^{-1}\delta^{-1}\varepsilon^{-2}\int_\theta^{\theta+\varepsilon}\int_{y+\xi_0\varepsilon}^ {y+(\xi_0+\delta)\varepsilon}\abs{\rho(u,x)-\bar{\rho}}dxdu
\ea
$$
For small $\delta>0$,
the domain of the above space-time integral is contained in $B_{\theta,y,\varepsilon}$. Thus, by VMO property, this integral vanishes as $\varepsilon\to 0$.
Letting $\varepsilon\to 0$ and then $(\delta,\alpha)\to(0,0)$ in \eqref{difference_psi}, using (ii) of Lemma \ref{inequalities} for the boundary terms,
 and \eqref{replacement_1}--\eqref{replacement_2}, we find that $\Delta(\theta)$ is bounded above by \eqref{uniformbound_sigma}.\\ \\
{\em Second case.} Assume $(\theta,y)\in J_a\cup J_s$.
Let $\xi_0=v(\rho^-,\rho^+)$ if $(\theta,y)\in J_s$, $\xi_0=f'(\rho^+)$ if $(\theta,y)\in J_a$.
Since $\sigma(\rho,\xi)$ is uniformly Lipschitz,
\be\label{replacement_3}
\ba{ll}
& \dsp\abs{
\varepsilon^{-1}\int_{\theta+\alpha\varepsilon}^{\theta+\varepsilon}d_\psi^1(u)du-(1-\alpha)\sigma(\rho^+,\xi_0)
}\\
\leq  & \dsp C\alpha^{-1}\delta^{-1}\varepsilon^{-2}\int_\theta^{\theta+\varepsilon}\int_{y+\xi_0\varepsilon}^{y+(\xi_0+\delta)\varepsilon}\abs{\rho(u,x)-\rho^+}dxdu
\ea
\ee
%
%where
%
%$$
%C_\varepsilon(\theta,y):=\{
%
%(u;x)\in\R^2:\,u\in(\theta,\theta+\varepsilon),\,y+\xi_0(u-\theta)<x<y+(\xi_0+\delta)(u-\theta)\}
%
%$$
%
Since $n_t=-v(\rho^-,\rho^+)n_x$, 
the domain of the above integral lies in the half-plane containing $(\theta,y)$ and $n$.
This holds also in the case $(\theta,y)\in J_a$, because $\rho^->\rho^+$ implies $\xi_0>v(\rho^-,\rho^+)$ by strict concavity of $f$.
Thus, by approximate limit property, the  r.h.s. of \eqref{replacement_3} vanishes as $\varepsilon\to 0$. Thus in the case $(\theta,y)\in J_a$ we find as previously $\Delta(\theta)$ bounded above by the l.h.s. of \eqref{uniformbound_sigma}. 
%
%The above discussion shows that, when \eqref{inequality_production_2} is an equality, the set of $\theta\in\mathcal T$ for which there exists
%$y\in\mathcal Y[\rho(\theta,.)]$ such that $(\theta,y)\in V\cup J_a$ is negligible. In the sequel, whenever we discuss equality in \eqref{inequality_production_2},
%we will assume without loss of generality that this set is empty.\\ \\
%
Let $\mathcal T'$ denote the set of $\theta\in\mathcal T$  such that there exists $z\in\mathcal Y[\rho(\theta,.)]$ for which $(\theta,z)\in V\cup J_a$.
In view of the above, to establish $\Delta(\theta)\leq 0$ in remaining cases, we may assume without loss of generality that $\theta\in\mathcal T\backslash\mathcal T'$.
Once the inequality is established for such $\theta$, it will also follow that $\mathcal T'$ is negligible.\\ \\
We now consider $(\theta,y)\in J_s$.
We bound the measure term in \eqref{difference_psi} as follows. Let 
% denote the set of $z\in\mathcal Y[\rho(\theta,.)]$ such that $(\theta,z)\in J_s$, 
$\mathcal Z$
be an arbitrary finite subset of $\mathcal Y_\theta:=\mathcal Y[\rho(\theta,.)]\cap(0,1)$. 
For each $z\in\mathcal Z$, let $n(\theta,z)$ denote the local normal at $J$ and $v(\theta,z)=-n_t(\theta,z)/n_x(\theta,z)$ the local 
velocity. 
Let  $H_1$ and $H_2$  denote nonnegative  continuous functions  on $[0,+\infty)$ supported and not identically $0$ on $[0,1]$, with $H_1(0)=1$ and
$H_2(0)=0$. Set
\begin{eqnarray}\chi_\varepsilon(t,x,v) & = & H_2[\varepsilon^{-1}t^+]H_1[\varepsilon^{-1}|x-vt|]\\
\chi_\varepsilon(t,x,\mathcal Z) & = & \sum_{z\in\mathcal Z}\chi_\varepsilon(t-\theta,x-z,v(\theta,z))
\end{eqnarray}
for $(t,x)\in\R^2$. Then, for $\varepsilon>0$ small enough, 
\be\label{bound_measure_term}
-\mu^-[(\theta,\theta+\varepsilon)\times(0,1)]   \leq   -\int \chi_\varepsilon(t,x,\mathcal Z)d\mu(t,x)\\
\ee
%
%By Green's formula,
%%
%\begin{eqnarray*}
%%
%-\varepsilon^{-1}\int\int_{B^{\pm n}(\theta,y)}\partial_x\chi_\varepsilon(t-\theta,x-y)dxdt & = & \pm \int_0^1 H_1(s)^2ds\\
%%
%-\varepsilon^{-1}\int\int_{B^{\pm n}(\theta,y)}\partial_t\chi_\varepsilon(t-\theta,x-y)dxdt & = & \pm (-v)\int_0^1 H_1(s)^2ds \\
%
%
%\end{eqnarray*} 
%
We apply Green's formula  to the r.h.s. of \eqref{bound_measure_term},  and let $\varepsilon\to 0$ and $\mathcal Z$ grow to $\mathcal Y_\theta$. Using approximate limit property at $(\theta,z)$ and (o) of Lemma \ref{inequalities}, one obtains
$$\limsup_{\varepsilon\to 0}\varepsilon^{-1}(-\mu^-)[(\theta,\theta+\varepsilon)\times(0,1)]\leq\sum_{z\in\mathcal Y_\theta}\pi[\rho(\theta,z-),\rho(\theta,z+)]$$
Letting next $(\delta,\alpha)\to(0,0)$, we have
\begin{eqnarray}\label{green_limit}
\Delta(\theta)& \leq & \partial d(\theta)+\sigma(\rho^+,v(\rho^-,\rho^+))+\pi(\rho^-,\rho^+)\\
& + & \sum_{z\in\mathcal Y_\theta\backslash\{y\}}\pi[\rho(\theta,z-),\rho(\theta,z+)]\nonumber
\end{eqnarray}
By (o)--(ii) of Lemma \ref{inequalities}, we have $\Delta(\theta)\leq 0$, and $\Delta(\theta)=0$ implies
\eqref{inside_shock}, \eqref{boundary_1}--\eqref{boundary_2},  and $\mathcal Y[\rho(\theta,.)]\backslash\{y\}=\emptyset$. 
Now, instead of \eqref{bound_measure_term}, we write 
\begin{eqnarray}
-\mu^-[(\theta,\theta+\varepsilon)\times(0,1)] &  \leq &  -\int \chi_\varepsilon(t-\theta,x-y)d\mu(t,x)\label{outside}\\
& - & \mu^-[(\theta,\theta+\varepsilon)\times((0,1)\backslash(y-W\varepsilon,y+W\varepsilon))]\nonumber
%
%& - & \mu^-[(\theta,\theta+\varepsilon)\times I(0,\underline{y}_\theta-v\varepsilon)]\label{left_side}
%
\end{eqnarray}
%
%
%$I(a,b)=(\min(a,b),\max(a,b))$, 
%
%
%
It follows that
\begin{eqnarray*}
\Delta(\theta)& \leq & \partial d(\theta)+\sigma(\rho^+,v(\rho^-,\rho^+))+\pi(\rho^-,\rho^+)\\
& - & \limsup_{\varepsilon\to 0}\varepsilon^{-1}\mu^-[(\theta,\theta+\varepsilon)\times((0,1)\backslash(y-W\varepsilon,y+W\varepsilon))]
%+ \limsup_{\varepsilon\to 0}\eqref{left_side}
\end{eqnarray*}
Thus $\Delta(\theta)=0$ implies the following, which will be used in the last part of the proof:
\be\label{zerolimits}\lim_{\varepsilon\to 0}\varepsilon^{-1}\mu^-[(\theta,\theta+\varepsilon)\times((0,1)\backslash(y-W\varepsilon,y+W\varepsilon))]=0
%+ \limsup_{\varepsilon\to 0}\eqref{left_side}=0
\ee
{\em Third case.}  $y=0\in\mathcal Y[\rho(\theta,.)]$,
Suppose first $(\theta,0)\not\in J$. By VMO property,
\be\label{vmo_plus}
\lim_{\varepsilon\to 0}\varepsilon^{-2}\int_\theta^{\theta+\varepsilon}\int_0^{W\varepsilon}\abs{\rho(u,x)-\rho_0}dxdu=0
\ee
We apply \eqref{difference_psi} with $\psi=\psi_\delta$, $\xi_0=f'(\rho_0)>0$. 
%
%We set \eqref{def_dpsi1} as in the first case.
%%
%%
%\eqref{replacement_1}--\eqref{replacement_2} hold unchanged, and \eqref{replacement_3} with $\rho^+$ replaced by $\rho_0$.
%In the latter, the r.h.s. vanishes as $\varepsilon\to 0$ by \eqref{vmo_plus}, because $\xi_0>0$.
%%
%We then proceed as in the first case:
%letting $\varepsilon\to 0$, then $(\delta,\alpha)\to (0,0)$, 
%
Proceeding as in the first case one shows that $\Delta(\theta)$
is bounded above by \eqref{uniformbound_sigma}.\\ \\
Suppose next that $(\theta,0)\in J$. Then (see Remark \ref{remark_limits}) $(n_t=0,n_x=1)$ is the normal to $J$ at $(\theta,0)$, and
\be\label{trace_dow_boundary}
\lim_{\varepsilon\to 0}\varepsilon^{-2}\int_\theta^{\theta+\varepsilon}\int_0^\varepsilon\abs{\rho(u,x)-\rho(\theta,0^+)}dxdu=0
\ee
If $\rho(\theta,0^+)<\rho^*$, we proceed as above, with $\xi_0=f'(\rho(\theta,0^+))>0$,
% and $\rho_0$ replaced by $\rho(\theta,0^+)$,
to show that $\Delta(\theta)$ is bounded above by \eqref{uniformbound_sigma}.
Finally, assume $\rho(\theta,0^+)\geq\rho^*$. We then apply \eqref{difference_gamma} with  $y_.\equiv 0$, thus $d^\circ(.)\equiv 0$
and $\partial d(.)$ is given by \eqref{boundaryterm_gamma_left}. We bound the measure term as in \eqref{bound_measure_term}.
Since $\theta\in\mathcal T$, we arrive at
\be\label{conclusion_boundarymin}
\Delta(\theta)\leq\partial d(\theta)+\sum_{z\in\mathcal Y_\theta}\pi(\rho(\theta,z-),\rho(\theta,z+))
\ee
By (o), (ii) and (iv) of Lemma \ref{inequalities}, we have $\Delta(\theta)\leq 0$, and
$\Delta(\theta)=0$ implies \eqref{boundary_2}, \eqref{boundary_4} and  $\mathcal Y[\rho(\theta,.)]\cap(0,1)=\emptyset$.\\ \\
{\em Fourth case.} $y=1\in\mathcal Y[\rho(\theta,.)]$. This case is similar to the third one, here we have to use \eqref{difference_psi} with $\psi=\psi_{-\delta}$, or \eqref{difference_gamma} with $y_.\equiv 1$. We arrive at \eqref{conclusion_boundarymin}, and  $\Delta(\theta)=0$ implies \eqref{boundary_1}, \eqref{boundary_3} and $\mathcal Y[\rho(\theta,.)]\cap(0,1)=\emptyset$.\\ \\ 
%
%We also note that $\Delta(\theta)$ cannot occur with $\mathcal Y[\rho(\theta,.)]=\{0,1\}$, since \eqref{boundary_3}--\eqref{boundary_4} cannot occur jointly outside the phase %transition line.\\ \\
%
We now prove that,
if \eqref{inequality_production_2} is an equality, $\mu^-$ vanishes on $\Omega$ (which is open because of Lemma \ref{closed_minima}).
It follows from Lemma \ref{closed_minima} that 
$\bar{y}_.$ and $\underline{y}_.$ defined by 
\be\label{ext_minima}\bar{y}_\theta  :=  \sup{\mathcal Y}[{\rho}(\theta,.)],\quad
\underline{y}_\theta  := \inf{\mathcal Y}[{\rho}(\theta,.)]\ee
are respectively u.s.c. and l.s.c.
By Proposition \ref{variational_1}, they are continuous at every $\theta\in\mathcal T$, where we have $\bar{y}_\theta=\underline{y}_\theta$.
\label{topofpage}
Applying \eqref{inequality_production_2} on $(u,t)$ with $s<u<t$ shows that inequality
\eqref{inequality_production_2} is  reversed on $(s,u)$ and thus an equality, hence an equality any subinterval of $(s,t)$. Let $y_.$ a Lipschitz-continuous $[0,1]$-valued path on $[a,b]\subset(s,t)$ such that
 $y_a\in\mathcal Y[\rho(a,.)]$ and $y_b\in\mathcal Y[\rho(b,.)]$. Then, by \eqref{difference_gamma},
\be\label{abs_cont}
\mu^-((a,b)\times(0,1))\leq C(b-a)
\ee
where $C$ is a constant depending only on $f$ and $h$.
Given $\eta
\geq 0$, we define a nonnegative measure $M_\eta\leq\mu^-\ll\lambda$ on $(0,+\infty)$ by
$$
M_\eta(A)=\mu^-\left(
\left\{
(t,x)\in A\times(0,1):\,t\in A,\,x\not\in[\underline{y}_t-\eta,\bar{y}_t+\eta]
\right\}
\right)
$$
for every Borel set $A\subset(0,+\infty)$.  
Let $\theta\in\mathcal T$.
Since  $\bar{y}_.$ and $\underline{y}_.$ are continuous at $\theta$,
%$t\in(\theta,\theta+\varepsilon)$ and $x\in\times(0,1)\backslash[\underline{y}_t-\eta,\bar{y}_t+\eta]$ imply
%$x\in(0,1)\backslash[\underline{y}_\theta-W\varepsilon,\bar{y}_\theta+W\varepsilon]$. Thus, by 
%
\eqref{zerolimits} implies
\be\label{density_M}
\lim_{\varepsilon\to 0}\varepsilon^{-1}M_\eta((\theta,\theta+\varepsilon))=0
\ee
Together with \eqref{abs_cont} this implies $M_\eta\equiv 0$. Letting $\eta\to 0$ we obtain $M_0\equiv 0$,
i.e. $\mu^-$ vanishes on 
$\Omega':=\{(\theta,y)\in(0,+\infty)\times(0,1):\,y\not\in[\underline{y}_\theta,\bar{y}_\theta]\}$.
Since $\Omega\backslash\Omega'$ is contained in $\mathcal U\times(0,1)$, where $\mathcal U$ is the negligible set
of $t\in(-\infty,0)$ such that $\underline{y}_t<\bar{y}_t$, using \eqref{abs_cont}, we obtain $\mu^-(\Omega)=0$.
\end{proof}
\mbox{}\\
\begin{proof}{proposition}{lemma_uniqueness_1}
%
%We extend $\rho(t,.)$ to $(-\infty,0)\times\R$ by \eqref{extension_rho}. 
%
Assume \eqref{upper_bound_variational} is an equality. Then
applying \eqref{inequality_production_2} on $(s,0)$ shows that
$
S[\rho(s,.)] \geq J_{(-\infty,s)}[\rho(.,.)]
$
which must be an equality in view of \eqref{upper_bound_variational}. Thus 
\be\label{subequality}
S[\rho(t_2,.)]-S[\rho(t_1,.)]=J_{(t_1,t_2)}[\rho(.,.)]
\ee
for every $t_1\leq t_2\leq 0$.
%
%We will consider the case $f(\rho_l)<f(\rho_r)$, the case
%$f(\rho_l)>f(\rho_r)$ being similar.
%
%
Let $\mathcal U:=\{t\in(-\infty,0):\,\mathcal Y[\rho(t,.)]\cap(0,1)\neq\emptyset\}$ and $t_-:=\inf\mathcal U$. We set $t_-=0$ if $\mathcal U=\emptyset$.
By \eqref{boundary_3}--\eqref{boundary_4}, the following occurs for a.e. $t<t_-$: $\mathcal Y[\rho(t,.)]=\{1\}$ if $f(\rho_l)<f(\rho_r)$,
$\mathcal Y[\rho(t,.)]=\{0\}$ if $f(\rho_l)>f(\rho_r)$, $\mathcal Y[\rho(t,.)]\subset\{0,1\}$ if $f(\rho_l)=f(\rho_r)$.
By Lemma \ref{closed_minima}, in the first (resp. second, resp. third) case, for all $t\leq t_-$, $1$ (resp. $0$, resp. at least one element of $\{0,1\}$) lies in $\mathcal Y[\rho(t,.)]$.\\ \\
{\em First case.} $f(\rho_l)\neq f(\rho_r)$. 
For notational simplicity, we treat the case $f(\rho_l)<f(\rho_l)$, which implies $\rho_l<\rho^*$. Then $v:=v(\varphi(\rho_l),\varphi(\rho_r))<0$ (cf. \eqref{speed_rankine}).
Translation to the case $f(\rho_l)>f(\rho_r)$ is straightforward and left to the reader.
By \eqref{subequality} and
\eqref{boundary_1}--\eqref{boundary_3}, for a.e. $t<t_-$, we have $\rho(t,1^-)=\rho_l$,
$\rho(t,0^+)\in\bln^-(\rho_l)$  and $\mathcal Y[\rho(t,.)]=\{1\}$. 
We apply \eqref{difference_gamma} on $(-\infty,t_-]$ with $y_t\equiv 1$. By (ii), (iii) of Lemma \ref{inequalities} we have $d(t)=0$. Thus equality in \eqref{variational} implies
$\mu^-[(-\infty,t_-)\times(0,1)\backslash\Gamma_1)]$=0, where $\Gamma_1=(-\infty,0)\times\{1\}$. Hence $\tilde{\rho}(.,.)$ is an entropy solution to \eqref{conservation_law} on $(-t_-,+\infty)\times(0,1)$ and satisfies
\eqref{reverse_bln_2} (here with $\tilde{y}_\infty=0$ and $\sigma=+$). \\ \\
Now choose an arbitrary $s\in\mathcal U$ and
$y\in\mathcal Y[\rho(s,.)]\cap(0,1)$ and consider the upper forward solution $y_.$ of \eqref{diff_eq} issued from $(s,y)$. Let
$s':=\inf\{t\in(s,0):\,y_t\in\{0,1\}\}$, where the infimum is taken to be $0$ if the set is empty.
We first argue that
%
%\be \label{neverleaves}y_t\in[\underline{y}_t,\bar{y}_t]\mbox{ for every }t\in(s,s')\ee
%
\be \label{neverleaves}y_t\in\mathcal Y[\rho(t,.)]\mbox{ for every }t\in(s,s')\ee
Indeed, assume there exists $s_1\in(s,s')$ such that 
$y_{s_1}\not\in\mathcal Y[\rho(s_1,.)]$. Let 
$s_0:=\sup\{t\in[s,s_1):\,y_t\in\mathcal Y[\rho(t,.)]\}$.
Then $s_0\in[s,s_1)$, $y_{s_0}\in\mathcal Y[\rho(s_0,.)]$, and for $t\in(s_0,s_1)$, $(t,y_t)$ lies in the set \eqref{def_Omega} . By \eqref{vanish_outside},  \eqref{differential_inclusion} and Lemma \ref{lemma_daf}, $\rho(t,y_t^-)=\rho(t,y_t^+)$ for a.e. $t\in(s_0,s_1)$.
We now apply \eqref{difference_gamma} to $\rho(.,.)$ and $y_.$ on  $[s_0,s_1]$. (i),(ii) of Lemma \ref{inequalities}
imply $d(t)<0$, hence
\begin{eqnarray*}
S[\rho(s_1,.)]-S[\rho(s_0,.)] & \leq & S[\rho(s_1,.),y_{s_1}]-S[\rho(s_0,.),y_{s_0}]\\
& < & J_{(s_0,s_1)}[\rho(.,.)]
\end{eqnarray*}
which contradicts \eqref{subequality}.
%
%As $y_.$ is Lipschitz continuous and traces in the sense \eqref{trace_dow} and \eqref{strong_traces} coincide a.e. along $\Gamma\cap J$, where $\Gamma:=\{(t,y_t):\,t\in[s,0]\}$,
%
\eqref{inside_shock} and \eqref{speed_rankine} imply  $\dot{y}_t=v(\varphi(\rho_r),\varphi(\rho_l))=:v$ a.e. on $(s,s')$. Hence $y_t=y_s+v(t-s)$ for $t\in(s,s')$. Since $s$ was arbitrary, and (by Proposition \ref{variational_1}) $\mathcal Y[\rho(\theta,.)]$ is reduced to a single point for a.e. $\theta$, the above implies $t_->-\infty$ and existence of $t_+\in(t_-,0]$ such that $\mathcal Y[\rho(t,.)]\supset\{1+v(t-t_-)\}$ for $t\in[t_-,t_+]$, and $1+v(t_+-t_-)=0$ if $t_+<0$.
By \eqref{vanish_outside} and \eqref{abs_cont}, $\mu^-[\rho(.,.)]$ vanishes on $\{(t,x)\in(t_-,t_+)\times(0,1):\,x\neq 1+v(t-t_-)\}$. Thus $\tilde{\rho}(.,.)$ is an entropy solution to \eqref{conservation_law} on $\{(t,x)\in(-t_+,-t_-)\times(0,1):\,x\neq v(t+t_-)\}$. \eqref{inside_shock} and \eqref{boundary_1}--\eqref{boundary_2} of Proposition \ref{variational_1}
yield \eqref{reverse_bln} and \eqref{antishock_0} for $\tilde{\rho}(.,.)$ on $(-t_+,-t_-)$. \\ \\
We will now prove $t_+=0$, which will establish the proposition with $\tilde{y}=vt_-$. The corresponding  $\tilde{\theta}^{\tilde{y}}$ in Theorem \ref{theorem_2}
is $-t_-$.
%
%
% Note: il peut être nécessaire ici de faire le lien entre les 2 notions de traces aux bords.
%
Assume $t_+<0$, then by what precedes we have $0\in\mathcal Y[\rho(t_+,.)]$. Define
$t_-^{1}:=\inf\{t\in(t_+,0):\,\mathcal Y[\rho(t,.)]\cap(0,1)\neq\emptyset\}$. Set $t^1_-=0$ if the preceding set is empty.
By \eqref{vanish_outside}, $\mu^-[\rho(.,.)]$ vanishes on 
$D_1:=(t_+,t^1_-)\times(0,1)$. 
Proceeding as above, if $t^1_-<0$, one can show existence of $t_+^{1}\in(t_-^1,0]$ such that
$\mathcal Y[\rho(t,.)]\supset\{1+v(t-t_-^1)\}$ for $t\in(t_-^1,t_+^1)$, and $\mu^-[\rho(.,.)]$ vanishes on $D_2:=\{(t,x)\in[t^1_-,t^1_+):\,x\neq 1+v(t-t^1_-)\}$.
Let $D=D_1\cup D_2$. Hence,
$\tilde{\rho}(.,.)$ is an entropy solution to \eqref{conservation_law} on the open set $\tilde{D}:=\tilde{D}_1\cup\tilde{D}_2$, where
$\tilde{D}_i:=\Phi(D_i)$ is the space-time reversal (cf. \eqref{reversal_prod}) of $D_i$. Besides, by \eqref{boundary_1}--\eqref{boundary_4}, $\tilde{\rho}$ satisfies \eqref{reverse_bln} on $(-t^1_+,-t^1_-)$ and  \eqref{reverse_bln_2} on $(-t^1_-,-t_+)$, with $\tilde{y}_\infty=0$ and $\sigma=+$. 
%
%
%Recall that (by strict concavity of $f$) at fixed time in an open time interval, entropy solutions of \eqref{conservation_law} have locally bounded space variation, and entropy %solutions on $(0,1)$ with BLN boundary conditions have bounded space variation on $(0,1)$. Thus, by finite propagation property for \eqref{conservation_law}, $\rho(.,.)$ has %bounded space variation up to the boundaries away from the line $\{(t,x=1+v(t-t^1_-)):\,t\in(t^1_-,t_+)\}$. As a result, the limit $\rho(t,0^+)$ exists in usual sense.
%
For $\varepsilon>0$, set
\be\label{eps_ext}
\rho^\varepsilon(t,x):=\rho_0 1_{(-\infty,\varepsilon)}(x)+\rho(t,x)1_{(\varepsilon,+\infty)}(x)
\ee
where $\rho_0<\rho^*$, and define $S^\varepsilon$ as $S$ in \eqref{def_S_1}--\eqref{def_S_2}, but with $y\in[\varepsilon,1]$ and integral over $[\varepsilon,1]$.
It is straightforward to translate \eqref{difference_gamma} to $S^\varepsilon$, a dynamical functional $J^\varepsilon$ with left boundary $\varepsilon$, and a $[\varepsilon,1]$-valued path. Using \eqref{lonelims}, one can see that
$\lim_{\varepsilon\to 0}J^\varepsilon[\rho^\varepsilon(.,.)]=J[\rho(.,.)]$.
Let $(y_t^\varepsilon:\,t\in[t_+,0])$ denote the upper forward solution of 
\be\label{eps_inclusion}\dot{y}_t=f'(\rho^\varepsilon(t,y_t)) \ee
on $[t_+,0]$ issued from $(t_+,\varepsilon)$. 
Since $\tilde{\rho}(.,.)$ is en entropy solution on $\tilde{D}$, ${\rho}(.,.)$ has locally bounded space
variation on $D$. In particular, if $t\in(t_+,t^1_+)$ and $\varepsilon<1+\min(0,v(t-t^1_-))$,  the limit $\rho(t,\varepsilon^+)$ exists in classical sense. Thus, \eqref{eps_ext} and the Filippov sense of \eqref{eps_inclusion} imply 
\be\label{boundary_filippov}
y_t\geq\varepsilon,\quad
\rho^\varepsilon(t,\varepsilon^+)\geq\rho^*\mbox{ a.e. on }
\{t:\,y_t= \varepsilon\}\ee
Note that, with the upper forward solution of $y_.$ of \eqref{diff_eq} issued from $(t_+,0)$, we could not deduce the analogous property $\rho(t,0^+)\geq\rho^*$ on $\{y_t=0\}$,
because the latter limit is only known in the sense \eqref{lonelims}. This is why we introduced th $\varepsilon$-boundary.
Set $T^\varepsilon=\inf\{t\geq t_+:\,y_t^\varepsilon\in B\}$, where 
$B:=[t_+,t^1_-)\times\{1\}\cup\{(t,1+v(t-t^1_-)):\,t\in[t^1_-,t^1_+]\}$.
Since $y^\varepsilon$ has a Lipschitz constant $||f'||_\infty$ independent of $\varepsilon$, $T:=\liminf\{T^\varepsilon:\,\varepsilon>0\}>0$.
Let $s\in(t_+,T)$, $\mathcal T^\varepsilon:=\{t\in[t_+,s]:\,1>y_t^\varepsilon>\varepsilon\}$,
and $\mathcal T_0^\varepsilon:=\{t\in[t_+,s]:\,y_t^\varepsilon=\varepsilon\}$.  By Lemma \ref{lemma_daf}, 
$\rho^\varepsilon(t,y^\varepsilon_{t^-})=\rho^\varepsilon(t,y^\varepsilon_{t^+})$ a.e. on $\mathcal T^\varepsilon$.
Applying \eqref{difference_gamma} to $y^\varepsilon_.$, using \eqref{boundary_filippov}, \eqref{boundary_2} and (i), (ii), (iv) of Lemma \ref{inequalities}, we obtain
$$S^\varepsilon[\rho^\varepsilon(s,.)]-S^\varepsilon[\rho^\varepsilon(t_+,.),\varepsilon]\leq J^\varepsilon_{[t_+,T']}[\rho^\varepsilon(.,.)]
%+\int_{\mathcal T^\varepsilon_0}B^\varepsilon_0(t)dt
%+C\lambda(\mathcal T^\varepsilon)
$$
%
%where $C<0$ is the supremum in \eqref{uniformbound_sigma}
%
%where 
%%
%$B^\varepsilon_0(t)$ is bounded and converges in $L^1([t_+,T])$ to
%%
%$$B_0(t)=g(\rho(t,0^+),\rho_r)-i^l(\rho(t,0^+),\rho_l)\leq 0$$
%
%The last inequality follows from
%
%\eqref{boundary_2}, \eqref{boundary_filippov} and Lemma \ref{inequalities}. 
%
Since $(y^\varepsilon_.:,\varepsilon>0)$ has uniform Lipschitz constant it is relatively compact
in $C^0([t_+,T])$. Let $y_.$ be a limit point, hence $y_{t_+}=0$. Since $0\in\mathcal Y[\rho(t_+,0)]$, letting $\varepsilon\to 0$ yields 
$$S[\rho(s,.),y_s]-S[\rho(t_+,.)]\leq J_{[t_+,s]}[\rho(.,.)]$$
for all $s\in[t_+,T]$.
This and \eqref{subequality} imply
$y_s\in\mathcal Y[\rho(s,.)]$. Since $\mathcal Y[\rho(s,.)]$ is a singleton for a.e. $s$ and $B$ is contained in the set \eqref{minimizer_set}, we obtain a contradiction with the continuity of $y_.$.
\\ \\
{\em Second case.} $f(\rho_l)=f(\rho_r)$. By \eqref{vanish_outside}, $\mu^-[\rho(.,.)]$ vanishes on $(-\infty,t_-)\times(0,1)$. 
For $y\in\{0,1\}$, Let $t_y:=\inf\{t\in(-\infty,t_-):\,y\in\mathcal Y[\rho(t,.)]\}$. Set $t_y=t_-$ if the corresponding set is empty. For at least one $y\in\{0,1\}$ we have
$t_y=-\infty$. We assume e.g. $y=0$ and leave the similar case $y=1$ to the reader. Let $\tau<t_-$ such that $0\in\mathcal Y[\rho(\tau,.)]$. We proceed as above, now defining $\rho^\varepsilon$, $S^\varepsilon$ and $J^\varepsilon$ 
with boundaries both at $\varepsilon$ and $1-\varepsilon$. \eqref{eps_ext} becomes 
\be\label{epss_ext}
\rho^\varepsilon(t,x):=\rho_0 1_{(-\infty,\varepsilon)}(x)+\rho(t,x)1_{(\varepsilon,1-\varepsilon)}(x)+\rho_1 1_{(1,+\infty)}(x)
\ee
with $\rho_0<\rho^*<\rho_1$.
Let $y_.^\varepsilon$ be the upper forward solution of \eqref{eps_inclusion}
on $[\tau,t_-)$ issued from $(\tau,\varepsilon)$. Now \eqref{boundary_filippov} becomes 
\be\label{now_becomes}y_t^\varepsilon\in[\varepsilon,1-\varepsilon],\,\rho^\varepsilon(t,\varepsilon^+)\geq\rho^*\mbox{ on }\{y_t^\varepsilon=\varepsilon\},\, \rho^\varepsilon(t,(1-\varepsilon)^-)\leq\rho^*\mbox{ on }\{y_t^\varepsilon=1-\varepsilon\}
\ee
Let $s\in[\tau,t_-]$, $\mathcal T^\varepsilon:=\{t\in[\tau,s]:\,y_t^\varepsilon\in(\varepsilon,1-\varepsilon)\}$, 
$\mathcal T_0^\varepsilon:=\{t\in[\tau,s]:\,y_t^\varepsilon=\varepsilon\}$, 
$\mathcal T_1^\varepsilon:=\{t\in[\tau,s]:\,y_t^\varepsilon=1-\varepsilon\}$. 
Applying Lemma \ref{lemma_daf} on $\mathcal T^\varepsilon$ and \eqref{difference_gamma} on $[\tau,s]$, using \eqref{now_becomes} and (i)--(iv) of Lemma \ref{inequalities}, we obtain
$$
S^\varepsilon[\rho^\varepsilon(t_-.),y^\varepsilon_{t_-}]-S^\varepsilon[\rho^\varepsilon(\tau,.),\varepsilon]  \leq  J^\varepsilon[\rho^\varepsilon(.,.)]
$$
%
%& + & \int_{\mathcal T^\varepsilon_0}B^\varepsilon_0(t)dt+\int_{\mathcal T^\varepsilon_1}B^\varepsilon_1(t)dt
%\end{eqnarray*}
%
%where $C$ is the supremum in \eqref{uniformbound_sigma}, and 
%
%where $B^\varepsilon_i(t)$ are boundary terms involving $\rho(t,\varepsilon^+)$ and $\rho(t,(1-\varepsilon)^-)$. As above these terms converge in $L^1([\tau,t_-])$ to the corresponding quantities $B_i(t)$ with $\rho(t,0^+)$, $\rho(t,1^-)$. These quantities are nonpositive by Lemma \ref{inequalities}. 
%
Let $y_.$ be a limit point of $y^\varepsilon_.$ in $C^0([\tau,t_-])$ as $\varepsilon\to 0$, then
$$S[\rho(s,.),y_s]-S[\rho(\tau,.)]\leq J_{[\tau,s]}[\rho(.,.)]$$
In view of \eqref{subequality}, this must be an equality with
%Hence $\lambda(\mathcal T^\varepsilon)\to 0$ and $\lambda(\mathcal T^\varepsilon_i\cap\{t\in[\tau,s]:\,B_i(t)<0\})\to 0$
$y_s\in\mathcal Y[\rho(s,.)]$.
Since $\mathcal Y[\rho(s,.)]$ is a singleton for a.e. $s$ and $y_.$ is continuous, we must have $y_s\equiv 0\in\mathcal Y[\rho(s,.)]$  for all $s\in[\tau,t_-]$.
Since $\tau$ is arbitrarily small, \eqref{boundary_2}--\eqref{boundary_4} imply that on $(-t_-,+\infty)$, $\tilde{\rho}(.,.)$ is an entropy solution that satisfies \eqref{reverse_bln_2}, here with $\tilde{y}_\infty=1$ and $\sigma=-$.\\ \\
If $\mathcal U=\emptyset$, i.e. $t_-=0$, the above agument establishes the proposition with  $\tilde{y}=1-y$.
Assume $\mathcal U\neq\emptyset$, $t_-<0$. Since now we have $v=0$, the same arguments as in the first case show that in fact $\mathcal U=(-\infty,0)$, $t_-=-\infty$, and there exists ${y}\in(0,1)$ such that $\mu^-[\rho(.,.)]$ vanishes on $(-\infty,0)\times[(0,1)\backslash\{y\}]$, $\rho(t,0^+)\in\bln^-(\rho_l)$ and $\rho(t,1^-)\in\bln^+(\rho_r)$ for a.e. $t\in(-\infty,0)$. Thus $\tilde{\rho}(.,.)$ is an entropy solution on $(0,+\infty)\times[(0,1)\backslash\{\tilde{y}:=1-y\}]$ that satisfies boundary conditions \eqref{reverse_bln}. This establishes the proposition with  $\tilde{y}=1-y$. In both cases $t_-=0$ and $t_-=-\infty$, the corresponding $\tilde{\theta}^{\tilde{y}}$ in Theorem \ref{theorem_2} is $-t_-$.
\end{proof}
\mbox{}\\
\begin{proof}{proposition}{lemma_uniqueness_2}
Define $\tilde{\rho}^+(.,.)$, resp. $\tilde{\rho}^-(.,.)$ on
$(0,\tilde{\theta}^{\tilde{y}})\times(-\infty,1)$, resp.
$(0,\tilde{\theta}^{\tilde{y}})\times(0,+\infty)$ by
\begin{eqnarray} \label{def_tildeplus}
\tilde{\rho}^+(t,.) & := & \varphi(\rho_r){\bf
1}_{(-\infty,\tilde{y}+vt)}+\tilde{\rho}(t,.){\bf 1}_{(\tilde{y}+vt,1)}\\
\label{def_tildemin}
\tilde{\rho}^-(t,.) & := &  \varphi(\rho_l){\bf
1}_{(\tilde{y}+vt,+\infty)}+\tilde{\rho}(t,.){\bf 1}_{(0,\tilde{y}+vt)}
\end{eqnarray}
Since $\tilde{\rho}^\pm$ has no jump across the line $x=\tilde{y}+vt$, is
constant on one side, and $\tilde{\rho}$ is an entropy
solution to \eqref{conservation_law} outside this line,
$\tilde{\rho}^\pm$ is an entropy solution on its domain. Besides we
have
\begin{eqnarray} \label{initial_tildeplus}
\tilde{\rho}^+(0,.) & = & \varphi(\rho_r){\bf 1}
_{(-\infty,\tilde{y})}+{\rho}(0,1-.){\bf 1}_{(\tilde{y},1)}\\
\label{initial_tildemin} \tilde{\rho}^-(0,.) & = &
\varphi(\rho_l){\bf 1}
_{(\tilde{y},+\infty)}+{\rho}(0,.1-){\bf 1}_{(0,\tilde{y})}\\
\label{right_bln} \tilde{\rho}^+(t,1^-) & \in & {\mathcal
E}^-(\rho_l)\\
\label{left_bln} \tilde{\rho}^-(t,0^+)& \in & \bln^+(\rho_r)
\end{eqnarray}
where \eqref{right_bln}--\eqref{left_bln} follow from
\eqref{reverse_bln}. Thus $\tilde{\rho}^\pm$ is uniquely determined
as the entropy solution of an initial-boundary problem for
\eqref{conservation_law}. This uniquely determines
$\tilde{\rho}(.,.)$ in $(0,\tilde{\theta}^{\tilde{y}})\times(0,1)$.
Next, since $\rho_b\in\bln^\sigma(\rho_b)$ for any
$\sigma\in\{+,-\}$, \eqref{reverse_bln_2} implies that on
$(\tilde{\theta}^{\tilde{y}},+\infty)\times(0,1)$, $\tilde{\rho}(.,.)$ is the unique
entropy solution to \eqref{conservation_law} with Cauchy datum
$\tilde{\rho}(\tilde{\theta}^{\tilde{y}},.)$ (previously determined) and boundary
conditions $\tilde{\rho}(t,0^+)\in\bln^+(\rho_b)$,
$\tilde{\rho}(t,1^-)\in\bln^-(\rho_b)$.\\ \\
Statement (0) of Theorem \ref{theorem_2} follows because, assuming existence of $\tilde{\rho}^i(.,.):=M^{\tilde{y}^i}[\tilde{\rho}(.)]$ for $\tilde{y}^1\neq \tilde{y} ^2$, 
: if both $\tilde{y}^i\in(0,1)$, both $\tilde{\rho}^i$ have a single antishock along two two different curves. If exactly one $\tilde{y}^i\in(0,1)$, or if they both lie in $\{0,1\}$ and $f(\rho_l)\neq f(\rho_r)$, one $\tilde{\rho}^i$ has an antishock and the other does not. If both $\tilde{y}^i\in\{0,1\}$ and $f(\rho_l)=f(\rho_r)$,
asumme e.g. $\tilde{y}^1=0$, $\tilde{y}^2=1$. Then $\tilde{\rho}^1(t,0^+)=\rho_l$ a.e. while $\tilde{\rho}^2(t,0^+)\in\bln^+(\rho_r)$ a.e. Since $\rho_l<\rho^*<\rho_r$, 
we have (see \eqref{adset_left_2}--\eqref{adset_right_2}) $\rho_l\not\in\bln^+(\rho_r)$, which implies $\tilde{\rho}^1\neq\tilde{\rho}^2$.
\\ \\
Let us now prove that ${\rho}(.,.)$  given by \eqref{reversal} lies
in $\mathcal R_T[\rho(.)]$ for some $T>0$. If $\tilde{\theta}^{\tilde{y}}<+\infty$,
this follows from Theorem \ref{th_stat} and $\rho_b\neq\rho^*$. Assume now $f(\rho_l)=f(\rho_r)$,
$\tilde{\theta}^{\tilde{y}}=+\infty$, $y\in(0,1)$. Since
$\varphi(\rho_l)=\rho_r\in\bln^-(\rho_r)$ and
$\varphi(\rho_r)=\rho_l\in\bln^+(\rho_l)$, by
\eqref{reverse_bln} and \eqref{antishock_0}, the restriction of
$\tilde{\rho}$ to $(0,+\infty)\times(0,\tilde{y})$ is an entropy solution
to \eqref{conservation_law}  with BLN boundary conditions
$\tilde{\rho}(t,0^+)\in\bln^+(\rho_r)$,
$\tilde{\rho}(t,\tilde{y}^-)\in\bln^-(\rho_r)$. Similarly,
the restriction of $\tilde{\rho}$ to $(0,+\infty)\times(\tilde{y},1)$ is
an entropy solution to \eqref{conservation_law}  with boundary
conditions $\tilde{\rho}(t,\tilde{y}^+)\in\bln^+(\rho_l)$,
$\tilde{\rho}(t,1^-)\in\bln^-(\rho_l)$. The result follows
from Theorem \ref{th_stat}, $\rho_l\neq\rho^*$, $\rho_r\neq\rho^*$.
In all cases, we have
$\lim_{t\to-\infty}{\rho}(t,.)=\rho^{1-\tilde{y}_\infty}(.)$, with
$\rho^y(.)$ given by
\eqref{arbitrary_shock}.\\ \\
We next prove that $\rho(.,.)$ given by \eqref{reversal} achieves equality in
\eqref{finite_time}. Set
$z_t=1-\tilde{y}_{-t}$.
Let $-\infty<T<0$. 
%
%such that $\rho(T,.)=\rho^{1-\tilde{y}_\infty}(.)$ and
%$z_T=1-\tilde{y}_\infty$. 
%
By \eqref{difference_gamma} applied to $z_.$, Lemma \ref{inequalities} and conditions (2), (a)--(c) of Theorem \ref{theorem_2},
\be \label{integrate_equality}
S[\rho(0,.),z_0]-S[\rho(T,.),z_t]=J_{(t,0)}[\rho(.,.)] \ee
Since
%
%${\mathcal
%Y}[\rho^{1-\tilde{y}_\infty}(.)]=\{1-\tilde{y}_\infty\}=\{z_T\}$,
%
$\rho(0,.)=\rho(.)$ and $z_0=y\in{\mathcal Y}[\rho(.)]$,
\eqref{integrate_equality} and \eqref{inequality_production_2} on $(T,0)$ yield
$S[\rho(T,.),z_T]=S[\rho(T,.)]$, i.e.
$z_t\in\mathcal Y[\rho(T,.)]$, or equivalently $\tilde{y}_{-T}\in\tilde{\mathcal Y}[\tilde{\rho}(T,.)]$. For $T$ small enough such that $\rho(.,.)\in\mathcal R_T[\rho(.)]$, \eqref{integrate_equality} yields
equality in \eqref{quasi_potential_finite} and \eqref{finite_time}.\\ \\
We eventually prove that $\tilde{\mathcal Y}[\tilde{\rho}(s,.)]=\{\tilde{y}_s\}$ for all $s>0$. 
Assume the contrary and let $\tilde{y}^1\neq\tilde{y}^2$ for some $s>0$. Up to this point we have already established uniqueness of $M^{\tilde{y}}[\tilde{\rho}(.)]$ and statement (0) of Theorem \ref{theorem_2}.
By anticipation we use the existence of $M^{\tilde{y}}[\tilde{\rho}(.)]$ established independently in the next subsection. We set $\tilde{\rho}^i(t,.)=M^{\tilde{y}^i}[\tilde{\rho}(s,.)](t-s,.)$ for $t>s$ and $\tilde{\rho}^i(t,.)=\tilde{\rho}(t,.)$ for $0\leq t\leq s$. Statement (0) implies $\tilde{\rho}^1\neq\tilde{\rho}^2$. But the uniqueness statement of Theorem \ref{theorem_2} yields $\tilde{\rho}^i(.,.)=\tilde{\rho}(.,.)$ for each $i\in\{1,2\}$. 
\end{proof}
\subsection{$S\geq V$ and existence of a minimizer}
\label{proof_construction}
We construct, for each $\tilde{y}\in\tilde{\mathcal Y}[\tilde{\rho}(.)]$, a
$\tilde{\rho}(.,.)$ satisfying conditions (2), (a)--(c) of Theorem \ref{theorem_2}.\\ \\
\textbf{Construction on the time interval $(0,\tilde{\theta}^{\tilde{y}})$}.
We proceed inductively by constructing $\tilde{\rho}(.,.)$ on intervals $[t_n,t_{n+1}]$. Let
$t_0=0$, $z_n=\tilde{y}_{t_n}=\tilde{y}+v t_n$,
%
%\be \label{seq_init} z_0 =1-y,\quad t_0=0\ee
%
\be\label{seq_t} t_{n+1}-t_n =
\min\left(\frac{z_n}{V-v},\frac{1-z_n}{V+v}\right) \ee
%
%\be\label{seq_z} z_{n+1} = z_n+v(t_{n+1}-t_n)\ee
%
%
where $V=\sup\abs{f'}>\abs{v}$ by strict concavity of $f$.
Then $\lim_{n\to\infty}t_n=\tilde{\theta}^{\tilde{y}}$.
%
%
%\be \label{conv_z}\lim_{n\to\infty}z_n=\tilde{y}_\infty\ee
%
%with $\tilde{y}_\infty$ given by \eqref{def_tildey}. In particular,
%$\tilde{\rho}(.,.)$ is defined on the time interval $[0,\tilde{\theta}^{\tilde{y}})$
%and satisfies \eqref{antishock_0} on this interval.
%
Set $\tilde{\rho}(0,.)=\tilde{\rho}(.)$ and define $\tilde{\rho}(.,.)$
on $(t_n,t_{n+1})\times(0,1)$ by
\be \label{def_tilde}
\tilde{\rho}(t,.)=\tilde{\rho}^+_n(t-t_n,.){\bf
1}_{(z_n+v(t-t_n),1)}+\tilde{\rho}^-_n(t-t_n,.){\bf
1}_{(0,z_n+v(t-t_n))}\ee
where $\tilde{\rho}^+_n$ and $\tilde{\rho}^-_n$ are the entropy
solutions to \eqref{conservation_law}, respectively on $(-\infty,1)$
and $(0,+\infty)$, with initial/boundary conditions
\begin{eqnarray} \label{initial_tildeplus_2}
\tilde{\rho}^+_n(0,.) & = & \varphi(\rho_r){\bf
1}_{(-\infty,z_n)}+\tilde{\rho}(t_n,.){\bf 1}_{(z_n,1)}\\
\label{initial_tildemin_2}
\tilde{\rho}^-_n(0,.) & = & \varphi(\rho_l){\bf
1}_{(z_n,+\infty)}+\tilde{\rho}(t_n,.){\bf 1}_{(0,z_n)}\\
\label{right_bln_n} \tilde{\rho}_n^+(t,1^-) & \in & {\mathcal
E}^-(\rho_l)\\
\label{left_bln_n} \tilde{\rho}_n^-(t,0^+)& \in & {\mathcal
E}^+(\rho_r)
\end{eqnarray}
\eqref{reverse_bln} follows from
\eqref{right_bln_n}--\eqref{left_bln_n}. 
%
%Let $\rho(.,.)$ be defined by \eqref{reversal}. 
%
%Since $\tilde{\rho}$ is a weak solution to \eqref{conservation_law}, $\rho$ is also a
%weak solution. 
%
We prove by induction on $n$ that \eqref{antishock_0}
holds a.e. on $(0,t_n)$ and that
\be \label{min_n}z_n\in\tilde{\mathcal Y}[\tilde{\rho}(t_n,.)]\ee
For $n=0$, \eqref{min_n} follows from $\tilde{y}\in\tilde{\mathcal Y}[\tilde{\rho}(.)]$.
Assume the statements hold for $n$. We first prove that
\eqref{antishock_0} holds up to $t_{n+1}$ i.e. (in view of
\eqref{def_tilde})
\begin{eqnarray} \label{antishock_1}
\tilde{\rho}^+_n(t-t_n,(z_n+v(t-t_n))^+) & = & \varphi(\rho_r)\\
\label{antishock_11} \tilde{\rho}^-_n(t-t_n,(z_n+v(t-t_n))^-) & = &
\varphi(\rho_l) \end{eqnarray}
for a.e. $t\in(t_n,t_{n+1})$.
Let $\tilde{\rho}^{++}_n(.,.)$ and $\tilde{\rho}^{--}_n(.,.)$ be the
entropy solutions to \eqref{conservation_law} on $\R$ with initial
data
\begin{eqnarray} \tilde{\rho}^{++}_n(0,.) & := &
\tilde{\rho}^+_n(0,.){\bf 1}_{(-\infty,1)}+\rho_c{\bf
1}_{(1,+\infty)}\label{extensions_1}\\
\tilde{\rho}^{++}_n(0,.) & := & \tilde{\rho}^-_n(.){\bf
1}_{(0,+\infty)}+\rho_c{\bf 1}_{(-\infty,0)}\label{extensions_2}
\end{eqnarray}
with $\rho_c$ given by \eqref{def_critical}. By
\eqref{extensions_1}--\eqref{extensions_2},
\eqref{initial_tildeplus_2}-- \eqref{initial_tildemin_2} and
\eqref{min_n},
\begin{eqnarray}
\label{obtain_1} \int_{z_n}^z\tilde{\rho}_n^{++}(0,x)dx & \leq &
\rho_c(z-z_n),\quad\forall z>z_n\\
\label{obtain_2}
\int_z^{z_n}\tilde{\rho}_n^{--}(0,x)dx & \geq &
\rho_c(z_n-z),\quad\forall z<z_n
\end{eqnarray}
Given \eqref{extensions_1}--\eqref{extensions_2},
\eqref{obtain_1}--\eqref{obtain_2} and \eqref{def_critical}, we
apply Lemma \ref{lemma_pde} below to $\tilde{\rho}^{++}_n$ with
$r=\varphi(\rho_r)\leq\rho=\rho_c$, and $\tilde{\rho}^{--}_n$, with
$r=\varphi(\rho_l)\geq\rho=\rho_c$. Note that strict convexity of
$f$ implies
\begin{eqnarray*}
v(\varphi(\rho_r),\rho_c) & > &
v(\varphi(\rho_r),\varphi(\rho_l))=v\\
v(r,\rho) & = &
v(\varphi(\rho_l),\rho_c)<v(\varphi(\rho_r),\varphi(\rho_l))=v
\end{eqnarray*}
Thus Lemma \ref{lemma_pde} yields
\begin{eqnarray} \label{traces_ppmm_1} \tilde{\rho}^{++}_n
(t,(z_n+v(t-t_n))^+) & = & \varphi(\rho_r)\\
\label{traces_ppmm_2} \tilde{\rho}^{--} (t,(z_n+v(t-t_n)^-) & = &
\varphi(\rho_l)\end{eqnarray}
By finite propagation property (see e.g. Proposition (2.3.6) of
\cite{ser}) for \eqref{conservation_law}, we have
$\tilde{\rho}^+_n(t,.)=\tilde{\rho}^{++}_n(t,.)$ on
$(-\infty,1-V(t-t_n))$ and
$\tilde{\rho}^-_n(t,.)=\tilde{\rho}^{--}_n(t,.)$ on
$(V(t-t_n),+\infty)$. With
\eqref{traces_ppmm_1}--\eqref{traces_ppmm_2} and \eqref{seq_t}, this
implies \eqref{antishock_1} on $(t_n,t_{n+1})$.
Now we prove \eqref{min_n} for $n+1$. For
$t\in[-t_{n+1},-t_n]$, set $y_t=1-z_n-v(-t-t_n)\in(0,1)$, so
$1-z_{n+1}=y_{-t_{n+1}}$. Define $\rho(.,.)$ by \eqref{reversal}, so \eqref{antishock_1}--\eqref{antishock_11}
and \eqref{right_bln_n}--\eqref{left_bln_n} imply
\eqref{inside_shock}--\eqref{boundary_2} for a.e.
$\theta\in(-t_{n+1},-t_n)$. Applying \eqref{difference_gamma} with this $y_.$, using 
(o)--(ii) of Lemma \ref{inequalities}, we obtain
\be\label{thisy}
S[\rho(-t_n,.),y_{-t_n}]-S[\rho(-t_{n+1},.),y_{-t_{n+1}}]=J_{(-t_{n+1},-t_n)}[\rho(.,.)]
\ee
Then
\eqref{inequality_production_2} and \eqref{min_n} imply
\eqref{min_n} for $n+1$. Besides, summing
\eqref{thisy} over $n$ and letting $n\to\infty$ in
\eqref{min_n}, we have
\begin{eqnarray} \label{summation_n}
S[\rho(.)]-S[\rho(-\tilde{\theta}^{\tilde{y}},.)] & = & J_{(-\tilde{\theta}^{\tilde{y}},0]}[\rho(.,.)]\\
\label{limit_n} \tilde{y}_{\infty} & \in & \tilde{\mathcal
Y}[\tilde{\rho}(-\tilde{\theta}^{\tilde{y}},.)] \end{eqnarray}
\textbf{Construction on the time interval $[\tilde{\theta}^{\tilde{y}},+\infty)$.} If
$\tilde{\theta}^{\tilde{y}}<+ \infty$, which occurs unless $f(\rho_l)=f(\rho_r)$ and
$y\in(0,1)$, we define $\tilde{\rho}$ on $(\tilde{\theta}^{\tilde{y}},+\infty)$ as the
unique entropy solution to \eqref{conservation_law} on $(0,1)$ with
initial datum $\tilde{\rho}(\tilde{\theta}^{\tilde{y}},.)$ and BLN boundary conditions
\be \label{bln_tilde} \tilde{\rho}(t,0^+)\in{\mathcal
E}^+(\rho_b),\quad \tilde{\rho}(t,1^-)\in\bln^-(\rho_b)
\ee
with $\rho_b$ defined in (c) of Theorem \ref{theorem_2}. The second
boundary condition in \eqref{reverse_bln_2} is thus satisfied. Let
us show that the first one holds, i.e.
\be \label{closed_bln} \tilde{\rho}(t,\tilde{y}^\sigma)=\rho_b\ee
with $\sigma$ and $\rho_b$ given in (c). We prove by induction that,
for $n\in\N$ and $t_n:=\tilde{\theta}^{\tilde{y}}+n/V$,
\be \label{min_1} \tilde{y}_\infty\in\tilde{\mathcal Y}[\tilde{\rho}(-t_n,.)]
\ee
and that this implies \eqref{closed_bln} a.e. on $(t_n,t_{n+1})$.
For $n=0$, \eqref{min_1} is simply \eqref{limit_n}.
We consider the case $\tilde{y}_\infty=0$, as $\tilde{y}_\infty=1$
is symmetric. Here $\rho_b=\rho_l$, thus we must show
\be \label{rhol} 
%\rho(-t,1^-)=
\tilde{\rho}(t,0^+)=\rho_l\ee
for a.e. $t\in(t_n,t_{n+1})$.
Let $\tilde{\rho}^+_n(.,.)$ and $\tilde{\rho}^{++}_n(.,.)$ be the
entropy solutions to \eqref{conservation_law}, respectively on
$(-\infty,1)$ and $\R$, with initial/boundary conditions
\begin{eqnarray} \label{def_tildeplus_rhol}
\tilde{\rho}_n^+(0,.) & = & \tilde{\rho}(t_n,.){\bf
1}_{(0,1)}+\rho_l{\bf 1}_{(-\infty,0)}\\
\label{def_tildepp_rhol} \tilde{\rho}_n^{++}(0,.) & = &
\tilde{\rho}(t_n,.){\bf 1}_{(0,1)}+\rho_l{\bf
1}_{(-\infty,0)}+\rho_c{\bf 1}_{(1,+\infty)}\\
\label{another_bln} \tilde{\rho}_n^+(t,1^-) & \in & {\mathcal
E}^-(\rho_l)
\end{eqnarray}
By \eqref{min_1},
$$
\int_0^x\tilde{\rho}^{++}_n(0,z)dz\leq\rho_ cx
$$
Note that
\be \label{implication_1} \tilde{y}_\infty=0\Rightarrow
f(\rho_l)\leq f(\rho_r)\Rightarrow \rho_l\leq\rho^*\ee
which, by \eqref{def_critical}, implies $\rho_l\leq\rho_c$. Thus
Lemma \ref{lemma_pde} yields
$\tilde{\rho}^{++}_n(t,x)=\rho_l$ for every $x<t v(\rho_l,\rho_c)$.
By \eqref{def_critical} and strict concavity of $f$,
$v(\rho_l,\rho_c)>v(\rho_l,\varphi(\rho_r))\geq 0$,
where the last inequality follows from
$f(\varphi(\rho_r))=f(\rho_r)\geq f(\rho_l)$. Thus
\be \label{bln_pp} \tilde{\rho}^{++}_n(t,0^+)=\rho_l \ee
for a.e. $t>0$. By finite propagation property,
$\tilde{\rho}^+(t,.)$ and $\tilde{\rho}^{++}(t,.)$ coincide on
$(-\infty,1-V(t-t_n))$. Thus, by definition of $t_n$, we also have
\be\label{bln_p}\tilde{\rho}^{+}_n(t,0^+)=\rho_l\in{\mathcal
E}^+(\rho_l)\ee
for a.e. $t\in(0,t_{n+1}-t_n)$.
By \eqref{bln_tilde}, \eqref{def_tildeplus_rhol},
\eqref{another_bln} and \eqref{bln_p}, $\tilde{\rho}(t_n+.,.)$ and
$\tilde{\rho}^+_n$ are entropy solutions of a common
initial-boundary problem for \eqref{conservation_law} on
$(0,t_{n+1}-t_n)\times(0,1)$. Thus they coincide on this domain.
This in particular establishes \eqref{rhol} on $(t_{n},t_{n+1})$.
We now apply \eqref{difference_gamma} on $[-t_{n+1},-t_n]$ with $y_t\equiv 1$.
Using  \eqref{bln_tilde},  \eqref{rhol} and \eqref{implication_1},
with (i), (ii), (iii) of Lemma \ref{inequalities}, we obtain \eqref{thisy}.
Then \eqref{min_1} and \eqref{inequality_production_2}
imply \eqref{min_1} for $n+1$.
\begin{lemma}
\label{lemma_pde}
\mbox{}\\ \\
Let $0\leq r\leq\rho\leq K$, and $u(.,.)$ be the entropy solution to
\eqref{conservation_law} on $\R$, with initial datum $u_0(.)\in
L^{\infty,K}(\R)$ such that
$u_0(x)=r$ for $x<0$ and
$\int_0^xu_0(z)dz\leq \rho x$ for $x>0$
(resp. $\int_x^0 u_0(z)dz\geq -rx$ for $x<0$ and $u_0(x)=\rho$ for
$x>0$). Then
$u(t,x)=r$ for every $x <t v(r,\rho)$ (resp. $u(t,x)=\rho$ for every
$x>t v(r,\rho)$).
\end{lemma}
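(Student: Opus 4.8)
The plan is to prove the first statement (the resp.\ statement follows by the space reflection $x\mapsto -x$ together with $f(\rho)\mapsto f(\rho)$, noting the flux is unchanged and the roles of $r$ and $\rho$ with the ordering $r\le\rho$ are preserved up to orientation). First I would handle the model case where $u_0$ is exactly the Riemann datum $u_0 = r\mathbf 1_{(-\infty,0)}+\rho\mathbf 1_{(0,+\infty)}$. Since $r\le\rho$ and $f$ is strictly concave (by \eqref{properties_flux}, $f''\le c<0$), the entropy solution of this Riemann problem is the single shock $u(t,x)=r\mathbf 1_{\{x<tv(r,\rho)\}}+\rho\mathbf 1_{\{x>tv(r,\rho)\}}$, with speed $v(r,\rho)$ given by \eqref{speed_rankine}; the Lax entropy condition $f'(\rho)\le v(r,\rho)\le f'(r)$ holds precisely because $f$ is concave and $r\le\rho$. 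In particular $u(t,x)=r$ for $x<tv(r,\rho)$, which is the claim in this case.

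The main step is then a comparison argument to pass from the Riemann datum to a general $u_0$ satisfying $u_0(x)=r$ for $x<0$ and $\int_0^x u_0(z)\,dz\le\rho x$ for $x>0$. The natural tool is the Lax--Oleinik-type representation, or more robustly the finite speed of propagation together with the $L^1$-contraction and comparison principle for entropy solutions. Concretely, I would argue as follows. Fix $(t_0,x_0)$ with $x_0<t_0 v(r,\rho)$. Consider the backward characteristic cone (or the interval of dependence): by finite propagation speed there is an interval $[a,b]$ such that $u(t_0,x_0)$ depends only on $u_0$ restricted to $[a,b]$. If $b\le 0$ we are done since $u_0\equiv r$ there and constants are solutions. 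The nontrivial situation is when the cone reaches into $x>0$; here one uses the hypothesis $\int_0^x u_0\le\rho x$, which says $u_0$ lies "on average below $\rho$", to dominate $u_0$ from above, in the averaged sense that controls the entropy solution, by the Riemann datum $r\mathbf 1_{(-\infty,0)}+\rho\mathbf 1_{(0,+\infty)}$. The cleanest route is via the Lax--Oleinik formula for the entropy solution of a scalar conservation law with concave flux: $u(t,x)$ is determined by the minimizing point $y$ in $\min_y\big[\,t\,g^*\!\big(\tfrac{x-y}{t}\big)+\int_0^y u_0\big]$ (with $g^*$ the Legendre transform of $-f$, up to sign conventions), and the constraint $\int_0^y u_0\le\rho y$ together with $u_0\equiv r$ on $(-\infty,0)$ forces the minimizer to be at $y\le 0$ when $x<tv(r,\rho)$, giving $u(t,x)=r$. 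Alternatively, one compares with the family of shifted Riemann solutions and uses monotonicity of the entropy solution semigroup in the initial data together with the primitive-function formulation $U(t,x)=\int_0^x u(t,z)\,dz$, which solves a Hamilton--Jacobi equation for which the comparison principle is standard; the hypothesis $\int_0^x u_0\le\rho x$ is exactly a comparison of initial data for this Hamilton--Jacobi equation with the primitive of the Riemann datum.

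The hard part will be making the comparison rigorous without assuming more regularity than $u_0\in L^{\infty,K}(\R)$: the pointwise trace $u(t,x)=r$ for \emph{every} $x<tv(r,\rho)$ (not just a.e.) should be read as the statement about the entropy solution's representative, and I expect the Lax--Oleinik / Hamilton--Jacobi primitive formulation to give this directly since the value function is Lipschitz and its spatial derivative has one-sided limits everywhere. I would therefore carry out the argument in the Hamilton--Jacobi formulation: set $U_0(x)=\int_0^x u_0$, let $\overline U_0(x)=rx^-{}\!+\rho x^+$ be the primitive of the Riemann datum, observe $U_0\le\overline U_0$ on $\R$ (using $u_0\equiv r$ on $(-\infty,0)$ for $x<0$ and the hypothesis for $x>0$) and $U_0=\overline U_0$ on $(-\infty,0]$, invoke the comparison principle for the associated Hamilton--Jacobi equation to get $U(t,\cdot)\le\overline U(t,\cdot)$ with equality forced on the region $x<tv(r,\rho)$ because $\overline U(t,x)=rx+\text{const}$ there and $U$ is a subsolution pinned from below by the constant-$r$ solution, and finally differentiate in $x$ to conclude $u(t,x)=r$ on $\{x<tv(r,\rho)\}$. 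The reflected statement is obtained by the change of variables $x\mapsto -x$, which turns the left constraint into the right one and exchanges the roles consistently with $v(r,\rho)=v(\rho,r)$.
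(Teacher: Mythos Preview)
Your ``cleanest route'' via the Lax--Oleinik/Hopf formula --- showing that for $x<tv(r,\rho)$ the optimizer lies at some $y\le 0$, where $\int_0^y u_0=ry$, forcing $u(t,x)=r$ --- is correct and is exactly what the paper does. The paper sets $U_0(x)=-\int_0^x u_0$, writes Hopf's formula $U(t,x)=\inf_y[U_0(y)+tf^*((x-y)/t)]$, computes the infimum over $y<0$ explicitly as $-rx+tf(r)$, and then uses the hypothesis $U_0(y)\ge-\rho y$ for $y>0$ to show the infimum over $y>0$ is no smaller (splitting into the cases $x-tf'(\rho)\le 0$ and $x-tf'(\rho)>0$, the latter using $x<tv(r,\rho)$).

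However, the plan you actually commit to in your last paragraph --- comparison with the Riemann primitive $\overline U$ for the upper bound and ``pinning from below by the constant-$r$ solution'' $W(t,x)=rx-tf(r)$ for the lower bound --- has a gap. A global comparison $U\ge W$ would require $U_0(x)\ge rx$ for all $x\in\R$, and this fails on $(0,\infty)$ in general: nothing forbids $u_0<r$ there (e.g.\ $u_0\equiv 0$ on an interval with $r>0$). Finite propagation does not rescue this either, since $f'(K)<0$ means the backward domain of dependence of a point $(t,x)$ with $x<tv(r,\rho)$ typically reaches into $(0,\infty)$. The only way to get $U(t,x)\ge rx-tf(r)$ on $\{x<tv(r,\rho)\}$ is to evaluate the Hopf--Lax formula at the specific $y=x-tf'(r)<0$ --- i.e.\ the Hopf argument itself, not an independent comparison. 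So your two routes collapse into one; stick with the first. A minor point: your reduction of the ``resp.'' case by $x\mapsto -x$ does not work as stated, since that reflection sends the flux $f$ to $-f$, reversing concavity and hence the entropy condition; the paper simply redoes the Hopf computation with left/right roles swapped.
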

\begin{proof}{lemma}{lemma_pde}
We prove the first half of the statement, the other one being
similar. We have
\be \label{burgers_hj} u(t,x)=-\partial_x U(t,x) \ee
where $U(t,x)$ denotes the viscosity solution to the
Hamilton-Jacobi equation
\be \label{hj}
\partial_ t U=f(-\partial_x U),\quad U_0(x)=-\int_0^x u_0(z)dz
\ee
given by Hopf's formula,
\be \label{hopf} U(t,x)=\inf_{y\in\R}\left[ U_0(y)+t
f^*\left(\frac{x-y}{t}\right)\right] \ee
where the convex conjugate of the concave function $f$ is given by
\be \label{conjugate}
f^*(v):=\sup_{\rho\in[0,K]}[f(\rho)-v\rho]=f[(f')^{-1}(v)]-v(f')^{-1}(v)
\ee
and satisfies
\be \label{derivative_conjugate} (f^*)'=-(f')^{-1} \ee
Set
$$
H(y)=U_0(y)+t f^*\left(\frac{x-y}{t}\right)
$$
For $y< 0$ we have, by assumption on $u_0(.)$,
$$H(y) =-ry+t
f^*\left(\frac{x-y}{t}\right)$$
hence, by \eqref{derivative_conjugate}
\be \label{derivative_H}
H'(y)=-r+(f')^{-1}\left(\frac{x-y}{t}\right)\ee
Assuming $x\leq tv(r,\rho)$, we have $x-t f'(r)\leq 0$, since
concavity of $f$ and $r\leq\rho$ imply $v(r,\rho)\leq f'(r)$. Thus
$H'(y)\leq 0$ for $y\leq x-t f'(r)$ and $H'(y)\geq 0$ for $x-
tf'(r)\leq y<0$, so that
\be \label{inf_negative}
\inf_{y\in(-\infty,0)}H(y)=H(x-tf'(r))=-rx+tf(r) \ee
For $y>0$ we have, by assumption on $u_0(.)$,
$$
H(y)\geq G(y):=-\rho y+t f^*\left(\frac{x_y}{t}\right)
$$
Two cases may arise: (a) If $x-tf'(\rho)\leq 0$, then we have
\be \label{inf_positive_1} \inf_{y\in(0,+\infty)}H(y)\geq
\inf_{y\in(0,+\infty)}G(y)=G(0)=H(0)\geq\inf_{y\in(-\infty,0)}H(y)
\ee
(b) if $x-tf'(\rho)>0$, then we have
\be \label{inf_positive_2} \inf_{y\in(0,+\infty)}H(y)\geq
\inf_{y\in(0,+\infty)}G(y)=G(x-tf'(\rho))=-\rho x+t f(\rho)> -r x+
tf(r)=\inf_{y\in(-\infty,0)}H(y) \ee
where the last inequality follows from $x<tv(r,\rho)$. From
\eqref{inf_negative} and
\eqref{inf_positive_1}--\eqref{inf_positive_2} we obtain that, for
every $x<tv(r,\rho)$, $U(t,x)=-rx+tf(r)$, and thus $\rho(t,x)=r$
by \eqref{burgers_hj}.
\end{proof}
\section{The case $\rho_l\geq\rho_r$}\label{sec_relax}
\subsection{Analysis of entropy production} \label{preparatory}
\begin{lemma}
\label{lemma_2}
Let
$\rho(.,.)\in \emsol((-\infty,0)\times(0,1))$. Let $\tilde{G}(.,.)$, defined on $(0,+\infty)\times\R$, be the function defined in Theorem \ref{theorem_3}. For $(t,x)\in(-\infty,0)\times(0,1)$, set $F_t(x)=\varphi[\tilde{G}(-t,1-x)]$.
Then, for every $-\infty< a<b\leq 0$,
\be\label{computation_4}
S[\rho(b,.),F_b(.)]-S[\rho(a,.),F_a(.)]-J_{(a,b)}[\rho(.,.)]=
\int_a^b d(t)dt-\mu^-[(a,b)\times(0,1)]
\ee
where $d(t)=\partial d(t)+d^\circ(t)$, with
\begin{eqnarray}\label{computation_4_bulk}
d^\circ(t) & := & -\partial_x F_t(x)h''(F_t(x))\\
& \times & \left\{f(\rho(t,x))-f\circ\varphi(F_t(x))-f'\circ\varphi[F_t(x)](\rho(t,x)-\varphi[F_t(x)])\right\}\nonumber\\
\label{computation_4_boundary}
\partial d(t) & := & g(\rho(t,0^+),F_t(0+))-g(\rho(t,1^-),F_t(1-))
\end{eqnarray}
\end{lemma}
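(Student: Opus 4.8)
The plan is to follow the pattern of Lemmas \ref{lemma_1} and \ref{lemma_11}. I would differentiate $t\mapsto S[\rho(t,.),F_t(.)]$ (legitimate since $\emsol((-\infty,0)\times(0,1))\subset C^0((-\infty,0],L^1(0,1))$ and $\{I^0_{(a,b)}<+\infty\}$ lies in the same space), and evaluate the derivative by the generalized Green's formula of \cite{cf}, applied to the divergence-measure fields $(h(\rho),g(\rho))$ and $(\rho,f(\rho))$ on the strip $(a,b)\times(0,1)$. Here $\partial_t h(\rho)+\partial_x g(\rho)=\mu$ with $\mu:=\mu_h[\rho(.,.)]=\mu^+-\mu^-$, while $\partial_t\rho+\partial_x f(\rho)=0$ since $\rho(.,.)$ is a weak solution of \eqref{conservation_law} inside the strip.

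The key preparatory fact is the identity satisfied by $F_t$. Strict concavity of $f$ and \eqref{symmetry_f} make $\varphi$ an involution with $f\circ\varphi=f$, hence $f'\circ\varphi\,\varphi'=f'$; since $\tilde G(.,.)$ is a weak solution of \eqref{conservation_law} and $F_t(x)=\varphi[\tilde G(-t,1-x)]$, a chain-rule computation gives the transport identity $\partial_t F_t+f'(\varphi(F_t))\,\partial_x F_t=0$, to be read in the measure sense along the jump set of $F_t$, where it encodes the Rankine--Hugoniot relation for the corresponding jump of $\tilde G$ via $f(\varphi(u))=f(u)$. I would also use $k'(\rho)=\varphi(\rho)h''(\rho)$ from \eqref{def_k} and the primitive identity $\frac{d}{d\rho}[h'(\rho)f(\rho)-g(\rho)]=h''(\rho)f(\rho)$, immediate from \eqref{def_entropy_flux}. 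Then, differentiating the integrand of \eqref{def_S_3} and substituting $\partial_t h(\rho)=\mu-\partial_x g(\rho)$, $\partial_t\rho=-\partial_x f(\rho)$, the algebraic identity $[k'(F_t)-\rho\,h''(F_t)]\partial_t F_t=h''(F_t)(\varphi(F_t)-\rho)\partial_t F_t$, and finally the transport identity, one isolates a total $x$-derivative $\partial_x[\,h'(F_t)f(\rho)-g(\rho)-\Phi(F_t)\,]$ with $\Phi'=h''\cdot(f\circ\varphi)$, plus the measure $\mu$, plus a remainder. By $f\circ\varphi=f$ and the primitive identity, $h'(F_t)f(\rho)-g(\rho)-\Phi(F_t)$ equals $-g(\rho(t,x),F_t(x))$ in the notation \eqref{relative_flux}, up to an additive constant; integrating in $x\in(0,1)$ thus produces exactly $\partial d(t)$ of \eqref{computation_4_boundary}, the constants cancelling between $x=0^+$ and $x=1^-$. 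Collecting the coefficient of $\partial_x F_t$ in the remainder yields precisely the integrand of $d^\circ(t)$ in \eqref{computation_4_bulk}. Integrating in $t$ and writing $\mu=\mu^+-\mu^-$, the part $\mu^+$ together with the boundary traces assembles into $J_{(a,b)}[\rho(.,.)]$ — the boundary costs $i^l,i^r$ being absorbed exactly as in the proof of Lemma \ref{lemma_1}, using the BLN conditions \eqref{bln_tildeg} on $\tilde G$ (equivalently, via \eqref{adset_left_2}--\eqref{adset_right_2} and $F_t(0^+)=\varphi[\tilde G(-t,1^-)]$, $F_t(1^-)=\varphi[\tilde G(-t,0^+)]$, together with the explicit forms \eqref{left_1}--\eqref{right_2}) — and \eqref{computation_4} follows.

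I expect the main difficulty to be the measure-theoretic rigor of this collecting step when $\rho(.,.)$ and $F_t(.)$ are only quasi-solutions and may jump along a common curve: one must interpret the products $h''(F_t)[\cdots]\partial_x F_t$ and the transport identity across jumps through the Volpert chain rule (as in \cite{daf}), the spatial limits at $x\in\{0,1\}$ and at jump points being available only in the $L^1$ sense \eqref{lonelims} (cf. Remark \ref{remark_limits}). As in Lemmas \ref{lemma_1}--\ref{lemma_11} this forces one to localize and to invoke the divergence-measure Green's formula of \cite{cf} rather than pointwise integration by parts; I would reuse that technical layer here. A secondary point is checking that the jump of $F_t$ carries the correct velocity under $\varphi$, which again rests on $f(\varphi(u))=f(u)$ and on \eqref{kk}--\eqref{symmetry_K}, and that $\tilde G$ — hence $F_t$ — indeed obeys the stated boundary conditions via \eqref{initial_tildeg} and Theorem \ref{theorem_3}.
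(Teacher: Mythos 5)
Your proposal is, in substance, the paper's own proof: the same decomposition of $S[\rho(t,.),F_t(.)]$ into the three integrals, the same appeal to the Chen--Frid Gauss--Green formula for the divergence-measure fields $(h(\rho),g(\rho))$ and $h'(F_t(x))\,(\rho,f(\rho))$, the transport identity $\partial_t F_t=-f'(\varphi(F_t))\partial_x F_t$, the identity $k'=\varphi h''$ from \eqref{def_k}, and the primitive identity $\frac{d}{d\rho}[h'(\rho)f(\rho)-g(\rho)]=h''(\rho)f(\rho)$ combined with $f\circ\varphi=f$ to reassemble the boundary contributions into $g(\rho(t,0^+),F_t(0^+))-g(\rho(t,1^-),F_t(1^-))$. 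The algebra you sketch is correct and lands exactly on \eqref{computation_4_bulk}--\eqref{computation_4_boundary}.

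The one point where you and the paper part ways is the ``main difficulty'' you flag, namely jumps of $F_t$ and the meaning of the nonconservative products $f'(\varphi(F_t))\partial_x F_t$ and $h''(F_t)\{\cdots\rho(t,x)\cdots\}\partial_x F_t$ across them. The paper dissolves this issue rather than solving it: since $\bar G(0,.)$ is nonincreasing and $f$ is concave, $\tilde G(.,.)$ is shock-free and of class $C^1$ for $t>0$ (it is given by the standard characteristic formulas \eqref{std_char_1}--\eqref{std_char_2}), hence $F_t$ is $C^1$, the transport identity holds classically, and the only object requiring the divergence-measure calculus is $\rho(.,.)$ itself (via the product rule of \cite{cf} applied to $h'(F_t)(\rho,f(\rho))$). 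This observation is not a convenience you can skip: if $F_t$ did jump, the identity $\partial_t F_t+f'(\varphi(F_t))\partial_x F_t=0$ would be false in the measure sense (the Rankine--Hugoniot speed is the chord slope $[f(u^+)-f(u^-)]/(u^+-u^-)$, not $f'$ evaluated at an ambiguous trace), and $d^\circ(t)$, which multiplies $\partial_x F_t$ by a function of the possibly discontinuous $\rho(t,x)$, would not even be well defined. So the Vol'pert/BV layer you propose should be replaced by the remark that $F_t$ is smooth; with that substitution your argument is the paper's.
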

%
%\begin{remark}
%
%The integral term in \eqref{computation_4} is a well-defined time measure. Indeed, $F_t(x)$ is a functions of bounded space-time variation
%for $t\geq 0$ (as an entropy solution to \eqref{conservation_law} with initial datum of bounded space variation). Thus $\partial_x F$ is a bounded %space-time measure. Since $F_0$ is decreasing and $f$ in \eqref{conservation_law} is concave, $F_t(.)$ is continuous for $t<0$, hence $\partial_x F$ has %no jump set. On the other hand, by \cite{dow2},  $\rho(.,.)$ is approximately continuous outside a rectifiable set.
%
%\label{remark_bv}
%
%
%\end{remark}
%
\begin{lemma}
\label{lemma_boundaries}
Assume $\rho_l\geq F^+\geq\rho^*\geq F^-\geq \rho_r$. Then:\\ \\
1) For every $\rho^\pm\in[0,K]$,
\begin{eqnarray}
\label{inequality_boundary_1} g(\rho^+,F^+) & \leq &
i^l(\rho,\rho_l)\\
\label{inequality_boundary_2} \quad -g(\rho^-,F^-) & \leq &
i^r(\rho,\rho_r)
\end{eqnarray}
2) Equality occurs in \eqref{inequality_boundary_1} iff.: either $F^+=\rho_l$ and $\rho^+\in\bln^-(\rho_l)$; or, $F^+<\rho_l$ and
$\rho^+\in\{F^+,\varphi(F^+)\}\subset \bln^-(F^+)$. Equality occurs in \eqref{inequality_boundary_2} iff.: either $F^-=\rho_r$ and $\rho^-\in\bln^+(\rho_r)$; or, $F^->\rho_r$ and
$\rho^-\in\{F^-,\varphi(F^-)\}\subset\bln^+(F^-)$
\end{lemma}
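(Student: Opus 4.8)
The plan is to reduce the two inequalities to the explicit formulas \eqref{left_1}--\eqref{right_2} for $i^l,i^r$ together with the sign information in Lemma \ref{lemma_g}, exactly in the spirit of the proof of parts (ii)--(iv) of Lemma \ref{inequalities}. By the symmetry \eqref{reversal_prod} between the two boundaries (replace $\rho$ by $\varphi$-reflection and swap $l\leftrightarrow r$, $F^+\leftrightarrow F^-$), it suffices to treat \eqref{inequality_boundary_1}; \eqref{inequality_boundary_2} then follows by the analogous computation, which I would state and omit. So fix $F^+$ with $\rho^*\leq F^+\leq\rho_l$ and let $\rho:=\rho^+\in[0,K]$ be arbitrary; the goal is $g(\rho,F^+)\leq i^l(\rho,\rho_l)$ with the stated equality cases.

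First I would split according to whether $F^+=\rho_l$ or $F^+<\rho_l$. If $F^+=\rho_l$, the claim is exactly part (ii) of Lemma \ref{inequalities}: $g(\rho,\rho_l)\leq i^l(\rho,\rho_l)$ with equality iff $\rho\in\bln^-(\rho_l)$ — wait, here one must be careful, since in the present regime $\rho_l\geq\rho^*$ the relevant admissible set is $\bln^+(\rho_l)=[\rho^*,K]$, and indeed part (ii) of Lemma \ref{inequalities} gives equality iff $\rho\in\bln^+(\rho_l)$; I would invoke it directly (noting the relative flux $g(\rho,\rho_l)$ uses the convex entropy $h=h(\cdot,\rho^*)$, so $g(\rho_l,\rho_l)=0$). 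If $F^+<\rho_l$, I would write, using \eqref{relative_flux} and linearity in the base point,
$$i^l(\rho,\rho_l)-g(\rho,F^+)=\big[i^l(\rho,\rho_l)-g(\rho,\rho_l)\big]+\big[g(\rho,\rho_l)-g(\rho,F^+)\big],$$
and compute the last bracket as $[h'(F^+)-h'(\rho_l)]f(\rho)+[g(F^+)-g(\rho_l)]=\int_{\rho_l}^{F^+}h''(r)\,(f(\rho)-f(r))\,dr$, using \eqref{def_entropy_flux}. Here I use part (ii) of Lemma \ref{inequalities} again for the first bracket, which is $\geq 0$ with equality iff $\rho\in\bln^+(\rho_l)=[\rho^*,K]$.

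The main work, and the place where I expect the only real friction, is the bookkeeping of signs to get the \emph{joint} equality condition $\rho\in\{F^+,\varphi(F^+)\}$. For the second bracket: since $\rho^*\leq F^+\leq\rho_l$ and the range of integration $r\in[F^+,\rho_l]$ sits on the decreasing branch of $f$, the integrand $h''(r)(f(\rho)-f(r))$ has the sign of $f(\rho)-f(r)$, and $f(r)$ runs between $f(\rho_l)$ and $f(F^+)$ with $f(\rho_l)\le f(r)\le f(F^+)$. Hence the bracket is $\geq 0$ iff $f(\rho)\geq f(F^+)$, i.e. iff $\varphi(F^+)\leq\rho\leq F^+$ (using $f\circ\varphi=f$ and $F^+\geq\rho^*\geq\varphi(F^+)$), and $=0$ iff $f(\rho)=f(F^+)$, i.e. $\rho\in\{F^+,\varphi(F^+)\}$. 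Combining: $i^l(\rho,\rho_l)-g(\rho,F^+)\geq 0$ always — one needs to check the two brackets cannot have opposite signs causing cancellation trouble, but in fact the first is $\ge 0$ unconditionally and the second, although it can be negative when $f(\rho)<f(F^+)$, is then dominated; here I would redo the estimate more carefully by instead grouping $i^l(\rho,\rho_l)-g(\rho,F^+)$ directly against the explicit piecewise formula \eqref{left_2} for $i^l$ on the three subintervals $\{0\le\rho\le\varphi(\rho_l)\}$, $\{\varphi(\rho_l)\le\rho\le\rho^*\}$, $\{\rho^*\le\rho\le K\}$, as is done for Lemma \ref{inequalities}(ii), and use Lemma \ref{lemma_g} plus $f\circ\varphi=f$ on each piece — this is the robust route and avoids the sign-cancellation worry. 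Finally, for equality one needs both brackets to vanish: $\rho\in\bln^+(\rho_l)=[\rho^*,K]$ \emph{and} $\rho\in\{F^+,\varphi(F^+)\}$; since $F^+\geq\rho^*$ and $\varphi(F^+)\le\rho^*$, this forces $\rho=F^+$ when $\varphi(F^+)<\rho^*$, and allows $\rho=\varphi(F^+)$ only when $\varphi(F^+)=\rho^*=F^+$, i.e. the stated set $\{F^+,\varphi(F^+)\}$; and one checks $\{F^+,\varphi(F^+)\}\subset\bln^-(F^+)=[0,\varphi(F^+)]\cup\{F^+\}$ (using \eqref{adset_right_2} with $F^+\geq\rho^*$, after the appropriate left/right relabeling), giving the last assertion. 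The case $F^+=\rho_l$ is the degenerate endpoint of this, consistent with $\bln^-(\rho_l)$ replacing $\{F^+,\varphi(F^+)\}$. I would carry out the analogous three-interval check for $i^r$ and $F^-$ to finish \eqref{inequality_boundary_2}, and this completes the proof.
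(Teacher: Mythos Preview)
Your overall plan is close to the paper's, and the fallback you mention --- work directly with the piecewise formula \eqref{left_2} on the three $\rho^+$-intervals $[\rho^*,K]$, $[\varphi(\rho_l),\rho^*]$, $[0,\varphi(\rho_l)]$ together with Lemma~\ref{lemma_g} --- is exactly what the paper does. The paper's three cases read: (a) for $\rho^+\ge\rho^*$, $i^l=0$ and $g(\rho^+,F^+)\le 0$ with equality iff $\rho^+=F^+$; (b) for $\varphi(\rho_l)\le\rho^+<\rho^*$, one computes $g(\rho^+,F^+)-i^l(\rho^+,\rho_l)=g(\varphi(\rho^+),F^+)\le 0$ with equality iff $\varphi(\rho^+)=F^+$, i.e.\ $\rho^+=\varphi(F^+)$; (c) for $\rho^+<\varphi(\rho_l)$, $i^l(\rho^+,\rho_l)=g(\rho^+,\rho_l)$ and $\frac{d}{dr}g(\rho^+,r)=h''(r)[f(r)-f(\rho^+)]>0$ on $[F^+,\rho_l]\subset[\rho^*,\varphi(\rho^+))$ gives a strict inequality.

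However, your primary route via the two-bracket decomposition breaks down precisely at the equality analysis, and the conclusion you draw from it is wrong. At $\rho^+=F^+$ (with $F^+<\rho_l$), the first bracket $i^l(F^+,\rho_l)-g(F^+,\rho_l)=-g(F^+,\rho_l)>0$ and the second bracket $g(F^+,\rho_l)-g(F^+,F^+)=g(F^+,\rho_l)<0$: they cancel but neither vanishes, so ``both brackets vanish'' is \emph{not} the equality criterion. Relatedly, your claim that the second bracket equals $0$ iff $f(\rho)=f(F^+)$ is false (for $F^+<\rho_l$ and $f(\rho)=f(F^+)$ the integrand $h''(r)(f(F^+)-f(r))$ is nonnegative and not identically zero, so the integral is strictly positive). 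This error propagates to your equality set: you end up allowing $\rho^+=\varphi(F^+)$ only when $F^+=\rho^*$, whereas in fact $\rho^+=\varphi(F^+)$ is an equality point for every $F^+\in[\rho^*,\rho_l)$, as case (b) above shows. Finally, your ``self-correction'' on the admissible set goes the wrong way: Lemma~\ref{inequalities}(ii) gives equality in $g(\rho,\rho_l)\le i^l(\rho,\rho_l)$ iff $\rho\in\bln^-(\rho_l)=\{\rho_l\}\cup[0,\varphi(\rho_l)]$, not $\bln^+(\rho_l)=[\rho^*,K]$; the statement of the lemma you are proving already has $\bln^-(\rho_l)$. Drop the two-bracket argument entirely and carry the three-interval analysis through to the equality cases.
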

\begin{proof}{lemma}{lemma_boundaries}
We will prove the statement for $\rho^+$ and $F^+$, the other part being similar. The case $F^+=\rho_l$ follows from ii) of Lemma \ref{inequalities}. Assume now $F^+<\rho_l$. If $\rho^+\geq\rho^*$, then by Lemma \ref{lemma_g} and \eqref{left_2},
$$g(\rho^+,F^+)\leq 0=i^l(\rho^+,\rho_l)$$
with equality iff. $\rho^+=F^+$. We assume now $\rho^+<\rho^*$, and distinguish two cases.
i) $\varphi(\rho_l)\leq\rho^+<\rho^*$. Then by \eqref{left_2} and Lemma \ref{lemma_g},
$$g(\rho^+,F^+)-i^l(\rho^+,\rho_l)=g(\varphi(\rho^+),F^+)\leq 0$$
with equality iff. $\varphi(\rho^+)=F^+$.
ii) $\rho^+<\varphi(\rho_l)$, then by \eqref{left_2}, $i^l(\rho^+,\rho_l)=g(\rho^+,\rho_l)$. To study
$g(\rho^+,F^+)-g(\rho^+,\rho_l)$ we note that, for $\rho^*\leq r<\varphi(\rho^+)$,
$$
\frac{d}{dr}g(\rho^+,r)=h''(r)[f(r)-f(\rho^+)]>0
$$
Thus, since $\rho^*\leq F^+\leq\rho_l<\varphi(\rho^+)$, we have $g(\rho^+,F^+)\leq g(\rho^+,\rho_l)$ with equality iff.
$F^+=\rho_l$, in which case $\rho^+\in\bln^-(\rho_l)$.
\end{proof}
\mbox{}\\
\begin{proof}{lemma}{lemma_2}
We write
$$S[\rho(t,.),F_t(.)]=H(t)-J(t)+\kappa(t)$$
where
$$H(t):=\int_0^1 h[\rho(t,x)]dx, \,J(t):=\int_0^1 [\rho(t,x)h'(F_t(x))]dx, \,\kappa(t):=\int_0^1 k[F_t(x)]dx$$
%
%that is a continuous function (see remark following Lemma \ref{lemma_H}).
%
By the Gauss-Green formula for divergence-measure fields (\cite{cf}),
\be\label{hterm}
H(b)-H(a)=\int_a^b\left[g(\rho(t,0^+))-g(\rho(t,1^-))\right]dt+\mu[(a,b)\times(0,1)]
\ee
Since $\tilde{G}(0,.)$ is nonincreasing, for $t>0$, $\tilde{G}$ is a solution of class $C^1$ constructed by standard characteristics (see \eqref{std_char_1}--\eqref{std_char_2} below). Thus by \eqref{conservation_law} and the generalized product differentiation rule for the divergence-measure field $(\rho(t,x),f(\rho(t,x))$ (\cite[Theorem 3.1]{cf}),
$$
\partial_t[\rho(t,x)h'(F_t(x))]+\partial_x[f(\rho(t,x))h'(F_t(x))]=\rho(t,x)\partial_t h'(F_t(x))+f(\rho(t,x))\partial_x h'(F_t(x))
$$
Thus $h'(F_t(x))(\rho(t,x),f(\rho(t,x))$ is a divergence-measure field, and the Green's formula yields
\begin{eqnarray*}
J(b)-J(a) & = & \int_a^b \left[
f(\rho(t,0^+)h'(F_t(0^+))-f(\rho(t,1^-)h'(F_t(1^-))\right]dt\\
& + &
\int_a^b\int_0^1
\left[\rho(t,x)\partial_t h'(F_t(x))+f(\rho(t,x))\partial_x h'(F_t(x))\right]dxdt
\end{eqnarray*}
Since $\tilde{G}(.,.)$ satisfies \eqref{conservation_law}, we have $\partial_t F_t(x)=-f'(\varphi(F_t(x))\partial_x F_t(x)$.
By \eqref{def_k}, $\partial_t k(F_t)=\varphi(F_t)h''(F_t)\partial_t F_t$, thus the result follows by a simple computation, noting that (since $f\circ\varphi=f$)
\begin{eqnarray*}
\int_0^1 f(\varphi(F_t(x))h''(F_t(x))\partial_x F_t(x)dx & = & [h'(F_t(1^-))f(F_t(1^-))-g(F_t(1^-))]\\
& - & [h'(F_t(0^+))f(F_t(0^+))-g(F_t(0^+))]
\end{eqnarray*}
\end{proof}
\subsection{$S\leq V$  and uniqueness of a minimizer}
\label{unique_relax}
We need the following properties of $\tilde{G}(.,.)$ defined in Theorem \ref{theorem_3}. Property 1) below implies \eqref{bln_tildeg} in Theorem \ref{theorem_3}.
\begin{lemma}
\label{explicit_tildeg}
\mbox{}\\ \\
0) For all $t>0$, $\tilde{G}(t,.)$ is a nonincreasing function.\\ \\
1) For all $t>0$ and $x\in(0,1)$:
a) $\varphi(\rho_l)\leq\tilde{G}(t,x)\leq\varphi(\rho_r)$;
b) $\rho_l\leq\rho^*$  implies $\tilde{G}(t,1^-)=\varphi(\rho_l)$,
 $\rho_l>\rho^*$  implies $\tilde{G}(t,1^-)\leq\rho^*$;
c)  $\rho_r\geq\rho^*$ implies $\tilde{G}(t,0^+)=\varphi(\rho_r)$,
 $\rho_r<\rho^*$ implies $\tilde{G}(t,0^+)\geq\rho^*$.\\ \\
2) For every $x\in[0,1],\tilde{G}(t,x)\to\varphi(\rho_s)$  as $t\to\infty$, where $\rho_s$ is the uniform density defined in \eqref{eq_stat}.
Outside the MC phase (i.e. the third case in \eqref{eq_stat}), there exists $T>0$ such that $\tilde{G}(t,x)=\rho_s$ for all $t\geq T$ and $x\in[0,1]$.
\end{lemma}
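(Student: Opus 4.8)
The plan is to transfer everything to the full-line Cauchy problem of Theorem \ref{theorem_3} and to exploit two features of its initial datum $\bar G(0,\cdot)$ from \eqref{barg}: it is bounded and spatially nonincreasing, while $f$ is strictly concave.

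\emph{Items 0) and 1a).} First I record the structure of $\bar G(0,\cdot)$. By the envelope description of $F_{\rho(.)}$ recalled at the beginning of Subsection \ref{lower_relax}, $F_{\rho(.)}$ is nonincreasing with values in $[\rho_r,\rho_l]$ and boundary traces $F_{\rho(.)}(0^+)\leq\rho_l$, $F_{\rho(.)}(1^-)\geq\rho_r$; since $\varphi$ is decreasing (\eqref{symmetry_f}), $\tilde G(0,\cdot)=\varphi[F_{\rho(.)}(1-\cdot)]$ is nonincreasing on $(0,1)$, valued in $[\varphi(\rho_l),\varphi(\rho_r)]$, with $\tilde G(0,0^+)\leq\varphi(\rho_r)$ and $\tilde G(0,1^-)\geq\varphi(\rho_l)$. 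Together with the constant extensions $\varphi(\rho_r)$ on $(-\infty,0)$ and $\varphi(\rho_l)$ on $(1,+\infty)$ this shows $\bar G(0,\cdot)$ is nonincreasing on all of $\R$ and valued in $[\varphi(\rho_l),\varphi(\rho_r)]$. The entropy solution semigroup for \eqref{conservation_law} is order preserving and commutes with spatial translations, hence preserves spatial monotonicity, and obeys the maximum principle; this gives 0) and 1a). Moreover, since $\bar G(0,\cdot)$ is nonincreasing and $f'$ decreasing, $x\mapsto f'(\bar G(0,x))$ is nondecreasing, so forward characteristics issued at $t=0$ never cross, no shock forms, downward jumps of the datum resolve instantly into rarefaction fans, and $\tilde G$ is jointly continuous on $(0,+\infty)\times\R$ and built by forward characteristics — the regularity used in the proof of Lemma \ref{lemma_2} (cf. \eqref{std_char_1}--\eqref{std_char_2}).

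\emph{Item 1).} I treat the trace at $x=1^-$, the one at $x=0^+$ being symmetric. Fix $t>0$; by continuity $\tilde G(t,1^-)=\tilde G(t,1)$, and every value of $\tilde G(t,\cdot)$ is carried along a characteristic traced back to $t=0$ (a single characteristic of constant value, or a ray of a rarefaction fan emanating from a downward jump of the datum). The right constant state $\varphi(\rho_l)$ has characteristic speed $f'(\varphi(\rho_l))$, of sign $\sgn(\rho_l-\rho^*)$ (recall $f'(\rho^*)=0$, $f'$ decreasing, $\varphi(\rho^*)=\rho^*$). If $\rho_l<\rho^*$ then $f'(\varphi(\rho_l))<0$ and, by 1a), every value $w$ of $\tilde G(t,\cdot)$ has $f'(w)\leq f'(\varphi(\rho_l))<0$, so the state $\varphi(\rho_l)$ spreads strictly to the left of $x=1$ and $\tilde G(t,1^-)=\varphi(\rho_l)$. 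If $\rho_l>\rho^*$ then $\varphi(\rho_l)<\rho^*$: assuming $\tilde G(t,1)>\rho^*$, monotonicity and continuity (with $\tilde G(t,\cdot)\equiv\varphi(\rho_l)<\rho^*$ far to the right) force some $x^*>1$ with $\tilde G(t,x^*)=\rho^*$; but a characteristic carrying $\rho^*$ is vertical, hence starts at $(0,x^*)$ where the datum is $\varphi(\rho_l)\neq\rho^*$, and a $\rho^*$-ray of a rarefaction fan would have to sit above a (nonexistent) jump of the datum in $(1,+\infty)$ — contradiction, so $\tilde G(t,1^-)\leq\rho^*$. The case $\rho_l=\rho^*$ gives $\bar G(0,\cdot)\geq\rho^*$ on $\R$, whence $\tilde G(t,\cdot)\geq\rho^*$ by the maximum principle, and the same characteristic argument gives $\tilde G(t,1)\leq\rho^*$, so $=\rho^*=\varphi(\rho_l)$. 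This is 1b); 1c) is identical at $x=0$ from the left state $\varphi(\rho_r)$, with speed of sign $\sgn(\rho^*-\rho_r)$. Via \eqref{adset_left_2}--\eqref{adset_right_2}, 1b)--1c) then give $\tilde G(t,0^+)\in\bln^+(\varphi(\rho_r))$, $\tilde G(t,1^-)\in\bln^-(\varphi(\rho_l))$, i.e. \eqref{bln_tildeg}. I expect this item, and in particular ruling out that states on the wrong side of $\rho^*$ reach a lateral boundary, to be the main technical point; everything rests on the global no-shock (rarefaction) structure just described.

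\emph{Item 2).} By item 1), the restriction of $\tilde G$ to $(0,+\infty)\times(0,1)$ is an entropy solution of \eqref{conservation_law} with BLN boundary data $(\varphi(\rho_r),\varphi(\rho_l))$; since $\rho_l\geq\rho_r$ yields $\varphi(\rho_r)\geq\varphi(\rho_l)$, this is the $\rho_l\geq\rho_r$ case of Theorem \ref{th_stat} for the pair $(\varphi(\rho_r),\varphi(\rho_l))$. There $\mathcal M=\mathrm{argmax}_{[\varphi(\rho_l),\varphi(\rho_r)]}f$ is a singleton (strict concavity), equal to $R(\varphi(\rho_r),\varphi(\rho_l))=\varphi(R(\rho_l,\rho_r))=\varphi(\rho_s)$ by \eqref{optimizer} and $f\circ\varphi=f$ (it is $\rho^*$ precisely when $\rho^*\in[\varphi(\rho_l),\varphi(\rho_r)]$, i.e. in the MC phase). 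Theorem \ref{th_stat}(ii) then gives $\tilde G(t,\cdot)\to\varphi(\rho_s)$ in $L^1(0,1)$; and since outside the MC phase $\varphi(\rho_s)\neq\rho^*$, i.e. $\varphi(\rho_s)$ is not a local extremum of $f$ (barring the degenerate cases where a boundary datum is $0$ or $K$, treated directly), the same theorem gives $\tilde G(t,\cdot)\equiv\varphi(\rho_s)$ for all $t\geq T$, which is 2) there, pointwise on $[0,1]$. In the MC phase, $\varphi(\rho_s)=\rho^*$: $L^1$ convergence plus monotonicity of $\tilde G(t,\cdot)$ gives the pointwise limit $\rho^*$ at every $x\in(0,1)$, and at $x\in\{0,1\}$ the characteristic-tracing argument of item 1), applied now along $t\to\infty$, rules out a value staying a fixed distance from $\rho^*$: such a value has $f'\neq0$ bounded away from $0$, so its characteristic would reach, for large $t$, the region $\{x<0\}$ or $\{x>1\}$ where the datum is $\varphi(\rho_r)\leq\rho^*$ resp. $\varphi(\rho_l)\geq\rho^*$, which lies on the wrong side of that value — contradiction. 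This yields $\tilde G(t,x)\to\rho^*=\varphi(\rho_s)$ for all $x\in[0,1]$.
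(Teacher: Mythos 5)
Your proof is correct, and for items 0) and 1) it is essentially the paper's argument: the paper's entire proof consists of observing that $\bar G(0,\cdot)$ is nonincreasing and $f$ concave, so $\tilde G$ is shock-free and given by the standard characteristic formula \eqref{std_char_1}--\eqref{std_char_2}, from which "all statements are simple consequences"; your monotonicity/maximum-principle/rarefaction-fan reasoning is exactly the fleshing out of those consequences. Where you genuinely diverge is item 2): the paper reads the long-time behaviour directly off \eqref{std_char_1}--\eqref{std_char_2} (for fixed $x\in[0,1]$ one tracks the foot $y(x,t)$ of the backward characteristic; the sign of $f'(\varphi(\rho_r))$ and $f'(\varphi(\rho_l))$ decides whether a lateral constant state invades $[0,1]$ in finite time or whether $y(x,t)$ stays trapped in $[0,1]$ so that $(x-y)/t\to 0$ and $\tilde G(t,x)\to(f')^{-1}(0)=\rho^*$), whereas you first upgrade 1b)--1c) to the BLN conditions \eqref{bln_tildeg}, identify the restriction of $\tilde G$ to $(0,1)$ as the BLN entropy solution with data $(\varphi(\rho_r),\varphi(\rho_l))$, and invoke Theorem \ref{th_stat} with $\mathcal M=\{\varphi(\rho_s)\}$. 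Your route is valid and has the merit of reusing an existing theorem, but it costs you an $L^1$-to-pointwise upgrade, a separate characteristic argument at the endpoints in the MC phase, and the caveat about the degenerate data $\rho_l,\rho_r\in\{0,K\}$ (where $\varphi(\rho_s)$ is a local extremum of $f$ and Theorem \ref{th_stat}(ii)'s finite-time clause does not apply directly); the direct characteristic computation avoids all three and also delivers the explicit relaxation time. Note finally that the finite-time conclusion in the statement should read $\tilde G(t,x)=\varphi(\rho_s)$ (the $\rho_s$ there is a typo), as you implicitly assumed.
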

\begin{proof}{lemma}{explicit_tildeg}
Since $\tilde{G}(0,.)$ is nonincreasing and $f$ concave, $\tilde{G}(.,.)$ is shock-free and can be constructed by standard characteristics.
For every $(t,x)\in(0,+\infty)\times\R$, there is a unique $y=y(x,t)$ such that
\be\label{std_char_1}
y+tf'(\tilde{G}(0,y^-)) \leq x\leq y+tf'(\tilde{G}(0,y^+))
\ee
and we have
\be\label{std_char_2} \tilde{G}(t,x) =  (f')^{-1}\left(\frac{x-y}{t}\right)\ee
All statements are simple consequences of \eqref{std_char_1}--\eqref{std_char_2}.
\end{proof}
\mbox{}\\ \\
Let $\rho(.,.)\in{\mathcal R}[\rho(.)]$. We apply
\eqref{computation_4} on $(T,0)$.
Then $d^\circ(t)\leq 0$, because $h$ is convex, $F_t$ is nonincreasing, and $f$ is concave and $\partial d(t)\leq 0$ by \eqref{inequality_boundary_1}--\eqref{inequality_boundary_2}, ii) of Lemma \ref{inequalities} and 1), b)--c) of Lemma \ref{explicit_tildeg}
(note that if, for instance, $\rho_l<\rho^*$, \eqref{inequality_boundary_1} does not follow from Lemma \ref{lemma_boundaries} but from ii) of Lemma \ref{inequalities},
because then we have $F^+=F(t,0^+)=\rho_l$ by 1), b) of Lemma \ref{explicit_tildeg}. $\rho_r>\rho^*$ is treated similarly). Since $F_0$ is the maximizing $F$ for $\rho(.)$ in  \eqref{def_S_4}, we have
\be\label{neg_int} S[\rho(.)] \leq  S[\rho(T,.),F_T(.)]+ J_{(T,0)}[\rho(.,.)]\leq S[\rho_T(.)]+J_{(T,0)}[\rho(.,.)]\label{neg_int}
\ee
Sending $T\to-\infty$, 
%by 3) of Lemma \ref{explicit_tildeg}, we have
%
%$F_T(.)\to\rho_s(.)$
%
%in $L^1((0,1))$. 
%
since $S[\rho_s(.)]=0$,  we obtain $S[\rho(.)]\leq J_{(-\infty,0)}[\rho(.,.)]$, hence $S[\rho(.)]\leq V[\rho(.)]$.
%If we assume $\rho(.,.)\in{\mathcal R}_T[\rho(.)]$, we obtain  the upper bound in \eqref{finite_time}.\\ \\
%
%
%
Let us now assume that $\rho(.,.)\in\mathcal R[\rho(.,.)]$ achieves
equality in \eqref{variational}. Then in \eqref{computation_4} we must have $d^\circ(t)=\partial d(t)=0$ a.e. on $(-\infty,0)$. Thus
$$S[\rho(t,.),F_t]=J_{(-\infty,t)}[\rho(.,.)]\geq S[\rho(t,.)]$$
%
%where the inequality follows from the upper bound  established above for\eqref{variational}. 
%
Hence $F_t=F_{\rho(t,.)}$, that is \eqref{later_tildeg}.
Define $\tilde{\rho}(.,.)$ by \eqref{reversal}. $d^\circ(t)=\partial d(t)=0$ on $(-\infty,0)$ has the following implications:\\ \\
1) Inequalities \eqref{inequality_boundary_1}--\eqref{inequality_boundary_2} must be equalities for a.e. $t<0$, with $\rho^\pm=\rho(t,0^\pm)$.
Then 2) of Lemma \ref{lemma_boundaries}, ii) of Lemma \ref{inequalities} and 1), b)--c) of Lemma \ref{explicit_tildeg} imply $\tilde{\rho}(t,0^+)\in\bln^+[\tilde{G}(t,0^+)]$ and  $\tilde{\rho}(t,1^-)\in\bln^-[\tilde{G}(t,1^-)]$ for every $t>0$ (here again, if $\rho_l<\rho^*$ or $\rho_r>\rho^*$, Lemma \ref{inequalities} is used instead of Lemma \ref{lemma_boundaries}).
\\ \\
2) $\mu[(-\infty,0)\times(0,1)]=\mu^+[(-\infty,0)\times(0,1)]$, hence $\mu=\mu_h[\rho(.,.)]\geq 0$ on $(-\infty,0)\times(0,1)$, and by \eqref{reversal_prod}, $\mu_h[\tilde{\rho}(.,.)]\leq 0$ on $(0,+\infty)\times(0,1)$. Since \eqref{conservation_law} is invariant by space-time reversal, $\tilde{\rho}$ is also a weak solution. Thus, by
\cite{dow}, $\tilde{\rho}(.,.)$ is an entropy solution to \eqref{conservation_law}
on $(0,+\infty)\times(0,1)$.\\ \\
Since $\rho(.,.)\in C^0((-\infty,0),L^1(0,1))$, 1) and 2) imply that $\tilde{\rho}(.,.)$ is the entropy solution to \eqref{conservation_law} with BLN boundary conditions \eqref{bln_f} and initial datum \eqref{initial_tilderho}.
Thus, if $\rho\in\mathcal R[\rho(.)]$ achieves equality in \eqref{variational}, it is necessarily (and uniquely) determined by the construction of Theorem \ref{theorem_3}, and satisfies \eqref{later_tildeg}.
\subsection{$S\geq V$ and existence of a minimizer}
\label{lower_relax}
Let us first recall the construction of the maximizing function $F(.)$ in \eqref{def_S_3}.
Let $R(.)$ be a continuous function on $[0,1]$. and $-\infty\leq\alpha\leq\beta\leq +\infty$. We define the $(\alpha,\beta)$-truncated convex hull
$\mathcal C_{\alpha,\beta}R$ of $R(.)$ as the upper envelope of all convex functions $S(.)$ on $[0,1]$ such that $S\leq R$ and
$\alpha\leq S'(0^+)\leq S'(1^-)\leq\beta$.
A straightforward extension of a computation of \cite{dls} shows that the supremum in
\eqref{def_S_4} is achieved uniquely for $F=F_{\rho(.)}$ given by
\be \label{optimal_F}
\varphi[F_{\rho(.)}(x)]=\frac{d}{dx}\mathcal C_{\varphi(\rho_l),\varphi(\rho_r)}R(x)
\ee
where $R(x):=\int_0^x\rho(y)dy$.
Alternatively, we have
$$\mathcal C_{\alpha,\beta}R(x)=\sup_{\theta\in[\alpha,\beta]\cap\R}[\theta x-R^*(\theta)]$$
where
$$R^*(\theta):=\sup_{x\in[0,1]}[\theta x-R(x)]$$
is the Legendre-Fenchel transform of $R(.)$. $\mathcal C_{-\infty,+\infty}R=R^{**}$ is the usual (untruncated) convex hull.  One-sided derivatives $(\mathcal C_{\alpha,\beta}R)'(x^\pm)\in\bar{\R}$
are such that $[(\mathcal C_{\alpha,\beta}R)'(x^-),(\mathcal C_{\alpha,\beta}R)'(x^+)]$ is the set of maximizers of the concave function $\theta\mapsto\theta x-R^*(\theta)$ on $[\alpha,\beta]$. Note that
$(\mathcal C_{\alpha,\beta}R)'(x^\pm)=(\mathcal C R)'(x^\pm)_\alpha^\beta$, where $x_\alpha^\beta:=\max[\alpha,\min(x,\beta)]$.
%
%The following elementary result (whose proof is left to the reader) will be helpful.
%%
%\begin{lemma}\label{lemma_envelope}
%Let $I$ denote the union of all open subintervals of $[0,1]$ on which $R$ is affine. Then a function $\hat{R}$ defined on $[0,1]$ satisfies $\hat{R}=\mathcal C_{\alpha,\beta}R$ iff. it has the following properties:
%%
%(a) $\hat{R}$ is convex and $\hat{R}\leq R$.
%%
%(b) $\hat{R}=R$ on $(0,1)\backslash I$.
%%
%(c) if $\hat{R}'(0^+)>\alpha$, then $\hat{R}(0)=R(0)$; if $\hat{R}'(1^-)<\beta$, then $\hat{R}(1)=R(1)$.
%%
%\end{lemma}
%
The above definitions are easily adapted to the truncated concave hull $\mathcal C^{\alpha,\beta}$ for $\alpha\geq\beta$,
by replacing suprema with infima and maximizers with minimizers.\\ \\
We shall need a Hopf-Lax type formula derived in \cite{jvg}
for the entropy solution to \eqref{conservation_law} on
$(0,+\infty)\times(0,1)$ with Cauchy datum $\rho_0(.)\in
L^\infty((0,1))$ and time-dependent boundary conditions 
\be\label{time_bln}
\rho(t,0^+)\in\bln^+[\rho_l(t)],\quad
\rho(t,1^-)\in\bln^-[\rho_r(t)]\ee
a.e. in the sense of \cite{bln}, where 
$\rho_l(.)$ and $\rho_r(.)$ are boundary data in $L^{\infty,K}(0,+\infty)$.
%
%Though it was
%established for a single boundary, we state a natural extension to the case
%of two boundaries.
%
\begin{theorem}(\cite{jvg})
\label{th_jvg} For $x,y\in[0,1]$ and $t>0$, let $C(y,x,t)$ denote
the set  of differentiable paths $\gamma:[0,t]\to[0,1]$ such that
$\gamma(0)=y$, $\gamma(1)=x$ and $f'(K)\leq \dot{\gamma}_s\leq
f'(0)$ a.e. Let $R(0,x)=\int_0^x\rho(0,y)dy$ and define
\begin{eqnarray}
R(t,x)& := & \sup_{y\in[0,1],\gamma_.\in C(y,x,t)}\left\{
R(0,y)-\int_0^t 1_{\{0\}}(\gamma_s)f[\min(\rho_l(s),\rho^*)]ds\right.\nonumber\\
& - &  \int_0^t 1_{\{1\}}(\gamma_s)f[\max(\rho_r(s),\rho^*)]ds\nonumber\\
& + & \left.\int_0^t
1_{(0,1)}(\gamma_s)f^*(\dot{\gamma}_s)ds\right\}\label{variational_bln}
\end{eqnarray}
where $f^*(\theta):=\inf_{\rho\in[0,K]}[\theta\rho-f(\rho)]$ is the
(concave) Fenchel-Legendre transform of $f$. Then $\partial_x
R(t,x)$ exists a.e. and is the entropy solution to
\eqref{conservation_law} with initial data $\rho(0,.)$ and boundary data \eqref{time_bln}. Besides, the supremum in \eqref{variational_bln} can be
restricted to the subset of $C(y,x,t)$ consisting of piecewise
linear paths with constant slope between the boundaries.
\end{theorem}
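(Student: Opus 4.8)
The plan is to reduce \eqref{variational_bln} to a Hopf--Lax formula for the associated Hamilton--Jacobi equation and then identify its $x$--derivative with the BLN entropy solution. Introduce the potential $U(t,x)=-R(t,x)$; for a weak solution of \eqref{conservation_law} the field $(\rho,f(\rho))$ is divergence free, so $\partial_x R=\rho$, $\partial_t R=-f(\rho)$ up to an additive constant fixed by $R(0,\cdot)=\int_0^\cdot\rho(0,y)\,dy$, and $U$ solves $\partial_t U=f(-\partial_x U)$, a convex Hamilton--Jacobi equation since $f$ is concave. On the whole line Hopf's formula gives exactly $R(t,x)=\sup_{\gamma}\{R(0,\gamma(0))+\int_0^t f^*(\dot\gamma_s)\,ds\}$ over admissible paths ending at $x$, concavity of the paper's $f^*$ making straight segments optimal; on the strip $(0,1)$ one therefore expects the state--constrained version in which a path may sojourn on $\{0\}$ or $\{1\}$ at the price of the extremal admissible boundary flux. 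The first step is to check that $f[\min(\rho_l(s),\rho^*)]$ (resp.\ $f[\max(\rho_r(s),\rho^*)]$) is precisely the maximal (resp.\ minimal) value of $f$ over the BLN set $\bln^+(\rho_l(s))$ (resp.\ $\bln^-(\rho_r(s))$); this is immediate from the explicit descriptions \eqref{adset_left_2}--\eqref{adset_right_2} and the shape of $f$.

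Next I would show that the right--hand side of \eqref{variational_bln} defines a function $R(t,\cdot)$ that is Lipschitz with $\partial_x R(t,x)\in[0,K]$ a.e.\ --- a consequence of the slope constraints $f'(K)\le\dot\gamma_s\le f'(0)$ and strict concavity of $f$ --- and that $R$ obeys the dynamic programming identity $R(t,x)=\sup_{y'\in[0,1]}\{R(s,y')+\Phi_{s,t}(y',x)\}$ for $0<s<t$, where $\Phi_{s,t}$ is the path cost on $[s,t]$, by concatenation of admissible paths. Lipschitz regularity together with the dynamic programming principle forces $U=-R$ to be the unique viscosity solution of $\partial_t U=f(-\partial_x U)$ in $(0,t)\times(0,1)$ supplemented with relaxed boundary conditions of Bardos--LeRoux--N\'ed\'elec type at $x=0,1$ (in the Lions--Perthame--Tadmor / Ishii sense for Hamilton--Jacobi equations), the boundary penalty terms in \eqref{variational_bln} playing exactly the role of the effective boundary Hamiltonian. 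Differentiating in $x$ yields a weak solution of \eqref{conservation_law} satisfying the boundary conditions \eqref{time_bln}, and by uniqueness of the BLN entropy solution (as in \cite{bln} and the references cited in the paper) this weak solution is the BLN entropy solution, proving the main claim. Finally, the reduction of the path space: on any maximal subinterval of $[0,t]$ where $\gamma$ stays in $(0,1)$, concavity of $f^*$ and Jensen's inequality show that replacing $\gamma$ by the straight segment joining its endpoints does not decrease $\int f^*(\dot\gamma_s)\,ds$; rearranging then shows one may restrict to paths that are piecewise linear with constant slope between successive boundary visits.

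The main obstacle will be the boundary analysis: proving that the dynamic programming principle propagates up to and along $\{0\}$ and $\{1\}$, and that the resulting relaxed Hamilton--Jacobi boundary condition is precisely the $x$--antiderivative of the BLN condition \eqref{time_bln}, with correct bookkeeping of the extremal fluxes $f[\min(\rho_l,\rho^*)]$, $f[\max(\rho_r,\rho^*)]$, and all of this uniformly in the time--dependent data $\rho_l(\cdot),\rho_r(\cdot)$ and at points where $f$ fails to be strictly concave. The cleanest rigorous route is a vanishing viscosity argument: add $\nu\partial_{xx}U$, represent the viscous $U^\nu$ by a Hopf--Cole transform or a stochastic control formula on $[0,1]$ with boundary weights encoding $\rho_l(s),\rho_r(s)$, and pass to the limit $\nu\to0$ while controlling the boundary layers; alternatively one compares the control--theoretic $R$ directly with the already constructed BLN solution by doubling of variables. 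This is the content of \cite{jvg}, which I would invoke, supplying in detail only the identification of the boundary penalties described above.
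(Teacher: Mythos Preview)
The paper gives no proof of this statement: Theorem~\ref{th_jvg} is imported wholesale from \cite{jvg} and used as a black box in Section~\ref{lower_relax}. Your proposal likewise ends by invoking \cite{jvg}, so in that sense the two agree. The sketch you give --- pass to the potential $U=-R$, interpret \eqref{variational_bln} as a state-constrained Hopf--Lax formula for a Hamilton--Jacobi equation, verify the dynamic programming principle up to the boundary, and identify the boundary penalty with the extremal BLN flux --- is a reasonable outline of how such a formula is established, and your identification of $f[\min(\rho_l,\rho^*)]$ and $f[\max(\rho_r,\rho^*)]$ as the extremal admissible boundary fluxes via \eqref{adset_left_2}--\eqref{adset_right_2} is correct and is indeed the only part that is specific to the present setting. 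Since the paper treats the result as a citation, there is nothing further to compare.
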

%
% moved to appendix
%
We now show that $\rho(.,.)\in{\mathcal R}[\rho(.)]$ constructed in Theorem \ref{theorem_3} achieves equality in \eqref{quasi_potential} and \eqref{variational}, and that
outside the MC phase it lies in $\mathcal R_T[\rho(.,.)]$ (thus achieving equality in \eqref{quasi_potential_finite} and \eqref{finite_time}).
We interpret the restriction of $\tilde{G}(.,.)$ to $(0,+\infty)\times(0,1)$ as the entropy solution to \eqref{conservation_law} on $(0,1)$ with carefully chosen
BLN boundary data and use  Theorem \ref{th_jvg}. Set $\tilde{R}(0,x):=\int_0^x\tilde{\rho}(0,y)dy$. By \eqref{initial_tildeg}, we have that $\tilde{G}(0,x)=\frac{d}{dx}\tilde{S}(0,x)$, where
\be\label{tilders}\tilde{S}(0,.)=\mathcal C^{\varphi(\rho_r),\varphi(\rho_l)}\tilde{R}(0,.)
\ee
Following \eqref{adset_left_2}--\eqref{adset_right_2}, 
it is trivial that $\tilde{G}(t,0^+)\in\bln^+[\tilde{G}(t,0^+)]$ and $\tilde{G}(t,1^-)\in\bln^-[\tilde{G}(t,1^-)]$. On the other hand,
using 1), b)--c) of Lemma \ref{explicit_tildeg}, one can see that
if $\rho_r<\rho^*$, then we also have $\tilde{G}(t,0^+)\in\bln^+[\varphi(\tilde{G}(t,0^+))]$. Likewise if $\rho_l>\rho^*$ we also have
$\tilde{G}(t,1^-)\in\bln^-[\varphi(\tilde{G}(t,1^-))]$. Choosing the suitable boundary condition in each case,
Theorem \ref{th_jvg} yields $\tilde{G}(t,.)=\partial_x\tilde{S}(t,.)$, where
\be\label{variational_bln_tildeg}
\tilde{S}(t,x) :=  \sup_{y\in[0,1],\gamma_.\in C(y,x,t)}\left\{
S(0,y)-\int_0^t 1_{\{0,1\}}(\gamma_s)f[\tilde{G}(s,\gamma_s)]ds + \int_0^t 1_{(0,1)}(\gamma_s)f^*(\dot{\gamma}_s)ds\right\}
\ee
On the other hand, by definition, $\tilde{G}(t,x)$ is the entropy solution to \eqref{conservation_law} on $(0,+\infty)\times\R$ with initial
datum \eqref{barg}. It is thus also given by the standard Lax-Hopf formula, hence for $x\in[0,1]$,
\begin{eqnarray}
\tilde{S}(t,x) & := &  \sup_{y\in[0,1],\gamma_.\in \bar{C}(y,x,t)}\left\{
\bar{S}_0(y)+ \int_0^t 1_{(0,1)}(\gamma_s)f^*(\dot{\gamma}_s)d
\right\}\nonumber\\
& = &  \sup_{y\in[0,1]} \left\{
\bar{S}_0(y)+t f^{*}\left(\frac{x-y}{t}\right)\right\}\label{variational_std_tildeg}
\end{eqnarray}
where $\bar{S}(0,x)=\int_0^x \bar{G}(0,y)dy$, and $\bar{C}(y,x,t)$ is the set of differentiable paths $\gamma_.:[0,t]\to\R$ such that $\gamma_0=y$, $\gamma_t=x$ and $f'(K)\leq\dot{\gamma}_s\leq f'(0)$ a.e. on $[0,t]$. \eqref{variational_std_tildeg} has a unique minimizer $y=y(x,t)$ given by \eqref{std_char_1}, and satisfying \eqref{std_char_2}.
Applying again Theorem \ref{th_jvg} to $\tilde{\rho}(.,.)$ defined in Theorem \ref{theorem_3}, 
we have $\tilde{\rho}(t,x)=\partial_x \tilde{R}(t,x)$, with
\begin{eqnarray}
\tilde{R}(t,x)& := & \sup_{y\in[0,1],\gamma_.\in C(y,x,t)}\left\{
\tilde{R}(0,y)-\int_0^t 1_{\{0\}}(\gamma_s)f[\min(\varphi[\tilde{G}(s,0^+)],\rho^*)]ds\right.\nonumber\\
& - &  \int_0^t 1_{\{1\}}(\gamma_s)f[\max(\varphi[\tilde{G}(s,1^-)],\rho^*)]ds\nonumber\\
& + & \left.\int_0^t 1_{(0,1)}(\gamma_s)f^*(\dot{\gamma}_s)ds\right\}\label{variational_bln_tilderho}
\end{eqnarray}
We are now ready to establish the following properties, from which the result of this subsection will follow easily.
\begin{lemma}
\label{lemma_existence}
For every $t>0$, \eqref{later_tildeg} holds, or equivalently,
\be\label{orequivalently}\tilde{S}(t,.)=\mathcal C^{\varphi(\rho_r),\varphi(\rho_l)}\tilde{R}(t,.)\ee
\end{lemma}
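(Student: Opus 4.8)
It is cleaner to prove the reformulation \eqref{orequivalently}. The plan is to pass to the natural extensions on $\R$ and use the Lax--Hopf representations of Theorem \ref{th_jvg}. First I would record the two objects to be compared. Extend $\tilde R(t,\cdot)$ to $\bar R(t,\cdot)$ on $\R$ by the affine pieces of slopes $\varphi(\rho_r)$ on $(-\infty,0)$ and $\varphi(\rho_l)$ on $(1,+\infty)$; a concave function on $\R$ that dominates $\bar R(t,\cdot)$ necessarily has maximal slope $\le\varphi(\rho_r)$ and minimal slope $\ge\varphi(\rho_l)$, so $\mathcal{C}^{\varphi(\rho_r),\varphi(\rho_l)}\tilde R(t,\cdot)=\widehat{\bar R(t,\cdot)}$ on $[0,1]$, with $\widehat{\ \cdot\ }$ the untruncated concave hull on $\R$. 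On the other side, $\bar G(0,\cdot)$ in \eqref{barg} is nonincreasing (by $\rho_l\ge F_{\rho(.)}(0^+)\ge\rho_r\ge F_{\rho(.)}(1^-)$ and \eqref{initial_tildeg}), so $\bar S_0:=\int_0^{\cdot}\bar G(0,y)\,dy$ is concave on $\R$, its restriction to $[0,1]$ is $\mathcal{C}^{\varphi(\rho_r),\varphi(\rho_l)}\tilde R(0,\cdot)$ by \eqref{tilders}, and by Hopf's formula (cf. \eqref{variational_std_tildeg}) the potential $\bar S(t,\cdot)$ of $\tilde G(t,\cdot)$ on $\R$ is the sup--convolution of $\bar S_0$ with $tf^*(\cdot/t)$; it is concave, satisfies $\bar S(t,\cdot)|_{[0,1]}=\tilde S(t,\cdot)$, and has slopes in $[\varphi(\rho_l),\varphi(\rho_r)]$ by Lemma \ref{explicit_tildeg}(0),(1a).

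For the inequality $\tilde S(t,\cdot)\ge\mathcal{C}^{\varphi(\rho_r),\varphi(\rho_l)}\tilde R(t,\cdot)$ I would first get $\tilde S(t,\cdot)\ge\tilde R(t,\cdot)$ on $[0,1]$ by comparing \eqref{variational_bln_tildeg} and \eqref{variational_bln_tilderho} term by term: the two suprema run over the same path family $C(y,x,t)$ with the same transport term $\int 1_{(0,1)}(\gamma_s)f^*(\dot\gamma_s)\,ds$; one has $\tilde S(0,\cdot)\ge\tilde R(0,\cdot)$ since $\tilde S(0,\cdot)$ is a concave majorant of $\tilde R(0,\cdot)$; and the boundary weights obey $f[\tilde G(s,0^+)]\le f[\min(\varphi[\tilde G(s,0^+)],\rho^*)]$ and $f[\tilde G(s,1^-)]\le f[\max(\varphi[\tilde G(s,1^-)],\rho^*)]$, because $f\circ\varphi=f$ (see \eqref{symmetry_f}) and truncating an argument at $\rho^*$ does not decrease $f$. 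Hence $\tilde S(t,\cdot)\ge\tilde R(t,\cdot)$; extending by the affine slopes and using Lemma \ref{explicit_tildeg}(1a) gives $\bar S(t,\cdot)\ge\bar R(t,\cdot)$ on $\R$, and since $\bar S(t,\cdot)$ is concave it dominates $\widehat{\bar R(t,\cdot)}$, which restricted to $[0,1]$ is the asserted bound.

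For the reverse inequality, which is the heart of the matter, I would work with the concave support functions $g_*(\theta):=\inf_{x\in[0,1]}[\theta x-g(x)]$. By the Legendre description of the truncated hull recalled after \eqref{optimal_F} (in its concave form), $\bigl(\mathcal{C}^{\varphi(\rho_r),\varphi(\rho_l)}\tilde R(t,\cdot)\bigr)_*=\tilde R(t,\cdot)_*$ on $[\varphi(\rho_l),\varphi(\rho_r)]$, so \eqref{orequivalently} is equivalent to $\tilde S(t,\cdot)_*(\theta)=\tilde R(t,\cdot)_*(\theta)$ for $\theta\in[\varphi(\rho_l),\varphi(\rho_r)]$, the $\le$ being already contained in $\tilde S(t,\cdot)\ge\tilde R(t,\cdot)$. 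The affine slopes of $\bar S(t,\cdot)$ push the relevant infimum into $[0,1]$, so $\tilde S(t,\cdot)_*(\theta)=\bar S(t,\cdot)_*(\theta)$; using that the concave conjugate of a sup--convolution is the sum of the conjugates, together with $\bigl(tf^*(\cdot/t)\bigr)_*(\theta)=tf(\theta)$ and $(\bar S_0)_*(\theta)=\tilde R(0,\cdot)_*(\theta)$ (again by the Legendre description and \eqref{tilders}), one finds $\tilde S(t,\cdot)_*(\theta)=\tilde R(0,\cdot)_*(\theta)+tf(\theta)$ whenever $\tilde G(t,\cdot)$ still attains the value $\theta$ inside $(0,1)$, i.e. $\theta\in[\tilde G(t,1^-),\tilde G(t,0^+)]$, while for $\theta$ above (resp. below) this window the infimum sits at $x=0$ (resp. $x=1$), giving $\tilde S(t,\cdot)_*(\theta)=-\tilde S(t,0)$ (resp. $\theta-\tilde S(t,1)$). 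It then remains to bound $\tilde R(t,\cdot)_*(\theta)$ from above by these values. I would do this by inserting a well-chosen competitor into the path formula \eqref{variational_bln_tilderho} rewritten as $\tilde R(t,\cdot)_*(\theta)=\inf_{\gamma}\bigl\{\theta\gamma_t-\tilde R(0,\gamma_0)+\int_0^t[1_{\{0\}}(\gamma_s)f(\cdot)+1_{\{1\}}(\gamma_s)f(\cdot)-1_{(0,1)}(\gamma_s)f^*(\dot\gamma_s)]\,ds\bigr\}$ over admissible $\gamma:[0,t]\to[0,1]$: for $\theta$ in the window, the straight characteristic $\gamma_s=y^*+sf'(\theta)$ issued from a minimizer $y^*$ of $y\mapsto\theta y-\tilde R(0,y)$ stays in $(0,1)$ up to time $t$ and yields exactly $\tilde R(0,\cdot)_*(\theta)+tf(\theta)$; for $\theta$ above the window one runs this characteristic until it is absorbed at $x=0$ and rests there, using $f\circ\varphi=f$ and Lemma \ref{explicit_tildeg}(1c) (then $\rho_r<\rho^*$, so the imposed weight equals $f[\tilde G(s,0^+)]$) and the characteristic description \eqref{std_char_1}--\eqref{std_char_2} of $\tilde G$ to see that on the resting phase $f[\tilde G(s,0^+)]\le f(\theta)$; the symmetric construction at $x=1$ handles $\theta$ below the window. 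Together with $\tilde S_*\le\tilde R_*$ this forces equality, hence \eqref{orequivalently}.

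The step I expect to be the main obstacle is exactly this boundary-cost estimate: verifying that the competitor's boundary phase never costs more than $f(\theta)$ per unit time. This needs a careful tracking, via \eqref{std_char_1}--\eqref{std_char_2}, of the time at which the characteristic of $\tilde G$ carrying the value $\theta$ reaches the boundary, combined with the precise matching between the BLN data $\mathcal{E}^{+}[\varphi(\tilde G(t,0^+))]$, $\mathcal{E}^{-}[\varphi(\tilde G(t,1^-))]$ imposed on $\tilde\rho$ in \eqref{bln_f} and the three regimes for $\tilde G$ in Lemma \ref{explicit_tildeg}(1b)--(1c); it must be organised so that the transition from the regime where $\theta$ is still carried inside $(0,1)$ to the relaxed regime, where the infima defining $\tilde S_*$ and $\tilde R_*$ are attained at the same endpoint, is handled consistently.
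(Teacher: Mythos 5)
Your proposal is a genuinely different organization of the argument: you dualize the statement, reducing \eqref{orequivalently} to the identity $\tilde S(t,\cdot)_*=\tilde R(t,\cdot)_*$ of concave conjugates on the truncation window $[\varphi(\rho_l),\varphi(\rho_r)]$, and then bound $\tilde R(t,\cdot)_*$ by explicit competitor paths in \eqref{variational_bln_tilderho}. The paper instead works in the primal variable $x$ and verifies three properties: the domination $\tilde R(t,\cdot)\le\tilde S(t,\cdot)$ with the slope constraints (your first inequality, obtained by exactly the same term-by-term comparison of \eqref{variational_bln_tildeg} and \eqref{variational_bln_tilderho}); contact $\tilde S(t,x)=\tilde R(t,x)$ for $x$ outside the affine set of $\tilde S(t,\cdot)$, proved by running the \emph{backward} characteristic from $(t,x)$, whose foot $y(x,t)$ is then automatically a contact point of $\tilde R(0,\cdot)$ and $\tilde S(0,\cdot)$ by \eqref{tilders}; and boundary contact $\tilde S(t,0^+)=\tilde R(t,0^+)$ (resp.\ at $x=1$) whenever the slope constraint is inactive there, proved by induction on the boundary activation times $t_0,t_1$. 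Your backward-in-$\theta$ and the paper's backward-in-$x$ characteristic arguments are essentially interchangeable for the interior part, though the paper's direction avoids the edge cases you would face in choosing a common minimizer $y^*$ of $\theta y-\tilde R(0,y)$ and $\theta y-\tilde S(0,y)$ whose forward characteristic stays in $[0,1]$ up to time $t$.

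The step you flag as the main obstacle is, however, formulated in a way that fails. For $\theta$ above the window we are in the regime $\theta>\tilde G(s,0^+)\ge\rho^*$ (Lemma \ref{explicit_tildeg}, 1c), where $f$ is \emph{decreasing}; since the boundary trace $\tilde G(s,0^+)$ decreases below $\theta$ after the absorption time $s_0$ of the $\theta$-characteristic, one has $f[\tilde G(s,0^+)]\ge f(\theta)$ on the resting phase, i.e.\ the inequality you propose to verify goes the wrong way. The point is that no such comparison with $f(\theta)$ is needed: the resting competitor contributes exactly $-\int_{s_0}^t f[\tilde G(s,0^+)]\,ds$ to $\tilde R(t,0^+)$ (using $f\circ\varphi=f$ and $\tilde G(s,0^+)\ge\rho^*$ to identify the boundary weight in \eqref{variational_bln_tilderho} with $f[\tilde G(s,0^+)]$), while integrating the Hamilton--Jacobi equation $\partial_t\tilde S+f(\partial_x\tilde S)=0$ at $x=0$ shows that $\tilde S(t,0^+)-\tilde S(s_0,0^+)$ equals the \emph{same} integral. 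The boundary costs therefore cancel identically between the two sides, and the whole burden is shifted to the contact $\tilde R(s_0,0^+)=\tilde S(s_0,0^+)$ at the activation time, which the paper extracts from the non-degeneracy \eqref{noflat_r} of $\tilde G$ near the boundary at that instant. With your boundary estimate replaced by this exact cancellation (and the activation-time contact supplied), your dual scheme closes; as written, it does not.
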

\begin{corollary}
\label{corollary_existence}
For a.e. $t<0$, equality holds in \eqref{inequality_boundary_1}--\eqref{inequality_boundary_2}, with
$\rho^+=\rho(t,0^+)$, $\rho^-=\rho(t,1^-)$, $F^+=F_t(0^+)$, $F^-=F_t(1^-)$.
\end{corollary}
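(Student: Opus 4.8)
The plan is to deduce Corollary~\ref{corollary_existence} from Lemma~\ref{lemma_existence} together with the boundary conditions built into the construction of Theorem~\ref{theorem_3}. Since $\rho(.,.)$ is that constructed path, its space-time reversal $\tilde{\rho}(.,.)$ is, by definition, the entropy solution of \eqref{conservation_law} on $(0,+\infty)\times(0,1)$ with initial datum \eqref{initial_tilderho} and BLN data \eqref{bln_f}. Writing $s=-t$ and using \eqref{later_tildeg} --- equivalently \eqref{orequivalently}, which is the statement of Lemma~\ref{lemma_existence} --- I get $F^+:=F_t(0^+)=\varphi[\tilde{G}(s,1^-)]$ and $F^-:=F_t(1^-)=\varphi[\tilde{G}(s,0^+)]$, hence $\varphi(\tilde{G}(s,1^-))=F^+$ and $\varphi(\tilde{G}(s,0^+))=F^-$. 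Reversing space and time in \eqref{bln_f} therefore gives, for a.e.\ $t<0$, that $\rho^+=\rho(t,0^+)=\tilde{\rho}(s,1^-)\in\bln^-(F^+)$ and $\rho^-=\rho(t,1^-)=\tilde{\rho}(s,0^+)\in\bln^+(F^-)$.

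Next I would split cases according to the position of $\rho_l,\rho_r$ relative to $\rho^*$; I describe the left boundary, the right one being entirely symmetric. By Lemma~\ref{explicit_tildeg}, $\varphi(\rho_l)\le\tilde{G}(s,\cdot)\le\varphi(\rho_r)$, so $\rho_r\le F^\pm\le\rho_l$, and moreover $F^+=\rho_l$ when $\rho_l\le\rho^*$ while $\rho^*\le F^+\le\rho_l$ when $\rho_l>\rho^*$. If $\rho_l\le\rho^*$ then $F^+=\rho_l$, so \eqref{inequality_boundary_1} reads $g(\rho^+,\rho_l)\le i^l(\rho^+,\rho_l)$, which by (ii) of Lemma~\ref{inequalities} is an equality exactly when $\rho^+\in\bln^-(\rho_l)=\bln^-(F^+)$ --- true by the previous paragraph. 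If $\rho_l>\rho^*$ then $\rho_l\ge F^+\ge\rho^*$, so Lemma~\ref{lemma_boundaries} applies and equality in \eqref{inequality_boundary_1} holds iff either $F^+=\rho_l$ (and then $\rho^+\in\bln^-(\rho_l)=\bln^-(F^+)$ again does it), or $F^+<\rho_l$ and $\rho^+\in\{F^+,\varphi(F^+)\}$. The corresponding discussion at the right boundary uses $\bln^+(F^-)$, (ii) of Lemma~\ref{inequalities} when $\rho_r>\rho^*$, and Lemma~\ref{lemma_boundaries} when $\rho_r<\rho^*$.

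It remains to treat the case $\rho_l>\rho^*$, $F^+<\rho_l$ (i.e.\ $\tilde{G}(s,1^-)\in(\varphi(\rho_l),\rho^*]$), where, since $F^+\ge\rho^*$ and $\bln^-(F^+)=\{F^+\}\cup[0,\rho^*]$ by \eqref{adset_right_2}, I must upgrade $\rho^+\in\bln^-(F^+)$ to $\rho^+\in\{F^+,\varphi(F^+)\}=\{\varphi[\tilde{G}(s,1^-)],\tilde{G}(s,1^-)\}$; concretely, as $F^+$ is the only element of $\bln^-(F^+)$ that is $\ge\rho^*$, it suffices to show $\rho^+=\tilde{G}(s,1^-)$ whenever $\rho^+<\rho^*$. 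Here I would use the envelope identity of Lemma~\ref{lemma_existence}, namely $\tilde{G}(s,\cdot)=\partial_x[\mathcal{C}^{\varphi(\rho_r),\varphi(\rho_l)}\tilde{R}(s,\cdot)]$ with $\tilde{R}(s,x)=\int_0^x\tilde{\rho}(s,y)\,dy$: because $\tilde{G}(s,1^-)>\varphi(\rho_l)$ the lower truncation is inactive near $x=1$, so there the hull is the free concave hull of $\tilde{R}(s,\cdot)$ (which in particular touches $\tilde{R}(s,\cdot)$ at $x=1$, giving $\int_x^1\tilde{G}(s,\cdot)\le\int_x^1\tilde{\rho}(s,\cdot)$ near $1$ and hence $\tilde{G}(s,1^-)\le\rho^+$); and when in addition $\rho^+<\rho^*$ the characteristic of $\tilde{\rho}$ at $(s,1)$ leaves the domain, so no boundary layer is present at the right boundary, $\tilde{\rho}(s,\cdot)$ is locally a nonincreasing rarefaction profile or a constant, whence $\tilde{R}(s,\cdot)$ is concave and coincides with its own hull in a left neighbourhood of $1$, forcing $\rho^+=\tilde{G}(s,1^-)$. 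The absence of a boundary layer and this local structure I would extract from the Hopf--Lax formula \eqref{variational_bln_tilderho} for $\tilde{\rho}$, whose right-boundary weight equals $f(\max(F^+,\rho^*))=f(F^+)=f(\tilde{G}(s,1^-))$; the right boundary is handled symmetrically with $\varphi(F^-)$, $\bln^+$ in place of $\varphi(F^+)$, $\bln^-$. This boundary-trace identification --- reconciling the genuine trace of the Hopf--Lax solution $\tilde{\rho}$ at an endpoint with the derivative of the truncated concave hull $\tilde{G}$, and in particular excluding a shock approaching the boundary whose hull-chord would desynchronise the two --- is the one non-routine step; the remaining content is bookkeeping with \eqref{adset_left_2}--\eqref{adset_right_2}, \eqref{left_1}--\eqref{right_2} and the lemmas invoked above.
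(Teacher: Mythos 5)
Your first two paragraphs follow the paper's route exactly: from \eqref{reversal} and \eqref{bln_f} one gets $\rho(t,0^+)=\tilde\rho(-t,1^-)\in\bln^-(F^+)$ and $\rho(t,1^-)=\tilde\rho(-t,0^+)\in\bln^+(F^-)$; when $\rho_l\le\rho^*$, Lemma \ref{explicit_tildeg} gives $F^+=\rho_l$ and (ii) of Lemma \ref{inequalities} closes the case; when $\rho_l>\rho^*$ one is in the hypotheses of Lemma \ref{lemma_boundaries} and must verify its equality criterion. Up to there the proposal is sound and matches the paper.

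The divergence, and the gap, is in your third paragraph. The paper handles the case $\rho^*\le F^+<\rho_l$ by combining the one-sided inequality $\rho(t,0^+)=\tilde\rho(-t,1^-)\ge\tilde G(-t,1^-)=\varphi(F^+)$ (which you also derive from the contact of the truncated concave hull at $x=1$) with the explicit set $\bln^-(F^+)=\{F^+\}\cup[0,\rho^*]$, concluding $\rho(t,0^+)\in\{F^+,\varphi(F^+)\}$ and then invoking 2) of Lemma \ref{lemma_boundaries}. You rightly notice that this intersection still leaves the interval $[\varphi(F^+),\rho^*]$ to be ruled out, but your substitute argument does not rule it out. Two steps fail. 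First, the claim that, because characteristics exit at $x=1$ when $\rho^+<\rho^*$, ``$\tilde\rho(s,\cdot)$ is locally a nonincreasing rarefaction profile or a constant'' has no basis: absence of a boundary layer only says the trace is dictated by the interior, and the interior entropy solution can carry shocks arbitrarily close to $x=1$, so no local monotonicity is available. Second, and independently, even if $\tilde R(s,\cdot)$ were concave on a left neighbourhood of $1$, it would not follow that it coincides there with its global truncated concave hull: the hull may be a chord over a long interval ending at $x=1$ and touching $\tilde R$ only at its endpoints, in which case $\tilde G(s,1^-)$ is the chord slope while $\tilde\rho(s,1^-)=\partial_x\tilde R(s,1^-)$ can be strictly larger. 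So the identification $\rho^+=\tilde G(s,1^-)$, which is precisely what is needed to apply Lemma \ref{lemma_boundaries}, is not established; the exclusion of traces in $(\varphi(F^+),\rho^*]$ has to come from the structural information already assembled in the proof of Lemma \ref{lemma_existence} (its properties (2)--(3) and the comparison of the Hopf--Lax formulas \eqref{variational_bln_tildeg} and \eqref{variational_bln_tilderho}) together with the BLN set \eqref{adset_right_2}, not from a generic no-boundary-layer heuristic. As written, the third paragraph is the place where your proof would not survive scrutiny.
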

\begin{proof}{corollary}{corollary_existence}
We treat \eqref{inequality_boundary_1}, the argument being similar for \eqref{inequality_boundary_2}.
Assume first that $\rho_l\leq\rho^*$, then 1) of Lemma \ref{explicit_tildeg} implies $\varphi(\tilde{G}(-t,1^-))=\rho_l$. The result then follows from ${\rho}(t,0^+)=\tilde{\rho}(-t,1^-)\in\bln^-[\varphi(\tilde{G}(-t,1^-))]$ and ii) of Lemma \ref{inequalities}.
Assume now $\rho_l>\rho^*$. If $\tilde{G}(-t,1^-)=\varphi(\rho_l)$, we argue as above.  If $\tilde{G}(-t,1^-)>\varphi(\rho_l)$,
Lemma \ref{lemma_existence} implies
$\rho(t,0^+)=\tilde{\rho}(-t,1^-)\geq\tilde{G}(-t,1^-)$.
On the other hand, by 1), b) of Lemma \ref{explicit_tildeg}, $\varphi[\tilde{G}(-t,1^-)]\geq\rho^*$.
This and ${\rho}(t,0^+)=\tilde{\rho}(-t,1^-)\in\bln^-[\varphi(\tilde{G}(-t,1^-))]$ imply (see \eqref{adset_left})
${\rho}(t,0^+)\in\{\varphi(\tilde{G}(-t,1^-)),\tilde{G}(-t,1^-)\}=\{F_t(0^+),\varphi(F_t(0^+))\}$. The result then follows from 2) of lemma \ref{lemma_boundaries}.
\end{proof}
\mbox{}\\
\begin{proof}{lemma}{lemma_existence}
We check properties (1)--(3) below, which imply the result.\\ \\
1) $\tilde{S}(t,.)$ is a concave function, $\varphi(\rho_r)\geq\partial_x\tilde{S}(t,0^+)\geq \partial_x\tilde{S}(t,1^-)\geq\varphi(\rho_l)$, and
\be\label{condition_a}\tilde{R}(t,x)\leq\tilde{S}(t,x)\ee
The first two properties follow from $\tilde{G}(t,.)=\partial_x\tilde{G}(t,.)$ and 0) of Lemma \ref{explicit_tildeg}. \eqref{condition_a} follows from
 \eqref{variational_bln_tildeg}--\eqref{variational_bln_tilderho}, $\tilde{R}(0,.)\leq\tilde{S}(0,.)$ (which is implied by  \eqref{tilders}), and the inequalities
 (cf. \eqref{symmetry_f})
$$
f[\min(\varphi[\tilde{G}(s,0^+)],\rho^*)]\geq f(\tilde{G}(s,0^+)),\quad
f[\max(\varphi[\tilde{G}(s,1^-)],\rho^*)]\geq f(\tilde{G}(s,1^-))
$$
(2) $\tilde{S}(t,.)=\tilde{R}(t,.)$ on $(0,1)\backslash I_t$, where $I_t$ denotes the union of all (relatively) open subintervals of $(0,1)$ on which $\tilde{S}(t,.)$ is affine. To prove this we write $I_t=(0,1)\cap\mathcal I_t$, where $\mathcal I_t$ denotes the union of all open subintervals of $\R$ on which $\tilde{S}(t,.)$ is affine. By \eqref{std_char_1}--\eqref{std_char_2}),
\be\label{transport_linear}
\mathcal I_t=\{x\in\R:\,y(x,t)\in \mathcal I_0\}
\ee
Assume now that $x\in(0,1)\backslash I_t$.
We first claim that $y=y(x,t)\in[0,1]\backslash I_0$. Indeed, suppose e.g. that $y<0$, then \eqref{barg} and \eqref{std_char_1}--\eqref{std_char_2}
imply that $x=y+t f'(\varphi(\rho_l))$, hence $\rho_l>\rho^*$, and $\tilde{G}(t,.)=\varphi(\rho_l)$ in a neighborhood of $x$, which would imply $x\in I_t$.
A similar argument holds if we suppose $y>1$. From $y\in[0,1]\backslash I_0$ and
\eqref{tilders}, it follows that
$\tilde{S}(0,y)=\tilde{R}(0,y)$. The maximizing path in \eqref{variational_std_tildeg} is $\gamma_s=y+sf'(\rho)$, with $\rho=\tilde{G}(t,x)$ given by $\eqref{std_char_2}$.
This path lies in $C(y,x,t)$, thus the same $y$ and $\gamma_.$ also produce the maximum in \eqref{variational_bln_tildeg}. Since $\gamma_.$ does not see the boundaries (except possibly at time $0$), it produces the same value in \eqref{variational_bln_tildeg} and \eqref{variational_bln_tilderho}.
Thus, in view of \eqref{condition_a}, $y$ and $\gamma_.$ must achieve the supremum also in \eqref{variational_bln_tilderho}, and
$\tilde{R}(t,x)=\tilde{S}(t,x)$.\\ \\
(3) (a) If $\partial_x\tilde{S}(t,0^+)<\varphi(\rho_r)$, then $\tilde{S}(t,0^+)=\tilde{R}(t,0^+)$, and
(b) If $\partial_x\tilde{S}(t,1^-)>\varphi(\rho_r)$, then $\tilde{S}(t,1^-)=\tilde{R}(t,1^-)$. 
To prove this we set
%first note that,
%by 1) of Lemma \ref{explicit_tildeg}, $\tilde{G}(t,0^+)<\varphi(\rho_r)$ for some $t>0$ implies  $\rho_r<\rho^*$;
%$\tilde{G}(t,1^-)>\varphi(\rho_l)$ for some $t>0$ implies  $\rho_l>\rho^*$.
%%
%Set
%
\begin{eqnarray*}
y_0 & := & \sup\{x\geq 0:\,\bar{G}(0,x^-)=\varphi(\rho_r)\}\\
y_1 & := & \inf\{x\leq 1:\,\bar{G}(0,x^+)=\varphi(\rho_l)\}
\end{eqnarray*}
Thus
$\tilde{G}(0,y_0-)=\varphi(\rho_r)\geq\tilde{G}(0,y_0+)$ and $\tilde{G}(0,y_1-)\geq\tilde{G}(0,y_1+)=\varphi(\rho_l)$.
Let $t_0=y_0/f'(\varphi(\rho_r))^-$ and $t_1=(1-y_1)/f'(\varphi(\rho_l))^+$, with the convention $0/0=+\infty$. 
Hence $t_0=+\infty$ iff. $\rho_r\geq\rho^*$, and $t_1=+\infty$ iff. $\rho_l\leq\rho^*$.
One can see from 
%\eqref{variational_std_tildeg} and 
%
\eqref{std_char_1}--\eqref{std_char_2} that
%
%By condition b) of Lemma \ref{lemma_envelope}, $y_0\in(0,1)\backslash I_0$, hence
%$\tilde{S}(y_0,0)=\tilde{R}(y_0,0)$.
%
\begin{eqnarray*}
t_0 & := & \inf\{t>0:\,\tilde{G}(t,0^+)<\varphi(\rho_r)\}\\
t_1 & := & \inf\{t>0:\,\tilde{G}(t,1^-)>\varphi(\rho_l)\}
\end{eqnarray*}
and, if $t_0$ (resp. $t_1$) is finite
\begin{eqnarray}\label{noflat_r}
\tilde{G}(t_0,0^+) & = & \varphi(\rho_r)>\tilde{G}(t,x),\quad\forall x>0\\
\tilde{G}(t_0,0^+) & = & \varphi(\rho_l)<\tilde{G}(t,x),\quad\forall x>0\label{noflat_l}
\end{eqnarray}
In particular, we have nothing to check for $x=0$ and $t<t_0$, resp. $x=1$ and $t<t_1$. If $t_0=t_1=+\infty$ we are done. Otherwise, we assume e.g.
$t_0<t_1$. The above arguments imply that condition 3) is checked (since void) for $t\in(0,t_0)$. Thus for such $t$, \eqref{orequivalently} holds.
Since $\tilde{\rho}(.,.)$ and $\tilde{G}(.,.)$ lie in $C^0((0,+\infty),L^1(0,1))$, \eqref{orequivalently} holds also at $t=t_0$.
We claim that
\be
\label{contact_initial_1}\tilde{R}(t_0,0^+) = \tilde{S}(t_0,0^+)
%\label{contact_initial_2}\tilde{R}(t_1,1^-) & = & \tilde{S}(t_1,1^-)
\ee
Indeed, assume $\tilde{R}(t_0,0^+)<\tilde{S}(t_0,0^+)$. Then there exists $x_0>0$ such that
$\tilde{R}(t_0,x)<\tilde{S}(t_0,x)$ for $x\in(0,x_0)$. This implies $\tilde{S}(t_0,.)$ is affine, and thus $\tilde{G}(t_0,.)$ constant, on $(0,x_0)$, in contradiction
with \eqref{noflat_r}. 
%
%The argument for \eqref{contact_initial_2} is similar.
%
Let $t>t_0$. We consider \eqref{variational_bln_tilderho} with time origin at $t_0$ and initial condition $\tilde{R}(t_0,.)$.
Since $t_0<+\infty$ we have (by 1), b) of Lemma \ref{explicit_tildeg}) $\tilde{G}(t,0^+)\geq\rho^*$. Then
the special path $\gamma_s\equiv 0$ for $s\in[t_0,t]$ in \eqref{variational_bln_tilderho}
yields
$$\tilde{R}(t,0^+)\geq\tilde{R}(t_0,0^+)-\int_{t_0}^t f(\tilde{G}(s,0^+))ds$$
On the other hand, $\tilde{S}$ is the viscosity solution
of the Hamilton Jacobi equation
\be\label{hj}
\partial_t\tilde{S}(t,x)+f(\partial_x\tilde{S}(t,x))=0
\ee
Integrating \eqref{hj} on the space interval $(0,+\infty)$ and then on the time interval $[t_0,t]$, and using $\tilde{G}=\partial_x\tilde{S}$, we obtain
$$
\tilde{S}(t,0^+)-\tilde{S}(t_0,0^+)=-\int_{t_0}^t f(\tilde{G}(s,0^+))ds
$$
and thus (by \eqref{contact_initial_1}) $\tilde{S}(t,0^+)\leq\tilde{R}(t,0^+)$. With \eqref{condition_a}, this implies $\tilde{S}(t,0^+)=\tilde{R}(t,0^+)$ for all $t>t_0$. 
Thus condition 3) holds on $[t_0,t_1)$.
It follows that \eqref{orequivalently} holds on the time interval $[t_0,t_1)$. If $t_1=+\infty$ we are done. Otherwise, by continuity as above, \eqref{orequivalently}
holds at $t_1$. Then we proceed as above to show that \eqref{contact_initial_1} holds at time $t_1$, as well as
\be
\label{contact_initial_2}\tilde{R}(t_1,1^-) = \tilde{S}(t_1,1^-)
\ee
and obtain $\tilde{R}(t,1^-) = \tilde{S}(t,1^-)$ for all $t>t_1$, whence condition 3) and \eqref{orequivalently} on this interval.
\end{proof}
\mbox{}\\ \\
We now conclude the proof of \eqref{variational}--\eqref{finite_time} and Theorem \ref{theorem_3} as follows.\\ \\
i) $\rho(.,.)$ defined by \eqref{reversal} achieves
%
%the first
%
equality in \eqref{quasi_potential} and\eqref{variational}, and equality in \eqref{quasi_potential_finite} and \eqref{finite_time} outside the MC phase.
Indeed, we apply \eqref{computation_4} on $(T,0)$ with $T<0$.
By \eqref{reversal_prod}, $\mu^-[\rho(t,.)]=0$.  Lemma \ref{lemma_existence} implies $d^\circ(t)=0$ and  Corollary
\ref{corollary_existence} implies $\partial d(t)=0$.  Let $F(.,.)$  be defined by \eqref{later_tildeg}. By Lemma \ref{lemma_existence},  $F(t,.)$ achieves supremum in \eqref{def_S_4} for $\rho(t,.)$. Thus \eqref{computation_4} yields
\be\label{lower_bound_relax}
S[\rho(.)]-S[\rho(T,.)]= J_{(T,0)}[\rho(.,.)]
\ee
for every $T>0$, hence $S[\rho(.)]\geq J_{(-\infty,0)}[\rho(.,.)]$. \\ \\
%
%
%Besides we have (as $T\to-\infty$)
%
%$$
%S[\rho(T,.)]=S[\rho(.)]-I_{(T,0)}[\rho(.,.)]\to 0
%$$
%
ii)  Outside the MC phase, $\rho(.,.)\in{\mathcal R}_T[\rho(.)]$ for some $T>0$.
Indeed, by 2) of Lemma \ref{explicit_tildeg}, there exists $\tau>0$ such that,
for $t>\tau$, $\tilde{\rho}(.,.)$ is the entropy solution to \eqref{conservation_law} with Cauchy datum $\tilde{\rho}(\tau,.)$ and boundary conditions
$\tilde{\rho}(t,0^+)\in\bln^+(\rho_s)$, $\tilde{\rho}(t,1^-)\in\bln^-(\rho_s)$. By Theorem \ref{th_stat}, since $\rho_s\neq\rho^*$, $\tilde{\rho}(.,.)$ relaxes in finite time to the steady state $\rho_s$.\\ \\
iii) In the MC phase, we still have $\rho(.,.)\in{\mathcal R}[\rho(.)]$. 
%
%Indeed  $T\to-\infty$ in \eqref{lower_bound_relax} yields
%$\lim_{T\to-\infty}S[\rho(T,.)]=0$. Since $L^{\infty,K}(0,1)$ is compact in
%$\sigma(L^{\infty,K}(0,1),L^1(0,1))$, $S[.]$ is lower-semicontinuous and has a single $0$, $\rho(T,.)\to\rho_s$ follows in $\sigma(L^{\infty,K}(0,1),L^1(0,1))$ follows.
%A compensated compactness argument yields strong convergence.
%
%and the second equality in \eqref{variational} holds.
%
Indeed, let $\varepsilon<\min(\rho^*,K-\rho^*)$. By 2) of Lemma \ref{explicit_tildeg}, we can find $\tau>0$ such that
$\rho^*-\varepsilon\leq\varphi(\tilde{G}(t,0^+))\leq \rho^*+\varepsilon$ and $\rho^*-\varepsilon\leq\varphi(\tilde{G}(t,1^-))\leq \rho^*+\varepsilon$
for $t>\tau$.
For $t>\tau$, let  $\tilde{\rho}^+(t,.)$ be the entropy
solution to \eqref{conservation_law} with Cauchy datum $\tilde{\rho}^+(\tau,.)=\tilde{\rho}(\tau,.)$ at time $\tau$,
and boundary conditions $\tilde{\rho}^+(t,0^+)\in\bln^+(\rho^*+\varepsilon)$, $\tilde{\rho}^+(t,1^-)\in\bln^-(\rho^*+\varepsilon)$.
Similarly, define $\tilde{\rho}^-(t,.)$ with boundary data $\rho^*-\varepsilon$. By monotonicity of \eqref{conservation_law} with respect to boundary conditions,
$\tilde{\rho}^-(t,.)\leq\tilde{\rho}(t,.)\leq\tilde{\rho}^+(t,.)$.
By Theorem \ref{th_stat}, since $\rho^*\pm\varepsilon\neq\rho^*$, there exists $\tau'>\tau$ such that
$
\tilde{\rho}^\pm(t,.)\equiv\rho^*\pm\varepsilon
$
for all $t>\tau'$. Thus, for $t>\tau'$, $\rho^*-\varepsilon<\tilde{\rho}(t,.)<\rho^*+\varepsilon$.
This establishes pointwise (and $L^1$) convergence of $\tilde{\rho}(t,.)$ to $\rho_s=\rho^*$ as $t\to+\infty$, hence $\rho(.,.)\in\mathcal R[\rho(.)]$.\\ \\
\textbf{Acknowledgements.} I thank Thierry Bodineau and Mauro Mariani for useful discussions and indicating references \cite{bd1,bd2,bbmn,mar}.
This work was developed within ANR projet LHMSHE (ANR grant BLAN07-2184264), whose support is acknowledged. Part of this work was developed at IHP, whose hospitality is acknowledged, during  Fall 2008 IHP trimester ''Interacting particle systems, statistical mechanics and probability theory''.
%
%To emphasize dependence on boundary data, we now denote by $S^{\rho_l,\rho_r}$ and $I^{l,\rho_l}$, $I^{r,\rho_r}$ the static and dynamic functionals.
%Note that 
%%
%$$S^{\rho^*,\rho^*}[\rho(.)]=\int_0^1 h(\rho(x),\rho^*)dx$$
%%
%Since $\rho(t,.)\to\rho^*$ in $L^1(0,1)$ as $t\to-\infty$ $\tau''>\tau'$ such that $S^{\rho^*,\rho^*}[\rho(-\tau'',.)]<\varepsilon$.
%%
%We define $u(.,.)$ on $(-\infty,0)\times (0,1)$ as follows. For $t\geq -\tau''$, we set $u(t,.)=\rho(t,.)$. For $t<-\tau''$, $u(.,.)$
%as the minimizer in 
%%
%\be\label{eqmin}
%S^{\rho^*,\rho^*}[\rho(-\tau'',.)]=I^{\rho^*,\rho^*}_{(T,-\tau'')}[u(.,.)]
%\ee
%%
%In this case the construction of Theorem \ref{theorem_3} shows that $\tilde{G}(.,.)\equiv\rho^*$, hence $u(t,x)=\tilde{u}(-t,1-x)$, where
%$\tilde{u}(t,.)$ is the entropy solution to \eqref{conservation_law} with initial data $\tilde{u}(\tau'',.)=\tilde{\rho}(\tau'',.)$ and boundary data
%$\tilde{u}(t,0^+)\in\bln^+(\rho^*)$, $\tilde{u}(t,1^-)\in\bln^-(\rho^*)$. 
%
%
\begin{appendix}
\section{Variational expression for the dynamic action functional}\label{dynvar}
Recall from \cite{bbmn} that $(F,G)$ is called an entropy-flux sampler for \eqref{conservation_law}, where
$F(t,x,\rho)$ and $G(t,x,\rho)$ are functions of class $C^2$ with compact support in $(a,b)\times[0,1]\times[0,K]$,  iff., for each $(t,x)$, $(F(t,x,.),G(t,x,.))$ is an entropy-flux pair. We add boundary conditions \eqref{boundary_data} to this definition as follows. We say $(F,G)$ is boundary compatible iff., for $x\in\{0,1\}$, $(F(t,x,.),G(t,x,.))$ is a boundary entropy-flux pair in the sense of \cite{ott}, i.e.
\be\label{boundarypair_1}
F(t,x,\rho_b(x)) = \partial_\rho F(t,x,\rho_b(x))=0
\ee
\be\label{boundarypair_2}
G(t,x,\rho_b(x)) = \partial_\rho G(t,x,\rho_b(x))=0
\ee
where the boundary data $\rho_b(.)$ is given here by $\rho_b(.)=\rho_l 1_{\{0\}}+\rho_r 1_{\{1\}}$.
The $F$-sampled entropy production  of $\rho(.,.)\in L^{\infty,K}((a,b)\times(0,1))$ is defined (\cite{bbmn}) as
\be\label{sample_prod}
P_{F,\rho(.,.)}  :=  -\int_a^b\int_0^1\left[(\partial_t F)(t,x,\rho(t,x))+(\partial_x G)(t,x,\rho(t,x))\right]dxdt
\ee
Let $h$ be a given uniformly convex entropy.
%
%We denote by $\mathcal P_h(\alpha,\beta)$, where $(\alpha,\beta)\in(0,1)$, the set of entropy-flux samplers such that $\partial^2_\rho F(t,x,\rho)\leq h''(\rho)$, and whose support %is contained in $(a,b)\times(\alpha,\beta)\times[0,K]$. 
%
%
%Since $q_v(.,\rho_0)$ (cf. \eqref{relative_flux} and \eqref{kruzkov_entropy} )is $||f'||_\infty$-Lipschitz, $i^l(.,\rho_l)$ and $i^r(.,\rho_r)$ 
% given by \eqref{boundarycost_1}--\eqref{boundarycost_1}
%are Lipschitz with constants uniform over $\rho_l,\rho_r\in[0,K]$.%
%For $\rho(.,.)\in\qsol((a,b)\times(0,1))$, set
%%
%%
%\be\label{boundarycost_3}
%\mathcal I^l[\rho(.,.)]=\int_a^bi^l(\rho(t,0^+),\rho_l)dt,\quad
%%
%\mathcal I^r[\rho(.,.)]=\int_a^bi^r(\rho(t,1^-),\rho_r)dt
%\ee
%%
%%It follows from \cite{pan2} that, if $\rho(.,.)\in\qsol((a,b)\times(0,1))$, boundary traces of $q(\rho(t,x))$ exist in the sense \eqref{lonelims} for any entropy flux $q$. Hence 
%%\eqref{boundarycost_3} makes sense for such functions.
%%If we make the assumption that there is no interval on which $f$ is affine, 
%%
%Existence of boundary traces in the sense \eqref{lonelims} for $\rho(.,.)\in\qsol((a,b)\times(0,1))$ follows from \cite{pan2} and the nondegeneracy assumption on $f$.
%
%Then one has (\cite{bbmn})
%%
%\be\label{bbmn_vari}
%%
%\sup\{P_{F,\rho(.,.)}:\,F\in{\mathcal P}_h(\alpha,\beta)\} =
%\int_0^K\int_{(\alpha,\beta)\times(0,1)} h''(v)m^+(v;dt,dx)dv
%%
%\ee
%%
%For a boundary extension of this result, 
%
We denote by
$\overline{\mathcal P}_h(0,1)$ the set of boundary compatible entropy samplers $F$ such that $\partial^2_\rho F(t,x,\rho)\leq h''(\rho)$. 
\begin{theorem}\label{th_vari}
For every $\rho(.,.)\in L^{\infty,K}((a,b)\times(0,1))$,
\be
I_{(a,b)}[\rho(.,.)]=\sup\{P_{F,\rho(.,.)}:\,F\in\overline{\mathcal P}_h(0,1)\}\label{vari_2}
% & = &
%\int_0^K\int_{(a,b)\times(0,1)} h''(v)m^+(v;dt,dx)dv\\
%%
%& + & \mathcal I^l[\rho(.,.)]+\mathcal I^r[\rho(.,.)]\label{vari_2}
%
\ee
\end{theorem}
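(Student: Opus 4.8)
The plan is to establish the two inequalities between $I_{(a,b)}[\rho(\cdot,\cdot)]$ and $\sup\{P_{F,\rho(\cdot,\cdot)}:F\in\overline{\mathcal P}_h(0,1)\}$ separately. For the inequality $I_{(a,b)}\geq\sup_F P_{F,\rho}$, I would first treat the case $\rho(\cdot,\cdot)\in\emsol((a,b)\times(0,1))$, since otherwise $I_{(a,b)}=+\infty$ and there is nothing to prove. Given a boundary compatible sampler $F\in\overline{\mathcal P}_h(0,1)$, write $P_{F,\rho}$ using the entropy-measure structure: for each fixed $(t,x)$, the entropy $F(t,x,\cdot)$ decomposes via \eqref{kinetic}, so that the distributional identity $\mu_{F(t,x,\cdot)}[\rho]=\int_0^K \partial_\rho^2 F(t,x,v)\,m_\rho(v;dt,dx)\,dv$ controls the bulk contribution. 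The key point is that the boundary compatibility conditions \eqref{boundarypair_1}--\eqref{boundarypair_2} are exactly what is needed so that, upon integration by parts in \eqref{sample_prod}, the boundary traces at $x=0,1$ reassemble (after using \eqref{def_entropy_flux}, \eqref{symmetry_f}, \eqref{def_k}) into the boundary integrands $i^l,i^r$ of \eqref{boundarycost_1}--\eqref{boundarycost_2}; here one uses that a boundary entropy-flux pair in the sense of \cite{ott} with $\partial_\rho^2 F\leq h''$ is dominated by the Kru\v zkov-type boundary functionals. Combining the bulk bound $\int\partial_\rho^2 F\,dm\leq\int\partial_\rho^2 F\,dm^+\leq\int h''\,dm^+$ (using $\partial_\rho^2 F\leq h''$ and $\partial_\rho^2 F\geq$ some lower bound only where $m\geq 0$... more precisely $\int \partial_\rho^2 F\, dm \le \int h'' \, dm^+$ since $\partial_\rho^2 F\le h''$ and $m=m^+-m^-$ with $m^\pm\ge0$) with the boundary bound gives $P_{F,\rho}\leq I_{(a,b)}[\rho(\cdot,\cdot)]$.

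For the reverse inequality $I_{(a,b)}\leq\sup_F P_{F,\rho}$, the strategy is to exhibit a sequence $F_n\in\overline{\mathcal P}_h(0,1)$ with $P_{F_n,\rho}\to I_{(a,b)}[\rho(\cdot,\cdot)]$. When $I_{(a,b)}[\rho(\cdot,\cdot)]<+\infty$, hence $\rho(\cdot,\cdot)\in\emsol$, the natural candidate is built from the Kru\v zkov entropies: one wants $F$ to approximate, on the support of $m^+$, the full entropy $h$, while on the support of $m^-$ it should be locally affine in $\rho$ (so as not to lose mass), and near the boundary it must be adjusted to a boundary entropy-flux pair matching $\rho_l,\rho_r$. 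Concretely, I would use the Hahn--Banach / duality heuristic: the map $v\mapsto \mathrm{sgn}(m_\rho(v;\cdot))$ is measurable, and one mollifies $\int_0^K \eta_v(\cdot)\,h''(v)\,\mathbf 1_{\{m_\rho(v;\cdot)\geq 0\}}\,dv$-type expressions into $C^2$ samplers, using the representation \eqref{decomp_relative} and cutting off near the boundary with boundary-compatible corrections whose error is controlled by the $L^1$ boundary traces \eqref{lonelims}. The case $I_{(a,b)}[\rho(\cdot,\cdot)]=+\infty$ must be handled by showing that if $\rho(\cdot,\cdot)\notin\emsol$, then $\sup_F P_{F,\rho}=+\infty$: if $\rho$ is not even a weak solution one takes $F$ close to a multiple of $(\mathrm{id},f)$ (allowed since we may rescale within $\partial_\rho^2 F\leq h''$ by making $F$ supported where it does not matter — more carefully, one uses $F=\varepsilon$ times a large affine-in-$\rho$ sampler), while if $\mu_\eta[\rho]$ fails to be a measure for some $C^2$ entropy $\eta\leq$ const$\cdot h$, one localizes.

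The main obstacle I expect is the construction of the near-optimal sequence $F_n$ at the boundary: interior approximation of $m^+$ by $C^2$ samplers with $\partial_\rho^2 F\le h''$ is essentially the argument of \cite[Proposition 2.3]{bbmn} adapted to samplers, but forcing simultaneously (i) the boundary entropy-flux conditions \eqref{boundarypair_1}--\eqref{boundarypair_2} at $\rho_b$, (ii) the domination $\partial_\rho^2 F\leq h''$ globally, and (iii) recovery of the exact boundary cost $i^l,i^r$ rather than something smaller, is delicate — especially because the boundary traces of $\rho(\cdot,\cdot)$ are only known in the weak $L^1$ sense \eqref{lonelims} rather than pointwise, so the pairing of $F$ against the boundary trace must be done through averaging in the normal direction. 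A secondary technical point is verifying that $\overline{\mathcal P}_h(0,1)$, with compactly supported $C^2$ samplers, is rich enough to realize the supremum despite the compact-support restriction in $(a,b)$ when $a=-\infty$ or $b=+\infty$; this is handled by a straightforward exhaustion/monotone convergence argument once the finite-interval case is settled.
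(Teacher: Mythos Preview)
Your plan is essentially the same as the paper's proof: split into two inequalities, use the kinetic representation \eqref{kinetic} together with $\partial_\rho^2 F\leq h''$ and the boundary entropy-flux constraint to get $P_{F,\rho}\leq I_{(a,b)}$, handle $I_{(a,b)}=+\infty$ by testing against affine-in-$\rho$ and then localized samplers, and for the reverse inequality build a sequence of samplers that saturates $m^+$ in the bulk and the boundary cost at $x\in\{0,1\}$ via cutoff and regularization. The main obstacle you identify (making the boundary sampler recover $i^l,i^r$ while respecting $\partial_\rho^2 F\leq h''$, given only $L^1$ traces) is exactly the crux, and the paper resolves it with a concrete device you do not quite name: for fixed boundary trace value $\rho$, the optimal boundary entropy is the explicit $\eta^l_{\rho,\rho_l}(u)=\int_{\rho_l}^u(u-v)h''(v)\mathbf 1_{\{q_v(\rho,\rho_l)>0\}}\,dv$ (and its right analogue), which realizes $i^l(\rho,\rho_l)=\sup\{q(\rho):(\eta,q)\in\Phi(\rho_l),\,\eta''\leq h''\}$; the near-optimal sampler is then obtained by mollifying this in the trace variable $\rho\mapsto\bar\rho^\delta(t,0^+)$ and in $u$, glued via a spatial cutoff to an interior sampler with $\eta''(v)=h''(v)\psi^\varepsilon(t,x,v)$ approximating $\mathbf 1_{\{m(v;\cdot)\geq 0\}}$.

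One small correction: the boundary step does not use \eqref{symmetry_f} or \eqref{def_k}. Those pertain to the exclusion-like symmetry assumption, which is irrelevant to Theorem~\ref{th_vari}; the identity $i^l(\rho,\rho_l)=\sup_{(\eta,q)\in\Phi(\rho_l),\,\eta''\leq h''}q(\rho)$ follows directly from \eqref{decomp_relative} and \eqref{boundarycost_1}, with no structural hypothesis on $f$ beyond what is needed to define the Kru\v zkov fluxes.
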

%
%We point out that the r.h.s. of \eqref{vari_1}--\eqref{vari_2} is well-defined on $L^\infty((a,b)\times(0,1))$. Indeed, if the first term on the r.h.s. is finite, then
%$\rho(.,.)\in\qsol((a,b)\times(0,1))$, hence \eqref{vari_2} is well defined.\\ \\
%
%If we assume that $f$ is strictly concave, $f(0)=f(K)=0$, and $\varphi$ is the decreasing function satisfying \eqref{symmetry_f}, then simple computations show that $i^l(\rho,\rho_l)=i^l(\rho,\rho_r)$ and $i^r(\rho,\rho_r)=i^r(\rho,\rho_r)$, with $i^l$ given by \eqref{left_1}--\eqref{left_2} and $i^r$ by \eqref{right_1}--\eqref{right_2}.
%For more general $f$, \eqref{boundarycost_1}--\eqref{boundarycost_2} appear as a natural generalization of \eqref{left_1}--\eqref{right_2}.
%
\begin{corollary}\label{cor_comp}
for every $c\in [0,+\infty)$, $I_{(a,b)}^{-1}((-\infty,c])$ is compact with respect to the local $L^1$ topology on $L^{\infty,K}((a,b)\times(0,1))$.
\end{corollary}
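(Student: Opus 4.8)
The plan is to derive both halves of the statement from the variational formula of Theorem \ref{th_vari}. First I would prove lower semicontinuity. For a fixed boundary compatible entropy sampler $F\in\overline{\mathcal P}_h(0,1)$, the functional $\rho(.,.)\mapsto P_{F,\rho(.,.)}$ of \eqref{sample_prod} is continuous for the local $L^1$ topology: since $F,G$ are $C^2$ with support compact in $t$, the integrand in \eqref{sample_prod} is bounded and depends continuously on the value $\rho(t,x)$, uniformly in $(t,x)$; so if $\rho_n(.,.)\to\rho(.,.)$ in $L^1_{loc}$, then along any subsequence $\rho_n\to\rho$ a.e. and dominated convergence gives $P_{F,\rho_n(.,.)}\to P_{F,\rho(.,.)}$, hence the whole sequence converges. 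Therefore $I_{(a,b)}=\sup_F P_{F,.}$ is a supremum of continuous functionals, hence lower semicontinuous, and in particular $I_{(a,b)}^{-1}((-\infty,c])$ is closed.

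It then remains to show this level set is relatively compact in $L^1_{loc}$. Since $I^l_{(a,b)},I^r_{(a,b)}\geq 0$, the level set is contained in $\{I^0_{(a,b)}\leq c\}$, and since the local $L^1$ topology involves only compact subsets of the open rectangle, boundary effects are irrelevant here. I would extract entropy-production estimates as follows: given a $C^2$ convex entropy $\eta$ with $0\leq\eta''\leq h''$, flux $q$, and $\phi\in C^\infty_c((a,b)\times(0,1))$ with $0\leq\phi\leq 1$, the pair $(F,G):=(\phi\eta,\phi q)$ lies in $\overline{\mathcal P}_h(0,1)$ (boundary compatibility is automatic since $\phi$ vanishes near $x\in\{0,1\}$, and $\partial^2_\rho F=\phi\eta''\leq h''$), and \eqref{sample_prod} together with \eqref{def_entropy_flux} gives $P_{F,\rho(.,.)}=\langle\mu_\eta[\rho(.,.)],\phi\rangle$. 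Hence for $\rho(.,.)$ in the level set, $\langle\mu_\eta[\rho(.,.)],\phi\rangle\leq c$ for all such $\phi$, so $\mu_\eta[\rho(.,.)]$ is a Radon measure with $\mu^+_\eta[\rho(.,.)]((a,b)\times(0,1))\leq c$; combined with the uniform bound on $\eta(\rho),q(\rho)$ in $L^\infty$ (which bounds $\mu_\eta[\rho(.,.)]$ in $W^{-1,\infty}_{loc}$, so that the negative part is controlled by the positive part), this yields a bound on $\mu_\eta[\rho(.,.)]$ in the space of locally bounded Radon measures, uniform over the level set. By homogeneity this holds, up to a constant, for every $C^2$ convex entropy since $h''\geq c>0$.

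With these estimates, relative compactness follows by the compensated compactness argument of \cite{bbmn}: given a sequence $\rho_n(.,.)$ in $\{I^0_{(a,b)}\leq c\}$, I would extract a weak-$*$ limit together with an associated Young measure; applying the div-curl lemma to entropy--flux pairs, whose productions are bounded in $M_{loc}$ and in $W^{-1,\infty}_{loc}$ and hence precompact in $H^{-1}_{loc}$ by Murat's lemma, and using that $f$ is not affine on any subinterval of $[0,K]$, one gets that the Young measure reduces to a Dirac mass; thus $\rho_n(.,.)$ converges strongly in $L^1_{loc}$, and its limit lies in the level set by the lower semicontinuity already proved. Since $L^1_{loc}((a,b)\times(0,1))$ is metrizable, sequential compactness of the level set gives compactness.

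The main obstacle is precisely this last step — converting the entropy-production estimates into strong $L^1_{loc}$ precompactness — which is where the non-affineness assumption on $f$ is essential and which proceeds along the lines of \cite{bbmn} (see also the structure and compactness theory for quasi-solutions in \cite{pan2}).
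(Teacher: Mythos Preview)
Your proposal is correct and follows essentially the same route as the paper's proof: lower semicontinuity from the variational formula (each $P_{F,\cdot}$ continuous, hence the supremum is l.s.c.), and relative compactness via uniform bounds on entropy productions followed by the standard Tartar compensated compactness argument. The paper's presentation differs only in minor details: it cites Serre \cite[Proposition 9.2.2]{ser} rather than \cite{bbmn} for the compactness step, and it makes explicit the decomposition $\eta=\eta_1-\eta_2$ to extend the production bounds from convex to general $C^2$ entropies (needed so that Tartar's commutation relation holds for all entropy pairs, in particular for the $C^1$ pair $(id,f)$), whereas you allude to this via homogeneity and the div--curl lemma.
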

\begin{proof}{corollary}{cor_comp}
For each $F\in\overline{\mathcal P}_h(0,1)$,  $\rho(.,.)\mapsto P_{F,\rho}$ is continuous, thus $I_{(a,b)}$ is lower-semicontinuous.
We show that $(I^0_{(a,b)})^{-1}((-\infty,c])\supset (I_{(a,b)})^{-1}((-\infty,c])$ is relatively compact. Consider functions
$\eta\in C^2([0,K])$ not necessarily assumed convex, that we call entropy by extension. The entropy flux (by extension) is still defined by \eqref{def_entropy_flux}. 
We may extend \eqref{production} and \eqref{kinetic} to such functions  (write $\eta=\eta_1-\eta_2$, where $\eta_i$ are convex functions of class $C^2$). Let $\Omega$ be a bounded open subset of $(a,b)\times(0,1)$. Since $h''$ is uniformly convex, by \eqref{compare_bulk}, 
$\mu_\eta[\rho(.,.)](\Omega)$ is bounded on $(I^0_{(a,b)})^{-1}((-\infty,c])$. Then the compensated compactness argument following \cite[Proposition 9.2.2]{ser}
yields the desired result (notice that Tartar's equation \cite[(9.15)]{ser} is stable by uniform convergence of entropies, and thus holds for $C^1$ extended entropies
including $f$). 
\end{proof}
\mbox{}\\ \\
For the proof of Theorem \ref{th_vari}, we need the
\begin{lemma}\label{lemma_vari}
Let $\Phi(\rho_0)$ denote the set of entropy-flux pairs $(\eta,q)$ such that 
\be\label{boundarypair}\eta(\rho_0)=\eta'(\rho_0)=0,\quad q(\rho_0)=q'(\rho_0)=0\ee
Then
\begin{eqnarray}\label{sup_ent_1}
i^l(\rho,\rho_l) & = & \sup\{q(\rho):\,(\eta,q)\in\Phi(\rho_l),\,\eta''\leq h''\}\\
\label{sup_ent_2}
i^l(\rho,\rho_r) & = & \sup\{-q(\rho):\,(\eta,q)\in\Phi(\rho_r),\,\eta''\leq h''\}
\end{eqnarray}
The suprema in \eqref{sup_ent_1}--\eqref{sup_ent_2} are respectively achieved by the following entropies
$\eta^l_{\rho_l,\rho}$ and $\eta^r_{\rho_r,\rho}$:
\begin{eqnarray}
\label{optimal_ent_1}
\eta^l_{\rho,\rho_l}(u) & := & \int_{\rho_l}^u(u-v)h''(v)1_{(0,+\infty)}[q_v(\rho,\rho_l)]dv\\ 
\eta^r_{\rho,\rho_r}(u) & := & \int_{\rho_r}^u(u-v)h''(v)1_{(0,+\infty)}[-q_v(\rho,\rho_r)]dv
\label{optimal_ent_2}
\end{eqnarray}
\end{lemma}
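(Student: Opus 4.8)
The plan is to prove \eqref{sup_ent_1}--\eqref{sup_ent_2} together with the attainment claim, carrying out the argument for $i^l$ and $\eta^l_{\rho,\rho_l}$ in full; the statement for $i^r$ and $\eta^r_{\rho,\rho_r}$ is then obtained verbatim by replacing $q_v(\rho,\rho_r)$ with $-q_v(\rho,\rho_r)$ and $q(\rho)$ with $-q(\rho)$ throughout. First I would record a reduction: for $(\eta,q)\in\Phi(\rho_l)$, the boundary relations \eqref{boundarypair} and the definitions \eqref{relative_entropy}--\eqref{relative_flux} give $\eta(\rho)=\eta(\rho,\rho_l)$ and $q(\rho)=q(\rho,\rho_l)$ for every $\rho\in[0,K]$, since $\eta(\rho_l)=\eta'(\rho_l)=q(\rho_l)=0$ kills all the correction terms. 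Next I would extend the decomposition \eqref{decomp_relative} from $C^2$ entropies to entropies $\eta$ whose second derivative is a function $w\in L^\infty(0,K)$: such an $\eta$ with $\eta(\rho_l)=\eta'(\rho_l)=0$ is given by $\eta(u)=\int_{\rho_l}^u(u-v)\,w(v)\,dv$ (equivalently $\eta(u)=\int_{\rho_l}^u\big(\int_{\rho_l}^s w(v)\,dv\big)\,ds$ by Fubini, whence $\eta'(s)=\int_{\rho_l}^s w(v)\,dv$ and $\eta''=w$), and a direct computation of the integral remainder---or approximation of $w$ in $L^1(0,K)$ by continuous functions, using that $v\mapsto q_v(\rho,\rho_l)$ is bounded uniformly in $v$ by $4||f||_\infty$---yields $\eta(\rho)=\int_0^K w(v)\,\eta_v(\rho,\rho_l)\,dv$ and $q(\rho)=q(\rho,\rho_l)=\int_0^K w(v)\,q_v(\rho,\rho_l)\,dv$.

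Granting these, the inequality ``$\le$'' in \eqref{sup_ent_1} is immediate: for $(\eta,q)\in\Phi(\rho_l)$ with $\eta''\le h''$, convexity of $\eta$ forces $0\le\eta''\le h''$, hence for a.e.\ $v$ one has $\eta''(v)\,q_v(\rho,\rho_l)\le\eta''(v)\max[q_v(\rho,\rho_l),0]\le h''(v)\max[q_v(\rho,\rho_l),0]$; integrating over $v$ and using the decomposition of $q(\rho)$ gives $q(\rho)\le i^l(\rho,\rho_l)$. For the reverse inequality and the attainment, I would take $w(v):=h''(v)\,1_{(0,+\infty)}[q_v(\rho,\rho_l)]$, a measurable function of $v$ with $0\le w\le h''$, and check that $\eta^l_{\rho,\rho_l}(u)=\int_{\rho_l}^u(u-v)w(v)\,dv$ is, by the computation above, the $W^{2,\infty}$ entropy with $(\eta^l_{\rho,\rho_l})''=w$, $\eta^l_{\rho,\rho_l}(\rho_l)=(\eta^l_{\rho,\rho_l})'(\rho_l)=0$; it is convex and satisfies $(\eta^l_{\rho,\rho_l})''\le h''$. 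Choosing for its entropy flux $q^l$ the normalization $q^l(\rho_l)=0$ (so that $(q^l)'(\rho_l)=(\eta^l_{\rho,\rho_l})'(\rho_l)f'(\rho_l)=0$ automatically), we get $(\eta^l_{\rho,\rho_l},q^l)\in\Phi(\rho_l)$, and the decomposition of the flux gives $q^l(\rho)=\int_0^K h''(v)\,1_{(0,+\infty)}[q_v(\rho,\rho_l)]\,q_v(\rho,\rho_l)\,dv=\int_0^K h''(v)\max[q_v(\rho,\rho_l),0]\,dv=i^l(\rho,\rho_l)$, using $1_{(0,+\infty)}[t]\,t=\max[t,0]$. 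This proves \eqref{sup_ent_1} with the supremum attained at $\eta^l_{\rho,\rho_l}$; \eqref{sup_ent_2} (with $i^r(\rho,\rho_r)$ on the left) and attainment at $\eta^r_{\rho,\rho_r}$ follow by the sign changes indicated above, noting $1_{(0,+\infty)}[-t]\,t=-\max[-t,0]$.

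I expect the only step requiring attention to be the extension of the decomposition \eqref{decomp_relative} to entropies that are merely $W^{2,\infty}$---everything else is a direct computation. If one prefers to keep $\Phi(\rho_0)$ restricted to $C^2$ pairs, the value of the supremum is unaffected (approximate $\eta^l_{\rho,\rho_l}$ in $C^1$ by $C^2$ entropies with second derivatives in $[0,h'']$), but the maximum is then attained only in the limit rather than by a genuine member of $\Phi(\rho_0)$.
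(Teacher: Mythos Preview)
Your proof is correct and follows essentially the same approach as the paper: both use the decomposition \eqref{decomp_relative} to write $q(\rho)=\int_0^K \eta''(v)\,q_v(\rho,\rho_l)\,dv$ for $(\eta,q)\in\Phi(\rho_l)$, then observe that maximizing this integral subject to $0\le\eta''\le h''$ forces $\eta''(v)=h''(v)\,1_{(0,+\infty)}[q_v(\rho,\rho_l)]$. The paper's proof is a two-line sketch of exactly this; you have simply filled in the details it omits, in particular the observation that the maximizer is only $W^{2,\infty}$ rather than $C^2$, and your closing remark on how to handle this is apt.
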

\mbox{}\\
\begin{proof}{lemma}{lemma_vari}
The result follows from \eqref{decomp_relative}
applied to $(\eta,q)\in\Phi(\rho_0)$. To maximize $q(\rho)$ and achieve $i^l(\rho,\rho_l)$, one has to choose the  entropy $\eta\in\Phi(\rho_l)$ so that
$\eta''(v)=h''(v)1_{(0,+\infty)}[q_v(\rho,\rho_l)]$, and similarly for $i^r(\rho,\rho_r)$.
\end{proof}
\mbox{}\\
\begin{proof}{theorem}{th_vari}
\mbox{}\\ \\
{\em Step one.} We prove that the r.h.s. of \eqref{vari_2} if finite iff. $\rho(.,.)\in\emsol((a,b)\times(0,1))$, and is then dominated by the l.h.s.
Assume first that $\rho(.,.)\in\emsol((a,b)\times(0,1))$. Then $I_{(a,b)}[\rho(.,.)]$ is finite because $v\mapsto m(v;dt,dx)$ is bounded.
Using the generalized Green's formula (\cite{cf}), one obtains the following boundary extension of the formula established in \cite[Proposition 2.3]{bbmn}:
\begin{eqnarray}\label{kinetic_prod_1}
P_{F,\rho(.,.)} & = & \int_0^K\int\int_{(a,b)\times(0,1)} F''(t,x,v)m(v;dt,dx)dv\\
& + & \int_a^b[G(t,0,\rho(t,0^+))-G(t,1,\rho(t,1^-))]dt\label{kinetic_prod_2}
\end{eqnarray} 
Inequality $\geq$ in \eqref{vari_2} then follows from $F''(t,x,v)\leq h''(v)$, $G(t,0,.)\in\Phi_h(\rho_l)$, $G(t,1,.)\in\Phi_h(\rho_r)$ and Lemma \ref{lemma_vari}.
Assume now that the r.h.s. of \eqref{vari_2} is finite. We first show that $\rho(.,.)$ must be a weak solution of \eqref{conservation_law}. Assuming the contrary, there exists
$\varphi\in C^\infty_K((a,b)\times(0,1))$ such that $P_{F,\rho}\neq 0$,  with $F(t,x,\rho)=C\varphi(t,x)\rho$. Then the r.h.s. of \eqref{vari_2} dominates
the supremum over entropy samplers $C F(t,x,\rho)$, where $C\in\R$, which is $+\infty$.
Now, fix a $C^2$ entropy $\eta$ such that $\eta''\leq h''$. Let  $F(t,x,\rho)=\varphi(t,x)\eta(\rho)$, where $\varphi$ varies in $C^\infty_K((a,b)\times(0,1))$, $0\leq\varphi\leq 1$. Then $P_{F,\rho}=\mu_\eta[\rho(.,.)](\varphi)$ remains bounded over $\varphi$, i.e. $\mu_\eta[\rho(.,.)]\in\overline{M}((a,b)\times(0,1))$. Since $h''\geq c>0$, this is true for any $C^2$ entropy $\eta$. Thus $\rho(.,.)\in\emsol((a,b)\times(0,1))$.\\ \\
{\em Step two}. Assuming $\rho(.,.)\in\emsol((a,b)\times(0,1))$, we show that the r.h.s. of \eqref{vari_2} dominates the l.h.s. Let $\varepsilon>0$ and $\alpha_\varepsilon:[0,1]\to[0,1]$ be a smooth cutoff function such that $\alpha_\varepsilon(x)=1$ for $x\in[0,\varepsilon/2]$ and
$\alpha_\varepsilon(x)=0$ for $x\in[\varepsilon,1]$. Set $\beta_\varepsilon(x)=\alpha_\varepsilon(1-x)$ and $\gamma_\varepsilon=1-\alpha_\varepsilon-\beta_\varepsilon$.
We define a regularized (w.r.t. $\rho$) version of $\eta^l_{\rho_r,\rho}$ in \eqref{optimal_ent_1} by convolution
\be\label{optimal_ent_reg}
\bar{\eta}^{l,\varepsilon}_{\rho_l,\rho}(u)=\int\eta^l_{\rho_l,\rho-r}(u)\chi(r/\varepsilon)dr
\ee
where $\chi$ is a standard convolution kernel. To make the above meaningful we naturally extend $\eta^l_{\rho_l,\rho}$ for $\rho\not\in[0,K]$ by setting
$\eta^l_{\rho_l,\rho}=\eta^l_{\rho_l,K}$ for $\rho>K$, $\eta^l_{\rho_l,\rho}=\eta^l_{\rho_l,0}$ for $\rho<0$.
Following  \eqref{optimal_ent_1}--\eqref{optimal_ent_2} and \eqref{boundarycost_1}--\eqref{boundarycost_2}, $\bar{\eta}^{l,\varepsilon}_{\rho_l,\rho}(u)$ and
$\bar{\eta}^{r,\varepsilon}_{\rho_r,\rho}(u)$ are $H$-Lipschitz functions of $u$ with $H=h'(K)-h'(0)$, and continuous (hence uniformly continuous) functions of $(\rho,u)$.
Thus convergences 
\be\label{uniform_eta}\bar{\eta}^{l,\varepsilon}_{\rho_l,\rho}(u)\to\eta^l_{\rho_l,\rho}(u),\quad
\bar{\eta}^{r,\varepsilon}_{\rho_r,\rho}(u)\to\eta^r_{\rho_r,\rho}(u)\ee
are uniform with respect to $(\rho,u)\in[0,K]^2$.
We also regularize boundary traces $\rho(t,0^+)$ and $\rho(t,1^-)$ by convolution, setting 
\begin{eqnarray}\label{trace_reg_1}
\bar{\rho}^\varepsilon(t,0^+) & := & \int\rho(t-s,0^+)\chi(s/\varepsilon)ds\\
\bar{\rho}^\varepsilon(t,1^-) & := & \int\rho(t-s,1^-)\chi(s/\varepsilon)ds
\label{trace_reg_2}
\end{eqnarray}
We may view the mapping $v\mapsto m(v;dt,dx)$ as a measure $M(dt,dx,dv)=m(v;dt,dx)dv$ on $[0,K]\times(a,b)\times(0,1)$.
Then $M^+(dt,dx,dv)=m^+(v;dt,dx)dv$.
Let $\psi^\varepsilon\in C^\infty_K((a,b)\times(\varepsilon,1-\varepsilon)\times[0,K])$, $0\leq\psi^\varepsilon\leq 1$, such that
\be\label{almost_ppart}
\int_0^K\int_{(a,b)\times(0,1)}\psi^\varepsilon(t,x,v)m(v;dt,dx)dv\geq 
%\min\left[\varepsilon^{-1},
\int_0^K\int_{(a,b)\times(\varepsilon,1-\varepsilon)}m^+(v;dt,dx)dv-\varepsilon
\ee
We define $\tilde{F}^\varepsilon(t,x,.)$ as the unique entropy (given by \eqref{decomp_relative}) $\eta(.)\in\Phi(\rho_0)$ such that $\eta''(v)=h''(v)\psi^\varepsilon(t,x,v)$, where $\rho_0\in[0,K]$ is arbitrarily chosen (but independent of $(t,x))$.
Let $\delta>0$. 
We apply \eqref{kinetic_prod_1} to the entropy sampler
\begin{eqnarray}\label{approx_sampler_1}
F^{\varepsilon,\delta}(t,x,\rho)& := & \alpha_{\varepsilon}(x)\gamma_{\varepsilon}\left(\frac{t-a}{b-a}\right)\bar{\eta}^{l,\delta}_{\rho_l,\bar{\rho}_{\delta}(t,0^+)}(\rho)\\
& + & \beta_{\varepsilon}(x)\gamma_{\varepsilon}\left(\frac{t-a}{b-a}\right)\bar{\eta}^{r,\delta}_{\rho_l,\bar{\rho}_{\delta}(t,1^-)}(\rho)\label{approx_sampler_1}\\
& + & \gamma_{\varepsilon}(x)\tilde{F}^{\varepsilon}(t,x,\rho)
\label{approx_sampler_3}
\end{eqnarray}
We use \eqref{almost_ppart}, the triangle inequality
\begin{eqnarray*}
|\bar{\eta}^{l,\delta}_{\rho_l,\bar{\rho}}(\rho^+)-i^l(\rho^+,\rho_l)| & \leq & 
|\bar{\eta}^{l,\delta}_{\rho_l,\bar{\rho}}(\rho^+)-\bar{\eta}^{l,\delta}_{\rho_l,\bar{\rho}}(\bar{\rho})|\label{triangle_1}\\
& + & |\bar{\eta}^{l,\delta}_{\rho_l,\bar{\rho}}(\bar{\rho})-i^l(\bar{\rho},\rho_l)|\label{triangle_2}\\
& + & |i^l(\bar{\rho},\rho_l)-i^l(\rho^+,\rho_l)|\label{triangle_3}
\end{eqnarray*}
(note that $i^l(\bar{\rho},\rho_l)=\bar{\eta}^{l}_{\rho_l,\bar{\rho}}(\bar{\rho}))$ with $\bar{\rho}=\bar{\rho}^{\delta}(t,0^+)$, $\rho^+=\rho(t,0^+)$, and
the similar decomposition with $i^r(\rho^-,\rho_r)$, $\rho^-=\rho(t,1^-)$ and $\bar{\rho}=\bar{\rho}^{\delta}(t,1^-)$. From these we obtain that, in the limit
$\lim_{\varepsilon\to 0}\lim_{\delta\to 0}$, $P_{F^{\varepsilon,\delta},\rho(.,.)}$ converges to the l.h.s. of \eqref{vari_2}.
\end{proof}
\section{Explicit minimizers for uniform $\rho(.)$ when $\rho_l<\rho_r$}
\label{section_explicit_1}
In this and the next section, explicit minimizers are computed by interaction of Riemann waves, see e.g. \cite{lp} for details on such computations.
Let $\rho_c$ be given by \eqref{def_critical}.
We consider $\rho(.)\equiv\rho$, for some constant
$\rho\in[0,K]$.\\ \\
(1) Assume $\rho_l<\rho_r$ and $f(\rho_l)\leq f(\rho_r)$ (thus $\rho_l<\rho^*$). Then:\\ \\
(a) If $\rho\leq\rho_l$, then $y=1$, $\tilde{\theta}^{\tilde{y}}=0$,
$\tau^y=1/f'(\rho_l)$, $\rho_s(.)\equiv\rho_l$,
\be \label{optimal_tilde_1a} \tilde{\rho}(t,x)=\left\{
\ba{lll}
\rho_l & \mbox{if} & x<t f'(\rho)\\ \\
(f')^{-1}(x/t) & \mbox{if} &
\min(tf'(\rho_l),1)<x<\min(tf'(\rho),1)\\ \\
\rho & \mbox{if} & \min(tf'(\rho),1)<x
\ea
\right. \ee
(b) If $\rho_l\leq \rho\leq \rho^*$ and $\rho<\rho_c$, then $y=1$,
$\tilde{\theta}^{\tilde{y}}=0$, $\tau^y=1/v(\rho,\rho_l)$, $\rho_s(.)\equiv\rho_l$,
\be \label{optimal_tilde_1b} \tilde{\rho}(t,.)=\rho_l{\bf
1}_{(0,tv(\rho,\rho_l))}+\rho{\bf 1}_{(tv(\rho,\rho_l),1)},\quad
t\in[0,\tau^y]\ee
%
%\be \label{optimal_1b} {\rho}(t,.)=\rho{\bf
%1}_{(0,1+tv(\rho,\rho_l))}+\rho_l{\bf
%1}_{(1+tv(\rho,\rho_l),1)},\quad t\in[-\tau^y,0] \ee
%
with $v(\rho,\rho_l)>0$.\\ \\
(c) If $\rho^*<\rho<\rho_c$, then $y=1$, $\tilde{\theta}^{\tilde{y}}=0$, $\tau^y$
will be defined below, $\rho_s(.)\equiv\rho_l$,
\be \label{optimal_tilde_1c_1}
\tilde{\rho}(t,x)=\left\{
\ba{lll} \rho_l & \mbox{if} & 0<x<tv(\rho,\rho_l)\\
\\
\rho & \mbox{if} & tv(\rho,\rho_l)<x<1+tf'(\rho)\\
\\
(f')^{-1}((x-1)/t) & \mbox{if} & 1+tf'(\rho)<x<1
\ea
\right. \ee
for $t<[v(\rho_l,\rho)-f'(\rho)]^{-1}:=t_1$,
\be \label{optimal_tilde_1c_2}
\tilde{\rho}(t,x)=\left\{
\ba{lll} \rho_l & \mbox{if} & 0<x<x_t\\ \\
(f')^{-1}((x-1)/t) & \mbox{if} & x_t<x<1
\ea
\right. \ee
for $t_1<t<\tau^y$, where $x_t$ is defined for $t\geq t_1$ by
\be \label{def_interface_c}
x_{t_1}=t_1 v(\rho,\rho_l)=1+t_1f'(\rho),\quad
\dot{x}_t=v\left[\rho_l,(f')^{-1}((x_t-1)/t)\right]>v(\rho,\rho_l)>0
\ee
and $\tau^y$ is the time at which $x_t=1$.\\ \\
(d) If $\rho_c<\rho<\rho_r$ and $f(\rho_l)=f(\rho_r)$, then $y=0$,
$\tilde{\theta}^{\tilde{y}}=0$, $\tau^y=-1/v(\rho,\rho_r)$, $\rho_s(.)\equiv\rho_r$,
\be \label{optimal_tilde_1d}
\tilde{\rho}(t,x)=\rho{\bf 1}_{(0,1+tv(\rho,\rho_r))}+\rho_r{\bf
1}_{(1+tv(\rho,\rho_r),1)}
\ee
with $v(\rho,\rho_r)<0$.\\ \\
(e) If $\rho_r<\rho$ and $f(\rho_l)=f(\rho_r)$, then $y=0$,
$\tilde{\theta}^{\tilde{y}}=0$, $\tau^y=-1/f'(\rho_r)$, $\rho_s(.)\equiv\rho_r$,
\be \label{optimal_tilde_1e}
\tilde{\rho}(t,x)=\left\{
\ba{lll}
\rho_r & \mbox{if} & \max(1+tf'(\rho_r),0)<x<1 \\ \\
(f')^{-1}((x-1)/t) & \mbox{if} &
\max(1+tf'(\rho),0)<x<\max(1+tf'(\rho_r),0)\\ \\
\rho & \mbox{if} & x<\max(1+tf'(\rho),0)
\ea
\right.
\ee
(f) If $\rho>\rho_c$ and $\rho_l<\rho_r\leq\rho^*$, then $y=0$,
$\tilde{\theta}^{\tilde{y}}=-1/v(\varphi(\rho_r),\varphi(\rho_l))$, $\tau^y$ is
defined below, $\rho_s(.)\equiv\rho_l$,
\be \label{optimal_tilde_1f_1}
\tilde{\rho}(t,x)=\left\{
\ba{lll}
(f')^{-1}((x-1)/t) & \mbox{if} & 1+tf'(\varphi(\rho_r))<x<1\\ \\
\varphi(\rho_r) & \mbox{if} &
1+tv(\varphi(\rho_r),\varphi(\rho_l))<x<1+tf'(\varphi(\rho_r))\\
\\
\varphi(\rho_l) & \mbox{if} &
[1+tv(\rho,\varphi(\rho_l))]^+<x<[1+tv(\varphi(\rho_r),\varphi(\rho_l))]^+\\
\\
\rho & \mbox{if} & 0<x<[1+tv(\rho,\varphi(\rho_l))]^+
\ea
\right.
\ee
for $t<\tilde{\theta}^{\tilde{y}}$,
\be \label{optimal_tilde_1f_2}
\tilde{\rho}(t,x)=\left\{
\ba{lll}
\rho_l & \mbox{if} & x<(t-\tilde{\theta}^{\tilde{y}})v(\rho_l,\varphi(\rho_r))\\ \\
\rho & \mbox{if} &
tv(\rho_l,\varphi(\rho_r))<x<1+tf'(\varphi(\rho_r))\\ \\
(f')^{-1}((x-1)/t) & \mbox{if} & 1+tf'(\varphi(\rho_r))<x<1
\ea
\right.
\ee
for
$\dsp\tilde{\theta}^{\tilde{y}}<t<t_1:=\frac{1+\tilde{\theta}^{\tilde{y}}v(\rho_l,\varphi(\rho_r))}{v(\rho_l,\varphi(\rho_r))-f'(\varphi(\rho_r))}$,
\be \label{optimal_tilde_1f_3}
\tilde{\rho}(t,x)=\left\{
\ba{lll}
\rho_l & \mbox{if} & 0<x<x_t \\ \\
(f')^{-1}((x_t-1)/t) & \mbox{if} & x_t<x<1
\ea
\right.
\ee
for $t_1<t<\tau^y$,
where $x_t$ is defined for $t\geq t_1$ by
\be \label{def_interface_f}
x_{t_1}=1+t_1 f'(\varphi(\rho_r)),\quad
\dot{x}_t=v\left[\rho_l,(f')^{-1}((x_t-1)/t)\right]>v(\rho_l,\varphi(\rho_r))>0
\ee
and $\tau^y$ is the time at which $x_t=1$.\\ \\
(g) If $\rho>\rho_c$ and $\rho_l<\rho^*<\rho_r$, then $y=0$,
$\tilde{\theta}^{\tilde{y}}=-1/v(\varphi(\rho_r),\varphi(\rho_l))$,
$\tau^y=\tilde{\theta}^{\tilde{y}}+1/v(\rho_l,\varphi(\rho_r))$,
$\rho_s(.)\equiv\rho_l$,
\be \label{optimal_tilde_1g_1}
\tilde{\rho}(t,x)=\left\{
\ba{lll}
\varphi(\rho_r) & \mbox{if} &
1+tv(\varphi(\rho_r),\varphi(\rho_l))<x<1\\
\\
\varphi(\rho_l) & \mbox{if} &
[1+tv(\rho,\varphi(\rho_l))]^+<x<[1+tv(\varphi(\rho_r),\varphi(\rho_l))]^+\\
\\
\rho & \mbox{if} & 0<x<[1+tv(\rho,\varphi(\rho_l))]^+
\ea
\right.
\ee
for $t<\tilde{\theta}^{\tilde{y}}$,
\be \label{optimal_tilde_1g_2}
\tilde{\rho}(t,x)=\rho_l{\bf
1}_{(0,tv(\rho_l,\varphi(\rho_r))}(x)+\varphi(\rho_r){\bf
1}_{(tv(\rho_l,\varphi(\rho_r),1)}(x) \ee
for $\tilde{\theta}^{\tilde{y}}<t<\tau^y$.\\ \\
(h) If $\rho=\rho_c<\rho^*$, and $f(\rho_l)=f(\rho_r)$, then
$y\in[0,1]$ is arbitrary, $\tilde{\theta}^{\tilde{y}}=+\infty$, $\tau^y$ is defined
below, $\rho_s(.)=\rho_s^y(.)$,
\be \label{optimal_tilde_1h_1}
\tilde{\rho}(t,x)=\left\{
\ba{lll}
\rho_c & \mbox{if} & \min(1-y+tv(\rho_l,\rho_c),1)<x<1\\ \\
\rho_l & \mbox{if} & 1-y<x<\min(1-y+tv(\rho_l,\rho_c),1)\\ \\
\rho_r & \mbox{if} & 1-y+tv(\rho_c,\rho_r)<x<1-y\\ \\
\rho_c & \mbox{if} & tf'(\rho_c)<x<1-y+tv(\rho_c,\rho_r)\\ \\
(f')^{-1}(x/t) & \mbox{if} & 0<x<tf'(\rho_c)
\ea
\right.
\ee
for $0<t<t_1:=(1-y)(f'(\rho_c)-v(\rho_c,\rho_r))$,
\be \label{optimal_tilde_1h_1}
\tilde{\rho}(t,x)=\left\{
\ba{lll}
\rho_c & \mbox{if} & \min(1-y+tv(\rho_l,\rho_c),1)<x<1\\ \\
\rho_l & \mbox{if} & 1-y<x<\min(1-y+tv(\rho_l,\rho_c),1)\\ \\
\rho_r & \mbox{if} & x_t<x<1-y\\ \\
(f')^{-1}(x/t) & \mbox{if} & 0<x<x_t
\ea
\right.
\ee
for $t_1<t<\tau^y$, where $\tau^y=\max(t_2,y/v(\rho_l,\rho_c))$,
$t_2$ is the time at which $x_t=0$, and $x_t$ is defined for
$t\geq t_1$ by
\be \label{def_interface_h}
x_{t_1}=t_1f'(\rho_c),\quad
\dot{x}_t=v\left[(f')^{-1}(x_t/t),\rho_r\right]<v(\rho_c,\rho_r)<0
\ee
(i) If $\rho=\rho_c>\rho^*$, and $f(\rho_l)=f(\rho_r)$, then
$y\in[0,1]$ is arbitrary, $\tilde{\theta}^{\tilde{y}}=+\infty$, $\tau^y$ is defined
below, $\rho_s(.)=\rho_s^y(.)$,
\be \label{optimal_tilde_1i_1}
\tilde{\rho}(t,x)=\left\{
\ba{lll}
(f')^{-1}((x-1)/t) & \mbox{if} & 1+tf'(\rho_c)<x<1\\ \\
\rho_c & \mbox{if} & 1-y+tv(\rho_l,\rho_c)<x<1+tf'(\rho_c)\\ \\
\rho_l & \mbox{if} & 1-y<x<1-y+tv(\rho_l,\rho_c)\\ \\
\rho_r & \mbox{if} & \max(1-y+tv(\rho_c,\rho_r),0)<x<1-y\\ \\
\rho_c & \mbox{if} & 0<x<\max(1-y+tv(\rho_r,\rho_c),0)
\ea
\right.
\ee
for $0<t<t_1:=y/(v(\rho_l,\rho_c)-f'(\rho_c))$,
\be \label{optimal_tilde_1i_2}
\tilde{\rho}(t,x)=\left\{
\ba{lll}
(f')^{-1}((x-1)/t) & \mbox{if} & x_t<x<1\\ \\
\rho_l & \mbox{if} & 1-y<x<x_t\\ \\
\rho_r & \mbox{if} & \max(1-y+tv(\rho_c,\rho_r),0)<x<1-y\\ \\
\rho_c & \mbox{if} & 0<x<\max(1-y+tv(\rho_r,\rho_c),0)
\ea
\right.
\ee
for $t_1<t<\tau^y$, where
$\tau^y=\max(t_2,(y-1)/v(\rho_r,\rho_c))$, $t_2$ is the time at
which $x_t=1$, and $x_t$ is defined for $t\geq t_1$ by
\be \label{def_interface_i}
x_{t_1}=1+t_1f'(\rho_c),\quad
\dot{x}_t=v\left[(f')^{-1}((x_t-1)/t),\rho_l\right]>v(\rho_c,\rho_l)>0
\ee
(j) If $\rho=\rho_c=\rho^*$, and $f(\rho_l)=f(\rho_r)$, then
$y\in[0,1]$ is arbitrary, $\tilde{\theta}^{\tilde{y}}=+\infty$,
$\tau^y=\max(y/v(\rho_l,\rho_c),(y-1)/v(\rho_r,\rho_c))$,
$\rho_s(.)=\rho_s^y(.)$,
\be \label{optimal_tilde_1j}
\tilde{\rho}(t,x)=\left\{
\ba{lll}
\rho_c & \mbox{if} & \min(1-y+tv(\rho_l,\rho_c),1)<x<1\\ \\
\rho_l & \mbox{if} & 1-y<x<\min(1-y+tv(\rho_l,\rho_c),1)\\ \\
\rho_r & \mbox{if} & \max(1-y+tv(\rho_c,\rho_r),0)<x<1-y\\ \\
\rho_c & \mbox{if} & 0<x<\max(1-y+tv(\rho_r,\rho_c),0)
\ea
\right.
\ee
(k) If $f(\rho_l)<f(\rho_r)$ and $\rho=\rho_c\geq\rho^*$, then
$y\in[0,1]$ is arbitrary,
$\tilde{\theta}^{\tilde{y}}=(1-y)/v(\varphi(\rho_r),\varphi(\rho_l))$, $\tau^y$ is
defined below, $\rho_s(.)\equiv\rho_l$,
\be \label{optimal_tilde_1k_1}
\tilde{\rho}(t,x)=\left\{
\ba{lll}
\rho_c & \mbox{if} & 0<x<\max(1-y+tv(\rho_c,\varphi(\rho_l)),0)\\
\\
\varphi(\rho_l) & \mbox{if} &
\max(1-y+tv(\rho_c,\varphi(\rho_l)),0)<x<1-y+tv(\varphi(\rho_l),\varphi(\rho_r))\\
\\
\varphi(\rho_r) & \mbox{if} &
1-y+tv(\varphi(\rho_l),\varphi(\rho_r))<x<1-y+tv(\varphi(\rho_r),\rho_c)\\
\\
\rho_c & \mbox{if} &
1-y+tv(\varphi(\rho_r),\rho_c)<x<1+tf'(\rho_c)\\ \\
(f')^{-1}((x-1)/t) & \mbox{if} & 1+tf'(\rho_c)<x<1
\ea
\right.
\ee
for $t<t_1:=\min(\tilde{\theta}^{\tilde{y}},t'_1)$, where
$$
%t'_1:=(y-1)/v(\varphi(\rho_l),\varphi(\rho_r)),\quad
%
t'_1:= y/[v(\varphi(\rho_r),\rho_c)-f'(\rho_c)]
$$
\be \label{optimal_tilde_1k_2}
\tilde{\rho}(t,x)=\left\{
\ba{lll}
\rho_c & \mbox{if} & 0<x<\max(1-y+tv(\rho_c,\varphi(\rho_l)),0)\\
\\
\varphi(\rho_l) & \mbox{if} &
1-y+tv(\rho_c,\varphi(\rho_l))<x<\max(1-y+tv(\varphi(\rho_l),\varphi(\rho_r)),0)\\
\\
\varphi(\rho_r) & \mbox{if} &
1-y+tv(\varphi(\rho_l),\varphi(\rho_r))<x<x_t\\
\\
(f')^{-1}((x-1)/t) & \mbox{if} & x_t<x<1
\ea
\right.
\ee
if $t'_1<t<\tilde{\theta}^{\tilde{y}}$, where $x_t$ is defined by
\be \label{def_interface_k2}
x_{t'_1}=1+t'_1f'(\rho_c),\quad
\dot{x}_t=v\left[(f')^{-1}((x_t-1)/t),\varphi(\rho_r)\right]
\ee
with
\be \label{speed_interface_k2}
v(\varphi(\rho_r),\varphi(\rho_l))<v\left[(f')^{-1}((x_t-1)/t),\varphi(\rho_r)\right]<f'(\varphi(\rho_r))<0
\ee
and
$$
(f')^{-1}((x_t-1)/t)\in(\varphi(\rho_r),\rho_c)]
$$
\be \label{optimal_tilde_1k_3}
\tilde{\rho}(t,x)=\left\{
\ba{lll}
\rho_c & \mbox{if} & 0<x<tv(\rho_l,\varphi(\rho_r))\\
\\
\varphi(\rho_r) & \mbox{if} & tv(\rho_l,\varphi(\rho_r))<x<x_t\\
\\
(f')^{-1}((x-1)/t) & \mbox{if} & x_t<x<1
\ea
\right.
\ee
if $t'_1<\tilde{\theta}^{\tilde{y}}<t<t_2$, where $t_2$ is the time $t$ at which
$x_t=tv(\rho_l,\varphi(\rho_r))$,
\be \label{optimal_tilde_1k_4}
\tilde{\rho}(t,x)=\left\{
\ba{lll}
\rho_l & \mbox{if} & 0<x<x'_t\\
\\
(f')^{-1}((x-1)/t) & \mbox{if} & x'_t<x<1
\ea
\right.
\ee
for $t'_1<\tilde{\theta}^{\tilde{y}}<t_2<t<\tau^y$, where $x'_t$ is defined by
\be \label{def_interface_k4}
x'_{t_2}=x_{t_2},\quad
\dot{x'}_t=v\left[(f')^{-1}((x'_t-1)/t),\rho_l\right]>v(\rho_l,\rho_c)>0
\ee
and $\tau^y$ is the time $t$ at which $x'_t=1$,
\be \label{optimal_tilde_1k_5}
\tilde{\rho}(t,x)=\left\{
\ba{lll}
\rho_l & \mbox{if} & 0<x<tv(\rho_l,\varphi(\rho_r))\\
\\
\\
\varphi(\rho_r) & \mbox{if} &
tv(\rho_l,\varphi(\rho_r))<x<1-y+tv(\varphi(\rho_r),\rho_c)\\
\\
\rho_c & \mbox{if} &
1-y+tv(\varphi(\rho_r),\rho_c)<x<1+tf'(\rho_c)\\ \\
(f')^{-1}((x-1)/t) & \mbox{if} & 1+tf'(\rho_c)<x<1
\ea
\right.
\ee
for $\tilde{\theta}^{\tilde{y}}<t<\min(t'_1,t_2)$, where $t_2$ is the time $t$ at
which $tv(\rho_l,\varphi(\rho_r))=1-y+tv(\varphi(\rho_r),\rho_c)$,
\be \label{optimal_tilde_1k_6}
\tilde{\rho}(t,x)=\left\{
\ba{lll}
\rho_l & \mbox{if} & 0<x<(t-t_2)v(\rho_l,\rho_c)+t_2v(\rho_l,\varphi(\rho_r))\\
\\
\\
\rho_c & \mbox{if} &
(t-t_2)v(\rho_l,\rho_c)+t_2v(\rho_l,\varphi(\rho_r))<x<1+tf'(\rho_c)\\
\\
(f')^{-1}((x-1)/t) & \mbox{if} & 1+tf'(\rho_c)<x<1
\ea
\right.
\ee
if $\tilde{\theta}^{\tilde{y}}<t_2<t<t'_2$, where $t'_2<t'_1$ is the time $t$ at
which
$$
1+tf'(\rho_c)=(t-t_2)v(\rho_l,\rho_c)+t_2v(\rho_l,\varphi(\rho_r))
$$
\be \label{optimal_tilde_1k_7}
\tilde{\rho}(t,x)=\left\{
\ba{lll}
\rho_l & \mbox{if} & 0<x<x''_t\\
\\
(f')^{-1}((x-1)/t) & \mbox{if} & x''_t<x<1\\
\ea
\right.
\ee
for $t'_2<t<\tau^y$, where $x''_t$ is defined by
\be \label{def_interface_k7}
x''_{t'_2}=1+t'_2f'(\rho_c),\quad
\dot{x''}_t=v\left[(f')^{-1}((x''_t-1)/t),\rho_l\right]>v(\rho_l,\rho_c)>0
\ee
and $\tau^y$ is the time $t$ at which $x''_t=1$,
\be \label{optimal_tilde_1k_8}
\tilde{\rho}(t,x)=\left\{
\ba{lll}
\rho_l & \mbox{if} & 0<x<tv(\rho_l,\varphi(\rho_r))\\
\\
\varphi(\rho_r) & \mbox{if} & tv(\rho_l,\varphi(\rho_r))<x<x_t\\
\\
(f')^{-1}((x-1)/t) & \mbox{if} & x_t<x<1\\
\ea
\right.
\ee
if $\tilde{\theta}^{\tilde{y}}<t'_1<t<t''_2$, where $x_t$ is defined by
\eqref{def_interface_k2}, and $t''_2$ is the time $t$ at which
$x_t=tv(\rho_l,\varphi(\rho_r))$,
\be \label{optimal_tilde_1k_9}
\tilde{\rho}(t,x)=\left\{
\ba{lll}
\rho_l & \mbox{if} & 0<x<x^{(3)}_t\\
\\
(f')^{-1}((x-1)/t) & \mbox{if} & x^{(3)}_t<x<1\\
\ea
\right.
\ee
for $t''_2<t<\tau^y$,
where $x^{(3)}_t$ is defined by
\be \label{def_interface_k7}
x^{(3)}_{t''_2}=x_{t''_2},\quad
\dot{x^{(3)}}_t=v\left[(f')^{-1}((x^{(3)}_t-1)/t),\rho_l\right]>v(\rho_l,\rho_c)>0
\ee
and $\tau^y$ is the time $t$ at which $x''_t=1$.\\ \\
(l) If $f(\rho_l)<f(\rho_r)$ and $\rho=\rho_c<\rho^*$, then
$y\in[0,1]$ is arbitrary,
$\tilde{\theta}^{\tilde{y}}=(y-1)/v(\varphi(\rho_r),\varphi(\rho_l))$, $\tau^y$ is
defined below, $\rho_s(.)\equiv\rho_l$,
\be \label{optimal_tilde_1l_1}
\tilde{\rho}(t,x)=\left\{
\ba{lll}
(f')^{-1}(x/t) & \mbox{if} & 0<x<tf'(\rho_c)\\ \\
\rho_ c & \mbox{if} &
tf'(\rho_c)<x<1-y+tv(\rho_c,\varphi(\rho_l))\\ \\
\varphi(\rho_l) & \mbox{if} &
1-y+tv(\rho_c,\varphi(\rho_l))<x<1-y+tv(\varphi(\rho_r),\varphi(\rho_l))\\
\\
\varphi(\rho_r) & \mbox{if} &
1-y+tv(\varphi(\rho_r),\varphi(\rho_l))<x<\min(1,1-y+tv(\varphi(\rho_r),\rho_c))\\
\\
\rho_c & \mbox{if} & \min(1,1-y+tv(\varphi(\rho_r),\rho_c))<x<1
\ea
\right.
\ee
for $t<t_1:=(1-y)/[f'(\rho_c)-v(\rho_c,\varphi(\rho_l))]$,
\be \label{optimal_tilde_1l_2}
\tilde{\rho}(t,x)=\left\{
\ba{lll}
(f')^{-1}(x/t) & \mbox{if} & 0<x<x_t\\ \\
\varphi(\rho_l) & \mbox{if} &
x_t<x<1-y+tv(\varphi(\rho_r),\varphi(\rho_l))\\
\\
\varphi(\rho_r) & \mbox{if} &
1-y+tv(\varphi(\rho_r),\varphi(\rho_l))<x<\min(1,1-y+tv(\varphi(\rho_r),\rho_c))\\
\\
\rho_c & \mbox{if} & \min(1,1-y+tv(\varphi(\rho_r),\rho_c))<x<1
\ea
\right.
\ee
for $t_1<t<t_2$, where $t_2$ is the time $t$ at which $x_t=0$,
with $x_t$ defined by
\be \label{def_interface_l2}
x_{t_1}=t_1 f'(\rho_c),\quad
\dot{x}_t=v[\varphi(\rho_l),(f')^{-1}(x_t/t)]<v(\rho_c,\varphi(\rho_l))<v(\varphi(\rho_r),\varphi(\rho_l))
\ee
\be \label{optimal_tilde_1l_3}
\tilde{\rho}(t,x)=\left\{
\ba{lll}
\varphi(\rho_l) & \mbox{if} &
x_t<x<1-y+tv(\varphi(\rho_r),\varphi(\rho_l))\\
\\
\varphi(\rho_r) & \mbox{if} &
1-y+tv(\varphi(\rho_r),\varphi(\rho_l))<x<\min(1,1-y+tv(\varphi(\rho_r),\rho_c))\\
\\
\rho_c & \mbox{if} & \min(1,1-y+tv(\varphi(\rho_r),\rho_c))<x<1
\ea
\right.
\ee
for $t_2<t<\tilde{\theta}^{\tilde{y}}$,
\be \label{optimal_tilde_1l_4}
\tilde{\rho}(t,x)=\left\{
\ba{lll}
\rho_l & \mbox{if} & 0<x<(t-\tilde{\theta}^{\tilde{y}})v(\rho_l,\varphi(\rho_r))\\
\\
\varphi(\rho_r) & \mbox{if} &
1-y+tv(\varphi(\rho_r),\varphi(\rho_l))<x<\min(1,1-y+tv(\varphi(\rho_r),\rho_c))\\
\\
\rho_c & \mbox{if} & \min(1,1-y+tv(\varphi(\rho_r),\rho_c))<x<1
\ea
\right.
\ee
if
$$
y\leq\frac{v(\rho_l,\varphi(\rho_r))^{-1}-v(\varphi(\rho_l),\varphi(\rho_r))^{-1}}
{v(\varphi(\rho_r),\rho_c))^{-1}-v(\varphi(\rho_l),\varphi(\rho_r))^{-1}}
$$
and $\tilde{\theta}^{\tilde{y}}<t<\tau^y$, with
$\tau^y=\tilde{\theta}^{\tilde{y}}+1/v(\rho_l,\varphi(\rho_r))$,
\be \label{optimal_tilde_1l_5}
\tilde{\rho}(t,x)=\left\{
\ba{lll}
\rho_l & \mbox{if} & 0<x<(t-\tilde{\theta}^{\tilde{y}})v(\rho_l,\varphi(\rho_r))\\
\\
\varphi(\rho_r) & \mbox{if} &
(t-\tilde{\theta}^{\tilde{y}})v(\rho_l,\varphi(\rho_r))<x<1-y+tv(\varphi(\rho_r),\rho_c)\\
\\
\rho_c & \mbox{if} & 1-y+tv(\varphi(\rho_r),\rho_c)<x<1
\ea
\right.
\ee
if
$$
y>\frac{v(\rho_l,\varphi(\rho_r))^{-1}-v(\varphi(\rho_l),\varphi(\rho_r))^{-1}}
{v(\varphi(\rho_r),\rho_c))^{-1}-v(\varphi(\rho_l),\varphi(\rho_r))^{-1}}
$$
and
$$
\tilde{\theta}^{\tilde{y}}<t<t_3:=(1-y)\frac{v(\varphi(\rho_r,\varphi(\rho_l))-v(\rho_l,\varphi(\rho_r))}
{[v(\varphi(\rho_r,\varphi(\rho_l))][v(\varphi(\rho_l,\varphi(\rho_r))-v(\varphi(\rho_r),c)]}
$$,
\be \label{optimal_tilde_1l_6}
\tilde{\rho}(t,x)=\left\{
\ba{lll}
\rho_l & \mbox{if} & 0<x<t_3v(\rho_l,\varphi(\rho_r))+(t-t_3)v(\rho_l,\rho_c)\\
\\
\rho_c & \mbox{if} &
t_3v(\rho_l,\varphi(\rho_r))+(t-t_3)v(\rho_l,\rho_c)<x<1
\ea
\right.
\ee
if
$$
y>\frac{v(\rho_l,\varphi(\rho_r))^{-1}-v(\varphi(\rho_l),\varphi(\rho_r))^{-1}}
{v(\varphi(\rho_r),\rho_c))^{-1}-v(\varphi(\rho_l),\varphi(\rho_r))^{-1}}
$$
and $t_3<t<\tau^y$, where $\tau^t$ is the time $t$ at which
$$t_3v(\rho_l,\varphi(\rho_r))+(t-t_3)v(\rho_l,\rho_c)=1$$
\section{Explicit minimizers for uniform $\rho(.)$ when $\rho_l>\rho_r$}
\label{section_explicit_2}
Let us assume $\rho(.)=\rho$. Then \eqref{optimal_F} yields
\be \label{envelope_uniform}
F_{\rho(.)}\equiv\min[\rho_l,\max(\rho_r,\varphi(\rho))],\quad
\tilde{G}(0,.)\equiv\max[\varphi(\rho_l),\min(\varphi(\rho_r),\rho))]:=\rho'
\ee
\subsection{The case $\rho_r\leq\rho_l\leq\rho^*$}
In this case, we have
\be \label{tildeg_1}
\tilde{G}(t,x)=\left\{
\ba{lll}
\rho' & \mbox{if} & x<\max(1+tf'(\rho'),0)\\ \\
(f')^{-1}((x-1)/t) & \mbox{if} &
\max(1+tf'(\rho'),0)<x<\max(1+tf'(\varphi(\rho_l)),0)\\ \\
\varphi(\rho_l) & \mbox{if} & \max(1+tf'(\varphi(\rho_l)),0)<x<1
\ea
\right.
\ee
Thus
\be \label{tildeg_left}
\varphi(\tilde{G}(t,0^+))=\left\{
\ba{lll}
\varphi(\rho') & \mbox{if} & t<t_1:=-1/f'(\rho')\\ \\
\varphi[(f')^{-1}(-1/t)] & \mbox{if} &
t_1<t<t_2:=-1/f'(\varphi(\rho_l))\\ \\
\rho_l & \mbox{if} & t>t_2
\ea
\right.
\ee
\be \label{tildeg_right}
\varphi(\tilde{G}(t,1^-))=\rho_l
\ee
From this we obtain:\\ \\
{\em First case: $\varphi(\rho_l)\leq\rho$.} Then $\rho\geq\rho'$,
and
\be \label{tilderho_11}
\tilde{\rho}(t,x)=\left\{
\ba{lll}
\rho & \mbox{if} & 0<x<\max(0,1+tf'(\rho))\\ \\
(f')^{-1}((x-1)/t) & \mbox{if} & \max(0,1+tf'(\rho))<x<1
\ea
\right.
\ee
for $t<t_2$,
\be \label{tilderho_12}
\tilde{\rho}(t,x)=\left\{
\ba{lll}
\rho_l & \mbox{if} & 0<x<x_t\\ \\
(f')^{-1}((x-1)/t) & \mbox{if} & x_t<x<1
\ea
\right. \ee
for $t>t_2$, where $x_t$ is given by
\be \label{def_min_x} x_{t_2}=0,\quad
\dot{x}_t=v[\rho_l,(f')^{-1}((x_t-1)/t)]>v[\rho_l,(f')^{-1}(-1/t)]>0\mbox{
for }t>t_2
\ee
so that $x_t$ reaches $1$ in finite time. Note that at time $t_2$,
the rarefaction wave issued from $1$ has value $\varphi(\rho_l)$
at $x=0$,
and value $<\varphi(\rho_l)$ ahead, which is why a shock starts propagating from $x=0$ towards $x=1$.\\ \\
{\em Second case: $\rho^*<\rho<\varphi(\rho_l)$.} Then
$\rho'=\varphi(\rho_l)$, $\varphi[\tilde{G}(t,0^+)]=\rho_l$ for
all $t>0$, and
\be \label{tilderho_13}
\tilde{\rho}(t,x)=\left\{
\ba{lll}
\rho_l & \mbox{if} & 0<x<tv(\rho_l,\rho)\\ \\
\rho & \mbox{if} & tv(\rho_l,\rho)<x<1+tf'(\rho)\\ \\
(f')^{-1}((x-1)/t) & \mbox{if} & 1+tf'(\rho)<x<1
\ea
\right.
\ee
for $t<t_3:=[v(\rho_l,\rho)-f'(\rho)]^{-1}$,
\be \label{tilderho_14}
\tilde{\rho}(t,x)=\left\{
\ba{lll}
\rho_l & \mbox{if} & 0<x<y_t\\ \\
(f')^{-1}((x-1)/t) & \mbox{if} & y_t<x<1
\ea
\right.
\ee
where $y_t$ is given by
\be \label{def_min_x2}
y_{t_3}=t_3 v(\rho_l,\rho),\quad
\dot{y}_t=v[\rho_l,(f')^{-1}((y_t-1)/t)]>v(\rho_l,\rho)>0
\ee
and thus reaches $1$ in finite time.\\ \\
{\em Third case:
$\rho\leq\rho^*\leq\varphi(\rho_l)\leq\varphi(\rho_r)$.} Then
again $\rho'=\varphi(\rho_l)$ and
$\varphi[\tilde{G}(t,0^+)]=\rho_l$,
\be \label{tilderho_15}
\tilde{\rho}(t,x)=\left\{
\ba{lll}
\rho_l & \mbox{if} & 0<x<\min[1,tv(\rho_l,\rho)]\\ \\
\rho & \mbox{if} & \min[1,tv(\rho_l,\rho)]<x<1
\ea
\right.
\ee
\subsection{The case $\rho_r<\rho^*<\rho_l$}
{\em First case. $\rho\geq\rho^*$.} Then $\rho'\geq\rho^*$,
\be \label{tildeg_21}
\tilde{G}(t,x)=\left\{
\ba{lll}
(f')^{-1}((x-1)/t) & \mbox{if} & \max[0,1+tf'(\rho')]<x<1 \\ \\
\rho' & \mbox{if} & 0<x<\max[0,1+tf'(\rho')]
\ea
\right.
\ee
and thus
\be \label{tildeg0_2}
\varphi[\tilde{G}(t,0^+)]=\left\{
\ba{lll}
\varphi(\rho') & \mbox{if} & t<t_1:=-1/f'(\rho')\\ \\
(f')^{-1}(-1/t) & \mbox{if} & t>t_1
\ea
\right.
\ee
\be \label{tildeg1_2}\varphi[\tilde{G}(t,1^-)]=\rho^* \ee
and
\be\label{tilderho_21}
\tilde{\rho}(t,x)=\left\{
\ba{lll}
(f')^{-1}((x-1)/t) & \mbox{if} & \max[0,1+tf'(\rho')]<x<1\\ \\
\rho' & \mbox{if} & 0<x<\max[0,1+tf'(\rho')]
\ea
\right.
\ee
{\em Second case. $\rho\leq\rho^*$.} Then $\rho'\leq\rho^*$,
\be \label{tildeg_22}
\tilde{G}(t,x)=\left\{
\ba{lll}
(f')^{-1}(x/t) & \mbox{if} & 0<x<\min[1,tf'(\rho')] \\ \\
\rho' & \mbox{if} & \min[1,tf'(\rho')]<x<1
\ea
\right.
\ee
and thus
\be \label{tildeg0_22}
\varphi[\tilde{G}(t,1^-)]=\left\{
\ba{lll}
\varphi(\rho') & \mbox{if} & t<t_1:=1/f'(\rho')\\ \\
(f')^{-1}(1/t) & \mbox{if} & t>t_1
\ea
\right.
\ee
\be \label{tildeg1_22}\varphi[\tilde{G}(t,0^+)]=\rho^* \ee
and
\be\label{tilderho_22}
\tilde{\rho}(t,x)=\left\{
\ba{lll}
(f')^{-1}(x/t) & \mbox{if} & 0<x<\min[1,tf'(\rho')]\\ \\
\rho' & \mbox{if} & \min[1,tf'(\rho')]<x<1
\ea
\right.
\ee
\subsection{The case $\rho^*<\rho_r\leq\rho_l$}
This case is symmetric to $\rho_r<\rho_l<\rho^*$, with the roles
of the boundaries exchanged. Here we have
\be \label{tildeg_3}
\tilde{G}(t,x)=\left\{
\ba{lll}
\rho' & \mbox{if} & \min(1,tf'(\rho'))<x<1\\ \\
(f')^{-1}((x-1)/t) & \mbox{if} &
\min[1,tf'(\varphi(\rho_r))]<x<\min[1,tf'(\rho')]\\ \\
\varphi(\rho_r) & \mbox{if} & 0<x<\min[1,tf'(\varphi(\rho_r))]
\ea
\right.
\ee
Thus
\be \label{tildeg_left_3}
\varphi(\tilde{G}(t,0^+))=\rho_r
\ee
\be \label{tildeg_right_3}
\varphi(\tilde{G}(t,1^-))=\left\{
\ba{lll}
\varphi(\rho') & \mbox{if} & t<t_1:=1/f'(\rho')\\ \\
\varphi[(f')^{-1}(1/t)] & \mbox{if} &
t_1<t<t_2:=1/f'(\varphi(\rho_r))\\ \\
\rho_r & \mbox{if} & t>t_2
\ea
\right.
\ee
From this we obtain:\\ \\
{\em First case: $\rho\leq\varphi(\rho_r)$.} Then $\rho\leq\rho'$,
and
\be \label{tilderho_31}
\tilde{\rho}(t,x)=\left\{
\ba{lll}
(f')^{-1}(x/t) & \mbox{if} & 0<x<\min(1,tf'(\rho))\\ \\
\rho & \mbox{if} & \min(1,tf'(\rho))<x<1
\ea
\right.
\ee
for $t<t_2$,
\be \label{tilderho_32}
\tilde{\rho}(t,x)=\left\{
\ba{lll}
(f')^{-1}(x/t) & \mbox{if} & 0<x<x_t\\ \\
\rho_r & \mbox{if} & x_t<x<1
\ea
\right. \ee
for $t>t_2$, where $x_t$ is given by
\be \label{def_min_x3} x_{t_2}=1,\quad
\dot{x}_t=v[\rho_r,(f')^{-1}(x_t/t)]<v[\rho_r,(f')^{-1}(1/t)]<0\mbox{
for }t>t_2
\ee
so that $x_t$ reaches $0$ in finite time. Note that at time $t_2$,
the rarefaction wave issued from $0$ has value $\varphi(\rho_r)$
at $x=0$, and value $>\varphi(\rho_r)$ behind, which is why a
shock starts propagating from $x=1$ towards $x=0$.\\ \\
{\em Second case: $\varphi(\rho_r)<\rho<\rho^*$.} Then
$\rho'=\varphi(\rho_r)$, $\varphi[\tilde{G}(t,1^-)]=\rho_r$ for
all $t>0$, and
\be \label{tilderho_33}
\tilde{\rho}(t,x)=\left\{
\ba{lll}
\rho_r & \mbox{if} & 1+tv(\rho_r,\rho)<x<1\\ \\
\rho & \mbox{if} & tf'(\rho)<x<1+tv(\rho_r,\rho)\\ \\
(f')^{-1}(x/t) & \mbox{if} & 0<x<tf'(\rho)
\ea
\right.
\ee
for $t<t_3:=[f'(\rho)-v(\rho_r,\rho)]^{-1}$,
\be \label{tilderho_34}
\tilde{\rho}(t,x)=\left\{
\ba{lll}
\rho_r & \mbox{if} & y_t<x<1\\ \\
(f')^{-1}(x/t) & \mbox{if} & 0<x<y_t
\ea
\right.
\ee
where $y_t$ is given by
\be \label{def_min_x4}
y_{t_3}=1+t_3 v(\rho_r,\rho),\quad
\dot{y}_t=v[\rho_r,(f')^{-1}(y_t/t)]<v(\rho_r,\rho)<0
\ee
and thus reaches $0$ in finite time.\\ \\
{\em Third case:
$\varphi(\rho_l)\leq\varphi(\rho_r)\leq\rho^*\leq\rho$.} Then
again $\rho'=\varphi(\rho_r)$ and
$\varphi[\tilde{G}(t,1^-)]=\rho_r$,
\be \label{tilderho_35}
\tilde{\rho}(t,x)=\left\{
\ba{lll}
\rho_r & \mbox{if} & \max[0,1+tv(\rho_r,\rho)]<x<1\\ \\
\rho & \mbox{if} & 0<x<\max[0,1+tv(\rho_r,\rho)]
\ea
\right.
\ee
\end{appendix}
\end{document}